\renewcommand{\epsilon}{\varepsilon}
\newcommand{\nfrac}{\nicefrac}
\newcommand{\eps}{\varepsilon}
\newtheorem{theorem}{Theorem}[section]
\newtheorem{definition}{Definition}[section]
\newtheorem{lemma}[theorem]{Lemma}
\newtheorem{remark}[theorem]{Remark}
\newtheorem{proposition}[theorem]{Proposition}
\newtheorem{corollary}[theorem]{Corollary}
\newtheorem{assumption}[theorem]{Assumption}
\def\FullBox{\hbox{\vrule width 6pt height 6pt depth 0pt}}
\def\qed{\ifmmode\qquad\FullBox\else{\unskip\nobreak\hfil
\penalty50\hskip1em\null\nobreak\hfil\FullBox
\parfillskip=0pt\finalhyphendemerits=0\endgraf}\fi}
\title{\bf Private Low-Rank Approximation for Covariance Matrices, Dyson Brownian Motion, and Eigenvalue-Gap Bounds for Gaussian Perturbations\footnote{This paper combines, extends, and presents complete proofs of the results in \cite{DBM_Neurips} and \cite{our-COLT} by the authors.}
}
 \author{Oren Mangoubi\\ Worcester Polytechnic Institute \and Nisheeth K. Vishnoi \\ Yale University}
\begin{document}
\date{}

\allowdisplaybreaks

\maketitle

\begin{abstract}

We consider the problem of approximating a $d \times d$ covariance matrix $M$ with a rank-$k$ matrix under $(\eps,\delta)$-differential privacy. 
We present and analyze a complex variant of the Gaussian mechanism and obtain upper bounds on the Frobenius norm of the difference between the matrix output by this mechanism and the best rank-$k$ approximation to $M$.
 Our analysis provides improvements over previous bounds, particularly when the spectrum of 
$M$ satisfies natural structural assumptions. 
The novel insight is to view the addition of Gaussian noise to a matrix as a continuous-time matrix Brownian motion. 
This viewpoint allows us to track the evolution of eigenvalues and eigenvectors of the matrix, which are governed by stochastic differential equations discovered by Dyson. 
These equations enable us to upper bound the Frobenius distance between the best rank-$k$ approximation of $M$ and that of a Gaussian perturbation of $M$ as an integral that involves inverse eigenvalue gaps of the stochastically evolving matrix, as opposed to a sum of perturbation bounds obtained via Davis-Kahan-type theorems.
Subsequently, again using the Dyson Brownian motion viewpoint, we show that the eigenvalues of the matrix $M$ perturbed by  Gaussian noise have large gaps with high probability.
These results also contribute to the analysis of low-rank approximations under average-case perturbations, and to an understanding of eigenvalue gaps for random matrices, both of which may be of independent interest.

\end{abstract}

\newpage
\tableofcontents
\newpage

\section{Introduction}

Given a  matrix $M \in \mathbb{R}^{d \times d}$,  consider the following basic problem of finding a rank-$k$ matrix $X$ that is closest to $M$ in Frobenius norm   \cite{bhatia2013matrix,blum2020foundations}:
$$ \min_{X: \mbox{ rank} (X)\leq k} \|M-X\|_F.$$
Of interest is the case when $M$ is the covariance matrix of a data matrix: Given a 
 matrix $A \in \mathbb{R}^{m \times d}$, consisting of $m$ individuals with $d$-dimensional features, $M=A^\top A$.
Such an $M$ is positive semi-definite (PSD) and has eigenvalues $\sigma_1 \geq \cdots \geq \sigma_d \geq 0$.
The  solution to the  optimization problem above is well-known \cite{bhatia2013matrix}: It is given by 
$M_k:= V \Gamma_k V^\top$  where $\Gamma_k := \mathrm{diag}(\sigma_1,\ldots, \sigma_k, 0,\ldots, 0)$
and $V$ is a matrix whose columns are the orthonormal 
eigenvectors of $M$.

In several applications of this low-rank approximation problem, the rows of $A$ correspond to sensitive features of individuals and the release of a low-rank approximation to $M$ may reveal their private information, e.g., as in the case of the Netflix prize problem   \cite{bennett2007netflix}. 
In such contexts, differential privacy (DP) has been employed to quantify the extent to which an algorithm preserves privacy   \cite{dwork2006calibrating} and, in particular, algorithms for low-rank covariance matrix approximation under differential privacy have been widely studied; see     \cite{blum2005practical, dwork2006calibrating, blocki2012johnson,  kapralov2013differentially, dwork2014analyze,  upadhyay2018price,  sheffet2019old,mangoubi2022private} and the references therein.\footnote{Another set of works has studied the problem of approximating a rectangular data matrix $A$ under DP  \cite{blum2005practical, achlioptas2007fast, hardt2012beating, hardt2013beyond}.
 We note that upper bounds on the utility of differentially-private mechanisms for rectangular matrix approximation problems can grow with the number of data points $m$. In contrast, those for covariance matrix approximation problems often depend only on the dimension $d$ of the covariance matrix and do not grow with $m$.}
Notions of DP studied in the literature include $(\eps, \delta)$-DP \cite{dwork2006calibrating, hardt2012beating, hardt2013beyond, dwork2014analyze} which is the notion we study in this paper, as well as pure $(\eps, 0)$-DP \cite{dwork2006calibrating, kapralov2013differentially, amin2019differentially, leake2020polynomial, mangoubi2022private}.

A randomized mechanism $\mathcal{A}$ is said to be $(\eps, \delta)$-differentially private for  privacy parameters $\eps, \delta \geq 0$  if for all ``neighboring'' matrices $M, M' \in  \mathbb{R}^{d \times d}$, and any measurable subset $S$ of the range of $\mathcal{A}$, we have 
\begin{equation}\label{DP}
 \mathbb{P}(\mathcal{A}(M) \in S) \leq e^\eps \mathbb{P}(\mathcal{A}(M') \in S) + \delta.
\end{equation}
Following \cite{blum2005practical, dwork2006calibrating}, $M$ and $M'$ are said to be neighbors if their corresponding data matrices $A,A' \in \mathbb{R}^{m \times d}$ differ by at most one row, i.e., $M'=M-uu^\top+vv^\top$ where $u$ is a row of $A$ and $v$ is a row of $A'$. 
It is assumed that each such row vector is of norm at most $1$, i.e., in the above, $\|u\|_2,\|v\|_2\leq 1$.

Various distance functions have been used in the literature to evaluate the utility of an $(\eps,\delta)$-DP mechanism for matrix approximation problems, including the Frobenius-norm based distances $\|M-\mathcal{A}(M)\|_F - \|M-M_k\|_F$ and 
$\|\mathcal{A}(M) -M_k\|_F$  (e.g. \cite{dwork2014analyze, amin2019differentially}).\footnote{For a variant of this low-rank covariance approximation problem, the subspace approximation problem, an additional metric, the Frobenius inner product utility, is used. The relation of this metric to the Frobenius distance between subspaces is discussed in Section \ref{sec:subspace}.
.}
Using triangle inequality, one can see that an upper bound on $\|\mathcal{A}(M)- M_k\|_F$ implies the same upper bound on  $\|M-\mathcal{A}(M)\|_F - \|M-M_k\|_F$ (the reverse direction is not true in general).
Moreover, the Frobenius-norm distance can be a good utility metric to use if the goal is to recover a low-rank matrix from a dataset of noisy observations (see e.g. \cite{davenport2016overview}). 
Thus, we use the Frobenius-norm distance to measure the utility of an $(\eps,\delta)$-DP mechanism.
This leads to the problem of designing an $(\eps,\delta)$-differentially private mechanism $\mathcal{A}$ that, given a covariance matrix $M$ with eigenvalues $\sigma_1 \geq \cdots \geq \sigma_d \geq 0$, outputs a rank-$k$ matrix $Y$ that minimizes $\| Y - M_k\|_F$.

\cite{dwork2014analyze} analyze a version of the Gaussian mechanism of \cite{dwork2006our}, where one perturbs the entries of $M$ by adding a symmetric matrix $E$ with i.i.d. Gaussian entries $N(0,\nfrac{\sqrt{\log\frac{1}{\delta}}}{\eps})$, to obtain an $(\eps, \delta)$-differentially private mechanism which outputs a perturbed matrix $\hat{M} = M+E$.
They then post-process this matrix $\hat{M}$ to obtain a rank-$k$ matrix $Y$ with the same top-$k$ eigenvectors and eigenvalues as $\hat{M}$.
\cite{dwork2014analyze} show that the output $Y$ of their mechanism satisfies  $\| M-Y\|_F - \|M-M_k\|_F = \tilde{O}(k\sqrt{d})$ w.h.p. (Theorem 7 in \cite{dwork2014analyze}), and also give related bounds for the spectral norm.
While their bound is independent of the number of data points $m$, it may not be tight.
For instance, when $k=d$, one can obtain a tighter bound since, by the triangle inequality, $$\|M-Y\|_F - \|M-M_k\|_F \leq \|Y- M_k\|_F = \|\hat{M}- M\|_F = \|E\|_F \leq O(d)$$ w.h.p., since $\|E\|_F$ is just the norm of a vector of $d^2$ Gaussians with variance $\tilde{O}(1)$.
  Here, the $\tilde{O}$ notation hides polynomial factors of  $\frac{1}{\epsilon}$ and $\log(\frac{1}{\delta})$; in the rest of this section it also hides factors of $(\log d)^{\log \log d}$.
\color{black}

Thus, a question arises whether  Frobenius-norm utility bounds for the rank-$k$ covariance matrix approximation can be improved. 

\paragraph{Our contributions.} We show that a {\em complex}  version of the Gaussian mechanism (Algorithm \ref{alg_quaternion_Gaussian}) satisfies $\|Y- M_k\|_F \leq \tilde{O}(\sqrt{kd})$   whenever $\sigma_{k}-\sigma_{k+1} =\Omega(\sigma_k)$ (Theorem \ref{thm_utility}).
The large $k$'th eigenvalue gap is common in the matrix approximation literature as it motivates the problem of finding a rank-$k$ approximation in the first place --
 it suggests the presence of a useful rank-$k$ ``signal'' $M_k$ which one wishes to extract from the background ``noise'' in the data (see e.g. \cite{dwork2014analyze}).
Moreover, such a gap is also necessary for good rank-$k$ approximations to exist under the stronger metric $\|Y- M_k\|_F$ (see Appendix \ref{appendix_tightness}).
Thus, for matrices with such a gap, our bound improves by a factor
of, roughly, $\sqrt{k}$ on what is implied by the bound of \cite{dwork2014analyze}.
We note that the proof of Theorem \ref{thm_utility} also implies a bound of  $\|M-Y\|_F- \|M-M_k\|_F \leq \tilde{O}(\sqrt{kd})$ without any assumption the eigenvalues of $M$, improving unconditionally on the bound of   \cite{dwork2014analyze}; see Appendix \ref{appendix_gap_free_bounds_in_weaker_metric}.

Our main technical contribution is a new bound on the difference in the Frobenius norm of the best rank-$k$ approximation to $M$  and that of $M+E$ when $E$ is a complex Gaussian matrix (Theorem \ref{thm_rank_k_covariance_approximation_new}). 
Key to this result is the following insight:  View the addition of Gaussian noise to $M$ (the Gaussian mechanism of \cite{dwork2014analyze}) as a continuous-time matrix diffusion $M + B(t)$ for $t \in [0,T]$, with $B(0)=0$ and $B(T)=E$ for an appropriate choice of $T$.
This matrix-valued Brownian motion induces a stochastic process on the eigenvalues $\gamma_1(t) \geq \cdots \geq \gamma_d(t)$ and corresponding eigenvectors $u_1(t), \ldots, u_d(t)$ of $M+B(t)$ originally discovered by Dyson and now referred to as Dyson Brownian motion, with initial values $\gamma_i(0) = \sigma_i$ and $u_i(0)$ which are the eigenvalues and eigenvectors of the initial matrix $M$ \cite{dyson1962brownian}.
We then use these stochastic differential equations to track the perturbations to each eigenvector.
Roughly speaking, these equations say that, as the Dyson Brownian motion evolves over time, every pair of eigenvalues $\gamma_i(t)$ and $\gamma_j(t)$, and corresponding eigenvectors $u_i(t)$ and $u_j(t)$, interacts with the other eigenvalue/eigenvector with the magnitude of the interaction term proportional to $ \frac{1}{\gamma_i(t) - \gamma_j(t)}$ at any given time $t$. 
We then derive a stochastic differential equation that tracks how the utility changes as the Dyson Brownian motion evolves over time and integrate this differential equation over time to obtain a bound on the (expectation of) the utility $\mathbb{E}[\|Y -M_k \|_F]$ (Lemma \ref{Lemma_integral}) as a function of the gaps $\gamma_i(t) - \gamma_j(t)$.
This viewpoint leads to a bound on the utility which includes terms of the form $\frac{(\lambda_i - \lambda_{i+1})^2}{(\gamma_i(t) -\gamma_{i+1}(t))^{2}}$  and $\frac{(\lambda_i - \lambda_{i+1})}{(\gamma_i(t) -\gamma_{i+1}(t))^{2}}$ integrated over time, where, roughly speaking, $\lambda_i = \sigma_i$ for $i \leq k$ and $\lambda_i = 0$ otherwise, where  
 $\sigma_1,\ldots, \sigma_k$ are the eigenvalues of the rank-$k$ approximation $M_k$, and the $\gamma_i(t)$ are the eigenvalues of  $M+B(t)$.
The gaps $\gamma_i(t)-\gamma_{i+1}(t)$, however, may become small for some $t$, causing the terms in the utility bound to become large.

To bypass this,  several novel steps are employed here:
1) Rather than analyzing the utility by considering the output eigenvalues $\lambda_i$ to be fixed numbers, we instead set the top-$k$ output eigenvalues $\lambda_i$ to be {\em dynamically} changing over time and equal to $\gamma_i(t)$, making the gaps in the numerators small at exactly those times when the denominators are small.
2) We then leverage the fact that our mechanism adds {\em complex} Gaussian noise, which implies that $\gamma_i(t)$s evolve by repelling each other with a stronger ``force'' than when only real noise is added, to show that the gaps between the eigenvalues satisfy a high-probability lower bound of $\mathbb{P}(\gamma_i(t) - \gamma_{i+1}(t) \leq  \nfrac{s}{\sqrt{td}}) \leq \tilde{O}(s^3)$;  see Theorem 
\ref{thm:eigenvalue_gap}.
Our bound improves, in the setting where the random matrix is Gaussian, on previous eigenvalue gap bounds of \cite{nguyen2017random} where the bound on the probability decays as $O(s^2)$, which is insufficient for our application.
We prove Theorem \ref{thm:eigenvalue_gap}  by first showing, in Lemma \ref{lemma_gap_comparison}, that one can reduce the problem of bounding the gaps $\gamma_i(t)-\gamma_{i+1}(t)$ to the special case when the initial eigenvalues are all zero, and we subsequently prove Theorem \ref{thm:eigenvalue_gap} for this special case. 

 We suspect that the techniques presented here, which view the addition of random noise by the Gaussian mechanism as a matrix-valued diffusion, will find further applications for other private matrix approximation problems.
Using the ideas in the proof of Theorem \ref{thm_utility},
 we show a result similar to it for the private rank-$k$ subspace recovery problem where the goal is to output the best rank-$k$ {\em projection} matrix that approximates $M$ (Section \ref{sec:subspace}).

There is also a large body of work that studies matrix approximation problems beyond applications to privacy. 
These include deterministic matrix perturbation bounds such as those of \cite{davis1970rotation} (see also \cite{wedin1972perturbation}), which bound the distance between the subspace spanned by the top-$k$ eigenvectors of a symmetric or Hermitian matrix $M$ and a perturbed matrix $M + E$, where $E$ may be any deterministic symmetric or Hermitian matrix.
These bounds have been widely used in many applications which involve matrices perturbed by random noise, including in statistics \cite{yu2015useful}, engineering \cite{garg2021order, hajrya2013principal}, and numerical linear algebra \cite{cullum2002lanczos, hardt2014understanding}.
  While these perturbation bounds are tight with respect to worst-case deterministic perturbations, in many of these applications the noise $E$ is a Gaussian random matrix, and available bounds are not tight with respect to Gaussian random matrix noise.
   In the setting where $E$ is Gaussian, our bounds in Theorems \ref{thm_rank_k_covariance_approximation_new} and \ref{thm_rank_k_subspace} improve over the bounds implied by \cite{davis1970rotation} (and also improve on bounds implied by previous results \cite{o2018random} specialized to random matrix perturbations) for many matrices $M$ with spectral profiles $\sigma_1 \geq \cdots \geq \sigma_d$ with specific structure (see the discussion following Theorems \ref{thm_rank_k_covariance_approximation_new} and \ref{thm_rank_k_subspace}).

      The techniques that we introduce in the proofs of Theorems  \ref{thm_utility}, \ref{thm_rank_k_covariance_approximation_new}, and \ref{thm_rank_k_subspace} may be of independent interest.
  For instance, the stochastic analysis techniques developed in our paper which analyze perturbations of Hermitian matrices via the Dyson Brownian motion eigenvector process have been extended in \cite{lai2024singular} to obtain bounds on perturbations to the singular vectors of {\em rectangular} matrices perturbed by Gaussian noise.
 There, the evolution of the singular vectors is analyzed via a related stochastic process— the Dyson-Bessel process— which governs the evolution of the singular values and singular vectors of a rectangular matrix-valued diffusion.

Moreover, there is a long line of work (see e.g., \cite{wigner1955characteristic, dyson1962statistical, ginibre1965statistical, erdHos2012rigidity}) which studies the eigenvalues of Gaussian Orthogonal Ensemble (GOE) (and Gaussian Unitary Ensemble (GUE)) random matrices-- random matrices $G+G^\ast$ where $G$ has i.i.d. (complex) Gaussian entries -- including their gap statistics  \cite{ben2013extreme, feng2018large, feng2019small, forrester2018functional, giraud2022probing, bui2018gaps}, (and, more generally, the gap statistics of Wigner random matrices \cite{tao2013asymptotic, nguyen2017random}).
As our eigenvalue gap bounds in Theorem \ref{thm:eigenvalue_gap} improve over previous eigenvalue gap bounds \cite{nguyen2017random} for GOE and GUE random matrices, Theorem \ref{thm:eigenvalue_gap}  may be of interest in random matrix theory.

Extending Theorem \ref{thm_rank_k_covariance_approximation_new} to the spectral-norm utility is left as an open problem (see Remark \ref{rem_spectral_norm} for details).
\color{black}

\section{Main results}

For any $S \in \mathbb{C}^{d \times d}$, denote by $S^\ast$ its conjugate transpose. 
$S$ is Hermitian if $S=S^\ast$.
For any Hermitian matrix $S \in \mathbb{C}^{d \times d}$, consider its spectral decomposition $S = U \Lambda U^\ast$ where $\Lambda := \mathrm{diag}(\lambda_1,\ldots, \lambda_d)$ is a diagonal matrix containing the eigenvalues $\lambda_1 \geq \cdots \geq \lambda_d$ of $S$ and $U  := [u_1, \ldots, u_d]$ is a unitary matrix containing the eigenvectors $u_1, \ldots, u_d$ of $S$.
Denote by $\Lambda_k := \mathrm{diag}(\lambda_1,\ldots, \lambda_k, 0,\ldots,0)$ and by $S_k := U\Lambda_k U^\ast$ the Frobenius-norm minimizing rank-$k$ approximation of $S$.
Denote by $U_k := [u_1, \ldots, u_k]$ the $d \times k$ matrix of the top-$k$ eigenvectors of $U$.

\subsection{Private low-rank covariance approximation} \label{sec_our_results_1}

Our first result (Theorem \ref{thm_utility}) analyzes the complex Gaussian mechanism (Algorithm \ref{alg_quaternion_Gaussian})  and provides an upper bound on the expected Frobenius distance utility of this mechanism for the problem of rank-$k$ covariance approximation.
In the following, the $\tilde{O}$ notation hides polynomial factors of $(\log d)^{\log \log d}$; when used in discussions outside of formal theorem statements and proofs, the $\tilde{O}$ notation oftentimes also hides factors of $\epsilon$ and $\log(\frac{1}{\delta})$.

 This result relies on the following assumption about the $k$'th eigenvalue gap of the input matrix M:
\begin{assumption}[\bf $(M, k, T)$ eigenvalue gap]\label{assumption_gap}
The gap between the $k'$th largest eigenvalue $\sigma_k$ and $k+1$'st largest eigenvalue $\sigma_{k+1}$ of the matrix $M$ satisfies  $\sigma_k - \sigma_{k+1} \geq 4\sqrt{Td} $.
\end{assumption}
\color{black}

\LinesNumbered
\begin{algorithm}
\caption{Complex Gaussian Mechanism} 
 \label{alg_quaternion_Gaussian}
\label{alg_general_self_concordant}
\KwIn{$\epsilon, \delta >0$, $d, k \in \mathbb{N}$. A real symmetric PSD matrix $M \in \mathbb{R}^{d \times d}$}

 \KwOut{A real symmetric matrix $Y \in \mathbb{R}^{d\times d}$}

Sample matrices $W_1, W_2 \in \mathbb{R}^{d \times d}$ with i.i.d. $N(0,1)$ entries

Set $G := (W_1 +  \mathfrak{i} W_2) + (W_1 +  \mathfrak{i} W_2)^\ast$

Set $\hat{M} := M + \sqrt{T}G$,  where $T := \frac{2\log\frac{1.25}{\delta}}{\eps^2}$

Compute the diagonalization $\hat{M} = \hat{V} \hat{\Sigma} \hat{V}^\ast$ with eigenvalues $\hat{\sigma}_1 \geq \cdots \geq \hat{\sigma}_d$

Set $\hat{M}_k := \hat{V} \hat{\Sigma}_k \hat{V}^\ast$,  where $\hat{\Sigma}_k := \mathrm{diag}(\hat{\sigma}_1, \ldots,  \hat{\sigma}_k, 0,\ldots, 0)$

Output  $Y$ to be the real part of $\hat{M}_k$

\end{algorithm}

\begin{theorem}[\bf Private low-rank covariance approximation] \label{thm_utility}
Given $\eps,\delta>0$, there is an $(\varepsilon, \delta)$-differentially private algorithm (Algorithm \ref{alg_quaternion_Gaussian}) that, on input $k>0$ and a real symmetric PSD matrix $M \in \mathbb{R}^{d \times d}$ with eigenvalues $\sigma_1 \geq \cdots \geq \sigma_d \geq 0$
satisfying Assumption \ref{assumption_gap} $(M, k,  \nicefrac{(4\log\frac{1.25}{\delta})}{\eps^2})$ and $\sigma_1 \leq d^{50}$, 
outputs a rank-$k$ matrix $Y \in \mathbb{R}^{d \times d}$ such that  
$$ \sqrt{\mathbb{E}[\|M_k - Y\|_F^2]}  \leq \tilde{O}\left(\sqrt{kd} \times \frac{\sigma_k}{\sigma_k - \sigma_{k+1}} \times  \frac{\sqrt{\log\frac{1}{\delta}}}{\eps}\right).$$
$M_k$ 
is the Frobenius-norm minimizing rank-$k$ approximation to $M$.
\end{theorem}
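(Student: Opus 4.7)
My plan is to split the proof of Theorem~\ref{thm_utility} into two pieces: (i) the $(\eps,\delta)$-privacy guarantee for Algorithm~\ref{alg_quaternion_Gaussian}, which I would establish by a standard Gaussian-mechanism calibration in the Hermitian embedding, and (ii) the utility bound, which I would reduce to the complex-Gaussian perturbation estimate Theorem~\ref{thm_rank_k_covariance_approximation_new} via an elementary real-part inequality.

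For the privacy piece, the noise $\sqrt{T}\,G$ is a Hermitian matrix whose real diagonal entries are $N(0,4T)$ and whose off-diagonal real and imaginary parts are independent $N(0,2T)$ (subject to the Hermitian constraint). Viewing $M \mapsto M + \sqrt{T}\,G$ as a real-valued Gaussian mechanism on the flattened Hermitian embedding, the $\ell_2$-sensitivity under the neighboring relation $M' = M - uu^\top + vv^\top$ with $\|u\|_2,\|v\|_2 \leq 1$ is $O(1)$, so the analytic Gaussian-mechanism analysis of \cite{dwork2006calibrating} with noise scale $T = 2\log(1.25/\delta)/\eps^2$ yields $(\eps,\delta)$-DP for the released $\hat{M}$. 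Since diagonalizing, truncating to the top-$k$ spectrum, and taking the real part are all data-independent deterministic post-processing maps, the post-processing lemma transfers the guarantee to the final output $Y$.

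For the utility piece, the key observation is that $M_k$ is real-valued, so writing $\hat{M}_k = Y + \mathfrak{i}\,\mathrm{Im}(\hat{M}_k)$ gives an orthogonal Frobenius decomposition
\begin{equation*}
\|\hat{M}_k - M_k\|_F^2 \;=\; \|Y - M_k\|_F^2 + \|\mathrm{Im}(\hat{M}_k)\|_F^2 \;\geq\; \|Y - M_k\|_F^2.
\end{equation*}
Taking expectations and square roots, I would then invoke Theorem~\ref{thm_rank_k_covariance_approximation_new} applied to the complex Gaussian $E = \sqrt{T}\,G$; the hypothesis of that theorem matches Assumption~\ref{assumption_gap} by the choice $T = 4\log(1.25/\delta)/\eps^2$ in the theorem statement, yielding $\sqrt{\mathbb{E}\,\|\hat{M}_k - M_k\|_F^2} \leq \tilde{O}\!\bigl(\sqrt{kd}\cdot \sigma_k/(\sigma_k-\sigma_{k+1})\cdot \sqrt{T}\bigr)$, and substituting $\sqrt{T} = \tilde{O}(\sqrt{\log(1/\delta)}/\eps)$ delivers the target bound.

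The main obstacle here is purely bookkeeping: verifying that the noise scale used by the algorithm, the scale required for $(\eps,\delta)$-DP, and the $T$ appearing in Assumption~\ref{assumption_gap} are mutually consistent, and that the constraint $\sigma_1 \leq d^{50}$ is the one needed to absorb the polylogarithmic tail factors hidden in the $\tilde{O}$ coming from the high-probability events used in Theorem~\ref{thm_rank_k_covariance_approximation_new}. All the genuine difficulty --- the Dyson Brownian motion tracking of the eigenvectors via the stochastic differential equations on $\gamma_i(t), u_i(t)$, the dynamic choice $\lambda_i(t) = \gamma_i(t)$ that cancels small gaps in the numerator against small gaps in the denominator, and the improved eigenvalue-gap tail bound Theorem~\ref{thm:eigenvalue_gap} that controls the residual inverse-gap integrals --- is encapsulated inside the two cited theorems and need not be re-derived for Theorem~\ref{thm_utility} itself.
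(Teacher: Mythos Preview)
Your proposal is correct and structurally matches the paper's proof: both establish privacy for $\hat{M}$ and then reduce the utility bound on $Y$ to Theorem~\ref{thm_rank_k_covariance_approximation_new} via the real/imaginary Frobenius orthogonality.

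The one noteworthy difference is in the privacy argument. You analyze the full complex Gaussian noise as a Gaussian mechanism on the Hermitian embedding and compute the $\ell_2$ sensitivity from scratch. The paper instead observes that $\hat{M} = \bigl(M + \sqrt{T}(W_1 + W_1^\top)\bigr) + \sqrt{T}\bigl(W_2\mathfrak{i} + (W_2\mathfrak{i})^\ast\bigr)$, where the first bracket is exactly the real Gaussian mechanism of \cite{dwork2014analyze} (already known to be $(\eps,\delta)$-DP at this $T$) and the second bracket is independent of $M$; hence the entire algorithm is a post-processing of a known DP mechanism, and no new sensitivity calculation is needed. Your route works but is slightly more laborious. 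On the utility side, your direct inequality $\|Y - M_k\|_F^2 \leq \|\hat{M}_k - M_k\|_F^2$ is marginally cleaner than the paper's two-step argument (closest real matrix plus triangle inequality), which incurs an extra factor of $2$; both are absorbed into the $\tilde{O}$.
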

\noindent
The proof of Theorem \ref{thm_utility} appears in Section \ref{sec_proof_utility_privacy}.
We note that the requirement in Theorem \ref{thm_utility} that  $\sigma_1 \leq d^{50}$ is an artifact of the proof, and can be replaced with $\sigma_1 \leq d^{C}$ for any large universal constant $C>0$.

For matrices $M$ whose  $k$'th eigengap satisfies $\sigma_k - \sigma_{k+1} = \Omega(\sigma_k)$, Theorem \ref{thm_utility} improves by a factor of $\sqrt{k}$ on the (expectation of) the bound in Theorem 7 of \cite{dwork2014analyze} which says the output $Y$ of their mechanism satisfies  $\|Y - M\|_F - \|M_k - M\|_F = \tilde{O}(k\sqrt{d} \log(\frac{1}{s}))$ w.h.p $1-s$ for any $s>0$ \footnote{Note that while the growth rate of the high-probability bound is not stated explicitly in the statement of Theorem 7 in \cite{dwork2014analyze}, a logarithmic growth rate of $\log(\frac{1}{s})$ follows directly from their proof.}.
\color{black}
This is because an upper bound on $\| Y - M_k\|_F$ implies an upper bound on their utility measure by the triangle inequality.
The reason why \cite{dwork2014analyze} is independent of the gap $\sigma_k - \sigma_{k+1}$ while our bound depends on the  ratio $\frac{\sigma_k}{\sigma_k -\sigma_{k+1}}$  is due to the fact that 
if, e.g., $\sigma_k - \sigma_{k+1} = 0$ an arbitrarily small Gaussian perturbation to $M$ would lead to a perturbation of $\|\hat{V}_k - V_k\|_2 =\Omega(1)$ w.h.p., where $\hat{V}_k$ and $V_k$ are the matrices containing the top-$k$ eigenvectors of $\hat{M}$ and $M$ respectively.
 Roughly speaking, this, in turn, would lead to a perturbation of at least $\|Y - M_k \|_F  \geq \|\sigma_k\hat{V}_k \hat{V}_k^\ast - \sigma_k V_k V_k^\ast\|_2 \geq \Omega(  \sigma_k)$.
The techniques used in the proof of Theorem \ref{thm_utility} can also be used to improve this Frobenius utility to $\tilde{O}(\sqrt{kd})$ {\em without} assuming the eigengap condition; see Theorem \ref{thm_weaker_Frobenius_metric} in  Appendix \ref{appendix_gap_free_bounds_in_weaker_metric}.
For many applications, the matrix $M$ has a large $k$'th eigenvalue gap $\sigma_{k}-\sigma_{k+1}$  (e.g., $\sigma_{k}-\sigma_{k+1} = \Omega(\sigma_{k})$), and the presence of a large $k$'th eigenvalue gap is oftentimes given as the motivation for computing a low-rank approximation of a given rank $k$ (see e.g. \cite{yu2015useful, zhang2020note,  von2007tutorial,  jolliffe2002choosing, halimi2016estimating}).
That being said, for applications where the weaker metric $\|Y - M\|_F - \|M_k - M\|_F$ may be sufficient, the eigengap-free bound in Theorem \ref{thm_weaker_Frobenius_metric} may be of interest. 
 See Appendix \ref{sec_strong_weak_metric_comparison} for a discussion comparing the stronger Frobenius distance metric $\| Y - M_k\|_F$ used in Theorem \ref{thm_utility} and the weaker metric of Theorem  \ref{thm_weaker_Frobenius_metric}. 
Finally, the expectation bound in Theorem \ref{thm_utility} immediately implies a high-probability bound via Chebyshev's inequality,  $\|\hat{M}_k - M_k \|_F \leq \tilde{O}(\sqrt{kd} \frac{\sigma_k}{\sigma_k - \sigma_{k+1}}  \sqrt{T}  \frac{1}{\sqrt{s}})$ w.h.p. $1-s$ for all $s>0$.

Note that the growth factor of this high-probability bound, $\frac{1}{\sqrt{s}}$, is sublinear in the (inverse) probability parameter $\frac{1}{s}$, while the bound in \cite{dwork2014analyze} has a logarithmic growth factor, $\log(\frac{1}{s})$.
It is an interesting open problem whether our bounds can be extended to high-probability bounds which grow logarithmically in the probability parameter $\frac{1}{s}$ (see Appendix \ref{sec_high_probability} for details).
\color{black}

The privacy guarantee in Theorem \ref{thm_utility} follows directly from prior works on the (real) Gaussian mechanism (see Section \ref{sec_proof_utility_privacy}).
The utility bound in Theorem \ref{thm_utility} follows from the following ``average-case'' matrix perturbation bound for complex Gaussian random perturbations.

\subsection{Bound on change in low-rank approximations under complex Gaussian perturbations}

\begin{theorem}[\bf Frobenius bound for complex Gaussian perturbations] \label{thm_rank_k_covariance_approximation_new}
Suppose we are given $k>0$, $T>0$, and a  Hermitian matrix  $M \in \mathbb{C}^{d \times d}$ with eigenvalues  $\sigma_1 \geq \cdots \geq \sigma_d \geq 0$.   
Let $\hat{M} := M + \sqrt{T}[(W_1 + \mathfrak{i}W_2) + (W_1 + \mathfrak{i}W_2)^\ast]$ where $W_1, W_2 \in \mathbb{R}^{d \times d}$ have entries which are independent $N(0,1)$ random variables.
 Denote, respectively, by  $\sigma_1 \geq \cdots \geq \sigma_d$ and  $\hat{\sigma}_1\geq \ldots \geq \hat{\sigma}_d \geq 0$ the eigenvalues of $M$ and $\hat{M}$, and by $V$ and  $\hat{V}$ the matrices whose columns are the corresponding eigenvectors of $M$ and $\hat{M}$.
Moreover, let  $M_k := V \Gamma_k V^\ast$ and  $\hat{M}_k := \hat{V} \hat{\Gamma}_k \hat{V}^\ast$ be the rank-$k$ approximations of $M$ and $\hat{M}$, where $\Gamma_k := \mathrm{diag}(\sigma_1,\ldots, \sigma_k,0,\ldots,0)$  and $\hat{\Gamma}_k := \mathrm{diag}(\hat{\sigma}_1,\ldots, \hat{\sigma}_k,0,\ldots,0)$. 
Suppose that $M$ satisfies Assumption \ref{assumption_gap} $(M, k, T)$ and that $\sigma_1 \leq d^{50}$. 
Then we have
$$ 
\sqrt{\mathbb{E}\left[\left\|\hat{M}_k -  M_k\right \|_F^2\right]} \leq \tilde{O}\left(\sqrt{kd} \frac{\sigma_k}{\sigma_k - \sigma_{k+1}}\right) \cdot \sqrt{T}.
$$
\end{theorem}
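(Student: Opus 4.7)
The overall strategy, signposted throughout the introduction, is to view the perturbation $\hat{M} = M + \sqrt{T}\,G$ as the endpoint of a continuous-time Hermitian matrix Brownian motion and to track the rank-$k$ truncation along this interpolation via the Dyson Brownian motion equations for eigenvalues and eigenvectors. Concretely, I would let $B(t)$ be a Hermitian matrix-valued Brownian motion normalised so that $B(T)$ has the same distribution as $\sqrt{T}[(W_1+\mathfrak{i}W_2)+(W_1+\mathfrak{i}W_2)^\ast]$, set $M(t) := M + B(t)$, and let $\gamma_1(t)\geq\cdots\geq\gamma_d(t)$ with unit eigenvectors $u_1(t),\dots,u_d(t)$ be its spectral data. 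Dyson's SDEs then give, in the complex (GUE-scaling) regime, $d\gamma_i = \text{(drift in }1/(\gamma_i-\gamma_j)\text{)} + dZ_i$ and
$$du_i = \sum_{j\neq i}\frac{dB_{ij}(t)}{\gamma_i-\gamma_j}\,u_j \;-\; \tfrac12\sum_{j\neq i}\frac{dt}{(\gamma_i-\gamma_j)^2}\,u_i.$$

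Next, I would study the natural interpolant $M_k(t) := P_k(t) M(t) P_k(t) = \sum_{i\le k}\gamma_i(t)u_i(t)u_i(t)^\ast$ (i.e.\ dynamically letting the output eigenvalues equal $\gamma_i(t)$), so $M_k(0)=M_k$ and $M_k(T)=\hat{M}_k$. Expanding $dM_k(t)$ by Itô and grouping contributions from pairs $(i,j)$, a crucial cancellation occurs: for $i,j\le k$, the anti-symmetric pieces from $d(\gamma_i u_i u_i^\ast)$ and $d(\gamma_j u_j u_j^\ast)$ combine so that the $1/(\gamma_i-\gamma_j)$ singularities carry a numerator of exactly $\gamma_i-\gamma_j$ and therefore cancel. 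The same cancellation shows that pairs with $i,j>k$ contribute nothing to $M_k(t)$. What survives are (a) a direct within-subspace Brownian term $P_k(t)\,dB(t)\,P_k(t)$, of Frobenius quadratic variation $O(k^2)\,dt$; and (b) the ``boundary'' cross-pair terms with $i\le k<j$, of the shape $\gamma_i(t)/(\gamma_i(t)-\gamma_j(t))$ times a Brownian differential, together with their Itô drift counterparts of shape $\gamma_i(t)/(\gamma_i(t)-\gamma_j(t))^2\,dt$.

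With this in hand, I would use Itô's isometry and the triangle inequality to bound
$$\mathbb{E}\bigl\|M_k(T)-M_k(0)\bigr\|_F^2 \;\lesssim\; k^2 T \;+\; \int_0^T \sum_{i\le k<j}\mathbb{E}\!\left[\frac{\gamma_i(t)^2}{(\gamma_i(t)-\gamma_j(t))^2}\right]dt \;+\; (\text{lower-order drift contributions}),$$
together with analogous integrals coming from the Itô drift. The boundary gap $\gamma_k(t)-\gamma_{k+1}(t)$ is kept $\Omega(\sigma_k-\sigma_{k+1})$ throughout $[0,T]$ via Weyl's inequality combined with Assumption \ref{assumption_gap}: since $\|B(t)\|_{\mathrm{op}}\lesssim\sqrt{Td}$ with high probability and $\sigma_k-\sigma_{k+1}\ge 4\sqrt{Td}$, no top-$k$ eigenvalue can cross the boundary. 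Analogously, $\gamma_i(t)\le \sigma_i + O(\sqrt{Td})\le \tilde{O}(\sigma_k)$ for $i\le k$ under the mild global assumption $\sigma_1\le d^{50}$. For any fixed $i\le k<j$ one then has $\gamma_i-\gamma_j\ge \gamma_k-\gamma_{k+1}$, giving summand bounds that collapse to $\tilde{O}(\sigma_k^2/(\sigma_k-\sigma_{k+1})^2)$ for each of the $k(d-k)=O(kd)$ boundary pairs, yielding the target $\tilde{O}(kd)\cdot\sigma_k^2/(\sigma_k-\sigma_{k+1})^2\cdot T$.

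\textbf{Main obstacles.} The cleanest step to get right is the Itô calculation showing that \emph{inner} $1/(\gamma_i-\gamma_j)$ singularities (with $i,j\le k$) really do cancel when the output eigenvalues are chosen dynamically to be $\gamma_i(t)$; this is the heart of why we avoid the Davis--Kahan $1/(\sigma_k-\sigma_{k+1})$ blow-up pair-by-pair. The harder technical step is controlling the residual integrated terms of the form $\int_0^T \mathbb{E}[(\gamma_i(t)-\gamma_j(t))^{-2}]\,dt$ arising from the drift corrections; these are only finite thanks to the cubic small-gap tail $\mathbb{P}(\gamma_i-\gamma_{i+1}\le s/\sqrt{td})\le\tilde{O}(s^3)$ of Theorem \ref{thm:eigenvalue_gap} (the weaker $s^2$ bound of \cite{nguyen2017random} would make $\mathbb{E}[1/\text{gap}^2]$ divergent), after which a change of variables turns $\int\tilde{O}(s^3)\,d(\text{gap}^{-2})$ into a convergent $\int du/u^{3/2}$. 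I would then invoke Lemma \ref{Lemma_integral} to assemble these pieces cleanly, yielding the claimed $\tilde{O}(\sqrt{kd}\cdot\sigma_k/(\sigma_k-\sigma_{k+1}))\cdot\sqrt{T}$ bound after taking a square root.
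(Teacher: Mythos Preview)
Your proposal is essentially the paper's own proof: you interpolate via $M_k(t)=\sum_{i\le k}\gamma_i(t)u_i(t)u_i(t)^\ast$ (the paper's $\Psi(t)$ with dynamic eigenvalues $\lambda_i(t)=\gamma_i(t)$), compute its It\^o differential from the Dyson eigenvector SDE, observe the numerator/denominator cancellation for pairs $i,j\le k$, control the boundary $k$-gap via Weyl plus Assumption~\ref{assumption_gap}, and handle the residual $\int\mathbb{E}[(\gamma_i-\gamma_j)^{-2}]\,dt$ drift terms using the $s^3$ tail of Theorem~\ref{thm:eigenvalue_gap}. This matches the structure of Section~\ref{sec_utility} (Lemmas~\ref{Lemma_projection_differntial}--\ref{Lemma_integral} and the final assembly), including your correct identification that it is precisely the complex ($\beta=2$) repulsion that makes the inverse second moment finite.

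One correction: the step ``$\gamma_i(t)\le\sigma_i+O(\sqrt{Td})\le\tilde O(\sigma_k)$ for $i\le k$'' is not valid --- $\sigma_1$ can be as large as $d^{50}$ while $\sigma_k$ is of order $\sqrt{d}$, and the hypothesis $\sigma_1\le d^{50}$ does not make $\sigma_i\le\tilde O(\sigma_k)$. The right way to get $\gamma_i(t)^2/(\gamma_i(t)-\gamma_j(t))^2\le O\bigl(\sigma_k^2/(\sigma_k-\sigma_{k+1})^2\bigr)$ for $i\le k<j$ is monotonicity: since $\gamma_j\le\gamma_{k+1}$ we have $\gamma_i/(\gamma_i-\gamma_j)\le\gamma_i/(\gamma_i-\gamma_{k+1})$, and $x\mapsto x/(x-c)$ is decreasing for $x>c$, so this is at most $\gamma_k/(\gamma_k-\gamma_{k+1})$, which Weyl then converts to $O(\sigma_k/(\sigma_k-\sigma_{k+1}))$. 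With that fix (and the paper's minor technical scaffolding of a bad-event indicator $\hat E_\alpha$ and a short jump-start interval $[0,t_0]$, which you have not mentioned but which present no conceptual difficulty), your plan goes through.
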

\noindent
 The proof of Theorem \ref{thm_rank_k_covariance_approximation_new} is presented in Section \ref{sec_utility}.
The requirement  $\sigma_1 \leq d^{50}$  can be replaced with $\sigma_1 \leq d^{C}$ for any large universal constant $C>0$.
One can also compare the bound in this theorem to those obtained by deploying deterministic eigenvector perturbation bounds such as those of \cite{davis1970rotation}, which say roughly that given any Hermitian matrices $M, E$, one has 
\begin{equation}\label{eq_DK}
\left\|\hat{V}_k\hat{V}_k^\ast - V_k V_k^\ast \right\|_2 \leq \frac{\|E\|_2}{\sigma_k - \sigma_{k+1}},
\end{equation}
where $V_k$ and $\hat{V}_k$ are, respectively, the top-$k$ eigenvectors of the input matrix $M$ and the perturbed matrix $\hat{M} := M+E$.
Applying \eqref{eq_DK}, together with concentration bounds which say that the spectral norm of a random matrix $G$ with i.i.d. $N(0,1)$ entries satisfies $\|G\|_2 = O(\sqrt{d})$ w.h.p. (e.g. Theorem 4.4.5 of \cite{vershynin2018high}), one can obtain a bound on the Frobenius distance of
$$\left\|\hat{M}_k - M_k\right\|_F \leq O\left(k^{1.5}\sqrt{d} + \frac{\sigma_k}{\sigma_{k}-\sigma_{k+1}} \sqrt{k}\sqrt{d}\right) \cdot \sqrt{T}$$ w.h.p. when  $\sigma_k-\sigma_{k+1} \geq \Omega(\sqrt{Td})$ (see Section \ref{sec_deterministic_perturbation_bounds} for details). 
Theorem \ref{thm_rank_k_covariance_approximation_new} improves (in expectation) on this bound by a factor of $k$ when e.g. $\sigma_k-\sigma_{k+1} = \Omega(\sigma_k)$.

\cite{o2018random} provide eigenvector perturbation bounds for matrices $\hat{M} := M+E$ when the input matrix $M$ is a deterministic low-rank matrix of rank $r \geq k$ and the matrix $E$ is a random matrix.
In particular, their Theorem 18 improves w.h.p. on the deterministic bound \eqref{eq_DK} for certain inputs $M$ of sufficiently low rank and random matrices $E$. 
If one directly applies their bound to the setting when $E$ is a Hermitian Gaussian random matrix (e.g., by plugging in their bound in place of  \eqref{eq_DK} in Inequality \eqref{eq_DK_1} of Section \ref{sec_deterministic_perturbation_bounds}), one obtains a bound on the quantity $\|\hat{M}_k - M_k\|_F$.
Theorem \ref{thm_rank_k_covariance_approximation_new} improves (in expectation) on the resulting bound by a factor of $k^{1.5}$ whenever e.g. $\sigma_{k}-\sigma_{k+1} \geq \Omega(\sqrt{d})$.

While we do not know if our bound in Theorem  \ref{thm_rank_k_covariance_approximation_new} is tight for every input matrix $M$, we do verify that it is tight for every $k \leq d$ and every value of the gap ratio $\frac{\sigma_k}{ \sigma_k- \sigma_{k+1}}$, up to factors of $(\log d)^{\log \log d}$ hidden in the $\tilde{O}$ notation (see Appendix \ref{appendix_tightness} for details).
An interesting open problem is whether complex Gaussian noise is necessary to achieve our bounds in Theorems \ref{thm_utility} and \ref{thm_rank_k_covariance_approximation_new}, or whether our analysis can be extended to real Gaussian noise.

\subsection{Eigenvalue gaps under complex Gaussian perturbations }\label{sec_technical_results}
One of the key steps in this paper is to reduce the proof of Theorem \ref{thm_rank_k_covariance_approximation_new} to estimating gaps between eigenvalues of the matrix $M+G+G^*$ where $G$ is a random matrix with i.i.d. complex Gaussian entries.
This reduction is non-trivial and is explained in Section \ref{sec_technical_overview}.
The random matrix $G+G^\ast$ is referred to as the Gaussian Unitary Ensemble (GUE) when $G$ has i.i.d. complex Gaussian entries, and as the Gaussian Orthogonal Ensemble (GOE) when G has i.i.d. real Gaussian entries.

\begin{theorem}[\bf Eigenvalue gaps of  Gaussian Unitary Ensemble (GUE) and Gaussian Orthogonal Ensemble (GOE)]\label{thm:eigenvalue_gap}
Let $M$ be a complex Hermitian matrix (or a real symmetric $M \in \mathbb{R}^{d \times d}$).
Let $A := M + G + G^\ast$ where $G$ is a matrix with i.i.d. complex (or real) standard Gaussian entries, and denote by $\eta_1,\ldots, \eta_d$ the eigenvalues of $A$.
Then
$$
    \mathbb{P}\left(\eta_i - \eta_{i+1} \leq s \frac{1}{\mathfrak{b}\sqrt{d}}\right) \leq s^{\beta +1} + \frac{1}{d^{1000}}$$ for all $s>0$, and for all $1\leq i < d$,
where $\beta=2$ for the complex Hermitian case (and $\beta=1$ for the real-symmetric case), and $\mathfrak{b} = (\log d)^{ L\log \log d}$ and $L$ is a universal constant.    
\end{theorem}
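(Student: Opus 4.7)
The plan is to follow the two-step approach announced in the introduction: reduce to the case $M=0$ via Lemma \ref{lemma_gap_comparison}, then prove the bound directly for pure GUE/GOE using the explicit joint eigenvalue density and level-repulsion estimates.

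\textbf{Reduction to $M=0$.} I would first apply Lemma \ref{lemma_gap_comparison} to replace $M$ by the zero matrix, at the cost of at most the additive $d^{-1000}$ error. In the Dyson Brownian motion picture, $A = M + G + G^*$ is (up to time rescaling) the endpoint of a Hermitian matrix-valued Brownian motion initialized at $M$, while $G+G^*$ is the endpoint started from $0$. The reduction rests on a coupling/monotonicity argument along this flow showing that the small-gap probability at index $(i,i+1)$ for an arbitrary initial spectrum is dominated by the corresponding small-gap probability when all initial eigenvalues coincide at $0$, which is the most concentrated configuration and therefore has the smallest gaps. This leaves us with the task of proving the claim for $A_0 := G + G^*$.

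\textbf{Level repulsion for pure GUE/GOE.} For $A_0$ I would use the explicit joint eigenvalue density $p(\eta) \propto \prod_{i<j} |\eta_i - \eta_j|^\beta \, e^{-c \sum_i \eta_i^2}$. First, intersect with the rigidity event $\mathcal{E}$ that all eigenvalues lie in $[-C\sqrt{d}\,\polylog d,\, C\sqrt{d}\,\polylog d]$ and, more strongly, that each eigenvalue is within $(\polylog d)/\sqrt{d}$ of its classical (Wigner semicircle) location; for a sufficiently generous $\polylog d$ factor, standard GUE/GOE concentration gives $\mathbb{P}(\mathcal{E}^c) \leq d^{-1000}$. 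On $\mathcal{E}$, condition on all eigenvalues except $\eta_i, \eta_{i+1}$ and change variables to $x = \eta_i - \eta_{i+1}$, $y = \eta_i + \eta_{i+1}$. The conditional density factors as $x^\beta \, \rho_0(x,y)$ on $\{x \geq 0\}$, where $\rho_0$ is smooth and, on the $x$-scale $1/(\mathfrak{b}\sqrt{d})$, bounded above by a quantity that rigidity controls by $\mathfrak{b}^{O(1)} \sqrt{d}$. Integrating $x^\beta$ from $0$ to $s/(\mathfrak{b}\sqrt{d})$ yields $\frac{1}{\beta+1} (s/(\mathfrak{b}\sqrt{d}))^{\beta+1}$, and integrating out $y$ together with the conditioning over the bulk multiplies by at most $O(\sqrt{d}) \cdot \mathfrak{b}^{O(1)}$. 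Choosing the constant $L$ inside $\mathfrak{b} = (\log d)^{L\log\log d}$ large enough absorbs all polylogarithmic losses, leaving the clean bound $s^{\beta+1}$. Adding back $\mathbb{P}(\mathcal{E}^c) \leq d^{-1000}$ completes the proof.

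\textbf{Main obstacle.} The hardest step will be establishing the pointwise-in-index version of Lemma \ref{lemma_gap_comparison}: ordered gap statistics do not, in general, transport cleanly under DBM couplings, so a delicate use of the DBM driving noise or of eigenvalue interlacing under rank-one perturbations appears necessary in order to dominate the gap at a \emph{specific} index $i$ rather than only in an averaged sense. A secondary difficulty is the uniform control of $\rho_0$ under the conditioning on the remaining $d-2$ eigenvalues: the product $\prod_{j \neq i, i+1} |\eta_i - \eta_j|^\beta$ in the density can blow up if the conditioning places some $\eta_j$ very close to $\eta_i$ or $\eta_{i+1}$, so the rigidity event and the local eigenvalue-spacing bounds must be deployed simultaneously in the conditioning step. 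It is this interplay that forces the $(\log d)^{L\log\log d}$ factor rather than a single power of $\log d$ and pins down the value of the universal constant $L$.
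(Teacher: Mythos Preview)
Your two-step architecture matches the paper exactly: reduce to $M=0$ via Lemma \ref{lemma_gap_comparison}, then handle pure GUE/GOE through the explicit joint density together with rigidity. You also correctly identify the coupling/monotonicity mechanism behind Lemma \ref{lemma_gap_comparison}. The paper proves that lemma by a short contradiction argument on the first gap-crossing time (Proposition \ref{prop_stochastic_derivative_comparison} compares $\mathrm{d}(\gamma_i-\gamma_{i+1})$ to $\mathrm{d}(\xi_i-\xi_{i+1})$ at that time and shows the sign is right), so contrary to your assessment this is not the main obstacle; the lemma already is the ``pointwise-in-index'' statement you say you need.

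The real gap is in your GUE/GOE step, and it is a normalization issue. Conditioning on the other $d-2$ eigenvalues and integrating $x^\beta$ over $[0,s/(\mathfrak{b}\sqrt d)]$ gives an upper bound on an \emph{unnormalized} integral; to turn this into a probability you also need a lower bound on the total conditional mass, and your sketch supplies none. Your dimension count (``$\rho_0$ bounded by $\mathfrak{b}^{O(1)}\sqrt d$'' and ``integrating out $y$ together with the conditioning multiplies by at most $O(\sqrt d)\cdot\mathfrak{b}^{O(1)}$'') does not close to $s^{\beta+1}$ without that missing lower bound. The paper uses a genuinely different device that sidesteps normalization entirely: it constructs explicit maps $g,\phi:\mathcal{W}_d\to\mathcal{W}_d$ (bulk and edge versions, \eqref{eq_g3}--\eqref{eq_g1} and \eqref{eq_phi1}--\eqref{eq_phi3}) which stretch the $i$-th gap by $\sim 1/s$ while perturbing only the $O(\mathfrak{b}^2)$ neighboring gaps by factors $(1-\alpha)$ with $\alpha\le\mathfrak{b}^{-2}$. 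It then bounds the Jacobian determinant below by $\Omega(1/s)$ (Lemmas \ref{prop_Jacobian}, \ref{prop_Jacobian_phi}) and the density ratio $f(\eta)/f(g(\eta))$ above by $O((\mathfrak{b}^4\sqrt d\,y)^\beta)$ (Lemmas \ref{lemma_density_ratio}, \ref{lemma_density_ratio_edge}). Because the map is essentially injective and $\int f=1$, the change-of-variables chain \eqref{eq_n43}--\eqref{eq_n46} yields the probability bound with no separate normalization step. This construction also dissolves your ``secondary difficulty'': since the map leaves all but $O(\mathfrak{b}^2)$ pairwise repulsion factors untouched, one never has to control $\prod_j|\eta_i-\eta_j|^\beta$ pointwise under conditioning.
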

\noindent
The proof of Theorem \ref{thm:eigenvalue_gap} is presented in Section \ref{section_GUE_proof} and an overview appears in Section \ref{sec_technical_gaps}.
We note that the term $\frac{1}{d^{1000}}$ in Theorem \ref{thm:eigenvalue_gap}  can be replaced by $\frac{1}{d^C}$ for any universal constant $C>0$.
Thus, Theorem \ref{thm:eigenvalue_gap} says that for any $s> d^{-C}$ (where $C$ can be taken to be any large universal constant), the probability that any gap $\eta_i- \eta_{i+1}$ of a Hermitian matrix $M$ perturbed by a GUE random matrix is less than or equal to $\tilde{O}\left(\frac{s}{\sqrt{d}}\right)$ is $O(s^3)$.
The $s^3$ dependence is important to our analysis of the Frobenius-distance utility in Theorem  \ref{thm_rank_k_covariance_approximation_new},  where we wish to bound the time-average of the second moment of the inverse gaps $\mathbb{E}\left[\frac{1}{(\gamma_i(t) - \gamma_j(t))^2}\right]$.
Theorem \ref{thm:eigenvalue_gap} allows us to bound this term by $O(d)$.
We use it to bound the (squared) expected Frobenius utility, $\mathbb{E}[\|\hat{M}_k -  M_k \|_F^2] \leq O(kd)$, thus implying the bound in Theorem \ref{thm_rank_k_covariance_approximation_new}.

For the special case when $M=0$, the matrix $A$ is a GUE or GOE random matrix depending on whether we add complex or real Gaussian noise.   
 The distribution of the gaps of the GUE/GOE random matrices in the limit as $d \rightarrow \infty$ was studied, e.g., in \cite{dyson1963random, tao2013asymptotic, arous2013extreme}, and was also studied non-asymptotically in e.g. \cite{nguyen2017random}.
However, to the best of our knowledge, we are not aware of a previous (non-asymptotic in $d$) lower bound on the gaps of the complex Hermitian GUE random matrices which scales as small as $O(s^3)$ (or  $O(s^2)$ for the real symmetric GOE).
For instance, \cite{nguyen2017random}, which studies eigenvalue gaps of Wigner random matrices with sub-Gaussian tails--a more general class of random matrices which includes as a special case the GUE/GOE random matrices--show a bound of $\mathbb{P}\left(\eta_i- \eta_{i+1} \leq \frac{s}{\sqrt{d}}\right) \leq O(s^2)$ for the eigenvalues $\eta$ of the complex Hermitian GUE (or $\mathbb{P}\left(\eta_i- \eta_{i+1} \leq \frac{s}{\sqrt{d}}\right)  \leq O(s)$ in the case of the real symmetric GOE) for any  $s> d^{-C}$ where $C>0$ is a universal constant (Corollary 2.2 in \cite{nguyen2017random}, which they can extend to the complex case).
On the other hand, we note that \cite{nguyen2017random} focus on matrix universality results that apply to a larger class of  random matrices than the GUE/GOE random matrices,
and that our bound includes additional factors of $(\log d)^{\log \log d}$ hidden in the $\tilde{O}$ notation.
We believe the decay rates of $1- O(s^3)$ for the GUE (and $1- O(s^2)$ for the GOE) in our eigenvalue gap bounds are tight, see Section \ref{sec_technical_overview}.

Finally, we note that Theorem \ref{thm:eigenvalue_gap} may be of independent interest to subareas of mathematics, physics, and numerical analysis where the eigenvalue gaps of the GOE or GUE random matrices arise.
 There is a long line of work which studies the statistics of the eigenvalues of GOE/GUE random matrices, including their gap statistics  \cite{ben2013extreme, feng2018large, feng2019small, forrester2018functional, giraud2022probing, bui2018gaps}, (and, more generally, the gap statistics of Wigner random matrices \cite{tao2013asymptotic, nguyen2017random}).
 The eigenvalue gap statistics of the GOE/GUE random matrix have applications to numerous areas of mathematics, including, e.g. analytic number theory where the local statistics of the zeros of the Riemann zeta function are conjectured to follow the distribution of the GUE eigenvalues \cite{montgomery1973pair, rudnick1996zeros, blomer2017small}.
   They also have applications to quantum physics, where, starting with Wigner who used the eigenvalue statistics of the GOE to model the distribution of large atomic nuclei \cite{Wigner_1956_report}, the local statistics of the energy level of chaotic quantum Hamiltonians are conjectured to follow the eigenvalue statistics of GOE or GUE random matrices  (see e.g., \cite{dyson1962statistical, bohigas1986spectral, guhr1998random, valko2014random, cotler2017chaos}). 
    Moreover, eigenvalue gap bounds for matrices perturbed by random noise have been used to bound the convergence rate of randomized numerical linear algebra algorithms (see e.g.  \cite{kulkarni2020random, peng2021solving, meyer2024unreasonable}).
\section{Preliminaries}

\subsection{Brownian motion and It\^o calculus}

In this section, we give preliminaries on Brownian motion and Stochastic calculus (also referred to as It\^o calculus).
A Brownian motion $W(t)$ is a continuous process that has stationary
independent Gaussian increments  (see e.g., \cite{morters2010brownian}).
In a multi-dimensional Brownian motion,  each coordinate is an independent and identical Brownian motion. 
The filtration $\mathcal{F}_t$  generated by  $W(t)$ is defined as $\sigma \left(\cup_{s \leq t} \sigma(W(s))\right)$, where $\sigma(\Omega)$ is the $\sigma$-algebra generated by $\Omega$.
$W(t)$ is a martingale with respect to $\mathcal{F}_t$.

\begin{definition}[\bf It\^o Integral]
Let $W(t)$ be a Brownian motion for $t \geq 0$, let $\mathcal{F}_t$ be the filtration generated by $W(t)$, and let $z(t):  \mathcal{F}_t \rightarrow \mathbb{R}$ be a stochastic process adapted to $\mathcal{F}_t$.
The It\^o integral is defined as 
$$\int_0^T z(t) \mathrm{d}W(t) := \lim_{\omega \rightarrow 0} \sum_{i=1}^{\frac{T}{\omega}} z(i\omega)\times[W((i+1)\omega) -W(i\omega)].$$   
\end{definition}

\noindent
The following lemma generalizes the chain rule of deterministic derivatives to stochastic derivatives. It allows one to compute the derivative of a function $f(X(t))$ of a stochastic process $X(t)$.
We state It\^o's Lemma in its integral form:
\begin{lemma}[\bf It\^o's Lemma, integral form with no drift; Theorem 3.7.1 of \cite{lawler2010stochastic}] \label{lemma_ito_lemma_new}
Let $f:  \mathbb{R}^n \rightarrow \mathbb{R}$ be any twice-differentiable function.
Let $W(t) \in \mathbb{R}^n$  be a Brownian motion, and let $X(t)  \in \mathbb{R}^n$ be an It\^o diffusion process with mean zero defined by the following stochastic differential equation:
\begin{equation}
\mathrm{d}X_j(t) = \sum_{i=1}^d R_{i j }(t) \mathrm{d}W_i(t)
\end{equation}
for some It\^o diffusion $R(t) \in \mathbb{R}^{n \times n}$ adapted to the filtration generated by the Brownian motion $W(t)$.
Then for any $T\geq 0$,
\begin{eqnarray*}
  f(X(T)) -f(X(0)) &=& \int_0^T \sum_{i=1}^n \sum_{\ell=1}^n \left(\frac{\partial}{\partial X_\ell} f(X(t))\right) R_{i \ell}(t) \mathrm{d}W_i(t)\\
  & &  +  \qquad \frac{1}{2} \int_0^T \sum_{i=1}^n \sum_{j=1}^n \sum_{\ell=1}^n \left(\frac{\partial^2}{\partial X_{j} \partial X_\ell} f(X(t))\right) R_{i j }(t) R_{i \ell}(t) \mathrm{d}t.
\end{eqnarray*}
\end{lemma}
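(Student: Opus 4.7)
The plan is to prove this by the standard discretization-plus-Taylor-expansion argument. First I would fix a partition $0 = t_0 < t_1 < \cdots < t_N = T$ with mesh $\omega \to 0$, write the telescoping sum
\begin{equation*}
f(X(T)) - f(X(0)) = \sum_{k=0}^{N-1} \bigl[f(X(t_{k+1})) - f(X(t_k))\bigr],
\end{equation*}
and apply Taylor's theorem to each summand up to second order, producing a first-order contribution $\sum_\ell \partial_\ell f(X(t_k))\, \Delta X_\ell^{(k)}$, a second-order contribution $\tfrac{1}{2} \sum_{j,\ell} \partial^2_{j\ell} f(X(t_k))\, \Delta X_j^{(k)} \Delta X_\ell^{(k)}$, and an $o(|\Delta X^{(k)}|^2)$ remainder, where $\Delta X^{(k)} := X(t_{k+1}) - X(t_k) = \sum_i \int_{t_k}^{t_{k+1}} R_{i\cdot}(s)\, dW_i(s)$.

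For the first-order sum, I would recognize it as a Riemann-sum approximation of an It\^o integral: since each $\Delta X_\ell^{(k)} = \sum_i \int_{t_k}^{t_{k+1}} R_{i\ell}(s)\, dW_i(s)$ and $\partial_\ell f(X(t_k))$ is $\mathcal{F}_{t_k}$-measurable, the sum $\sum_k \sum_{i,\ell} \partial_\ell f(X(t_k)) \int_{t_k}^{t_{k+1}} R_{i\ell}(s)\, dW_i(s)$ converges in $L^2$ as $\omega \to 0$ to the first integral in the claim, by continuity of $R$ and the It\^o isometry applied to the difference between the piecewise-constant integrand $\partial_\ell f(X(t_k)) R_{i\ell}(s) \mathbf{1}_{[t_k,t_{k+1})}(s)$ and the target $\partial_\ell f(X(t)) R_{i\ell}(t)$.

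The crux of the argument is the second-order sum. Here I would approximate $\Delta X_j^{(k)} \Delta X_\ell^{(k)}$ by $\sum_{i,i'} R_{ij}(t_k) R_{i'\ell}(t_k) \Delta W_i^{(k)} \Delta W_{i'}^{(k)}$ up to terms that are $L^2$-negligible after summation. The key probabilistic input is that for independent Brownian coordinates one has $\mathbb{E}[\Delta W_i^{(k)} \Delta W_{i'}^{(k)} \mid \mathcal{F}_{t_k}] = \delta_{ii'}(t_{k+1}-t_k)$ and $\mathrm{Var}(\Delta W_i^{(k)} \Delta W_{i'}^{(k)}) = O(\omega^2)$. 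Writing $\Delta W_i^{(k)} \Delta W_{i'}^{(k)} = \delta_{ii'}(t_{k+1}-t_k) + M_{ii'}^{(k)}$, the martingale-difference terms $M_{ii'}^{(k)}$ satisfy $\mathbb{E}\bigl[\bigl(\sum_k c_k M_{ii'}^{(k)}\bigr)^2\bigr] = O(\omega)\|c\|_\infty^2$, so they vanish in the limit. The surviving diagonal contribution $\sum_k \sum_{i,j,\ell} \partial^2_{j\ell} f(X(t_k)) R_{ij}(t_k) R_{i\ell}(t_k) (t_{k+1}-t_k)$ is a Riemann sum converging to the claimed $\frac{1}{2} \int_0^T \sum_{i,j,\ell} \partial^2_{j\ell} f(X(t)) R_{ij}(t) R_{i\ell}(t)\, dt$.

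The main obstacle I expect is the rigorous handling of this second-order term, specifically justifying the replacement of stochastic integrals $\int_{t_k}^{t_{k+1}} R_{ij}(s)\, dW_i(s)$ by $R_{ij}(t_k) \Delta W_i^{(k)}$ inside the product $\Delta X_j^{(k)} \Delta X_\ell^{(k)}$ while controlling cross-terms and the Taylor remainder simultaneously in $L^2$. To keep this tractable I would first prove the identity under a localization assumption that $f$, $\partial f$, $\partial^2 f$, and $R$ are uniformly bounded (via a stopping-time truncation of $\|X(t)\|$ and $\|R(t)\|$), and then remove the localization by a standard diagonal argument; under boundedness the $o(|\Delta X^{(k)}|^2)$ remainder sums to $O(\omega^{1/2})$ in $L^2$ and the cross-term errors similarly vanish, completing the proof.
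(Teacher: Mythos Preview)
The paper does not prove this lemma at all: it is stated as a preliminary result and attributed directly to Theorem~3.7.1 of Lawler's textbook, with no proof or sketch given. Your proposal is the standard textbook derivation of It\^o's formula (partition, Taylor expand, identify the first-order sum as an It\^o integral, and reduce the second-order sum via $\Delta W_i \Delta W_{i'} \to \delta_{ii'}\,dt$ plus localization), and it is correct as an outline; but there is nothing in the paper to compare it against, since the authors simply import the result as known.
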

\noindent
We note that the above version of It\^o's Lemma ( Lemma \ref{lemma_ito_lemma_new}) is given for real-valued variables.
When we apply It\^o's Lemma to complex matrix-valued stochastic processes, we will separate the real and imaginary parts of the It\^o integral and apply It\^o's Lemma separately to each part.

\begin{definition}[\bf Strong solution to SDE; Definition 5.3.1 in \cite{karatzas1991brownian}]
Given a standard Brownian motion $W_t$ on $\mathbb{R}^d$, and any $\mu: \mathbb{R}^d \rightarrow \mathbb{R}^d$ and $R: \mathbb{R}^d \rightarrow \mathbb{R}^{d \times d}$, a strong solution to the stochastic differential equation (SDE) $$\mathrm{d} X_t = \mu(X_t) \mathrm{d}t +R(X_t) \mathrm{d}W_t$$ with initial condition $x \in \mathbb{R}^d$ is a stochastic process $X_t$ adapted to $W_t$ with continuous paths such that, almost surely,
\begin{equation}
X_t = x + \int_0^t \mu(X_s) \mathrm{d} s + \int_0^t R(X_s) \mathrm{d} W_s,
\end{equation}
for all $t \geq 0$.
\end{definition}
\noindent
In particular, we note that a strong solution $X_t$ is adapted to a {\em particular} Brownian motion $W_t$.
 In other words, $X_t$ is probabilistically coupled to the Brownian motion $W_t$.
This concept will allow us to compare the solution of two SDEs by adapting them (coupling them) to the same Brownian motion $W_t$.

\color{black}

\subsection{Dyson Brownian motion}\label{sec_DBM}

 Let $W(t) \in  \mathbb{C}^{d \times d}$ be a matrix where the real part (and complex part) of each entry is an independent standard Brownian motion with distribution $N(0, tI_d)$ at time $t$, and let $B(t) := W(t) + W(t)^\ast$.
Define the Hermitian-matrix valued stochastic process $\Phi(t)$ as follows:
\begin{equation} \label{eq_DBM_matrix}
    \Phi(t):= M + B(t) \qquad  \forall t\geq 0.
\end{equation}
At every time $t>0$, the eigenvalues $\gamma_1(t), \ldots, \gamma_d(t)$ of $\Phi(t)$ are real-valued and  distinct w.p. $1$, and \eqref{eq_DBM_matrix} induces a stochastic process on the eigenvalues and eigenvectors.
The  evolution of the eigenvalues can be expressed by the following stochastic differential equations  (SDE) \cite{dyson1962brownian}: 
\begin{equation} \label{eq_DBM_eigenvalues}
   \mathrm{d} \gamma_i(t) = \mathrm{d}B_{i i}(t) +  \beta \sum_{j \neq i} \frac{1}{\gamma_i(t) - \gamma_j(t)} \mathrm{d}t \qquad \qquad \forall i \in [d], t > 0,
\end{equation}
where the parameter $\beta=2$ for the complex case ($\beta=1$ for the real matrix Brownian motion) (Figure \ref{fig_DBM}).

\begin{figure}
    \centering
    \includegraphics[width=0.4\textwidth]{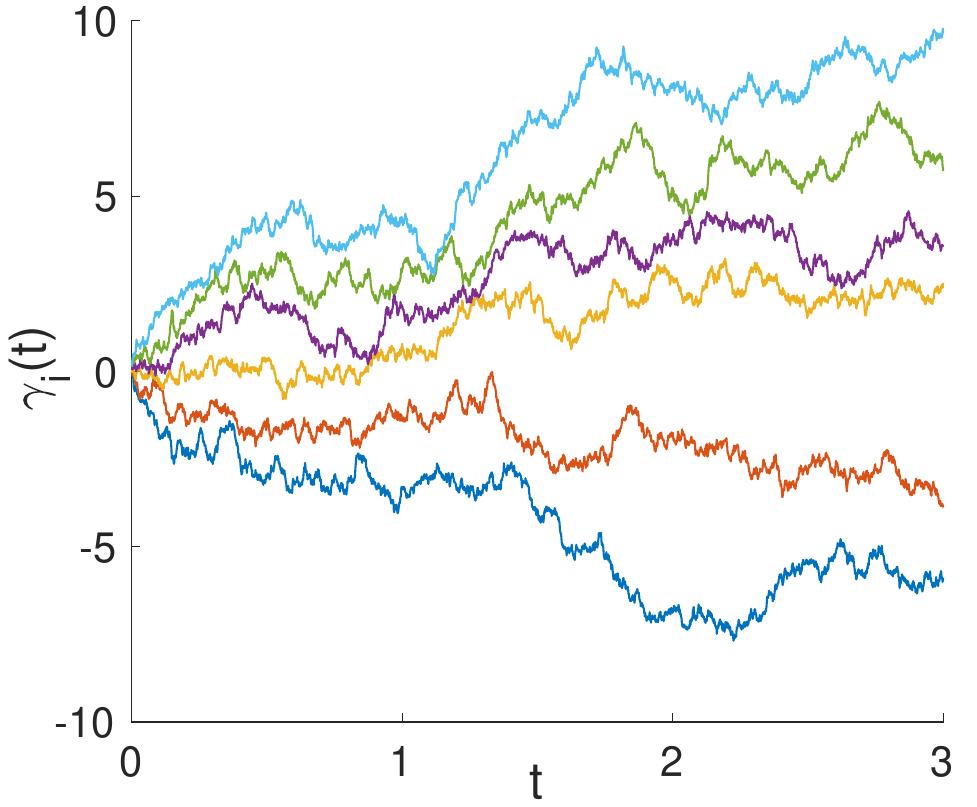}
    \includegraphics[width=0.4\textwidth]{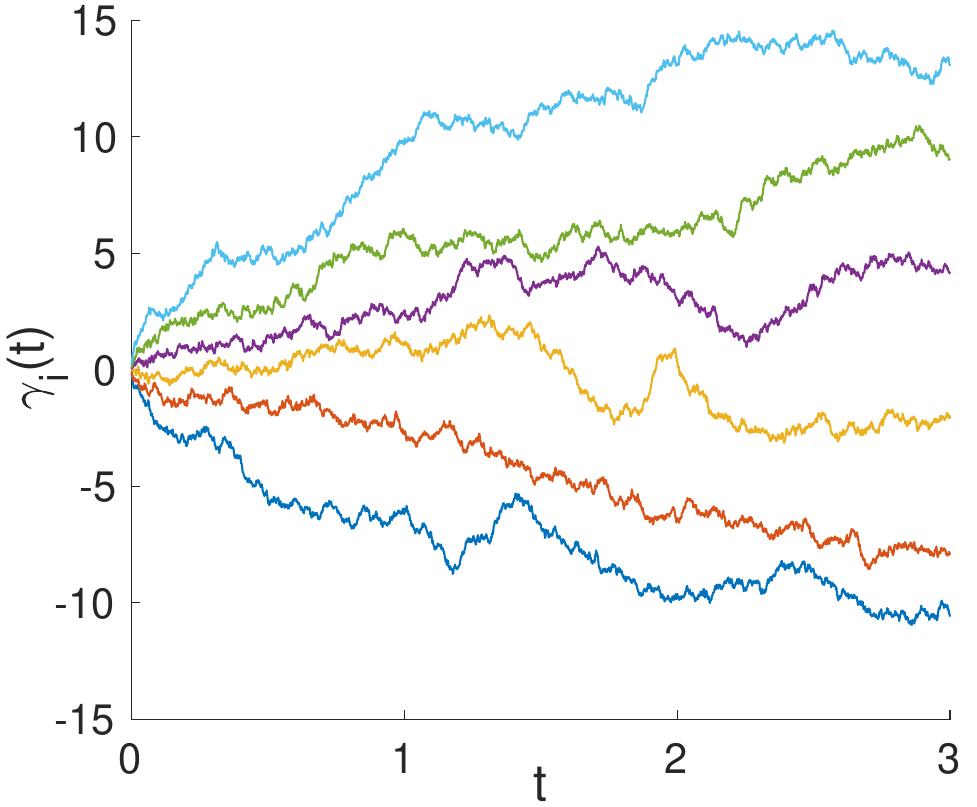}
    \vspace{-2mm}
    \caption{One run of a simulation of the eigenvalues $\gamma_1(t) \geq \cdots \geq \gamma_d(t)$ of Dyson Brownian, in the real case (left) and the complex case (right) with initial condition $\gamma_1(0) = \cdots = \gamma_d = 0$, for $d=6$.
    In the complex case, eigenvalue repulsion is stronger and the gaps between the eigenvalues are not as small as in the real case.
    }
    \vspace{-3mm}
    \label{fig_DBM}
\end{figure}

The corresponding eigenvector process $u_1(t), \ldots, u_d(t)$, referred to as the Dyson vector flow, is also a ``diffusion'' and,  conditional on the eigenvalue process \eqref{eq_DBM_eigenvalues}, is given by the following SDEs:
\begin{equation}\label{eq_DBM_eigenvectors}
  \mathrm{d}u_i(t) = \sum_{j \neq i} \frac{\mathrm{d}B_{ij}(t)}{\gamma_i(t) - \gamma_j(t)}u_j(t) - \frac{\beta}{2}\sum_{j \neq i} \frac{\mathrm{d}t}{(\gamma_i(t)- \gamma_j(t))^2}u_i(t) \qquad \qquad \forall i \in [d], t > 0.
\end{equation}

\paragraph{Properties of Dyson Brownian motion.}
Let $\mathcal{O}(d)$ denote the space of $d \times d$ real orthogonal matrices, and $\mathcal{U}(d)$ the space of $d \times d$ complex unitary matrices. 
The following lemma, which guarantees the existence and uniqueness of solutions to the eigenvalue \eqref{eq_DBM_eigenvalues} and eigenvector SDE's \eqref{eq_DBM_eigenvectors}, is known -- see Theorem 2.3(a) in \cite{bourgade2017eigenvector} and Lemma 4.3.3 in \cite{anderson2010introduction} for solutions of just the eigenvalue process for any $\beta \geq 1$.
While the solutions are random processes, the outcome of these solutions can be shown to be unique when coupled with the underlying Brownian motion processes driving the SDE.
Such a coupling is referred to as a ``strong solution'' to the SDE (see e.g. \cite{lawler2010stochastic}).
 In the following, we define
 \begin{equation}\label{eq:WeylChamber}
 \mathcal{W}_d := \{(x_1, \ldots, x_d) \in \mathbb{R}^d:  x_1 \geq \cdots \geq x_d\}.
 \end{equation}
\begin{lemma}[\bf Existence and uniqueness of solutions to Dyson Brownian motion] \label{lemma_strong}
Consider any $T \geq T_0 \geq 0$ and $\beta \in \{1,2\}$.
Let $\{\gamma(t)\}_{t \in [0,T_0]} \subseteq \mathcal{W}_d$ be a continuous initial path for \eqref{eq_DBM_eigenvalues} and let $\{u(t)\}_{t \in [0,T_0]} \subseteq \mathcal{U}(d)$ if $\beta=2$ (or  $\{u(t)\}_{t \in [0,T_0]} \subseteq \mathcal{O}(d)$  if $\beta =1$) be a continuous initial path for \eqref{eq_DBM_eigenvectors}.
Then there exists a unique strong solution for the system of SDEs \eqref{eq_DBM_eigenvalues} on all of $[0,T]$.
Moreover, there exists a unique strong solution on all of $[0,T]$ for the system of SDEs comprising \eqref{eq_DBM_eigenvalues} and \eqref{eq_DBM_eigenvectors}.

\end{lemma}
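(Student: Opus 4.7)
The plan is to decouple the two SDEs: first establish a unique strong solution of the eigenvalue system \eqref{eq_DBM_eigenvalues} on $[T_0,T]$ with initial datum $\gamma(T_0)$, and then, conditionally on the (almost-surely non-colliding) eigenvalue paths, handle the eigenvector system \eqref{eq_DBM_eigenvectors} as a linear SDE whose coefficients are pathwise bounded. Both systems degenerate only on the collision set $\{\gamma_i=\gamma_j : i\neq j\}$, so the central technical obstacle is a pathwise non-collision estimate that rules out explosion of the drifts in finite time.

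For the eigenvalue system, I would introduce the stopping times $\tau_\epsilon := \inf\{t>T_0 : \min_{i<j}(\gamma_i(t)-\gamma_j(t))\leq \epsilon\}\wedge T$. On the region $\{\min_{i<j}(\gamma_i-\gamma_j)>\epsilon\}$ the drift in \eqref{eq_DBM_eigenvalues} is smooth with uniformly bounded Lipschitz constant, so classical SDE theory (e.g., Theorem 5.2.9 of \cite{karatzas1991brownian}) produces pathwise uniqueness and a strong solution up to $\tau_\epsilon$, which I would patch across a sequence $\epsilon_n\downarrow 0$. To show $\tau_\epsilon\uparrow T$ almost surely, I would apply It\^o's formula to the standard Lyapunov function $f(\gamma):=-\sum_{i<j}\log(\gamma_i-\gamma_j)$, which blows up precisely on the collision set. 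A direct calculation, using the partial-fractions identity $\sum_{\text{distinct }i,j,k}\frac{1}{(\gamma_i-\gamma_j)(\gamma_i-\gamma_k)}=0$ to eliminate cross terms, reduces the It\^o drift to a single expression proportional to $\sum_{i<j}(\gamma_i-\gamma_j)^{-2}\,\mathrm{d}t$ whose coefficient is non-positive when $\beta\in\{1,2\}$. Consequently $f(\gamma(\cdot\wedge\tau_\epsilon))$ is a supermartingale with $\mathbb{E}[f(\gamma(T\wedge\tau_\epsilon))]\leq f(\gamma(T_0))$; combined with the fact that $f(\gamma(\tau_\epsilon))\geq c\log(1/\epsilon)$ on $\{\tau_\epsilon<T\}$, this forces $\mathbb{P}(\tau_\epsilon<T)\to 0$, so the solution extends to $[0,T]$. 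This is the standard McKean--Dyson argument, carried out in detail as Lemma 4.3.3 of \cite{anderson2010introduction}.

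Given the eigenvalue paths, the eigenvector SDE \eqref{eq_DBM_eigenvectors} is linear in $u_i$ with coefficients $(\gamma_i(t)-\gamma_j(t))^{-1}$ and $(\gamma_i(t)-\gamma_j(t))^{-2}$ that are almost-surely continuous, hence bounded, on $[T_0,T]$. Standard theory for linear SDEs with pathwise bounded coefficients then furnishes a unique strong solution for each realization of $\gamma(\cdot)$, giving a joint strong solution of \eqref{eq_DBM_eigenvalues}--\eqref{eq_DBM_eigenvectors} adapted to the matrix Brownian motion $B(t)$. To verify that the solution remains in $\mathcal{U}(d)$ (or $\mathcal{O}(d)$ when $\beta=1$), I would compute $\mathrm{d}(u_i(t)^\ast u_j(t))$ via It\^o's formula; the drift and martingale contributions cancel pairwise, so the initial orthonormality of $\{u_i(T_0)\}$ is preserved throughout $[T_0,T]$. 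The clear bottleneck is the non-collision Lyapunov estimate, since every subsequent step---both the patching for the eigenvalue SDE and the pathwise-bounded-coefficient argument for the eigenvector SDE---relies on the strict separation of the eigenvalues that it provides.
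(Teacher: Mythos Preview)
The paper does not supply its own proof of this lemma; it states the result as known and defers to Theorem~2.3(a) of \cite{bourgade2017eigenvector} and Lemma~4.3.3 of \cite{anderson2010introduction}. Your outline is the standard McKean--Dyson argument underlying those references, and you explicitly cite the same source (Lemma~4.3.3 of \cite{anderson2010introduction}), so your approach is aligned with the paper's treatment.

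One small caveat worth flagging: your supermartingale step assumes $f(\gamma(T_0))<\infty$, i.e.\ that the initial condition lies in the \emph{interior} of $\mathcal{W}_d$. The lemma as stated allows $\gamma(T_0)$ on the boundary (colliding eigenvalues), in which case the Lyapunov function is infinite at the start and the argument as written does not apply directly. The standard fix---approximating by interior initial data and invoking continuity of the strong solution in the initial condition (the paper's Lemma~\ref{lemma_continuity}, i.e.\ Proposition~4.3.5 of \cite{anderson2010introduction})---closes this gap, but you should mention it explicitly since the eigenvector step likewise relies on strict separation holding from $T_0$ onward.
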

\noindent
In particular (by the definition of strong solution) the existence of strong solutions implies that the paths of Dyson Brownian motion are almost surely continuous on $[0,\infty)$.
This fact will be useful in proving our gap comparison theorem for coupled solutions of Dyson Brownian motions (Lemma \ref{lemma_gap_comparison}).
The following result shows that the paths of Dyson Brownian motion are continuous with respect to their initial conditions:
\begin{lemma}[\bf Continuity w.r.t. initial condition; Proposition 4.3.5 in \cite{anderson2010introduction}]\label{lemma_continuity} 
Let $\gamma$ be a strong solution to \eqref{eq_DBM_eigenvalues} for any initial condition $\gamma(0)\in \mathcal{W}_d$. 
Then, at any time $t\geq 0$, $\gamma(t)$ is a continuous function of the initial condition $\gamma(0)$.
\end{lemma}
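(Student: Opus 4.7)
} The plan is to prove the stronger statement that, when two Dyson eigenvalue processes are coupled through the same driving Brownian motion, the $\ell^2$ distance between them is non-increasing in time; this is a standard synchronous-coupling argument that turns the singular pairwise repulsion into a contractive force.

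Fix $\gamma(0), \tilde\gamma(0)\in\mathcal{W}_d$ and let $\{B_{ii}(t)\}_{i=1}^d$ be a common collection of real Brownian motions. By Lemma \ref{lemma_strong}, we obtain unique strong solutions $\gamma(t)$ and $\tilde\gamma(t)$ to \eqref{eq_DBM_eigenvalues} driven by this same noise, both living in $\mathcal{W}_d$ for all $t\ge 0$, and with strictly decreasing coordinates for every $t>0$. Set $D_i(t):=\gamma_i(t)-\tilde\gamma_i(t)$. Since the martingale parts cancel under the coupling, $D_i$ has the ODE-like representation
\begin{equation*}
\mathrm{d} D_i(t)=\beta\sum_{j\neq i}\!\left[\frac{1}{\gamma_i(t)-\gamma_j(t)}-\frac{1}{\tilde\gamma_i(t)-\tilde\gamma_j(t)}\right]\mathrm{d}t,\qquad t>0.
\end{equation*}

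Define $F(t):=\sum_{i=1}^d D_i(t)^2$. Applying It\^o's lemma (Lemma \ref{lemma_ito_lemma_new}) — the quadratic-variation term vanishes because $D_i$ has no Brownian component — and then symmetrizing by pairing the $(i,j)$ and $(j,i)$ summands, I get
\begin{equation*}
\frac{\mathrm{d}}{\mathrm{d}t}F(t)=2\beta\sum_{i}D_i\sum_{j\neq i}\frac{(\tilde\gamma_i-\tilde\gamma_j)-(\gamma_i-\gamma_j)}{(\gamma_i-\gamma_j)(\tilde\gamma_i-\tilde\gamma_j)}=-\beta\sum_{i\neq j}\frac{(D_i-D_j)^2}{(\gamma_i-\gamma_j)(\tilde\gamma_i-\tilde\gamma_j)}.
\end{equation*}
Because both solutions stay in $\mathcal{W}_d$, the products $(\gamma_i-\gamma_j)(\tilde\gamma_i-\tilde\gamma_j)$ are strictly positive for all $i\neq j$ and all $t>0$. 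Hence $\tfrac{\mathrm{d}}{\mathrm{d}t}F(t)\leq 0$, so $F(t)\leq F(0)$ almost surely; in particular
\begin{equation*}
\|\gamma(t)-\tilde\gamma(t)\|_2\leq \|\gamma(0)-\tilde\gamma(0)\|_2\qquad\text{a.s., for all }t\geq 0,
\end{equation*}
which is even a Lipschitz (indeed contractive) dependence on the initial condition, implying continuity.

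The main technical obstacle is justifying the formal computation above in the presence of the singular drift. Two issues arise: (i) the drift $\sum_{j\neq i}(\gamma_i-\gamma_j)^{-1}$ blows up whenever eigenvalues are close, so one must rule out that the synchronously coupled pair spends enough time near the collision set to invalidate the ODE manipulation; and (ii) if the initial data $\gamma(0)$ or $\tilde\gamma(0)$ has coincident coordinates, the drift is not even defined at $t=0$. Issue (i) I would address by working on stopping times $\tau_\epsilon := \inf\{t:\min_{i<j}(\gamma_i-\gamma_j)\wedge (\tilde\gamma_i-\tilde\gamma_j)<\epsilon\}$, showing $F$ is monotone on $[0,\tau_\epsilon]$, and then letting $\epsilon\downarrow 0$ using the non-collision property guaranteed by the strong-solution theory. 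Issue (ii) I would handle by approximating any boundary initial condition by a sequence of interior ones $\gamma^{(n)}(0)\to\gamma(0)$ with strict inequalities, applying the contraction bound to the pairs $(\gamma^{(n)},\gamma^{(m)})$ to get a Cauchy sequence of processes whose limit must coincide with the strong solution started from $\gamma(0)$, and then taking limits inside the contraction inequality to extend continuity up to the closed Weyl chamber.
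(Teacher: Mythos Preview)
The paper does not give its own proof of this lemma; it simply cites it as Proposition~4.3.5 of \cite{anderson2010introduction}. Your synchronous-coupling argument showing $\|\gamma(t)-\tilde\gamma(t)\|_2\le\|\gamma(0)-\tilde\gamma(0)\|_2$ is correct and is precisely the proof given in that reference, including the localization via stopping times and the approximation from the interior of the Weyl chamber, so your proposal matches the cited proof.
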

\noindent
The following lemma is known; see Theorem 1.1 in \cite{inukai2006collision} and also \cite{rogers1993interacting}.
\begin{lemma}[\bf Non-collision of Dyson Brownian motion for $\beta \geq 1$] \label{lemma_DBM_collision}
Let $\gamma$ be a solution to \eqref{eq_DBM_eigenvalues} with any initial condition $\gamma(0) \in \mathcal{W}_d$.
Let $\tau := \inf \{t>0: \gamma_i(t) = \gamma_j(t) \textrm{ for some } i\neq j\}$ be the first positive time any of the particles in  $\gamma(t)$ collide.
Then if $\beta \geq 1$, $\mathbb{P}(\tau <\infty) = 0$.
\end{lemma}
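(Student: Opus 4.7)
The plan is to build a Lyapunov function on the interior of the Weyl chamber $\mathcal{W}_d$ that blows up as any two coordinates approach each other, then show via Itô's formula that this function is a supermartingale along the Dyson flow whenever $\beta \geq 1$. Concretely, I would take
\[
F(x) \;:=\; -\sum_{i<j} \log(x_i - x_j) \qquad \text{on } \{x \in \mathcal{W}_d : x_1 > x_2 > \cdots > x_d\}.
\]
Because $F(\gamma(t)) \to +\infty$ whenever any gap closes, if the collision time $\tau$ were finite with positive probability then $F(\gamma(t))$ would explode, contradicting any uniform bound on $\mathbb{E}[F(\gamma(t))]$. First I would reduce to the case of a generic initial condition $\gamma(0)$ with strictly ordered coordinates: when some initial coordinates coincide, apply the strong Markov property after an infinitesimal $\varepsilon$ together with the continuity and regularization properties from Lemma~\ref{lemma_strong} and Lemma~\ref{lemma_continuity}, since the entropic drift in \eqref{eq_DBM_eigenvalues} separates coincident coordinates immediately.

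Next, I would introduce the localizing stopping times $\tau_n := \inf\{t \geq 0 : \min_{i \neq j}(\gamma_i(t) - \gamma_j(t)) \leq 1/n\} \wedge n$, so that $\tau_n \uparrow \tau$ almost surely by continuity of paths. Apply It\^o's lemma (Lemma~\ref{lemma_ito_lemma_new}) to $F(\gamma(t \wedge \tau_n))$ using the SDE \eqref{eq_DBM_eigenvalues}. Writing $\partial_k F = -\sum_{j \neq k}(\gamma_k - \gamma_j)^{-1}$ and $\partial_k^2 F = \sum_{j \neq k}(\gamma_k - \gamma_j)^{-2}$, the drift contribution is
\[
-\frac{\beta}{2}\sum_{k}\Bigl(\sum_{j \neq k}\frac{1}{\gamma_k - \gamma_j}\Bigr)^{\!2} + \frac{1}{2}\sum_{k}\sum_{j \neq k}\frac{1}{(\gamma_k - \gamma_j)^2}.
\]
The crucial algebraic identity is the classical cyclic partial-fraction cancellation
\[
\sum_{\text{distinct } j,k,l}\frac{1}{(\gamma_k - \gamma_j)(\gamma_k - \gamma_l)} \;=\; 0,
\]
which collapses the square above to $2\sum_{i<j}(\gamma_i - \gamma_j)^{-2}$. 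Combining the two terms yields the clean drift
\[
\mathrm{drift}\bigl[F(\gamma(t))\bigr] \;=\; -(\beta - 1)\sum_{i<j}\frac{1}{(\gamma_i(t) - \gamma_j(t))^2},
\]
which is nonpositive precisely because $\beta \geq 1$. Hence $F(\gamma(t \wedge \tau_n))$ is a continuous local supermartingale on $[0,\tau_n]$; the It\^o integrand $\partial_k F$ is bounded on the stopped interval by construction of $\tau_n$, so the martingale part is a genuine martingale and $\mathbb{E}[F(\gamma(t \wedge \tau_n))] \leq F(\gamma(0))$.

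Finally, on the event $\{\tau_n \leq t\}$ we have $F(\gamma(\tau_n)) \geq \log n - C(t)$, where $C(t)$ absorbs a lower bound for $-\log(\gamma_i - \gamma_j)$ via an a priori bound $\sup_{s\leq t}\max_i |\gamma_i(s)|$ that is finite almost surely (and has controllable tail, since under \eqref{eq_DBM_eigenvalues} each $\gamma_i$ is bounded by a Brownian motion plus a drift dominated by the total-system drift). Splitting the expectation according to whether $\tau_n \leq t$ or not, and using the supermartingale bound together with a uniform lower bound on $F(\gamma(s))$ off the collision set in terms of this a priori bound, gives
\[
(\log n - C(t))\,\mathbb{P}(\tau_n \leq t) \;\leq\; F(\gamma(0)) + \mathbb{E}\bigl[|F(\gamma(t))|\,\mathbf{1}_{\tau_n > t}\bigr],
\]
and letting $n \to \infty$ forces $\mathbb{P}(\tau \leq t) = 0$; taking $t \to \infty$ yields $\mathbb{P}(\tau < \infty) = 0$. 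I expect the main technical obstacle to be the case $\beta = 1$, where the drift of $F$ vanishes and $F$ is only a local martingale: here one must rely more carefully on the two-sided control of $|\gamma|$ to execute the localization, and possibly replace $F$ by a perturbation such as $F(x) + \epsilon \sum_i x_i^2$ to guarantee a usable lower bound on $F$ off the collision set. The second subtlety is verifying that when $\gamma(0)$ lies on the boundary of $\mathcal{W}_d$ the solution enters the interior instantly, which is where the strong existence statement of Lemma~\ref{lemma_strong} and the continuity statement of Lemma~\ref{lemma_continuity} are invoked.
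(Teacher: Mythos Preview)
The paper does not give its own proof of this lemma: immediately before the statement it writes ``The following lemma is known; see Theorem~1.1 in \cite{inukai2006collision} and also \cite{rogers1993interacting},'' and then moves on. So there is nothing in the paper to compare your argument against at the level of details.

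That said, your proposal is precisely the classical McKean--Rogers--Shi Lyapunov argument that the cited references (and also \cite{anderson2010introduction}, which the paper uses for the companion Lemmas~\ref{lemma_strong} and~\ref{lemma_continuity}) carry out, so you are reconstructing the ``standard'' proof the paper defers to. One small slip: in your displayed drift, the coefficient of the squared sum should be $-\beta$, not $-\beta/2$, since the drift of $\gamma_k$ in \eqref{eq_DBM_eigenvalues} is $\beta\sum_{j\neq k}(\gamma_k-\gamma_j)^{-1}$ and $\partial_k F = -\sum_{j\neq k}(\gamma_k-\gamma_j)^{-1}$, giving $\sum_k \partial_k F \cdot (\text{drift}) = -\beta\sum_k(\cdots)^2$. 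After the partial-fraction cancellation you then get a drift of the form $(c-2\beta)\sum_{i<j}(\gamma_i-\gamma_j)^{-2}$ with $c$ the quadratic variation constant of the diagonal noise; the sign conclusion for $\beta\geq 1$ survives (indeed improves), so this does not break your argument. Your identification of the two genuine subtleties --- the borderline $\beta=1$ case where $F$ is only a local martingale and needs a confining perturbation, and the instantaneous entry into the open chamber from boundary initial data --- is exactly where the work lies in the references.
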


\subsection{Matrix inequalities}

The following  lemmas will help us bound the gaps in the eigenvalues of Hermitian matrices perturbed by a (Gaussian) Hermitian random matrix: 

\begin{lemma} [\bf Theorem 4.4.5 of \cite{vershynin2018high}, special case\footnote{The theorem is stated for sub-Gaussian entries in terms of a constant $C$; this constant is $C=2$ in the special case where the entries are $N(0,1)$ Gaussian.}] \label{lemma_concentration} 
Let $W \in \mathbb{R}^{d \times d}$ with i.i.d. $N(0,1)$ entries. Then 
\begin{equation*}
    \mathbb{P}(\|W\|_2 > 2\sqrt{d} +s) <  2e^{-s^2}
\end{equation*}
 for any $s>0$.
\end{lemma}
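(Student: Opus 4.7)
The plan is to prove this in two independent steps: first bound $\mathbb{E}\|W\|_2$ from above, and then apply Gaussian concentration to control the deviation of $\|W\|_2$ from its mean. This is the standard two-part structure used in Vershynin's textbook and in the random-matrix-theory literature more generally.

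For the expectation bound, I would start from the variational characterization $\|W\|_2 = \sup_{x,y \in S^{d-1}} \langle Wx, y\rangle$, which realizes the spectral norm as the supremum of a mean-zero Gaussian process $G_{x,y} = \langle Wx, y\rangle$ indexed by the compact set $S^{d-1} \times S^{d-1}$. A direct computation of covariances gives
\[
\mathbb{E}[G_{x,y}G_{x',y'}] = \langle x,x'\rangle \langle y,y'\rangle,
\]
while the auxiliary process $H_{x,y} = \langle g,x\rangle + \langle h,y\rangle$, where $g,h \sim N(0,I_d)$ are independent, satisfies $\mathbb{E}[H_{x,y}H_{x',y'}] = \langle x,x'\rangle + \langle y,y'\rangle$. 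Since $|\langle x,x'\rangle|, |\langle y,y'\rangle| \leq 1$, Slepian--Sudakov--Fernique comparison applies to $H - G$, yielding
\[
\mathbb{E}\sup_{x,y} G_{x,y} \;\leq\; \mathbb{E}\sup_x \langle g,x\rangle + \mathbb{E}\sup_y \langle h,y\rangle \;=\; 2\,\mathbb{E}\|g\|_2 \;\leq\; 2\sqrt{d},
\]
using $\mathbb{E}\|g\|_2 \leq \sqrt{\mathbb{E}\|g\|_2^2} = \sqrt{d}$ by Jensen. Hence $\mathbb{E}\|W\|_2 \leq 2\sqrt{d}$.

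For the concentration step, I would use the fact that the map $W \mapsto \|W\|_2$ is $1$-Lipschitz with respect to the Frobenius norm: for any $W, W'$,
\[
\bigl|\|W\|_2 - \|W'\|_2\bigr| \;\leq\; \|W - W'\|_2 \;\leq\; \|W - W'\|_F.
\]
Since the joint law of the entries of $W$ is the standard Gaussian measure on $\mathbb{R}^{d^2}$ (with the Euclidean norm there equal to the Frobenius norm on $\mathbb{R}^{d \times d}$), the Borell--Tsirelson--Ibragimov--Sudakov (Gaussian isoperimetric) inequality applied to this $1$-Lipschitz function gives
\[
\mathbb{P}\bigl(\|W\|_2 > \mathbb{E}\|W\|_2 + s\bigr) \;\leq\; e^{-s^2/2}.
\]
Combining with the expectation bound from the previous paragraph and absorbing the factor $1/2$ and the symmetric lower-tail event (if a two-sided statement is desired) into the prefactor $2$ yields exactly the claimed inequality $\mathbb{P}(\|W\|_2 > 2\sqrt{d} + s) < 2e^{-s^2}$ (up to this constant-tuning step, which one may prefer to phrase as $\mathbb{P}(\|W\|_2 > 2\sqrt{d} + s) \leq e^{-s^2/2}$).

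The main subtlety is the Slepian-type comparison argument giving the sharp leading constant $2\sqrt{d}$ rather than the weaker $C\sqrt{d}$ one obtains from a naïve $\eps$-net union bound (which would yield constants like $3$ or $C\sqrt{d}$ after paying a $9^d$-type covering price). The alternative $\eps$-net route — discretize $S^{d-1}$ at scale $1/4$, bound $|\langle Wx,y\rangle|$ for each pair on the net via a Gaussian tail, and take a union bound — would also suffice for an inequality of the stated form but with a larger absolute constant in place of $2$; since the statement here insists on the sharp constant $2$, the Gaussian comparison route is the cleanest. I do not expect any step beyond Slepian--Sudakov--Fernique and Borell--TIS to be genuinely difficult, as both are classical inequalities whose hypotheses are immediately verified here.
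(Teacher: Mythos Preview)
The paper does not prove this lemma at all; it is simply quoted as a special case of Theorem~4.4.5 in Vershynin's book, so there is no ``paper's proof'' to compare against. Your outline (Sudakov--Fernique comparison to bound $\mathbb{E}\|W\|_2 \le 2\sqrt{d}$, followed by Borell--TIS Gaussian concentration for the $1$-Lipschitz map $W\mapsto\|W\|_2$) is the standard and correct route to results of this form, and is in fact sharper in the leading constant than the $\varepsilon$-net argument Vershynin uses for general sub-Gaussian entries.

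One small point worth flagging: your concentration step yields $\mathbb{P}(\|W\|_2 > 2\sqrt{d}+s)\le e^{-s^2/2}$, and the sentence about ``absorbing the factor $1/2$ into the prefactor $2$'' is not right --- $e^{-s^2/2}$ and $2e^{-s^2}$ are incomparable (the former is larger for large $s$). You already notice this in your parenthetical, and it is harmless here: everywhere the paper invokes this lemma (Lemma~\ref{lemma_utility_rare_event} and Appendix~\ref{sec_proof_of_lemma_spectral_martingale_b}) only a bound of the form $Ce^{-cs^2}$ with some absolute $c>0$ is needed, so $e^{-s^2/2}$ works just as well.
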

\noindent
Note that Lemma \ref{lemma_concentration} also applies (up to a constant factor) to complex Gaussian matrices $W_1 + \mathfrak{i}W_2$ where $W_1, W_2$ have i.i.d. real $N(0,1)$ entries, since $\|W_1 + \mathfrak{i}W_2\|_2 \leq \|W_1\|_2 + \|W_2\|_2$.

\begin{lemma}[\bf Weyl's Inequality \cite{bhatia2013matrix}]\label{lemma_weyl}
If $A,B \in \mathbb{C}^{d\times d}$ are two Hermitian matrices, and denoting the $i$'th-largest eigenvalue of any Hermitian matrix $M$ by $\sigma_i(M)$, we have
\begin{equation*}
 \sigma_i(A) + \sigma_d(B) \leq  \sigma_i(A + B) \leq  \sigma_i(A) + \sigma_1(B).
\end{equation*}
\end{lemma}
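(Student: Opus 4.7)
The plan is to prove Weyl's Inequality via the Courant–Fischer min–max characterization of eigenvalues of Hermitian matrices, which is the standard and most direct route. Recall that for any Hermitian $M \in \mathbb{C}^{d\times d}$ and any $i \in \{1,\ldots,d\}$,
\begin{equation*}
\sigma_i(M) \;=\; \max_{\substack{S \subseteq \mathbb{C}^d \\ \dim S = i}} \;\min_{\substack{x \in S \\ \|x\|_2 = 1}} x^\ast M x \;=\; \min_{\substack{S \subseteq \mathbb{C}^d \\ \dim S = d-i+1}} \;\max_{\substack{x \in S \\ \|x\|_2 = 1}} x^\ast M x.
\end{equation*}
I will treat the two inequalities in the statement separately using the two dual forms above.

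For the upper bound $\sigma_i(A+B) \leq \sigma_i(A) + \sigma_1(B)$, I would apply the second (min–max) formulation. Let $S^\ast$ be a $(d-i+1)$-dimensional subspace that attains the minimum for $A$, so that $\max_{x \in S^\ast, \|x\|_2=1} x^\ast A x = \sigma_i(A)$. Using this particular $S^\ast$ as a (not-necessarily-optimal) candidate subspace for the matrix $A+B$, and the elementary inequality $\max_x (f(x)+g(x)) \leq \max_x f(x) + \max_x g(x)$, gives
\begin{equation*}
\sigma_i(A+B) \;\leq\; \max_{x \in S^\ast, \|x\|_2 = 1} x^\ast(A+B)x \;\leq\; \sigma_i(A) + \max_{x \in S^\ast, \|x\|_2 = 1} x^\ast B x \;\leq\; \sigma_i(A) + \sigma_1(B),
\end{equation*}
where the last step uses that maximizing $x^\ast B x$ over $S^\ast \cap \{\|x\|_2=1\}$ is at most maximizing over all unit vectors, which equals $\sigma_1(B)$.

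For the lower bound $\sigma_i(A) + \sigma_d(B) \leq \sigma_i(A+B)$, I would symmetrically apply the first (max–min) formulation. Pick an $i$-dimensional subspace $S^\ast$ attaining the maximum for $A$, so that $\min_{x \in S^\ast, \|x\|_2=1} x^\ast A x = \sigma_i(A)$. Using $S^\ast$ as a candidate subspace for $A+B$, together with $\min_x(f+g) \geq \min_x f + \min_x g$, yields
\begin{equation*}
\sigma_i(A+B) \;\geq\; \min_{x \in S^\ast, \|x\|_2 = 1} x^\ast(A+B)x \;\geq\; \sigma_i(A) + \min_{x \in S^\ast, \|x\|_2=1} x^\ast B x \;\geq\; \sigma_i(A) + \sigma_d(B),
\end{equation*}
where the last step uses that minimizing $x^\ast B x$ over a subspace is at least the minimum over all unit vectors, which equals $\sigma_d(B)$.

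There is no genuine obstacle here: the whole argument reduces to invoking Courant–Fischer and the trivial facts that $\max(f+g)\leq \max f+\max g$ and $\min(f+g)\geq \min f+\min g$. The only mild subtlety worth flagging is that Courant–Fischer must be stated for complex Hermitian (not just real symmetric) matrices so that the quadratic forms $x^\ast M x$ are real-valued and the maximization/minimization makes sense; this is standard, since for Hermitian $M$ one has $\overline{x^\ast M x} = x^\ast M^\ast x = x^\ast M x$.
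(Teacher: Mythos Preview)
Your proof is correct and is the standard Courant--Fischer argument. The paper does not actually prove this lemma; it is stated as a preliminary with a citation to \cite{bhatia2013matrix}, so there is nothing to compare against.
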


\begin{lemma}[\bf Spectral norm bound] \label{lemma_spectral_martingale_b}
 For some universal constant $C$, and every $T>0$, we have,
      $$ \mathbb{P}\left(\sup_{t \in [0,T]}\|B(t)\|_2 > \sqrt{T}(\sqrt{d} + \alpha)\right) \leq e^{-C \alpha^2} \qquad \forall \alpha >0.$$
\end{lemma}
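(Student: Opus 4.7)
The plan is to reduce the matrix-valued problem to a one-dimensional problem via a covering argument on the unit sphere, then apply the reflection principle to each resulting scalar martingale. Along the way I would invoke only Dubins--Schwarz and a standard $\epsilon$-net bound, together with the fact that $B(t)$ is a Hermitian matrix-valued continuous martingale starting at $0$.

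\textbf{Step 1: reduce to quadratic forms.} I would fix a $\tfrac{1}{4}$-net $\mathcal{N}$ of the unit sphere of $\mathbb{C}^d$ (viewed as $\mathbb{R}^{2d}$); standard volume arguments give $|\mathcal{N}| \le 9^{2d}$. A routine approximation argument shows that for every Hermitian matrix $A$,
\[
\|A\|_2 \;\le\; 2 \max_{v \in \mathcal{N}} |v^\ast A v|,
\]
so
\[
\sup_{t \in [0,T]} \|B(t)\|_2 \;\le\; 2 \max_{v \in \mathcal{N}} \sup_{t \in [0,T]} |v^\ast B(t) v|,
\]
and it suffices to control $\sup_t |v^\ast B(t) v|$ for each fixed unit vector $v$.

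\textbf{Step 2: scalar martingale and reflection.} For a fixed unit $v \in \mathbb{C}^d$, the process $M_v(t) := v^\ast B(t) v$ is a continuous real-valued martingale with $M_v(0) = 0$, since $B(t)$ is a Hermitian matrix-valued martingale and the map $X \mapsto v^\ast X v$ is linear and real-valued on Hermitian inputs. Expanding $B(t) = W(t) + W(t)^\ast$ in terms of the real and imaginary parts of the independent standard Brownian entries of $W(t)$, a direct calculation gives the quadratic variation
\[
\langle M_v \rangle_t \;=\; c\, t \Big(\textstyle\sum_i |v_i|^2\Big)^{\!2} \;=\; c\, t,
\]
for a universal constant $c$ (one finds $c=4$ with the paper's normalization). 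By Dubins--Schwarz, $M_v(t)$ has the same law as $\beta(ct)$ for a standard real Brownian motion $\beta$, so the reflection principle yields
\[
\mathbb{P}\!\left(\sup_{t \in [0,T]} |M_v(t)| > a\right) \;\le\; 4\, \mathbb{P}(\beta(cT) > a) \;\le\; 4\, \exp\!\Big(-\tfrac{a^2}{2cT}\Big).
\]

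\textbf{Step 3: union bound and tuning.} Combining the previous two steps via a union bound over the net gives
\[
\mathbb{P}\!\left(\sup_{t \in [0,T]} \|B(t)\|_2 > 2a\right) \;\le\; 4 \cdot 9^{2d} \cdot \exp\!\Big(-\tfrac{a^2}{2cT}\Big).
\]
Writing $a = \tfrac{1}{2}\sqrt{T}(C_0 \sqrt{d} + \alpha)$ and taking the universal constant $C_0$ large enough that $9^{2d} e^{-C_0^2 d/(8c)} \le 1$, the cross term and the $\alpha^2$ term survive to give $\mathbb{P}(\sup_t \|B(t)\|_2 > \sqrt{T}(C_0 \sqrt{d} + \alpha)) \le e^{-C\alpha^2}$ for a universal $C > 0$. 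Since the lemma's statement hides universal constants (the factor in front of $\sqrt{d}$ can be absorbed into $C$ by shrinking $C$), this is the claimed bound.

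\textbf{Anticipated obstacle.} The main technical nuisance is the constant in front of $\sqrt{d}$: a vanilla net argument produces a universal multiplicative factor rather than the literal coefficient $1$. This can be cleaned up either by choosing the universal constant $C$ in the exponent sufficiently small so that the bound is informative only once $\alpha$ exceeds this constant times $\sqrt{d}$, or, if a sharper coefficient is desired, by replacing the net step with Gaussian concentration of the Lipschitz map $(W_1,W_2) \mapsto \|(W_1+iW_2) + (W_1+iW_2)^\ast\|_2$ combined with the bound on $\mathbb{E}[\|B(T)\|_2]$ from Lemma~\ref{lemma_concentration}. Either route gives an exponential tail of the required form, with the time-supremum handled by Dubins--Schwarz plus reflection as above.
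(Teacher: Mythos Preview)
Your approach is correct and genuinely different from the paper's. The paper does not use a net or the reflection principle; instead it shows that $\exp(c\|B(t)\|_2)$ is a submartingale (via Jensen and the fact that $v^\ast(B(t)-B(s))v$ is a mean-zero Gaussian independent of $\mathcal{F}_s$), applies Doob's maximal inequality to it, and then bounds $\mathbb{E}[\exp(c\|B(T)\|_2)]$ using the fixed-time spectral concentration of Lemma~\ref{lemma_concentration} together with the layer-cake formula, finally optimizing over the exponential tilt parameter. Your route---$\epsilon$-net $\to$ scalar martingale $\to$ Dubins--Schwarz $\to$ reflection---is more self-contained (it does not need to import Lemma~\ref{lemma_concentration}) and makes the submartingale structure explicit at the level of quadratic forms rather than the full spectral norm; the paper's route is shorter because it pushes the ``Gaussian concentration'' work into a cited lemma.

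One small correction: the sentence ``the factor in front of $\sqrt{d}$ can be absorbed into $C$ by shrinking $C$'' is not quite right. If the actual typical value of $\|B(T)\|_2/\sqrt{T}$ sits at $c_\star\sqrt{d}$ with $c_\star>1$, then for $\alpha$ near $(c_\star-1)\sqrt{d}$ the probability is bounded away from $0$ while $e^{-C\alpha^2}\to 0$ as $d\to\infty$ for any fixed $C>0$; no choice of $C$ rescues the literal coefficient $1$. Your ``anticipated obstacle'' paragraph already contains the correct fixes (either prove the bound with a constant in front of $\sqrt{d}$, which is all the downstream applications use---note the event $\hat{E}_\alpha$ is defined with $4\sqrt{T}(\sqrt{d}+\alpha)$---or use Gaussian Lipschitz concentration to sharpen the coefficient). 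It is worth noting that the paper's own proof has exactly the same slippage: it derives the bound for $\sup_t\|B(t)\|_2 > 2\sqrt{T}(\sqrt{d}+\alpha)$ and then states the conclusion with coefficient $1$. So this is a cosmetic issue shared by both arguments, not a gap in yours.
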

\noindent
The proof of Lemma \ref{lemma_spectral_martingale_b} is standard and given in Appendix \ref{sec_proof_of_lemma_spectral_martingale_b}.

\subsection{Davis-Kahan Sin-Theta theorem}

The following lemma gives a deterministic bound on the change to the subspace spanned by the top-$k$ eigenvectors of a Hermitian matrix when it is perturbed by the addition of another Hermitian matrix.
Let A and $\hat{A}$ be two Hermitian matrices with eigenvalue decompositions
\begin{equation}\label{eq_eigenvalue_decomposition1}
    A = U \Lambda U^\ast = (U_1, U_2) \left({\begin{array}{cc}
   \Lambda_1 &  \\
    &  \Lambda_2 \\
  \end{array}}  \right) \left({\begin{array}{c}
   U_1^\ast \\
      U_2^\ast \\
  \end{array}}  \right)
\end{equation}

\begin{equation}\label{eq_eigenvalue_decomposition2}
    \hat{A} = \hat{U} \hat{\Lambda} \hat{U}^\ast = (\hat{U}_1, \hat{U}_2) \left({\begin{array}{cc}
   \hat{\Lambda}_1 &  \\
    &   \hat{\Lambda}_2 \\
  \end{array}}  \right) \left({\begin{array}{c}
    \hat{U}_1^\ast \\
       \hat{U}_2^\ast \\
  \end{array}}  \right).
\end{equation}

\begin{lemma}[\bf sin-$\Theta$ Theorem \cite{davis1970rotation}] \label{lemma_SinTheta}
Let $A, \hat{A}$ be two Hermitian matrices with eigenvalue decompositions given in \eqref{eq_eigenvalue_decomposition1} and \eqref{eq_eigenvalue_decomposition2}.
Suppose that there are $\alpha > \beta>0$ and $\Delta>0$ such that the spectrum of $\Lambda_1$ is contained in the interval $[\alpha, \beta]$ and the spectrum of $\hat{\Lambda}_2$ lies entirely outside of the interval $(\alpha -\Delta, \beta + \Delta)$.
Then
\newcommand{\vertiii}[1]{{\left\vert\kern-0.25ex\left\vert\kern-0.25ex\left\vert #1 
    \right\vert\kern-0.25ex\right\vert\kern-0.25ex\right\vert}}
\begin{equation*}
    \vertiii{U_1 U_1^\ast - \hat{U}_1 \hat{U}_1^\ast}  \leq \frac{\vertiii{\hat{A}- A}}{\Delta},
\end{equation*}
where $\vertiii{\cdot}$ denotes the operator or Frobenius norm (or, more generally, any unitarily invariant norm).
\end{lemma}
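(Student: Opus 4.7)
The plan is to reduce the bound to controlling the off-diagonal block $X := \hat{U}_2^\ast U_1$ via a Sylvester equation driven by the perturbation $E := \hat{A} - A$. Since $A U_1 = U_1 \Lambda_1$, the residual $\hat{A} U_1 - U_1 \Lambda_1$ simplifies to $E U_1$. Premultiplying by $\hat{U}_2^\ast$ and using the spectral decomposition $\hat{U}_2^\ast \hat{A} = \hat{\Lambda}_2 \hat{U}_2^\ast$ yields the Sylvester-type identity
\[
\hat{\Lambda}_2 X - X \Lambda_1 \;=\; \hat{U}_2^\ast E\, U_1.
\]
This identity is the engine of the proof: on the left-hand side sits an algebraic operator whose invertibility is governed precisely by the spectral gap, and on the right-hand side sits a quantity controlled by $\|E\|$.

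The second step is to invert the Sylvester operator $\mathcal{T}(X) := \hat{\Lambda}_2 X - X \Lambda_1$ using the gap hypothesis (interpreting the interval with $\beta \leq \alpha$, so that $\mathrm{spec}(\Lambda_1)$ and $\mathrm{spec}(\hat{\Lambda}_2)$ are separated by at least $\Delta$). Since $\Lambda_1, \hat{\Lambda}_2$ are diagonal, $\mathcal{T}$ acts entrywise as multiplication by $\hat{\lambda}_{2,i} - \lambda_{1,j}$, and each such factor has absolute value at least $\Delta$. For the Frobenius norm this immediately yields $\|X\|_F \leq \|\mathcal{T}(X)\|_F/\Delta$. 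For the operator norm (and more generally every unitarily invariant norm) the same bound follows from the Schur-multiplier theorem applied to the divided-difference kernel $1/(\hat{\lambda}_{2,i} - \lambda_{1,j})$, or equivalently from the contour-integral representation
\[
\mathcal{T}^{-1}(Y) \;=\; \frac{1}{2\pi \mathfrak{i}}\oint (\hat{\Lambda}_2 - zI)^{-1}\, Y\, (zI - \Lambda_1)^{-1}\, dz
\]
with the contour chosen to separate the two spectra inside the gap of width $\Delta$. Combined with $\|\hat{U}_2^\ast E U_1\| \leq \|E\|$ by unitary invariance, this produces $\|X\| \leq \|E\|/\Delta$.

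Finally, I would relate the projector difference to $\|X\|$ via the CS decomposition of the unitary $U^\ast \hat{U}$. Writing $P := U_1 U_1^\ast$ and $\hat{P} := \hat{U}_1 \hat{U}_1^\ast$, decompose $P - \hat{P} = P(I - \hat{P}) - (I - P)\hat{P}$, and observe that the nonzero singular values of each of the two off-diagonal blocks are exactly the sines of the principal angles between $\mathrm{range}(U_1)$ and $\mathrm{range}(\hat{U}_1)$, which coincide with the singular values of $X$. For the operator norm this gives the identity $\|P - \hat{P}\|_2 = \|X\|_2$; for the Frobenius norm one has $\|P - \hat{P}\|_F^2 = 2\|X\|_F^2$, so the stated bound is recovered up to the standard normalization of $\sin\Theta$. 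Chaining these identities with the bound from step two closes the argument.

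The step I expect to be the main obstacle is the uniform inversion bound $\|X\| \leq \|\mathcal{T}(X)\|/\Delta$ across all unitarily invariant norms: the entrywise diagonal argument is immediate for Frobenius but does not transfer automatically to the operator norm. Handling both cases at once requires invoking the Schur-multiplier / contour-integral machinery indicated above. A secondary subtlety is keeping the constants sharp in the passage from $\|X\|$ to $\|P - \hat{P}\|$ so that no extraneous factor of $\sqrt{2}$ degrades the stated inequality.
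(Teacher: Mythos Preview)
The paper does not prove this lemma: it is stated in the preliminaries as the classical Davis--Kahan $\sin\Theta$ theorem and simply cited to \cite{davis1970rotation}, with no argument given. Your proposal is the standard proof of that result (Sylvester equation for $X = \hat{U}_2^\ast U_1$, inversion via the spectral gap, then relating $\|X\|$ to the projector difference), and it is correct in outline.

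The one point worth flagging is the constant you already identify: for the Frobenius norm the identity $\|P-\hat{P}\|_F^2 = 2\|X\|_F^2$ introduces a $\sqrt{2}$ that is absent from the statement as written in the paper. This is a well-known normalization ambiguity in how the $\sin\Theta$ theorem is quoted; the paper only invokes the lemma once (in Proposition~\ref{lemma_t0}) and the constant is immaterial there, so the discrepancy is harmless for the paper's purposes. If you want the bound exactly as stated for all unitarily invariant norms without the $\sqrt{2}$, you would need the sharper route through the residual $\hat{U}_2^\ast(\hat{A}-A)U_1$ combined with the fact that the singular values of $P-\hat{P}$ are the sines of the principal angles each repeated twice, and then appeal to the specific structure of the norm; but this level of care is beyond what the paper requires.
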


\subsection{Probability formulas} 

The following Proposition is well-known (see e.g. \cite{corwin2024lower}):
\begin{proposition}[\bf Layer-cake formula]\label{lemma_layer_cake}
Let  $p \geq 1$ and let $\zeta$ be a non-negative random variable.
Then
\begin{equation*}
\mathbb{E}[\zeta^p] = p\int_{0}^\infty s^{p-1}\mathbb{P}(\zeta > s) \mathrm{d}s.
\end{equation*}
\end{proposition}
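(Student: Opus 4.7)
The plan is the classical Fubini--Tonelli proof of the layer-cake representation. The key observation is that for any non-negative real number $\zeta$, the power $\zeta^p$ can be written as a Lebesgue integral of an indicator of a half-line, which exposes the tail probability upon taking expectations and swapping the order of integration.

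First, I would start from the pointwise identity
$$\zeta^p \;=\; \int_0^\zeta p s^{p-1}\,\mathrm{d}s \;=\; \int_0^\infty p s^{p-1}\, \mathbf{1}_{\{s < \zeta\}}\,\mathrm{d}s,$$
valid for every $\zeta \geq 0$: the first equality is the fundamental theorem of calculus applied to $s \mapsto s^p$ on $[0,\zeta]$, and the second merely encodes the upper limit as an indicator. Since $\zeta$ is assumed non-negative, this identity holds almost surely on the underlying probability space.

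Next, I would take expectations on both sides. Because the integrand $p s^{p-1}\, \mathbf{1}_{\{s<\zeta\}}$ is jointly measurable in $(s,\omega)$ and non-negative, Tonelli's theorem licenses exchanging the expectation with the $\mathrm{d}s$-integral:
$$\mathbb{E}[\zeta^p] \;=\; \mathbb{E}\!\left[\int_0^\infty p s^{p-1}\, \mathbf{1}_{\{s<\zeta\}}\,\mathrm{d}s\right] \;=\; \int_0^\infty p s^{p-1}\, \mathbb{E}\!\left[\mathbf{1}_{\{s<\zeta\}}\right]\mathrm{d}s \;=\; p\int_0^\infty s^{p-1}\, \mathbb{P}(\zeta > s)\,\mathrm{d}s,$$
which is exactly the claimed identity.

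There is no genuine obstacle here: the proof is a two-line Fubini--Tonelli argument. The hypotheses $p \geq 1$ and $\zeta \geq 0$ are used only to guarantee non-negativity of the integrand, so the interchange is unconditionally justified and both sides of the identity are allowed to equal $+\infty$ simultaneously. In particular, no additional integrability assumption on $\zeta$ is required for the statement to make sense or to hold.
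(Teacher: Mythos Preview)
Your proof is correct and is the standard Fubini--Tonelli argument. The paper does not actually supply a proof of this proposition; it states it as well-known with a citation, so there is nothing to compare against beyond noting that your argument is the expected one. (A minor remark: the hypothesis $p\geq 1$ is not actually needed for non-negativity of the integrand---any $p>0$ works---but this does not affect correctness for the stated range.)
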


\color{black}

\section{Overview of proofs}\label{sec_technical_overview}

We bound the Frobenius-distance utility for the covariance approximation problem $\|\hat{M}_k - M_k\|_F$,  
where $\hat{M}:= M + G + G^\ast$ and $G$ is a matrix of i.i.d. standard complex Gaussians (Theorem \ref{thm_rank_k_covariance_approximation_new}).
Here $\hat{V}_k$ and $V_k$ denote the matrices whose columns are the top-$k$ eigenvectors of $M$, $\hat{M}$ respectively.  
For simplicity, we assume $T=1$ in this section.

The privacy guarantee in Theorem \ref{thm_utility} follows directly from prior works on the (real) Gaussian mechanism (see Section \ref{sec_proof_utility_privacy} for details).

\subsection{Deterministic perturbation bounds}\label{sec_deterministic_perturbation_bounds}

Any bound on the utility $\|\hat{M}_k - M_k\|_F = \|\hat{V} \hat{\Sigma}_k \hat{V}^\ast - V \Sigma_k V^\ast\|_F$ must (at the very least) also bound the distance $\|\hat{V}_k \hat{V}_k^\ast - V_k V_k^\ast\|_F$ between the projection matrices onto the subspace $V_k$ and $\hat{V}_k$ spanned by the top-$k$ eigenvectors of $M$ and $\hat{M}$.
This is because 
$$\|\hat{V} \hat{\Sigma}_k \hat{V}^\ast - V \Sigma_k V^\ast\|_F \geq \Omega(\|\hat{V} \Sigma_k \hat{V}^\ast - V \Sigma_k V^\ast\|_F) \geq \Omega(\sigma_{k} \cdot \|\hat{V}_k \hat{V}_k^\ast - V_k V_k^\ast\|_F).$$
Thus, one approach to bounding $\|\hat{V} \hat{\Sigma}_k \hat{V}^\ast - V \Sigma_k V^\ast\|_F$ is to first apply deterministic perturbation bounds on $\|\hat{V}_k \hat{V}_k^\ast - V_k V_k^\ast\|_F$, such as those of the Davis-Kahan theorem \cite{davis1970rotation} restated here in Equation \eqref{eq_DK} (see also Lemma \ref{eq_t7} for a more general version of this theorem).
 Plugging in the high-probability bound  $\|E\|_2 = O(\sqrt{d})$ (e.g., from Lemma \ref{lemma_spectral_martingale_b}), and using the fact that $\|\hat{V}_k \hat{V}_k^\ast - V_k V_k^\ast\|_F \leq \sqrt{k} \|\hat{V}_k \hat{V}_k^\ast - V_k V_k^\ast\|_2$, gives $\|\hat{V}_k \hat{V}_k^\ast - V_k V_k^\ast\|_F \leq \frac{\sqrt{k} \sqrt{d}}{\sigma_k - \sigma_{k+1}}$ with high probability.

To obtain bounds for the utility $\|\hat{V} \hat{\Sigma}_k \hat{V}^\ast - V \Sigma_k V^\ast\|_F \leq  \|\hat{V} \Sigma_k \hat{V}^\ast - V \Sigma_k V^\ast\|_F + \|\hat{V} \hat{\Sigma}_k \hat{V}^\ast - \hat{V} \Sigma_k \hat{V}^\ast\|_F$ of the covariance matrix approximation, one can decompose 
\begin{equation}\label{eq_n78}
V \Sigma_k V^\ast = \sum_{i=1}^{k-1} (\sigma_i- \sigma_{i+1})V_i V_i^\ast + \sigma_k V_k V_k^\ast,
\end{equation}
and apply the Davis-Kahan theorem to each projection matrix $V_i V_i^\ast$ (see Appendix \ref{sec_challenges} for details, and additional discussion on deterministic approaches):
\begin{eqnarray}\label{eq_DK_1}
    & &\!\!\!\!\!\!\!\!\!\!\!\!\!\!\! \|\hat{V} \Sigma_k \hat{V}^\ast - V \Sigma_k V^\ast\|_F 
    \stackrel{\textrm{Eq. } \eqref{eq_n78}}{=} \left\|\sum_{i=1}^{k-1} (\sigma_i- \sigma_{i+1})\hat{V}_i \hat{V}_i^\ast + \sigma_k \hat{V}_k \hat{V}_k^\ast - \left(\sum_{i=1}^{k-1} (\sigma_i- \sigma_{i+1})V_i V_i^\ast + \sigma_k V_k V_k^\ast\right) \right\|_F \nonumber\\ 
    &=&   \left\|\sum_{i=1}^{k-1} (\sigma_i - \sigma_{i+1}) (\hat{V}_i \hat{V}_i^\ast - V_i V_i^\ast) + \sigma_k (\hat{V}_k \hat{V}_k^\ast - V_k V_k^\ast)\right\|_F \nonumber\\
    &\leq & \sum_{i=1}^{k-1} (\sigma_i - \sigma_{i+1}) \|\hat{V}_i \hat{V}_i^\ast - V_i V_i^\ast\|_F + \sigma_k \|\hat{V}_k \hat{V}_k^\ast - V_k V_k^\ast\|_F \nonumber\\
    & = & O\left(k^{1.5} \sqrt{d} + \frac{\sigma_k}{\sigma_k - \sigma_{k+1}} \sqrt{k} \sqrt{d}\right).
\end{eqnarray}
Unfortunately, when $E$ is a Hermitian Gaussian random matrix, this bound is not tight up to a factor of $k$.
Roughly, this is because, while the  Davis-Kahan theorem used to bound each term $\|\hat{V}_i \hat{V}_i^\ast - V_i V_i^\ast\|_F$ is tight for worst-case $E$, it is not tight when $E$ is a Gaussian random matrix.
Moreover, \eqref{eq_DK_1} bounds the Frobenius norm by adding up $k$ separate perturbation bounds, one for each projection matrix $V_i V_i^\ast$, while making worst-case assumptions on the cross-terms $\mathrm{tr} \left((\hat{V}_i \hat{V}_i^\ast - V_i V_i^\ast)   (\hat{V}_j \hat{V}_j^\ast - V_j V_j^\ast)\right)$ for $i \neq j$ which may not hold when $E$ is a random matrix.

\subsection{Bounding the utility of the Gaussian mechanism with Dyson Brownian motion}

As a first step to obtaining a tighter utility bound, we would ideally like to add up the Frobenius norm of the summands $(\sigma_i - \sigma_{i+1}) (\hat{V}_i \hat{V}_i^\ast - V_i V_i^\ast)$ in \eqref{eq_DK_1} as a sum-of-squares rather than as a simple sum, in order to decrease the r.h.s. by a factor of $\sqrt{k}$.
However, to do so we would need to bound the cross-terms $\mathrm{tr} \left((\hat{V}_i \hat{V}_i^\ast - V_i V_i^\ast)   (\hat{V}_j \hat{V}_j^\ast - V_j V_j^\ast)\right)$ for $i \neq j$.
To bound each of these cross-terms we need to carefully track the interactions between the eigenvectors in the subspaces $\mathcal{V}_i$ and $\mathcal{V}_j$ as the noise $E$ is added to the input matrix $M$.

We handle these interaction terms by viewing the addition of noise as a continuous-time Hermitian-matrix valued diffusion 
\begin{equation}\label{eq_n120}
\Phi(t) = M + B(t),
\end{equation}
whose eigenvalues $\gamma_i(t)$ and eigenvectors $u_i(t)$, $i\in [d]$, evolve over time.
Here, $B(t) := W(t) + W(t)^\ast$, where $W(t)$ is a $d \times d$ matrix where the real part (and complex part) of each entry is an independent standard Brownian motion with distribution $N(0, tI_d)$ at time $t$.
The key motivation for this approach is that, as the derivative $\mathrm{d}\Phi(t) = \mathrm{d}B(t)$ of this matrix-valued diffusion is independent of $\Phi(\tau)$ at all previous times $\tau \leq t$ (and thus independent of the eigenvalues  $\gamma_i(\tau)$ and eigenvectors $u_i(\tau)$ for $\tau \leq t$), it allows us to ``add up'' the infinitesimal perturbation to the utility at each time $t$ as an independent term without the need to handle higher-order (in $t$) interaction terms between the eigenvectors which would arise if one were to express these interaction terms using deterministic perturbation theory approach.

\subsubsection{Deriving an SDE for the utility of low-rank approximation} \label{sec_SDE_utility_overview}

We use the evolution equations \eqref{eq_DBM_eigenvectors} for the eigenvectors $u_i(t)$ to track the utility over time.
Let $\Phi(t) = U(t) \Gamma(t) U(t)^\ast$  be a spectral decomposition of the Hermitian matrix $\Phi(t)$ at every time $t$ where $\Gamma(t)$ is a diagonal matrix of eigenvalues at time $t$ and $U(t)$ a unitary matrix of eigenvectors.
We now define the rank-$k$ matrix $\Theta(t)$ to be the Hermitian matrix with fixed eigenvalues $\lambda_1 \geq \cdots  \geq\lambda_d$, where $\lambda_i = \gamma_i(0)$ for $i\leq k$ and  $\lambda_i =0$ for $i>k$,   and with eigenvectors $U(t)$:  
 $  \Theta(t):= U(t) \Lambda U(t)^\ast$ for all $t \in [0,T],$
 where $\Lambda :=\mathrm{diag}(\lambda_1, \ldots, \lambda_d)$.

$\Theta(t)$ is itself a Hermitian matrix-valued diffusion. 
  Decomposing $\Theta(t) = \sum_{i=1}^d \lambda_i u_i(t)u_i^\ast(t)$ allows us to use the SDEs for the eigenvalue  \eqref{eq_DBM_eigenvalues}  and eigenvector evolution \eqref{eq_DBM_eigenvectors}, together with Ito’s Lemma (Lemma \ref{lemma_ito_lemma_new}; the ``chain rule'' of stochastic calculus), to compute the Ito derivative for $\Theta(T)$,
 \begin{eqnarray} \label{eq_Ito_derivative_Theta}
        \mathrm{d}\Theta(t)    &=&    \frac{1}{2}\sum_{i=1}^{d} \sum_{j \neq i} \frac{\lambda_i - \lambda_j}{\gamma_i(t)-\gamma_j(t)}(u_i(t) u_j^\ast(t)\mathrm{d}B_{ij}(t) + u_j(t) u_i^\ast(t)\mathrm{d}B_{ij}^\ast(t)) \nonumber\\
      &+& \sum_{i=1}^{d} \sum_{j\neq i} \frac{\lambda_i - \lambda_j}{(\gamma_i(t)-\gamma_j(t))^2} u_i(t) u_i^\ast(t) \mathrm{d}t.
\end{eqnarray}

\subsubsection{Integrating the SDE to upper bound the utility}

We integrate the SDE \eqref{eq_Ito_derivative_Theta} to get an expression for the expected 
utility:
\begin{eqnarray} \label{eq_t7}
  \mathbb{E}\left[\left\|\Theta(T) -  \Theta(0)\right \|_F^2 \right] &=& \frac{1}{2} \mathbb{E}\left[\left\| \int_0^{T}\sum_{i=1}^{d} \sum_{j \neq i} (\lambda_i - \lambda_j) \frac{\mathrm{d}B_{ij}(t)}{\gamma_i(t)-\gamma_j(t)}(u_i(t) u_j^\ast(t) + u_j(t) u_i^\ast(t))\right\|_F^2 \right] \nonumber\\
    &  &  +  \quad\mathbb{E}\left[\left\| \int_0^{T}\sum_{i=1}^{d} \sum_{j\neq i} (\lambda_i - \lambda_j) \frac{\mathrm{d}t}{(\gamma_i(t) - \gamma_j(t))^2} u_i(t) u_i^\ast(t) \right\|_F^2 \right].
    \end{eqnarray}
    The idea is that roughly speaking, each differential term $\frac{\mathrm{d}B_{ij}(t)}{\gamma_i(t)-\gamma_j(t)}(u_i(t) u_j^\ast(t) + u_j(t) u_i^\ast(t))$ adds noise to the matrix independently of the other terms at every time $t$ since the stochastic derivatives of the Brownian motions, $\mathrm{d}B_{ij}(t)$, are independent for every $i,j,t$ and independent of the $u_i(s)$ for all current and past times $s \leq t$.
    This allows the contribution of each of these terms to the (squared) Frobenius norm of the first term on the r.h.s. to add up as a sum of squares.
     Integrating \eqref{eq_t7} via Ito's Lemma (restated in our preliminaries as Lemma \ref{lemma_ito_lemma_new}), 
    we obtain an expression for the utility as a sum of squares of the ratios of the eigenvalue gaps:
    \begin{equation}\label{eq_t7_2} 
\mathbb{E}\left[\left\|\Theta(T) -  \Theta(0)\right \|_F^2 \right] =   \sum_{i=1}^{d} \int_0^{T} \mathbb{E}\left[ \sum_{j \neq i}  \frac{(\lambda_i - \lambda_j)^2}{(\gamma_i(t)-\gamma_j(t))^2} \right]
     +    T \mathbb{E}\left[\left(\sum_{j\neq i} \frac{\lambda_i - \lambda_j}{(\gamma_i(t)-\gamma_j(t))^2}\right)^2   \right]\mathrm{d}t.
\end{equation}
 We note that, since $\Phi(t)$ is a complex-valued diffusion, we apply Ito's lemma separately to the real and imaginary parts of $\Phi(t)$ when deriving \eqref{eq_t7_2} (see Remark \ref{remark_Ito_complex} for details).

\begin{remark}[\bf Ito's lemma on complex-valued processes]\label{remark_Ito_complex}
When applied to complex differentiable (i.e., holomorphic) functions, some of the terms in Ito’s lemma vanish (see e.g. Theorem 2.2.9 in \cite{Complex_Ito_notes}, which gives the real-valued version of Ito’s lemma applied to complex functions with $\mathbb{C}$ identified as $\mathbb{R}^2$, and the cancelations which arise when it is applied to complex analytic functions that are provided in the discussions following that theorem).
However, the Frobenius norm utility function we bound is not {\em complex} differentiable.
This is because, by the Cauchy-Riemannn equations, any real-valued complex differentiable function must be everywhere constant.
For this reason, we integrate the stochastic process for the Frobenius norm utility by applying the real-valued version of Ito's lemma to its real and imaginary components.
\end{remark}

\subsubsection{Bounding the eigenvalue gaps with Weyl's inequality}

As a first attempt to bound the gap terms $\gamma_i(t) - \gamma_j(t)$ in \eqref{eq_t7_2} for all $i,j \leq k$, $i \neq j$, we use Weyl's inequality (restated here as Lemma \ref{lemma_weyl}), a deterministic eigenvalue perturbation bound which says that $\gamma_i(t) - \gamma_j(t) \geq \gamma_i(0) - \gamma_j(0) - \|B(t)\|_2$ for all $t$.
However, since $\|B(t)\|_2 = \Theta(\sqrt{d})$ w.h.p. for all $t\in [0,T]$, for Weyl's inequality to imply a non-trivial bound on $\gamma_i(t) - \gamma_j(t)$ for all $i,j \leq k$, $i \neq j$, we must require that all gaps in the top-$k$  eigenvalues of $M$ satisfy  $\gamma_i(0) - \gamma_{i+1}(0) = \sigma_i - \sigma_{i+1} \geq \Omega(\sqrt{d})$ for every $i\leq k$.
 Under this assumption, we obtain a bound of  $\gamma_i(t) - \gamma_j(t) \geq \Omega(\gamma_i(0) - \gamma_j(0))$ w.h.p. for every $i,j \leq k$, $i \neq j$ and $t\in [0,T]$.
 Plugging in this eigenvalue gap bound into \eqref{eq_t7_2} and simplifying, we would get that $$ \mathbb{E}\left[\left\|\hat{M}_k - M_k\right \|_F^2 \right] \approx \mathbb{E}\left[\left\|\Theta(T) -  \Theta(0)\right \|_F^2 \right] \leq \tilde{O}\left(\sqrt{k}\sqrt{d} \frac{\sigma_k}{\sigma_k-\sigma_{k+1}}\right)$$ under the assumption that  $\sigma_i - \sigma_{i+1} \geq \Omega(\sqrt{d})$ for every $i\leq k$.

\subsection{From initial eigengaps to  bounds on the eigengaps of Dyson Brownian motion}

To bound the utility of the Gaussian mechanism 
without any assumptions on the initial eigenvalue gaps $\sigma_i - \sigma_{i+1}$ for $i\neq k$, we would like to prove bounds on the gaps $\gamma_i(t) - \gamma_j(t)$ which hold even when initial gaps $\gamma_i(0) - \gamma_j(0) = \sigma_i- \sigma_{i+1}$ may not be $\Omega(\sqrt{d})$.
Unfortunately, since $\|B(t)\|_2 \geq \Omega(\sqrt{d})$ w.h.p. for $t = \Omega(1)$, we cannot rely on deterministic eigenvalue bounds such as Weyl's inequality, as this would not give any bound on $\gamma_i(t) - \gamma_{i+1}(t)$ unless $\sigma_i- \sigma_{i+1} \geq \Omega(\sqrt{d})$. 
To bypass this difficulty we would ideally like to obtain probabilistic lower bounds on the eigenvalue gaps $\gamma_i(t)- \gamma_{i+1}(t)$ which hold for any initial conditions on the top-$k$ eigengaps of $\gamma(0)$.

\subsubsection{Widening the eigengaps by adding complex Gaussian noise}
To see what bounds we might hope to show, note that if $\gamma(0) = 0$ then $\gamma(t)$ has the same joint distribution as the eigenvalues $\eta_1,\ldots, \eta_d$ of the rescaled GOE (GUE) matrix $\sqrt{t} (G+G^\ast)$ where $G$ is a matrix of i.i.d. real (complex) Gaussians.
This joint distribution is given by the following formula \cite{dyson1963random, ginibre1965statistical},
\begin{equation}\label{eq_joint_density}
    f(\eta_1,\ldots, \eta_d) = \frac{1}{R_\beta} \prod_{i<j} |\eta_i - \eta_j|^\beta e^{-\frac{1}{2} \sum_{i=1}^d \eta_i^2}, 
\end{equation}
where $R_\beta := \int \prod_{i<j} | \eta_i - \eta_j|^\beta e^{-\frac{1}{2} \sum_{i=1}^d \eta_i^2} \mathrm{d} \eta_1 \cdots \mathrm{d} \eta_d $ is a normalization constant.

From the repulsion factor $|\eta_i- \eta_{i+1}|^\beta$ in the joint distribution of the eigenvalues \eqref{eq_joint_density}, (and noting that the average eigenvalue gap of the standard GOE/GUE matrix $(G+G^\ast)$ is $\Theta\left(\frac{1}{\sqrt{d}}\right)$ w.h.p. since $\|G+G^\ast\|_2 = \Omega(\sqrt{d})$), roughly speaking one might expect that the GOE/GUE eigenvalue gaps satisfy $$  \mathbb{P}\left(\eta_i- \eta_{i+1} \leq \frac{s}{\sqrt{d}}\right) = O\left(\int_0^s z^\beta \mathrm{d}z\right) = O\left(s^{\beta +1}\right)$$ for all $s \geq 0$, where $\beta = 1$ in the real case and $\beta = 2$ in the complex case.
Assuming we can obtain such a bound, we would like to apply these bounds to bound the expectations of the terms on the r.h.s. of  \eqref{eq_t7_2}.  
The terms on the r.h.s. of  \eqref{eq_t7_2} with the smallest denominator, and therefore the most challenging to bound, are the terms $\mathbb{E}\left[\frac{(\lambda_i- \lambda_{i+1})^2}{(\gamma_i(t) - \gamma_{i+1}(t))^4}\right]$.
Assuming for the moment that we are able to show that
\begin{equation}\label{eq_conjectured_gap_bounds}
     \mathbb{P}\left(\gamma_i(t)- \gamma_{i+1}(t) \leq s\frac{\sqrt{t}}{\sqrt{d}}\right) \leq s^{\beta +1},
    \end{equation}
   then we would have the following bound for terms with denominators of order $r$:
\begin{equation}\label{eq_t8}
 \mathbb{E}\left[\frac{1}{(\gamma_i(t) - \gamma_{i+1}(t))^r}\right] 
 = \int_0^{\infty} \mathbb{P}\left(\gamma_i(t)- \gamma_{i+1}(t) \leq s^{-\frac{1}{r}}\right) \mathrm{d}s
 \leq  \left(\frac{d}{t}\right)^{\frac{r}{2}}\int_0^\infty s^{-\frac{1}{r}{(\beta+1)}}\mathrm{d}s.
\end{equation}
For the terms of order $r=2$, the r.h.s. of \eqref{eq_t8} is  $\int_0^\infty s^{-\frac{1}{2}{(\beta+1)}}\mathrm{d}s= \infty$ in the real case where $\beta=1$.
To bypass this problem, we observe that when the Gaussian noise is complex the integral on the r.h.s. of \eqref{eq_t8} becomes  $\int_0^\infty s^{-\frac{1}{2}{(\beta+1)}}\mathrm{d}s= O(1)$ since $\beta=2$ in the complex case.
Thus, while in the real case, one expects the gaps to be small enough that their inverse second moment $\mathbb{E}\left[\frac{1}{(\gamma_i(t) - \gamma_j(t))^2}\right]$ is infinite, in the complex case the repulsion between eigenvalues allows the gaps to be large enough that the inverse second moment is finite.
This motivates replacing the real Gaussian perturbation in the Gaussian mechanism with Complex-valued Gaussian noise (Algorithm \ref{alg_quaternion_Gaussian}).

\subsubsection{An SDE for a rank-$k$ matrix diffusion with dynamically changing eigenvalues to track the utility under small initial eigengaps.}
Unfortunately, for the highest-order terms, of order $r=4$, the r.h.s. of \eqref{eq_t8} is  $\int_0^\infty s^{-\frac{1}{4}{(\beta+1)}}\mathrm{d}s= \infty$ even in the complex case where $\beta = 2$.
To get around this problem we replace the fixed eigenvalues  $\lambda_i = \sigma_i$ for $i \leq k$, of the rank-$k$ stochastic process $\Theta(t)$, with eigenvalues $\lambda_i(t)$ which change dynamically over time where at each time $t \geq 0$ we set $\lambda_i(t) = \gamma_i(t)$ for $i \leq k$ and $\lambda_i(t) = 0$ for $i>k$, in the hope that this will lead to cancellations in the highest-order terms.
This gives us a new rank-$k$ stochastic process $\Psi(t):= U(t) \Lambda(t) U(t)^\ast$ with dynamically changing eigenvalues $\Lambda(t) :=\mathrm{diag}(\lambda_1(t), \ldots, \lambda_d(t))$.
Since $\Psi(T) = \hat{M}_k$ and $\Psi(0) = M_k$, our goal is to bound $\|\hat{M}_k - M_k \|_F = \|\Psi(T) - \Psi(0)\|_F$.
Roughly speaking, this would lead to cancellations in the terms on the r.h.s. of \eqref{eq_t7_2} at every time $t \geq 0$:  the second-order terms would be reduced to constant terms $$  \frac{(\lambda_i(t) - \lambda_j(t))^2}{(\gamma_i(t)-\gamma_j(t))^2} = \frac{(\gamma_i(t) - \gamma_j(t))^2}{(\gamma_i(t)-\gamma_j(t))^2} = 1$$ for $i\neq j$ $i,j \leq k$,
and fourth-order terms would be reduced to second-order terms, e.g.,  
 $$  \frac{(\lambda_i(t)- \lambda_{i+1}(t))^2}{(\gamma_i(t) - \gamma_{i+1}(t))^4} = \frac{(\gamma_i(t)- \gamma_{i+1}(t))^2}{(\gamma_i(t) - \gamma_{i+1}(t))^4} = \frac{1}{(\gamma_i(t)- \gamma_{i+1}(t))^2}$$ for $i< k$.
This would allow us to obtain a finite bound for the expectation on the r.h.s. of \eqref{eq_t7_2}.

Towards this end, we first use the equations for the evolution of the eigenvalues \eqref{eq_DBM_eigenvalues} and eigenvectors \eqref{eq_DBM_eigenvectors} of Dyson Brownian motion to derive an SDE for our new rank-$k$ process $\Psi(t)$ (Lemma \ref{Lemma_projection_differntial} and \eqref{eq_ito_derivative}):
\begin{equation}\label{eq_t9}
  \mathrm{d}\Psi(t)  = \sum_{i=1}^d \lambda_i(t) \mathrm{d}(u_i(t) u_i^\ast(t))) + (\mathrm{d}\lambda_i(t)) (u_i(t) u_i^\ast(t)) + \mathrm{d}\lambda_i(t) \mathrm{d}( u_i(t) u_i^\ast(t)),
\end{equation}
where  
$$  \mathrm{d}(u_i(t) u_i^\ast(t))
     =  \sum_{j \neq i} \frac{u_i(t) u_j^\ast(t)\mathrm{d}B_{ij}(t)  + u_j(t) u_i^\ast(t)\mathrm{d}B_{ij}^\ast(t)}{\gamma_i(t) - \gamma_j(t)}
    - \frac{(u_i(t) u_i^\ast(t) - u_j(t)u_j^\ast(t))\mathrm{d}t}{(\gamma_i(t)- \gamma_j(t))^2},$$ and where $\mathrm{d}\lambda_i(t)$ is given by \eqref{eq_DBM_eigenvalues}.
    The last term  $\mathrm{d}\lambda_i(t) \mathrm{d}( u_i(t) u_i^\ast(t))$ in \eqref{eq_t9} vanishes as it consists only of higher-order differential terms.
Applying It\^o's lemma to compute the integral $\left\|\int_0^T \mathrm{d} \Psi(t)\right\|_F^2$ for the change in the (squared) Frobenius  distance, we get (Lemma \ref{Lemma_integral} and \eqref{eq_ito_integral_1} in the Proof of Theorem \ref{thm_rank_k_covariance_approximation_new}),
\begin{eqnarray}
& & \!\!\!\!\!\!\!\!\!\!\!\!\!\!\!\!\!\!\!\!\!\!\!\!\!\!\!\!\!\!\!\!\!\!\!\! \mathbb{E}[\| \Psi(T) - \Psi(0)\|_F^2] = \left\|\int_0^T \mathrm{d} \Psi(t)\right\|_F^2 \nonumber \\
& \leq & \int_{0}^{T}   \mathbb{E}\left[ \sum_{i=1}^{d}  \sum_{j \neq i}  \frac{(\lambda_i(t) - \lambda_j(t))^2}{(\gamma_i(t) - \gamma_j(t))^2} \mathrm{d}t \right] +   T \int_{0}^{T}\mathbb{E}\left[\sum_{i=1}^{d}\left(\sum_{j\neq i} \frac{\lambda_i(t) - \lambda_j(t)}{(\gamma_i(t) - \gamma_j(t))^2}\right)^2   \right]\mathrm{d}t \nonumber\\ 
& & + \quad \int_{0}^T \sum_{i=1}^k\mathbb{E}\left[  \left(\sum_{j \neq i} \frac{1}{\gamma_i(t) - \gamma_j(t)}\right)^2 \right] \mathrm{d}t.  
     \end{eqnarray}
Plugging in our choice of $\lambda_i(t)$, we get (Equations \eqref{eq_u1} and \eqref{eq_a4} in the proof of Theorem \ref{thm_rank_k_covariance_approximation_new}),
\begin{eqnarray}\label{eq_t9_2}
 \left\|\int_0^T \mathrm{d} \Psi(t)\right\|_F^2 &\leq &  \sum_{i=1}^{k}\int_{0}^{T}   \mathbb{E}\left[ \left( k +  \sum_{j >k}  \frac{(\gamma_i(t))^2}{(\gamma_i(t)- \gamma_j(t))^2} \right)\right]\nonumber \\
     &+&    T \mathbb{E}\bigg[\left(\sum_{j \neq i: j\leq k} \frac{1}{\gamma_i(t)- \gamma_j(t)}\right)^2   + \bigg(\sum_{j> k} \frac{\gamma_i(t)}{(\gamma_i(t)- \gamma_j(t))^2}\bigg)^2 \bigg] \nonumber \\
   & &  + \quad \mathbb{E}\left[  \left(\sum_{j \neq i} \frac{1}{\gamma_i(t) - \gamma_j(t)}\right)^2 \right] \mathrm{d}t.
    \end{eqnarray}
\noindent If we can prove the conjectured  gap bounds \eqref{eq_conjectured_gap_bounds}, we will have from \eqref{eq_t8} that $$ \mathbb{E}\left[\frac{1}{(\gamma_i(t) - \gamma_j(t))^2}\right] \leq \frac{d}{t (i-j)^2}$$ for all $i\neq j$ and, more generally, that  
$$ \mathbb{E}\left[\frac{1}{(\gamma_i(t) - \gamma_j(t))(\gamma_\ell(t) - \gamma_r(t))}\right] \leq \frac{d}{t \min((i-j)^2, (\ell-r)^2)}$$ for all $i\neq j$, $\ell\neq r$.
Moreover, if we assume a bound only on the $k$'th eigenvalue gap of $M$, $\sigma_k - \sigma_{k+1} \geq \Omega(\sqrt{d})$  (without assuming any bounds on the other eigenvalue gaps of $M$), we have by Weyl's inequality that $\gamma_k(t) - \gamma_{k+1}(t) \geq \sigma_k- \sigma_{k+1} - \|B(t)\|_2 = \Omega(\sigma_i- \sigma_{i+1})$.
Plugging these conjectured probabilistic bounds, together with the worst-case Weyl inequality bounds for the $k$'th gap $\gamma_k(t) - \gamma_{k+1}(t) \geq \Omega(\sigma_i- \sigma_{i+1})$, into \eqref{eq_t9_2} gives (Equation \eqref{eq_u3} in the proof of Theorem \ref{thm_rank_k_covariance_approximation_new}),
$$\mathbb{E}\left[\left\|\hat{M}_k -  M_k\right \|_F^2\right]= \left\|\int_0^T \mathrm{d} \Psi(t)\right\|_F^2 \leq \tilde{O}\left(kd \frac{\sigma_k^2}{(\sigma_k - \sigma_{k+1})^2}\right).$$

\subsection{Bounding the eigenvalue gaps of Dyson Brownian motion}\label{sec_technical_gaps}

To complete the proof of Theorem \ref{thm_rank_k_covariance_approximation_new},  we still need to show the conjectured bounds in \eqref{eq_conjectured_gap_bounds} (or at least show a close approximation to these bounds).
We do this by proving Lemmas \ref{lemma_gap_comparison} and \ref{lemma_GUE_gaps}, and present an overview of their proofs in this section.
 We start by recalling a few important ideas and results from random matrix theory.

\subsubsection{Useful ideas from random matrix theory} Starting with \cite{dyson1963random, ginibre1965statistical}, many works have made use of the intuition that the eigenvalues of a random matrix tend to repel each other, and can be interpreted in the context of statistical mechanics as a many-body system of charged particles undergoing a Brownian motion.
These particles repel each other with an ``electrical force'' arising from a potential that decays logarithmically with the distance between pairs of particles (see e.g. \cite{rodriguez2014calibration}). 
The dynamics of these particles are described by the eigenvalue evolution equations \eqref{eq_DBM_eigenvalues} discovered by \cite{dyson1963random}, where the diffusion term $\mathrm{d} \gamma_i(t) = \mathrm{d}B_{i i}(t)$ describes the random component of each particle's motion and the terms $\frac{\beta}{\gamma_i(t) - \gamma_j(t)}$ describe the repulsion between particles; the parameter $\beta$ can be interpreted either as the strength of the electrical force, or equivalently, as the (inverse) temperature of the system.

\cite{dyson1963random} showed that from the evolution equations \eqref{eq_DBM_eigenvalues} one can obtain the joint distribution of the eigenvalues of Dyson Brownian motion at equilibrium \eqref{eq_joint_density}.
If one initializes the matrix Brownian motion with all eigenvalues at $0$, at every time $t$ the matrix Brownian motion is in equilibrium (after scaling by $\frac{1}{\sqrt{t}}$) and equal in distribution to a GOE or GUE matrix scaled by $\sqrt{t}$, and thus \eqref{eq_joint_density} gives an explicit formula for the joint distribution of the eigenvalues of the GOE random matrix (for the real case $\beta = 1$) and GUE random matrix (for the complex case $\beta = 2$).

In the limit as $\beta\rightarrow \infty$ (with appropriate rescaling), the temperature of the system can be thought of as going to zero, and the solution to the evolution equations \eqref{eq_DBM_eigenvalues} converges to a deterministic solution with particles ``frozen'' at  $\gamma_i(t) = \sqrt{t}\omega_i$ with probability 1 for some $\omega_1,\ldots, \omega_d \in \mathbb{R}$.
It has long been observed \cite{ginibre1965statistical, girko1985circular} that the gaps $\omega_i - \omega_{i+1}$ between these particles is, roughly, $\frac{1}{\sqrt{d}}$ in the ``bulk'' of the spectrum (i.e., the set of eigenvalues with index $cd <i< d- cd$ for any small constant $c$), while the particles have larger gaps near the edge of the spectrum.

More recently, \cite{erdHos2012rigidity} showed (restated here as Lemma \ref{lemma_rigidity}) that with high probability, the eigenvalues $\eta$ of the GOE/GUE random matrices are ``rigid'' in the sense that each eigenvalue $\eta_i$ falls within a small distance $\tilde{O}(\min(i, d-i+1)^{-\frac{1}{3}} d^{-\frac{1}{6}})$ of the zero-temperature eigenvalue $\omega_i$, where $\min(i, d-i+1)^{-\frac{1}{3}} d^{-\frac{1}{6}}$ is the average eigengap size in the region of the spectrum
  near $\omega_i$:
\begin{equation}\label{eq_ridgidity}
 |\eta_i - \omega_i| \leq O( \min(i, d-i+1)^{-\frac{1}{3}} d^{-\frac{1}{6}} \log(d)^{\log \log d}), \qquad \forall i \in [d], \quad \textrm{ w.h.p.}
\end{equation}

%
\subsubsection{Our results on eigenvalue gaps of Dyson Brownian motion (Overview of the proof of Theorem \ref{thm:eigenvalue_gap})} \label{overview_eigenvalue_gaps}

\paragraph{Reducing the problem of bounding the eigenvalue gaps from any initial condition to the zero initial condition.}

To bound the eigenvalue gaps of Dyson Brownian motion from any initial condition $\gamma(0)$ (Theorem \ref{thm:eigenvalue_gap}), we would like to make use of the closed-form expression \eqref{eq_joint_density}, which gives the joint density for the eigenvalues of Dyson Brownian motion initialized at $\gamma(0)=0$.
Unfortunately, in the real case, to the best of our knowledge, we are not aware of a closed-form expression for the joint density of the eigenvalues of Dyson Brownian motion for general initial conditions $\gamma(0)$.  
Moreover, while a joint density formula (Proposition 1.1 of \cite{johansson2001universality}) is available in the complex case for general initial conditions $\gamma(0)$, this formula is more difficult to work with as it includes additional determinantal terms not present in the joint density formula \eqref{eq_joint_density}  for the special case when $\gamma(0) = 0$.

To overcome these difficulties, we first show, in the following lemma, that one can reduce the task of bounding the eigenvalue gaps of Dyson Brownian motion from any initial condition, to the problem of bounding the gaps of a Dyson Brownian motion initialized at the $0$ vector.
\begin{lemma}[\bf Eigenvalue-gap comparison Lemma]\label{lemma_gap_comparison}
\quad Let $\beta \geq 1$, and let $\xi(t) = (\xi_1(t), \ldots, \xi_d(t))$ \, \, and \, \, $\gamma(t) =$ \\ $(\gamma_1(t), \ldots, \gamma_d(t))$ be two solutions of \eqref{eq_DBM_eigenvalues} (with parameter $\beta$) coupled to the same underlying Brownian motion $B(t)$, starting respectively from initial conditions $\xi(0), \gamma(0)$. 
Assume that $\xi_i(0) - \xi_{i+1}(0)  \leq \gamma_i(0) - \gamma_{i+1}(0)$ for all $1\leq i < d$.
Then, with probability $1$, $\xi_i(t) - \xi_{i+1}(t)  \leq \gamma_i(t) - \gamma_{i+1}(t)$ for all  $t>0$ and all $1\leq i < d$.
\end{lemma}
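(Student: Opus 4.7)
The plan is to turn the claim into a pathwise ODE comparison via the shared Brownian motion, run a first-contact stopping-time argument after perturbing the initial condition to guarantee strict initial gap ordering, and then pass to the limit.

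\emph{Noise cancellation.} Since $\gamma$ and $\xi$ both solve \eqref{eq_DBM_eigenvalues} driven by the same $B(t)$, the Brownian parts of $d\gamma_i$ and $d\xi_i$ both equal $dB_{ii}(t)$ and cancel in the gap difference $D_i := G_i - g_i$, where $G_i(t) := \gamma_i(t) - \gamma_{i+1}(t)$ and $g_i(t) := \xi_i(t) - \xi_{i+1}(t)$. Pairing the $j=i+1$ term in the drift of $\gamma_i$ with the $j=i$ term in the drift of $\gamma_{i+1}$ yields the purely pathwise ODE
\begin{equation*}
\dot D_i \;=\; 2\beta\!\left(\frac{1}{G_i} - \frac{1}{g_i}\right) \;+\; \beta \sum_{j \neq i, i+1}\!\left[\frac{g_i}{(\xi_i - \xi_j)(\xi_{i+1} - \xi_j)} \;-\; \frac{G_i}{(\gamma_i - \gamma_j)(\gamma_{i+1} - \gamma_j)}\right],
\end{equation*}
well-defined on the non-collision event provided by Lemma~\ref{lemma_DBM_collision}.

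\emph{Perturbation and stopping time.} Set $w_i := d - i$ and let $\gamma^{(\epsilon)}$ be the strong solution of \eqref{eq_DBM_eigenvalues} starting from $\gamma(0) + \epsilon w$ driven by the same $B(t)$ (Lemma~\ref{lemma_strong}); then $G^{(\epsilon)}_i(0) = G_i(0) + \epsilon > g_i(0)$ strictly. By the initial-condition continuity of Lemma~\ref{lemma_continuity}, $\gamma^{(\epsilon)}(t)\to\gamma(t)$ as $\epsilon\downarrow 0$, so it suffices to prove the gap comparison for each $\gamma^{(\epsilon)}$ and pass to the limit. Define $\tau := \inf\{t>0: \min_i D_i^{(\epsilon)}(t)\leq 0\}$ (with $\inf\emptyset = +\infty$). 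If $\tau=\infty$ we are done; otherwise strict initial positivity plus continuity force $\min_i D_i^{(\epsilon)}(\tau)=0$ with nonempty contact set $I:=\{i: D_i^{(\epsilon)}(\tau)=0\}$. For $i\in I$ the first term of the ODE vanishes, and the telescoping identity $|\gamma_a^{(\epsilon)}-\gamma_b^{(\epsilon)}|=\sum_{r=\min(a,b)}^{\max(a,b)-1}G_r^{(\epsilon)}$ combined with $G_r^{(\epsilon)}(\tau)\geq g_r(\tau)$ for every $r$ (by definition of $\tau$) gives $(\gamma_i^{(\epsilon)}-\gamma_j^{(\epsilon)})(\gamma_{i+1}^{(\epsilon)}-\gamma_j^{(\epsilon)}) \geq (\xi_i-\xi_j)(\xi_{i+1}-\xi_j) > 0$ for each $j\neq i,i+1$, so every bracket in the remaining sum is nonnegative and $\dot D_i^{(\epsilon)}(\tau)\geq 0$.

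\emph{Extracting strict positivity, and the degenerate case.} If $I\neq\{1,\ldots,d-1\}$ there is some $r^*\notin I$ with $D_{r^*}^{(\epsilon)}(\tau)>0$; for each $i\in I$ one can pick $j\notin\{i,i+1\}$ so that $r^*$ lies in the telescoping range (take $j=r^*$ when $r^*<i$, and $j=r^*+1$ when $r^*>i$), making that bracket strictly positive, so $\dot D_i^{(\epsilon)}(\tau)>0$. This, together with $D_j^{(\epsilon)}>0$ for $j\notin I$ by continuity, forces $\min_i D_i^{(\epsilon)}>0$ on a right neighborhood of $\tau$, contradicting the definition of $\tau$. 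In the remaining degenerate case $I=\{1,\ldots,d-1\}$ we have $\gamma^{(\epsilon)}(\tau)-\xi(\tau)=c\vec{1}$ for some $c\in\mathbb{R}$; then $t\mapsto \xi(t)+c\vec{1}$ solves \eqref{eq_DBM_eigenvalues} with the same initial value at $\tau$ and the same $B$, so uniqueness of strong solutions (Lemma~\ref{lemma_strong}) yields $\gamma^{(\epsilon)}(t)=\xi(t)+c\vec{1}$ and hence $G_i^{(\epsilon)}(t)=g_i(t)$ for all $t\geq\tau$, preserving the (nonstrict) comparison. Passing $\epsilon\downarrow 0$ via Lemma~\ref{lemma_continuity} finishes the proof.

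\emph{Main obstacle.} The crux is extracting \emph{strict} positivity of the drift at first contact: weak nonnegativity is automatic from $G_r^{(\epsilon)}\geq g_r$, but ruling out $\tau<\infty$ requires identifying some index $j$ whose telescoping sum witnesses a strictly wider $\gamma$-gap, and the degenerate case where all gaps collapse simultaneously must be handled separately by forward uniqueness of the coupled SDE.
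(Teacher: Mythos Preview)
Your proof is correct and follows essentially the same route as the paper's: reduce to strictly ordered initial gaps via continuity in the initial condition, run a first-contact stopping-time argument where the coupled noise cancels, show the gap-difference drift is nonnegative at contact (strictly positive unless all gaps coincide, via the same telescoping/monotonicity that underlies the paper's Proposition~\ref{prop_stochastic_derivative_comparison}), and dispatch the degenerate all-equal case by translation invariance plus strong uniqueness. The only slip is that with $\tau := \inf\{t>0:\min_i D_i^{(\epsilon)}(t)\leq 0\}$ the right-neighborhood positivity does not contradict $\tau$ being the infimum (since $\tau$ itself lies in the set); either define $\tau$ with a strict inequality as the paper does, or observe that $\dot D_i^{(\epsilon)}(\tau)>0$ together with continuity of the drift forces $D_i^{(\epsilon)}<0$ on a \emph{left} neighborhood of $\tau$, which does yield the contradiction.
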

\noindent
 We give an overview of the proof of Lemma \ref{lemma_gap_comparison} below; the full proof appears in Section \ref{sec_gap_comparison_proof}.
Note that \cite{anderson2010introduction} show a different eigenvalue comparison theorem (their Lemma 4.3.6) which says that if $\xi$ and $\gamma$ are two coupled Dyson Brownian motions with initial conditions satisfying $\xi_i(0) \leq \gamma_i(0)$ for all $i \in [d]$, then with probability $1$,  $\xi_i(t) \leq \gamma_i(t)$ at every $t \geq 0$.  However, this does not imply the gaps of $\gamma_i(t)$ are at least as large as the corresponding gaps of $\xi_i(t)$ since we could have that $\gamma_i(t) - \gamma_{i+1}(t) < \xi_i(t)- \xi_{i+1}(t)$ even if $\xi_i(t) \leq \gamma_i(t)$ for all $i$; see also \cite{erdHos2011universality, landon2017convergence,lee2016bulk} for results about the eigenvalues of Dyson Brownian motion and their gaps from non-zero initial conditions.

To prove Lemma \ref{lemma_gap_comparison}, we must show that whenever the initial gaps of $\gamma(0)$ are greater than or equal to the corresponding initial gaps of $\xi(0)$, $\gamma_i(0) - \gamma_{i+1}(0) \geq \xi_i(0) - \xi_{i+1}(0)$,  with probability $1$ the gaps of $\gamma(t)$ remain greater than or equal to the gaps of the coupled process $\xi(t)$ at every time $t \geq 0$.
 The idea behind the proof of Lemma \ref{lemma_gap_comparison} is to consider the net ``electrostatic pressure'' on each gap $\gamma_i(t) - \gamma_{i+1}(t)$-- that is, the difference between the sum of the forces from the eigenvalues $\gamma_j(t)$ for $j \notin\{i, i+1\}$ pushing on the gap $\gamma_i(t) - \gamma_{i+1}(t)$ from the outside to compress it, and the force from the repulsion between the eigenvalues $\gamma_i(t)$ and $\gamma_{i+1}(t)$ pushing to expand the gap.
More formally, this net pressure is $\mathrm{d} \left(\gamma_i(t) - \gamma_{i+1}(t)\right) = \mathrm{d}\gamma_i(t) -\mathrm{d} \gamma_{i+1}(t)$ and, thus, we can compute it using \eqref{eq_DBM_eigenvalues}:
\begin{eqnarray} \label{eq_pressure}
     \mathrm{d} \gamma_i(t) - \mathrm{d} \gamma_{i+1} (t) =  \mathrm{d}B_{i, i}(t) +   \sum_{j \neq i} \frac{\beta \mathrm{d}t}{\gamma_i(t) - \gamma_j(t)}   -  \left(\mathrm{d}B_{i+1, i+1}(t) + \! \!  \sum_{j \neq i+1} \frac{\beta \mathrm{d}t}{\gamma_{i+1}(t) - \gamma_j(t)}  \right).
\end{eqnarray}
Ideally, we would like to show that at any time where all the gaps of $\gamma(t)$ are at least as large as all the gaps of $\xi(t)$, 
we have $\mathrm{d} \gamma_i(t) - \mathrm{d} \gamma_{i+1}(t)\geq \mathrm{d} \xi_i(t) - \mathrm{d} \xi_{i+1}(t)$.
This in turn {\em would} imply that the gaps of $\gamma(t)$ expand faster (or contract slower) than the corresponding gaps of $\xi(t)$, and hence that the gaps of  $\gamma(t)$ remain larger than those of $\xi(t)$ at every time $t \geq 0$.
Unfortunately, the opposite may be true: if the eigenvalue gap $\gamma_i(t) - \gamma_{i+1}(t)$ is much larger than the gap $\xi_i(t) - \xi_{i+1}(t)$ then the repulsion between $\gamma_i(t)$ and $\gamma_{i+1}(t)$  pushing to expand the gap $\gamma_i(t) - \gamma_{i+1}(t)$ is much smaller than the repulsion pushing to expand the gap $\xi_i(t) - \xi_{i+1}(t)$.

To solve this problem, we prove Lemma  \ref{lemma_gap_comparison} by a contradiction argument.
Towards this end, we first define 
$\tau := \inf\{t\geq 0: \xi_i(t) - \xi_{i+1}(t) > \gamma_i(t) - \gamma_{i+1}(t) \textrm{ for some } i \in [d]  \}$  to be the first time where for some $i$,  the size of the $i$'th gap  $\xi_i(t) - \xi_{i+1}(t)$ becomes larger than the $i$'th gap  $\gamma_i(t) - \gamma_{i+1}(t)$ of $\gamma(t)$.
We assume (falsely), that $\tau < \infty$ and show that this leads to a contradiction.

Since the initial gaps of $\gamma(0)$ are at least as large as those of $\xi(0)$, and since the trajectories $\gamma(t)$ and $\xi(t)$ are continuous w.p.\ $1$, by the intermediate value theorem there must be an $i\in[d]$ such that 
\begin{equation}\label{eq_t10}
     \gamma_i(\tau) - \gamma_{i+1}(\tau) = \xi_i(\tau) - \xi_{i+1}(\tau),
\end{equation}
and the other gaps at time $\tau$ satisfy
   $\gamma_j(\tau) - \gamma_{j+1}(\tau) \geq \xi_j(\tau) - \xi_{j+1}(\tau)$ for $j \in [d]$. 
Plugging \eqref{eq_t10} into \eqref{eq_pressure}, we
obtain the difference in net electrostatic pressure on the $i$'th gap of $\gamma$ and $\xi$ at time $\tau$:
\begin{eqnarray}
      (\mathrm{d} \gamma_i(\tau) - \mathrm{d} \gamma_{i+1} (\tau)) -   (\mathrm{d} \xi_i(\tau) - \mathrm{d} \xi_{i+1} (\tau)) 
     =  \sum_{j \neq i, i+1} \frac{\beta \mathrm{d}\tau}{\gamma_i(\tau) - \gamma_j(\tau)}  -  \frac{\beta  \mathrm{d}\tau  }{\xi_i(\tau) - \xi_j(\tau)}
     \geq 0. \label{eq_t12}
\end{eqnarray}
The Brownian motion terms $\mathrm{d}B$ from \eqref{eq_pressure} 
cancel as we have coupled the processes $\gamma$ and $\xi$ by setting their underlying Brownian motions $B(t)$ to be equal. 
The terms $\frac{1}{\gamma_i(\tau) - \gamma_{i+1}(\tau)}$ and  $\frac{1}{\xi_i(\tau) - \xi_{i+1}(\tau)}$ arising from  \eqref{eq_pressure} which describe repulsion between the $i$'th and $i+1$'th eigenvalues cancel 
by \eqref{eq_t10}.
Thus, we are only left with the forces from the other eigenvalues pushing to compress the $i$'th gap of $\xi(\tau)$ and $\gamma(\tau)$ from the outside, which 
 allows us to then show that since 
the gaps of $\gamma(\tau)$ are at least as large as the corresponding gaps of $\xi(\tau)$ at time $\tau$, the r.h.s. of \eqref{eq_t12} is greater than or equal to $0$ (Proposition \ref{sec_gap_comparison_proof}).

Next, we would like to show that \eqref{eq_t12} implies that the $i$'th gap of $\xi$ does {\em not} become larger than the $i$'th gap of $\gamma$ at time $\tau$, leading to a contradiction.
Unfortunately, \eqref{eq_t12} is not sufficient to show this, since, if $(\mathrm{d} \gamma_i(\tau) - \mathrm{d} \gamma_{i+1} (\tau)) -   (\mathrm{d} \xi_i(\tau) - \mathrm{d} \xi_{i+1} (\tau)) = 0$ we might have that the {\em second} derivative of the gaps of $\gamma$,  $(\mathrm{d}^2 \gamma_i(\tau) - \mathrm{d}^2 \gamma_{i+1} (\tau))$ is strictly smaller than the second derivative of the gaps of $\xi$,  $(\mathrm{d}^2 \xi_i(\tau) - \mathrm{d}^2 \xi_{i+1} (\tau))$. 
 To overcome this problem, we observe that, since 
 the gaps of $\gamma(\tau)$ are at least the corresponding gaps of $\xi(\tau)$ at time $\tau$, the only way the r.h.s. of \eqref{eq_t12} could be $0$ is if all the gaps of $\gamma(\tau)$ are equal to the corresponding gaps of $\xi(\tau)$. 
 In this case, the gaps would be equal at {\em every} time $t$ since solutions of Dyson Brownian motion are unique w.r.t. the underlying Brownian motion which defines our coupling  (see e.g. \cite{anderson2010introduction}, restated as Lemma \ref{lemma_strong}).
Thus, without loss of generality, we may assume that there is at least one $j$ such that the $j$'th gap of $\gamma(\tau)$ is strictly greater than the $j$'th gap of $\xi(\tau)$.
This in turn implies the r.h.s. of \eqref{eq_t12} is {\em strictly} greater than $0$, and hence the $i$'th gap of $\gamma$ becomes strictly {\em larger} than the $i$'th gap of $\xi$ in an open neighborhood of the time $\tau$.
This contradicts the definition of 
$\tau$, and hence by contradiction, we have $\tau = \infty$, and therefore the gaps of $\gamma(t)$ are greater than or equal to the corresponding gaps of $\xi(t)$ at every time $t \geq 0$.

\paragraph{Bounding the eigenvalue gaps of the GUE/GOE random matrix.}

Roughly speaking, to complete the proof of Theorem \ref{thm:eigenvalue_gap} we must show the conjectured lower bound of $\mathbb{P}\left(\eta_i - \eta_{i+1} \leq \frac{1}{\sqrt{d}} s\right)$ $\leq s^{\beta +1}$ for any $i$ and $s \geq 0$,  when $\eta_1,\ldots, \eta_d$ are the eigenvalues of the GOE/GUE random matrix.
The proof for the complex Hermitian GUE  $(\beta =2)$ case and the real symmetric GOE $(\beta = 1)$ case are nearly identical.
We first show how to complete the proof for the complex case, then show how to modify the proof for the real case.

As a first approach, we would ideally like to integrate the formula for the joint eigenvalue density $f(\eta)$  \eqref{eq_joint_density} over the set 
\begin{equation}\label{eq_n182}
A(s):= \left\{\eta \in \mathcal{W}_d: \eta_i - \eta_{i+1} \leq \frac{1}{\sqrt{d}} s \right\},
\end{equation}
 where $\mathcal{W}_d$ was defined in \eqref{eq:WeylChamber}.
 This gives
\begin{eqnarray}\label{eq_t14}
   \mathbb{P}\left(\eta_i - \eta_{i+1} \leq \frac{1}{\sqrt{d}} s\right) =  \int_{A(s)}   f(\eta) \mathrm{d} \eta   \stackrel{\textrm{Eq. } \eqref{eq_joint_density}}{=} \frac{1}{R_2} \int_{A(s)}   \prod_{\ell<j} | \eta_{\ell} - \eta_j|^2 e^{-\frac{1}{2} \sum_{\ell=1}^d \eta_{\ell}^2} \mathrm{d} \eta.
\end{eqnarray}
Unfortunately, we do not know of a closed-form expression for the $d$-dimensional integral \eqref{eq_t14}.

To get around this problem, suppose that we can somehow find a map $\phi: \mathcal{W}_d \rightarrow \mathcal{W}_d$ such that the following holds for every $\eta \in A(s)$, 
\begin{itemize}
\item     the term $|\eta[i] - \eta[i+1]|$ in the formula \eqref{eq_joint_density} for the joint eigenvalue density $f(\eta)$ satisfies
\begin{equation}\label{eq_n122}
|\phi(\eta)[i] - \phi(\eta)[i+1]| \geq \frac{1}{s}|\eta_i - \eta_{i+1}|
\end{equation}
\item    all other terms in the formula for $f(\eta)$ remain unchanged when applying $\phi$ to $\eta$, that is,  
\begin{equation}\label{eq_n123}
|\phi(\eta_j) - \phi(\eta_\ell)| = |\eta_j - \eta_\ell| \qquad \qquad \forall  (j, \ell) \neq (i, i+1),
\end{equation}
and
\begin{equation}\label{eq_n124}
 e^{-\frac{1}{2} \sum_{\ell=1}^d \phi(\eta_{\ell})^2} = e^{-\frac{1}{2} \sum_{\ell=1}^d \eta_{\ell}^2}.
\end{equation}
\end{itemize}
If we can construct a function $\phi$ satsfying \eqref{eq_n122}, \eqref{eq_n123}, and \eqref{eq_n124} then by \eqref{eq_joint_density}   for every $\eta \in A(s)$ we would have that 
\begin{eqnarray}\label{eq_n126}
f(\phi(\eta)) &\stackrel{\textrm{Eq. } \eqref{eq_joint_density}}{=}&  \frac{1}{R_2}\prod_{\ell<j} | \phi(\eta_{\ell}) - \phi(\eta_j)|^2 e^{-\frac{1}{2} \sum_{\ell=1}^d \phi(\eta_{\ell})^2} \nonumber\\
&\stackrel{\textrm{Eq. } \eqref{eq_n122}, \eqref{eq_n123}, \eqref{eq_n124}}{\geq}&  \frac{1}{s^2}   \frac{1}{R_2} \prod_{\ell<j} | \eta_{\ell} - \eta_j|^2 e^{-\frac{1}{2} \sum_{\ell=1}^d \eta_{\ell}^2}  \nonumber\\
&\stackrel{\textrm{Eq. } \eqref{eq_joint_density}}{=}& \frac{1}{s^2}f(\eta).
\end{eqnarray}

\noindent
Moreover, roughly speaking, one might hope that, since $\phi$ expands one of the eigenvalue gaps by $\frac{1}{s}$  (Inequality \eqref{eq_n122}) and leaves all the other gaps unchanged (Equation \eqref{eq_n123}), the map $\phi$ would be invertible and the Jacobian determinant of such a map would satisfy 
\begin{equation}\label{eq_n125}
\mathrm{det}(J_\phi(\eta)) \geq \frac{1}{s}
\end{equation}
 for all $\eta \in A(s)$.
This in turn would imply that the r.h.s. of \eqref{eq_t14} would satisfy
\begin{eqnarray}\label{eq_t15}
   \mathbb{P}\left(\eta_i - \eta_{i+1} \leq \frac{1}{\sqrt{d}} s\right) \!
  \stackrel{\textrm{Eq.}\, \eqref{eq_t14}}{=}  \! s^3 \int_{A(s)}   f(\eta)  \frac{1}{s^3} \mathrm{d} \eta 
 \stackrel{\textrm{Eq.}\, \eqref{eq_n126}, \eqref{eq_n125}}{ \leq} \!\! s^3 \int_{A(s)}  f(\eta)   \frac{f(\phi(\eta))}{f(\eta)}  \mathrm{det}(J_\phi(\eta))  \mathrm{d} \eta
  \leq s^3.
  \end{eqnarray}
The last step holds since $\phi$ is injective and $f$ is a probability density, implying the integral is at most $1$.
\color{black}

Unfortunately, one can easily see that there does not exist a map $\phi$ which expands the $i$'th gap term $|\eta_i-\eta_{i+1}|$ in the joint eigenvalue density  \eqref{eq_joint_density} by $\frac{1}{s}$ (condition \eqref{eq_n122}), but leaves all other terms unchanged (conditions \eqref{eq_n123} and \eqref{eq_n124}).
This is because, to expand $|\eta_i-\eta_{i+1}|$ but leave the other gap terms unchanged, one would, e.g., have to translate the other eigenvalues  $\eta_j$ for $j \leq i$ aside by an amount  $(\frac{1}{s}-1)|\eta_i-\eta_{i+1}|$.
To circumvent this problem, we instead consider a different map $\phi : \mathcal{W}_d \rightarrow \mathcal{W}_d$ which, roughly speaking, expands the eigenvalue gap $\eta_i-\eta_{i+1}$ by a factor of $\frac{1}{s}$, leaves all other gaps unchanged, and translates the eigenvalues of $\eta_j$ for $j \leq i$ to the left by an amount $\frac{1}{s}(\eta_i-\eta_{i+1})$ to make room for the expanded eigenvalue gap (see equations \eqref{eq_phi1}-\eqref{eq_phi3} for the full definition of $\phi$).
Since  $\eta_i = \Theta(\sqrt{d})$ w.h.p., when e.g. $|\eta_i-\eta_{i+1}| \geq \Theta(\frac{1}{\sqrt{d}})$ this would decrease the exponential term in the joint density by a factor of 
$$ e^{-\frac{1}{2} \sum_{j=1}^i (\phi(\eta)[i] - \eta_i)^2} \approx e^{-\frac{1}{2} \sum_{j=1}^i (\phi(\eta)[i] - \eta_i) \sqrt{d} } \geq  e^{-\frac{1}{2} \sum_{j=1}^i \frac{\sqrt{d}}{\sqrt{d}}} = e^i.$$
For $i = O(1)$, this is not an issue as then one has $e^i = O(1)$ and, hence,   $$ \frac{f(\phi(\eta))}{f(\eta)} \geq \Omega\left(\frac{1}{s^2}\right)$$ (see Lemma \ref{lemma_density_ratio_edge}).
Roughly speaking this fact, together with a bound on the Jacobian determinant of $\phi$ (Lemma \ref{prop_Jacobian_phi} which says  $\mathrm{det}(J_\phi(\eta)) \geq \frac{1}{s}$) and since $\phi$ is injective (Proposition \ref{prop_map_phi}), allows us to use the above map $\phi$ to show that \eqref{eq_t15} holds whenever the $i$'th eigenvalue gap is near the edge of the spectrum ($i \leq \tilde{O}(1)$).

To bound  $\eta_i-\eta_{i+1}$ for $i \geq \tilde{\Omega}(1)$, which are not near the edge of the spectrum, 
we will use the rigidity property of the GUE eigenvalues  \eqref{eq_ridgidity} 
(\cite{erdHos2012rigidity}; restated here as Lemma \ref{lemma_rigidity}).
Roughly, this rigidity property says that none of the eigenvalues $\eta_i$ fall more than a distance $\mathfrak{b} = O(\log(d)^{\log \log d}) = \tilde{O}(1)$ from their ``zero-temperature'' locations $\omega_i$.
Hence, $\eta_{j} \in [a,b]$ for all $i - \mathfrak{b} \leq j \leq i + \mathfrak{b}$,  where $a:=\eta_{i-\mathfrak{b}} \geq \omega_i - \mathfrak{b}^2\sqrt{d}$ and $b:= \eta_{i+\mathfrak{b}} \leq \omega_i + \mathfrak{b}^2\sqrt{d}$ w.h.p.

To apply this rigidity property, we define a new map  $g : \mathcal{W}_d \rightarrow \mathcal{W}_d$ where $g(\eta)$ leaves all eigenvalues $\eta_j$ outside $[a,b]$ fixed, 
 and $g(\eta)$ expands the $i$'th eigengap by a factor of $\frac{1}{s}$:   $g(\eta)[i] - g(\eta)[i+1] \geq \frac{1}{s}(\eta_i - \eta_{i+1})$. 
To ``make room'' for the expansion of the $i$'th gap without changing the locations of the eigenvalues outside $[a,b]$, it shrinks the eigengaps inside $[a,b]$ by a factor of $1-\alpha$ where $\alpha := (\frac{1}{s}-1)\frac{\eta_i - \eta_{i+1}}{b-a} \leq \mathfrak{b}^{-3}$ whenever $\eta \in A(\frac{s}{ \mathfrak{b}})$ because $\eta_i - \eta_{i+1} \leq s\frac{1}{ \mathfrak{b} \sqrt{d}}$  if $\eta \in A(\frac{s}{ \mathfrak{b}})$
(See \eqref{eq_g3}-\eqref{eq_g1} for the definition of $g$).
Thus, roughly, for all $\eta \in A(\frac{s}{\mathfrak{b}})$, 
\begin{eqnarray}\label{eq_t16}
 \frac{f(g(\eta))}{f(\eta)} \!= \!\prod_{j \neq \ell} \frac{|g(\eta)[j] - g(\eta)[\ell]|^2}{|\eta_j - \eta_{\ell}|^2}  e^{-\frac{1}{2} \sum_{j= i - \mathfrak{b}}^{i +\mathfrak{b}} \eta_j^2- g(\eta)[j]^2} 
    \geq \frac{1}{s^2}  (1-\alpha)^{2\mathfrak{b}^2}  e^{-\frac{1}{2} \sum_{j= i - \mathfrak{b}}^{i +\mathfrak{b}} \frac{1}{\mathfrak{b}}} 
    \geq \frac{1}{s^2}.
\end{eqnarray}
The first inequality holds since the product has $O(\mathfrak{b}^2)$ ``repulsion'' terms $\frac{|g(\eta)[j] - g(\eta)[\ell]|^2}{|\eta_j - \eta_{\ell}|^2} \geq (1-\alpha)^2$  where  $\ell, j \in [i-\mathfrak{b}, i+\mathfrak{b}]$ and one term $\frac{|g(\eta)[i] - g(\eta)[i+1]|^2}{|\eta_i - \eta_{i+1}|^2} \geq \frac{1}{s^2}$.
Replacing $g$ with $\phi$ and $A(s)$ with $A(\frac{s}{\mathfrak{b}})$ in $\eqref{eq_t15}$, and plugging in \eqref{eq_t16}  
we get, roughly, that for all $s \geq 0$ and all $i \in [d]$,
\begin{equation}
     \mathbb{P}\left(\eta_i - \eta_{i+1} \leq \frac{1}{\mathfrak{b} \sqrt{d}} s \right) = \int_{A(\frac{s}{\mathfrak{b}})} f(\eta) \mathrm{d} \eta  \leq s^3 \int_{A(\frac{s}{\mathfrak{b}})}  f(\eta) \times  \frac{f(g(\eta))}{f(\eta)}  \mathrm{det}(J_g(\eta))  \mathrm{d} \eta \leq s^3.
\end{equation}
This completes the proof overview of Lemma \ref{lemma_GUE_gaps} for the complex case.

\paragraph{Extending the proof of Theorem \ref{thm:eigenvalue_gap} from the complex case to the real case.}
 We note that many results in the random matrix literature rely on explicit determinantal formulas that are only available for complex-valued random matrices (see e.g. \cite{ratnarajah2004eigenvalues, johansson2005random, leake2020polynomial}).
 {For the special case of complex Hermitian matrices ($\beta =2$), it is possible to simplify the proof of our eigenvalue gap bounds (Theorem \ref{thm:eigenvalue_gap}) by viewing the eigenvalues of complex Dyson Brownian motion as a determinantal point process.}
However, our proofs avoid determinantal methods to allow our results to generalize to the real case.
 Indeed, the proof of Theorem \ref{thm:eigenvalue_gap} (which we state for the complex case) can be extended to the real case with minor modifications.
The main difference is that, for the real case, the repulsion term  $|\eta_{\ell} - \eta_j|^2$ in the joint eigenvalue density for the GUE random matrix \eqref{eq_t14} is replaced with $|\eta_{\ell} - \eta_j|^1$ for the GOE.
This changes the $s^3$ terms in \eqref{eq_t15} into $s^2$ terms, and the $\frac{1}{s^2}$ terms in \eqref{eq_t16} into a $\frac{1}{s}$.
Thus, for the real ($\beta =1$) case we get a $s^{\beta+1} = s^2$ term on the r.h.s. of Theorem \ref{thm:eigenvalue_gap}, in place of the term $s^{\beta+1} = s^3$ which appears in the complex ($\beta = 2$) version of Theorem \ref{thm:eigenvalue_gap}.

\section{Subspace recovery}\label{sec:subspace}

In the rank-$k$ subspace recovery problem, given a $d \times d$ covariance matrix $M$ with eigenvalues $\sigma_1 \geq \cdots \geq \sigma_d \geq 0$, the goal is to find a rank-$k$ projection matrix $H$ (corresponding to a rank-$k$ subspace) that minimizes the Frobenius distance to the projection matrix onto the subspace spanned by the top-$k$ eigenvectors of $M$.  
It is well-known (see \cite{bhatia2013matrix}) that the solution to this problem is the matrix $V_k V_k^\top$, where $V_k$ is the $d \times k$ matrix whose columns are the top-$k$ eigenvectors of $M$.
In the private version of this problem, the goal is to output a corresponding approximation $\hat{H}$ to $V_k V_k^\top$ that, in addition, satisfies the $(\eps,\delta)$-DP constraint.

For the subspace recovery problem, 
 \cite{dwork2014analyze} analyze a version of the Gaussian mechanism of \cite{dwork2006our}, where one perturbs the entries of $M$ by adding a symmetric matrix $E$ with i.i.d. Gaussian entries $N(0,\nfrac{\sqrt{\log\frac{1}{\delta}}}{\eps})$, to obtain an $(\eps, \delta)$-differentially private mechanism which outputs a perturbed matrix $\hat{M} = M+E$.
They then post-process this matrix $\hat{M}$ to obtain a rank-$k$ projection matrix which projects onto the subspace spanned by the top-$k$ eigenvectors of $\hat{M}$.
For this mechanism, \cite{dwork2014analyze} prove a Frobenius-distance bound of  $\|\hat{H}- H\|_F \leq \tilde{O}\left(\frac{\sqrt{kd}}{(\sigma_k - \sigma_{k+1})}\right)$  whenever $\sigma_k - \sigma_{k+1} > \tilde{\Omega}(\sqrt{d})$ (implied by their Theorem 6, which is stated for the spectral norm).

Using similar techniques to the proof of Theorem \ref{thm_rank_k_covariance_approximation_new}, one can obtain the following bound for the rank-$k$ subspace recovery problem.
 \begin{theorem}[\bf Frobenius  bound for Private Subspace Recovery] \label{thm_rank_k_subspace}
Suppose we are given $k>0$, $T>0$, and a Hermitian matrix  $M \in \mathbb{C}^{d \times d}$ (or a real symmetric matrix $M \in \mathbb{R}^{d \times d}$) with eigenvalues  $\sigma_1 \geq \cdots \geq \sigma_d \geq 0$.   
Let $\hat{M} := M + \sqrt{T}[(W_1 + \mathfrak{i}W_2) + (W_1 + \mathfrak{i}W_2)^\ast]$ (or, in the real case, $\hat{M} := M + \sqrt{T}(W_1 + W_1^\ast)$), where $W_1, W_2 \in \mathbb{R}^{d \times d}$ have entries which are independent $N(0,1)$ random variables.
 Denote, by  $\sigma_1 \geq \cdots \geq \sigma_d$ the eigenvalues of $M$, and by $V_k$ and  $\hat{V}_k$ the matrices whose columns are the top-$k$ eigenvectors of $M$ and $\hat{M}$ respectively. 
Suppose that $M$ satisfies Assumption \ref{assumption_gap} $(M, k, T)$.  
Then we have
$$ 
\sqrt{\mathbb{E}\left[\left\|\hat{V}_k\hat{V}_k^\ast -  V_k V_k^\ast \right \|_F^2\right]} \leq \tilde{O}\left(\sqrt{\sum_{i=1}^k \sum_{j= k+1}^d \frac{1}{(\sigma_i - \sigma_j)^2}}\right)\cdot \sqrt{T}.
$$
\end{theorem}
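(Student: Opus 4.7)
}

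The plan is to mirror the Dyson Brownian motion approach developed for Theorem \ref{thm_rank_k_covariance_approximation_new}, but track the rank-$k$ spectral projector $\Pi(t):=U_k(t)U_k(t)^\ast=\sum_{i=1}^k u_i(t)u_i(t)^\ast$ instead of the weighted low-rank approximation. First I would view $\hat M$ as the time-$T$ marginal of the diffusion $\Phi(t)=M+B(t)$ of \eqref{eq_DBM_matrix}, so that $\Pi(0)=V_kV_k^\ast$ and $\Pi(T)$ has the law of $\hat V_k\hat V_k^\ast$, and our target is $\mathbb{E}[\|\Pi(T)-\Pi(0)\|_F^2]$.

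Next I would apply the stochastic calculus identity \eqref{eq_Ito_derivative_Theta} with the fixed indicator weights $\lambda_i=\mathbf 1[i\le k]$. The crucial simplification is that $\lambda_i-\lambda_j=0$ unless exactly one of $i,j$ lies in $\{1,\dots,k\}$, so the SDE for $\Pi(t)$ collapses to a sum over the ``cross'' index pairs $(i,j)$ with $i\le k<j$ (or vice versa):
\begin{align*}
\mathrm{d}\Pi(t)\;=\;\tfrac12\sum_{i\le k<j}\frac{u_i(t)u_j^\ast(t)\,\mathrm{d}B_{ij}(t)+u_j(t)u_i^\ast(t)\,\mathrm{d}B_{ij}^\ast(t)}{\gamma_i(t)-\gamma_j(t)}\;-\;\text{(symmetric term)}\;+\;(\text{drift}),
\end{align*}
where the drift collects $\pm(\gamma_i-\gamma_j)^{-2}u_iu_i^\ast\,\mathrm{d}t$ contributions from cross pairs only. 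Decomposing $\Pi(T)-\Pi(0)$ into its martingale and finite-variation parts and applying It\^o's isometry separately to the real and imaginary components of the noise (as in Remark \ref{remark_Ito_complex}), together with the orthonormality of the $u_i(t)$'s which makes the rank-one matrices $u_i u_j^\ast+u_j u_i^\ast$ Frobenius-orthogonal across distinct index pairs, I would obtain
\begin{align*}
\mathbb{E}\!\left[\|\Pi(T)-\Pi(0)\|_F^2\right]\;\le\;C\int_0^T\!\mathbb{E}\!\left[\sum_{i\le k<j}\frac{1}{(\gamma_i(t)-\gamma_j(t))^2}\right]\mathrm{d}t\;+\;C\,\mathbb{E}\!\left[\left\|\int_0^T(\text{drift})\,\mathrm{d}t\right\|_F^2\right].
\end{align*}

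Then I would bound the gaps $\gamma_i(t)-\gamma_j(t)$ for $i\le k<j$ by a purely deterministic argument: Assumption \ref{assumption_gap} forces $\sigma_i-\sigma_j\ge\sigma_k-\sigma_{k+1}\ge 4\sqrt{Td}$, and Weyl's inequality (Lemma \ref{lemma_weyl}) together with the spectral-norm concentration of $B(t)$ on $[0,T]$ (Lemma \ref{lemma_spectral_martingale_b}) yields $\gamma_i(t)-\gamma_j(t)\ge\tfrac12(\sigma_i-\sigma_j)$ uniformly in $t\in[0,T]$ with probability at least $1-d^{-\omega(1)}$. Crucially, because our weights $\lambda_i$ take only the values $0$ and $1$ and the ``critical'' gap $\gamma_k(t)-\gamma_{k+1}(t)$ never appears in the numerator of any weight-difference, the delicate Dyson-gap estimates of Theorem \ref{thm:eigenvalue_gap} are \emph{not} needed here; Weyl alone suffices. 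Substituting this lower bound in gives the martingale contribution $\le C\,T\sum_{i\le k<j}(\sigma_i-\sigma_j)^{-2}$, matching the theorem. The drift term is handled analogously: by Frobenius-orthogonality of $\{u_iu_i^\ast\}_i$ its norm squared equals $\sum_i\bigl(\sum_{j\text{ across}}(\gamma_i-\gamma_j)^{-2}\bigr)^2$, and since $T(\sigma_i-\sigma_j)^{-2}\le 1/(16d)$ under Assumption \ref{assumption_gap}, Cauchy--Schwarz shows this contribution is absorbed into the main term up to a logarithmic factor.

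The main obstacle I anticipate is the careful bookkeeping of the It\^o calculus in the complex-Hermitian setting, in particular verifying that the quadratic variation of the cross-martingale sums up, via Frobenius-orthogonality of distinct $u_iu_j^\ast+u_ju_i^\ast$ rank-one terms, to exactly the single sum $\sum_{i\le k<j}(\gamma_i-\gamma_j)^{-2}$ without any additional $k$-dependent overcounting (and handling the degenerate-measure event of eigenvalue collisions, ruled out by Lemma \ref{lemma_DBM_collision}). Once this is in place, the combination of the Dyson SDE, Weyl's inequality under the $k$-th eigengap assumption, and It\^o isometry delivers the stated bound with no need for the more intricate random-matrix ingredients used for the covariance-approximation theorem.
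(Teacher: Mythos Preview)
Your proposal is correct and follows essentially the same approach as the paper: track the projector via the Dyson Brownian motion framework with fixed indicator weights $\lambda_i=\mathbf 1[i\le k]$, observe that only cross-index pairs $i\le k<j$ survive, and bound those gaps deterministically via Weyl's inequality plus spectral-norm concentration, so that the random-matrix gap estimates of Theorem~\ref{thm:eigenvalue_gap} are not needed (which is also why the real case goes through). The paper's outline in Appendix~\ref{Subpsace_recovery_outline} records exactly this argument.
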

\noindent
The proof of Theorem \ref{thm_rank_k_subspace} is simpler than the proof of Theorem \ref{thm_rank_k_covariance_approximation_new}, and can handle either real-valued or complex-valued Gaussian perturbations.
The main difference is that we analyze a projection-matrix-valued rank-$k$ diffusion with eigenvalues $\lambda_i =1$ for $i<k$
 and $\lambda_i =0$ for $i>k$.
 As all the gaps $\lambda_i-\lambda_{i+1}$ between consecutive eigenvalues of this matrix diffusion, aside from the $k$'th gap, are equal to $0$, many of the terms on the r.h.s. of the expression \eqref{eq_t7_2} for the utility cancel.
 The remaining terms can be bounded simply via Weyl's inequality. 
We give a detailed outline of the proof in Appendix \ref{Subpsace_recovery_outline}.

Theorem \ref{thm_rank_k_subspace} immediately implies existence of an $(\epsilon,\delta)$-differentially private mechanism which, given an input matrix $M \in \mathbb{R}^{d \times d}$, outputs a rank-$k$ projection matrix $P$ satisfying the utility bound $$\mathbb{E}[\|P-  V_k V_k^\ast  \|_F] \leq \tilde{O}\left(\sqrt{\sum_{i=1}^k \sum_{j= k+1}^d \frac{1}{(\sigma_i - \sigma_j)^2}}\right)\cdot \frac{\log^{\frac{1}{2}} \frac{1}{\delta}}{\epsilon}.$$
 This mechanism outputs the matrix $P = \hat{V}_k\hat{V}_k^\ast$ defined in the statement of Theorem \ref{thm_rank_k_subspace} (for $T = \frac{\log \nicefrac{1}{\delta}}{\epsilon^2}$), and is guaranteed to be $(\epsilon,\delta)$-differentially private from the privacy guarantees given in prior works on the Gaussian mechanism (see e.g. \cite{dwork2014analyze}).

For matrices $M$ satisfying  $\sigma_k - \sigma_{k+1} \geq 4\sqrt{Td}$, Theorem \ref{thm_rank_k_subspace} recovers (in expectation) the bound on the Frobenius norm given in Theorem 6 of \cite{dwork2014analyze} (which they derive from the worst-case perturbation bound of \cite{davis1970rotation}, restated here as Inequality \eqref{eq_DK}), which states that $\|\hat{V}_k\hat{V}_k^\ast - V_k V_k^\ast \|_F \leq O\left(\frac{\sqrt{k}\sqrt{d}}{\sigma_k - \sigma_{k+1}} \sqrt{T}\right)$ w.h.p. 
Moreover, for many input matrices $M$, Theorem \ref{thm_rank_k_subspace} implies stronger bounds than those implied by \cite{dwork2014analyze, davis1970rotation}.
 For instance, if the eigenvalues of $M$ also satisfy $\sigma_i - \sigma_{i+1} \geq \Omega(\sigma_k - \sigma_{k+1})$ for all $i < k$ (or, more generally, if we have $\sigma_i - \sigma_{k+1} \geq \Omega((i-k)(\sigma_k - \sigma_{k+1}))$ for all $i<k$), then the bound in our Theorem \ref{thm_rank_k_subspace} implies $\mathbb{E}[\|\hat{V}_k\hat{V}_k^\ast - V_k V_k^\ast \|_F] \leq O\left(\frac{\sqrt{d}}{\sigma_k- \sigma_{k+1}} \sqrt{T}\right)$, improving on the bound implied by \cite{davis1970rotation} and \cite{dwork2014analyze} by a factor of $\sqrt{k}$.
  As another example, if $\sigma_i - \sigma_{i+1} \geq \Omega\left(\frac{\sigma_k - \sigma_{k+1}}{\sqrt{i-k}}\right)$ for all $i > k$,  then Theorem \ref{thm_rank_k_subspace} implies $\mathbb{E}[\|\hat{V}_k\hat{V}_k^\ast - V_k V_k^\ast \|_F] \leq O\left(\frac{\sqrt{k} \log^{\nicefrac{1}{2}}(d)}{\sigma_k- \sigma_{k+1}} \sqrt{T}\right)$, improving on the bound implied by \cite{davis1970rotation} and \cite{dwork2014analyze} by a factor of $\tilde{O}(\sqrt{d})$.

More specifically, in the $(\epsilon, \delta)$-differential privacy application considered in \cite{dwork2014analyze}, \(M = A^\top A\) where \(A\) is an \(n \times d\) data matrix where each of the \(n\) rows is “clipped” such that it has norm at most 1. Thus, for Theorem \ref{thm_rank_k_subspace} to hold in this setting, it is necessary (but not sufficient) for the data matrix to have at least \(n \geq \tilde{\Omega}\left(k \sqrt{d} \frac{\log^{\nicefrac{1}{2}} \nicefrac{1}{\delta}}{\epsilon}\right) \) rows in order for \(M\) to satisfy Assumption \ref{assumption_gap} $(M, k, T)$ with $T = \frac{\log \nicefrac{1}{\delta}}{\epsilon^2}$, which requires that \(\sigma_k - \sigma_{k+1} > \tilde{\Omega}\left(\sqrt{d} \frac{\log^{\nicefrac{1}{2}} \nicefrac{1}{\delta}}{\epsilon}\right)\).
    In many privacy applications, where $n$ is the number of datapoints and $d$ may be the number of features in a dataset, $n$ is oftentimes larger than $d^{\frac{3}{2}}$ (see e.g. the discussion in \cite{brown2021covariance}),
    in which case one has $n \geq d^{\frac{3}{2}}\geq k \sqrt{d}$ for any $k \leq d$. 
Theorem \ref{thm_rank_k_subspace} improves over \cite{dwork2014analyze} when the eigenvalues of \(M\) satisfy \(\sqrt{\sum_{i=1}^k \sum_{j= k+1}^d  \frac{1}{(\sigma_i - \sigma_j)^2}} < \tilde{O}\left(\frac{\sqrt{k}\sqrt{d}}{\sigma_k - \sigma_{k+1}}  \frac{\log^{\nicefrac{1}{2}} \nicefrac{1}{\delta}}{\epsilon}\right)\).
     The magnitude of the improvement is by a factor of \(\frac{\sigma_k - \sigma_{k+1}}{\sqrt{k}\sqrt{d}} \sqrt{\sum_{i=1}^k \sum_{j= k+1}^d  \frac{1}{(\sigma_i - \sigma_j)^2}}\). 
     In the aforementioned setting of  matrices \(M\) with eigenvalues satisfying \(\sigma_i - \sigma_{i+1} \geq \Omega(\sigma_k- \sigma_{k+1})\) for all \(i<k\), the improvement is by a factor of \(\sqrt{k}\). 
    If one has \(\sigma_i - \sigma_{i+1} \geq \frac{\sigma_k - \sigma_{k+1}}{\sqrt{i-k}}\) for all \(i > k\), the improvement is by a factor of \(\sqrt{d}\).
      As one concrete example, for matrices $M$ with spectrum $\sigma_i = (d - i) \times c \sqrt{d}$ for $i \in [d]$, where $c = \tilde{\Theta}\left(\frac{\log^{\nicefrac{1}{2}} \nicefrac{1}{\delta}}{\epsilon}\right)$, our result improves (in expectation) by a factor of $\sqrt{d}$ over the bound in  \cite{dwork2014analyze}.
\color{black}

 Finally, recall that \cite{o2018random} provide eigenvector perturbation bounds for matrices $\hat{M} := M+E$ in the special case when the input matrix $M$ is a deterministic low-rank matrix of rank $r \geq k$ and the matrix $E$ is a random matrix.
 If one directly applies the bound in their Theorem 18 to the setting when $E$ is a Hermitian Gaussian random matrix, one obtains a bound on the quantity $\|\hat{V}_k \hat{V}_k^\ast - V_k V_k^\ast\|_F$.
Theorem \ref{thm_rank_k_subspace} improves (in expectation) on the resulting bound by a factor of $\sqrt{k}$ whenever e.g. $\sigma_{k}-\sigma_{k+1} = \Omega(\sigma_k)$ and $\sigma_k \geq \Omega(\sqrt{d})$.

\begin{remark}
In addition to the metric  $\|\hat{H}- H\|_F$,  \cite{dwork2014analyze} also provide bounds for the problem of recovering a subspace in the inner product metric $\langle M, \hat{H}-H \rangle$ under $(\epsilon, \delta)$-differential privacy (see also \cite{chaudhuri2012near, dwork2014analyze, gilad2017smooth,amin2019differentially}).
 While a bound on the Frobenius norm distance $\| \hat{H}-H \|_F \leq b$  implies an upper bound on the inner product metric $\langle M, H - \hat{H} \rangle \leq 2\| M_{k}\|_F \cdot b$ (by the Cauchy-Schwarz inequality), an upper bound on the inner product metric does not (in general) imply any upper bound on the Frobenius-norm distance $\| \hat{H} -H \|_F$.
The bounds in \cite{dwork2014analyze} are tight with respect to this metric for matrices $M$ with worst-case spectrum $\sigma_1 \geq \cdots \geq \sigma_d$.
It is an interesting open problem whether one can improve on these bounds for matrices $M$ with specific spectral profiles $\sigma_1 \geq \cdots \geq \sigma_d$.
\end{remark}
\color{black}

     \begin{remark}[\bf Spectral norm bounds]\label{rem_spectral_norm}

We leave as an open problem whether one can extend our bounds on the Frobenius norm utility $\|\hat{M}_k -  M_k \|_F$ in Theorem \ref{thm_rank_k_covariance_approximation_new}, to a bound on the spectral norm utility $\|\hat{M}_k -  M_k \|_2$ that is tighter than the trivial bound $\|\hat{M}_k -  M_k \|_2 \leq \|\hat{M}_k -  M_k \|_F$ (and similarly, whether one can extend our bounds in Theorem \ref{thm_rank_k_subspace} on the Frobenius norm utility for subspace recovery $\|\hat{V}_k\hat{V}_k^\ast - V_k V_k^\ast \|_F$ to a bound on the spectral norm utility  $\|\hat{V}_k\hat{V}_k^\ast - V_k V_k^\ast \|_2$).
 Recall that, to bound the Frobenius norm utility in Theorem \ref{thm_rank_k_covariance_approximation_new}, we use Ito's lemma (Lemma \ref{lemma_ito_lemma_new}) to compute an expression for the stochastic derivative of the (squared) Frobenius norm utility as a function of the eigenvalue gaps of Dyson Brownian motion.
 We then use our eigenvalue gap bounds for Dyson Brownian motion (Theorem \ref{thm:eigenvalue_gap}) to bound this stochastic derivative.
A key fact which allows us to bound this stochastic derivative is that the (squared) Frobenius norm of a matrix is a differentiable function of its entries (and of its eigenvalues), with second partial derivatives of magnitude $O(1)$.
 The main challenge in extending our techniques to the spectral norm is that the spectral norm of a matrix is a non-differentiable function of its entries (and is also a non-differentiable function of its eigenvalues), and its first and second derivatives have singularities at points where eigenvalue gaps of the matrix vanish.
 One possible approach to extending our utility bounds to the spectral norm may be to use high-probability eigenvalue gap bounds such as those in Theorem \ref{thm:eigenvalue_gap} to show that the first and second derivatives of the spectral norm of a matrix undergoing Dyson Brownian motion are small ``on average'' over time.
 \end{remark}
 \color{black}

\section{Differentially private rank-$k$ approximation: Proof of Theorem \ref{thm_utility}} \label{sec_proof_utility_privacy}

\begin{proof}[Proof of Theorem \ref{thm_utility}] \,

\paragraph{Privacy.}

The real Gaussian mechanism, $M + \sqrt{T}(W_1 + W_1^\top)$, where $W_1$ is a matrix with i.i.d.  $N(0,1)$ entries, was studied in \cite{dwork2014analyze} and shown to be $(\epsilon, \delta)$-differentially private for $T = \frac{2\log\frac{1.25}{\delta}}{\eps^2}$.
Our Algorithm \ref{alg_quaternion_Gaussian} is $(\epsilon, \delta)$-differentially private since it is a post-processing of the real Gaussian mechanism.
This is because any post-processing of an $(\varepsilon, \delta)$-differentially private mechanism (which does not have access to the original input matrix $M$) is guaranteed to be  $(\varepsilon, \delta)$-differentially private (see e.g. \cite{dwork2006our},  \cite{dwork2014algorithmic}).
To see why Algorithm \ref{alg_quaternion_Gaussian}  is a post-processing of the real Gaussian mechanism, observe that

\begin{eqnarray*}
    \hat{M} &=& M + G\\
    &=& M +  W + W^\ast\\
    &=& M +  (W_1 +  W_2 \mathfrak{i}) + (W_1 +  W_2 \mathfrak{i})^\ast\\
    &=& M + W_1 + W_1^\top + [W_2 \mathfrak{i} + (W_2 \mathfrak{i})^\ast].
\end{eqnarray*}

\paragraph{Utility of complex matrix $\hat{M}_k$ implies Utility of real matrix $Y$.}
Let $M = V \Sigma V^\top$ be a diagonalization of the real symmetric input matrix $M$ with eigenvalues $\sigma_1\geq \cdots \geq \sigma_d \geq 0$.
Let $M_k = V \Sigma_k V^\top$ be a (non-private) rank-$k$ approximation of $M$, where $\Sigma_k = \mathrm{diag}(\sigma_1,\ldots, \sigma_k, 0, \ldots,0)$.
Suppose we can show an upper bound on $\|\hat{M}_k -  M_k\|_F$, where $\hat{M}_k$ is the complex matrix in Algorithm \ref{alg_quaternion_Gaussian}.

Let $$\mathfrak{R}_k:= \{A \in \mathbb{R}^{d\times d}: \mathrm{rank}(A) \leq k\}$$ denote the set of real $d \times d$ rank-$k$ matrices.
 Since $Y = \mathrm{Real}(\hat{V} \hat{\Sigma}_k \hat{V}^\ast)$, we have that $Y \in \mathrm{argmin}_{Z \in \mathfrak{R}_k} \{\|\hat{M}_k-Z\|_F\}$.
This is because $\mathrm{Real}(\hat{V} \hat{\Sigma}_k \hat{V}^\ast)$ is a matrix of rank at most $k$ and the real and imaginary parts of $\hat{V} \hat{\Sigma}_k \hat{V}^\ast$ are orthogonal to each other in the Frobenius inner product. 
Thus, since $Y \in \mathrm{argmin}_{Z \in \mathfrak{R}_k} \{\|\hat{M}_k-Z\|_F\}$ and $M_k \in  \mathfrak{R}_k$ is also in the set of real-valued rank-$k$ matrices, we have that
\begin{equation*}
\|\hat{M}_k-Y\|_F \leq \|\hat{M}_k -  M_k\|_F.
\end{equation*}

\noindent
Therefore, we have
\begin{equation}\label{eq_o2}
\|Y - M_k\|_F \leq \|\hat{M}_k-Y\|_F + \|\hat{M}_k -  M_k\|_F \leq 2  \|\hat{M}_k -  M_k\|_F.
\end{equation}
Plugging in our bound for $\sqrt{\mathbb{E}[\|\hat{M}_k -  M_k\|_F^2]}$ from  Theorem \ref{thm_rank_k_covariance_approximation_new} into \eqref{eq_o2}, we get that
$$\sqrt{\mathbb{E}[\|M_k - Y\|_F^2]}  \leq \tilde{O}\left(\sqrt{kd} \frac{\sigma_k}{\sigma_k - \sigma_{k+1}} \times  \frac{\sqrt{\log\frac{1}{\delta}}}{\eps}\right).$$
\end{proof}

\section{Structure of the proofs of Theorems \ref{thm_utility}, \ref{thm_rank_k_covariance_approximation_new}, and \ref{thm:eigenvalue_gap}}

A diagram of the structure of the proof of Theorem \ref{thm_rank_k_covariance_approximation_new} (and its corollary, Theorem \ref{thm_utility}) is given in Figure \ref{fig_proof_diagram_1}; this diagram takes as input Theorem \ref{thm:eigenvalue_gap}.
 For a diagram of the structure of the proof of Theorem \ref{thm:eigenvalue_gap}, see Figure \ref{fig_proof_diagram_2}. 
The proofs of the different theorems, lemmas, and propositions used to prove Theorem \ref{thm_rank_k_covariance_approximation_new}, \ref{thm_utility}, and \ref{thm:eigenvalue_gap} are given in the following order:
\begin{enumerate}
\item Lemma \ref{lemma_spectral_martingale}
\item Lemma \ref{lemma_utility_rare_event}
\item Proposition \ref{lemma_gap_concentration}
\item Proposition \ref{lemma_t0}
\item Lemma \ref{Lemma_projection_differntial}
\item Lemma \ref{Lemma_integral}
\item Completing the proof of Theorem \ref{thm_rank_k_covariance_approximation_new}
\end{enumerate}
\noindent
The proof of the above theorems, lemmas, and propositions take as input Theorem \ref{thm:eigenvalue_gap}, and related lemmas and a corollary which follow from Theorem \ref{thm:eigenvalue_gap}.
These results, and propositions and lemmas used to prove these results, are  proved in the following order
\begin{enumerate}
\item[(8)] Proposition \ref{prop_stochastic_derivative_comparison}
\item[(9)] Lemma \ref{lemma_gap_comparison}
\item[(10)] Lemma \ref{lemma_bad_event}
\item[(11)] Proposition \ref{prop_sum_nonindependent}
\item[(12)] Corollary \ref{lemma_gaps_any_start}
\end{enumerate}
\noindent
In particular, Lemma \ref{lemma_gap_comparison} reduces the task of proving Theorem \ref{thm:eigenvalue_gap} to proving Lemma \ref{lemma_GUE_gaps}, which is a special case of Theorem \ref{thm:eigenvalue_gap} where the initial matrix is $M = 0$.
The intermediate results towards the proof of Lemma \ref{lemma_GUE_gaps} are proved in the following order:
\begin{enumerate}
\item[(13)] Proposition \ref{prop_classical}
\item[(14)] Proposition \ref{prop_n1}
\item[(15)] Proposition \ref{prop_map}
\item[(16)] Lemma \ref{prop_Jacobian}
\item[(17)] Lemma \ref{lemma_mean_field}
\item[(18)] Lemma \ref{lemma_density_ratio}
\item[(19)] Proposition \ref{prop_map_phi}
\item[(20)] Lemma \ref{prop_Jacobian_phi}
\item[(21)] Lemma \ref{lemma_density_ratio_edge}
\item[(22)] Lemma \ref{lemma_GUE_gaps}
\end{enumerate}
\noindent
Finally, the proof of Lemma \ref{lemma_spectral_martingale_b}, which we use to prove Lemmas \ref{lemma_spectral_martingale} and \ref{lemma_bad_event}, is deferred to Appendix \ref{sec_proof_of_lemma_spectral_martingale_b} as it is standard.
 A list of key notations used in the proofs is given in Appendix \ref{sec_glossary_of_notation}.

 In each proof, we give explanations for why the different steps, and the different lines in each block of equations or inequalities, hold.
 For the steps or equation/inequality lines that are evident, we do not provide an explanation.
  For equation or inequality lines that hold as a consequence of another equation, Theorem, etc., we reference that equation, Theorem, etc. above the equality or inequality sign.
 In blocks of equations and inequalities with multiple lines, we have selectively numbered lines that require additional explanation. 
 Depending on the context, the numbering on the last line refers to the l.h.s. of the first line and the r.h.s. of the last line, or may refer to just the last line itself.
 If there is at least one inequality in a block of equations, then the whole equation is an inequality.

\begin{figure}
  \resizebox{\textwidth}{!}{%
\begin{tikzpicture}[
roundnode/.style={circle, draw=green!60, fill=green!5, very thick, minimum size=7mm},
squarednode/.style={rectangle, draw=black!60, rounded corners=.2cm, fill=black!0, very thick, minimum size=5mm},
]
\node[squarednode, text width=10cm,align=center]      (Privacy_Theorem)                              {\textbf{Theorem \ref{thm_utility}:} Private low-rank covariance approximation };
\node[squarednode, text width=10cm,align=center]      (Main_Theorem)  [below=of Privacy_Theorem]             {\textbf{Theorem \ref{thm_rank_k_covariance_approximation_new}:} Frobenius bound for complex Gaussian perturbations};
\node[squarednode,text width=2.7cm,align=center]      (Lemma_7_2)       [below=of Main_Theorem] {\textbf{Lemma \ref{lemma_utility_rare_event}:}  Frobenius bound under ``bad'' rare event.};
\node[squarednode, text width=3cm,align=center]      (Lemma_7_1)       [below=of Lemma_7_2, xshift=0cm] {\textbf{Lemma \ref{lemma_spectral_martingale}:} Probability of ``bad'' rare event occurring };
\node[squarednode, text width=3cm,align=center]      (Lemma_8_3)       [below=of Lemma_7_1, xshift=0cm] {\textbf{Lemma \ref{lemma_bad_event}:} Showing gaps are uniformly bounded below over time with high probability};
\node[squarednode, text width=3cm,align=center]      (Theorem_2_3)       [below=of Lemma_8_3, yshift=-0.5cm] {\textbf{Theorem \ref{thm:eigenvalue_gap}:} Eigenvalue gaps of GUE/GOE from any initial condition };


\node[squarednode,text width=4cm,align=center]      (Corollary_8_5)       [right=of Lemma_7_1, xshift=-0.8cm] {\textbf{Corollary \ref{lemma_gaps_any_start}:} Bound on gaps between non-neighboring eigenvalues};

\node[squarednode,draw=black!30, dashed, fill=gray!10, text width=2.5cm,align=center]      (Proposition_8_4)       [below=of Corollary_8_5, right=of Lemma_8_3, below=of Corollary_8_5] {\color{gray} \textbf{Proposition \ref{prop_sum_nonindependent}:} Concentration bound for sums of random variables};

\node[squarednode, draw=black!30, dashed, fill=gray!10, text width=3cm,align=center]      (Lemma_3_8)       [left=of Theorem_2_3] {\color{gray} \textbf{Lemma \ref{lemma_spectral_martingale_b}:} Spectral norm bound};


\node[squarednode, text width=2cm,align=center]      (Lemma_7_6)       [left=of Lemma_7_2,  yshift=-0.5cm, xshift=0.4cm] {\textbf{Lemma \ref{Lemma_integral}:}  Frobenius norm expression as integral of inverse eigengaps};

\node[squarednode, text width=2cm,align=center]      (Lemma_7_5)       [left=of Lemma_7_6, xshift=0.7cm] {\textbf{Lemma \ref{Lemma_projection_differntial}:} It\^o derivative $\mathrm{d} (u_i(t) u_j^\ast(t))$};

\node[squarednode,draw=black!30, dashed, fill=gray!10, text width=2.2cm,align=center]      (Proposition_7_4)       [left=of Lemma_7_5, xshift=0.7cm] {\color{gray} \textbf{Prop. \ref{lemma_t0}:} Crude ``jump start'' bound};

\node[squarednode,draw=black!30, dashed, fill=gray!10, text width=2.2cm,align=center]      (Lemma_7_3)       [left=of Proposition_7_4, xshift=0.7cm] {\color{gray} \textbf{Prop. \ref{lemma_gap_concentration}:} ``Worst-case'' eigenvalue gap bound};

\draw[<-, thick] (Privacy_Theorem.south) -- (Main_Theorem.north);
\draw[<-, thick] (Main_Theorem.south) -- (Lemma_7_2.north);
\draw[<-, thick] (Lemma_7_2.south) -- (Lemma_7_1.north);
\draw[<-, thick] (Lemma_7_1.south) -- (Lemma_8_3.north);
\draw[<-, thick] (Lemma_8_3.south) -- (Theorem_2_3.north);

\draw[->, thick] (Corollary_8_5.north) -- ++(0,4cm) -| (Main_Theorem.south);

\draw[->, thick] (Theorem_2_3.east) -- ++(5cm,0) |- (Corollary_8_5.east);
\draw[<-, thick] (Corollary_8_5.south)--(Proposition_8_4.north);


\draw[->, thick] (Lemma_3_8.north) |- (Lemma_8_3.west);

\draw[->, thick] (Lemma_3_8.north) |-(Lemma_7_1.west);


\draw[->, thick] (Lemma_7_6.north) -- ++(0,0.4cm) -|   (Main_Theorem.south);

\draw[<-, thick] (Main_Theorem.west)-| (Lemma_7_5.north);

\draw[<-, thick] (Main_Theorem.west)-| (Proposition_7_4.north);

\draw[->, thick]  (Lemma_7_3.north) |-  (Main_Theorem.west);

\draw[<-, thick](Lemma_7_6.west) -| (Lemma_7_5.east);

\end{tikzpicture}}
\caption{A diagram showing the structure of the proof of Theorems \ref{thm_rank_k_covariance_approximation_new} and \ref{thm_utility}. Lower-level lemmas and propositions are denoted by subdued dashed boxes. (See Figure \ref{fig_proof_diagram_2} for a diagram of the structure of the proof of Theorem \ref{thm:eigenvalue_gap}.)}\label{fig_proof_diagram_1}
\end{figure}
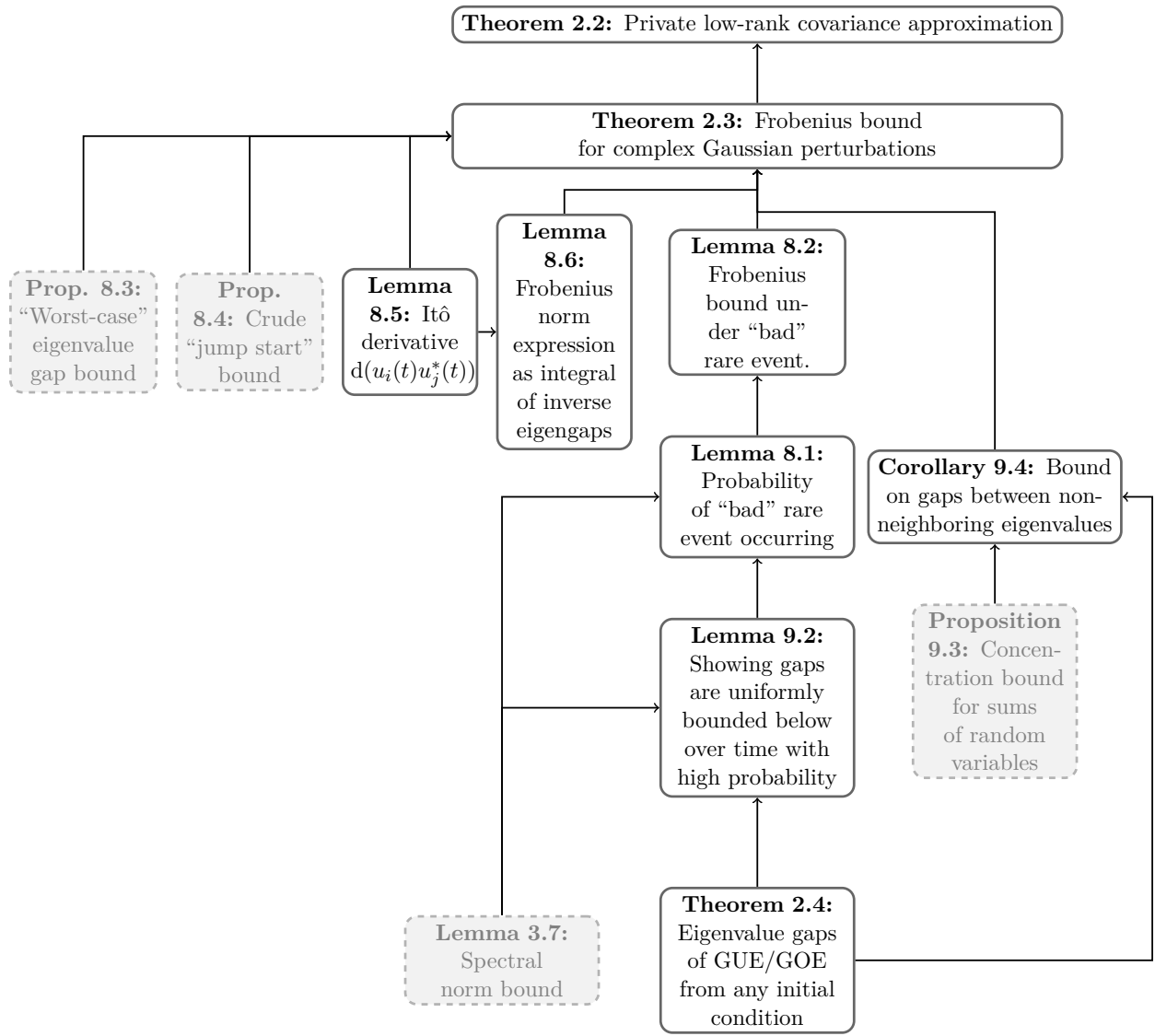

\begin{figure}
  \resizebox{0.9\textwidth}{!}{%
\begin{tikzpicture}[
roundnode/.style={circle, draw=green!60, fill=green!5, very thick, minimum size=7mm},
squarednode/.style={rectangle, draw=black!60, rounded corners=.2cm, fill=black!0, very thick, minimum size=5mm},
]

\node[squarednode, text width=3cm,align=center]      (Theorem_2_3)     {\textbf{Theorem \ref{thm:eigenvalue_gap}:} Eigenvalue gaps of GUE/GOE from any initial condition};

\node[squarednode, text width=3cm,align=center]      (Lemma_8_6)       [below=of Theorem_2_3, yshift=-2cm] {\textbf{Lemma \ref{lemma_GUE_gaps}:} Eigengaps of GUE/GOE  initialized at $0$};

\node[squarednode, text width=3.7cm,align=center]      (Lemma_4_1)       [left=of Theorem_2_3, yshift=-1cm] {\textbf{Lemma \ref{lemma_gap_comparison}:} Eigenvalue gap comparison Lemma };

\node[squarednode,draw=black!30, dashed, fill=gray!10, text width=3.7cm,align=center]      (Proposition_8_1)       [below=of Lemma_4_1, yshift=0.6cm] {\color{gray}\textbf{Proposition \ref{prop_stochastic_derivative_comparison}}: Comparison bound for time-derivative of eigenvalue gaps.};

\node[squarednode, text width=3cm,align=center]      (Lemma_8_13)       [below=of Lemma_8_6, xshift=0cm, yshift=-1cm] {\textbf{Lemma \ref{lemma_density_ratio}:} Bounding how much $g$ changes  eigenvalues' joint density (bulk case)};

\node[squarednode, text width=3cm,align=center]      (Lemma_8_12)       [below=of Lemma_8_13, yshift=-0.5cm] {\textbf{Lemma \ref{lemma_mean_field}:} Mean-field approximation for far-away eigenvalues};

\node[squarednode,draw=black!30, dashed, fill=gray!10, text width=3cm,align=center]      (Proposition_8_9)       [below=of Lemma_8_12] {\color{gray} \textbf{Proposition \ref{prop_n1}:} Implies existence of the spectrum mapping $g$ satisfying \eqref{eq_b3}-\eqref{eq_g1} };

\node[squarednode, draw=black!30, dashed, fill=gray!10, text width=3cm,align=center]      (Proposition_8_7)       [below=of Proposition_8_9] {\color{gray} \textbf{Proposition \ref{prop_classical}:} Position of ``classical eigenvalues''};

\node[squarednode,draw=black!30, dashed, fill=gray!10, text width=2.5cm,align=center]      (Proposition_8_10)       [right=of Lemma_8_12, xshift=-0.5cm] {\color{gray} \textbf{Prop. \ref{prop_map}:} Cardinality of pre-image, and Lipschitz properties, of $g$ };

\node[squarednode, text width=4cm,align=center]      (Proposition_8_11)       [right=of Lemma_8_13,  xshift=2cm, yshift=0cm] {\textbf{Lemma \ref{prop_Jacobian}:} Bounding the Jacobian determinant of $g$};

\node[squarednode,draw=black!30, fill=blue!10, text width=2.5cm,align=center]      (Proposition_8_15)       [left=of Lemma_8_13, xshift=0cm] {\textbf{Lemma \ref{prop_Jacobian_phi}:} Bounding the Jacobian determinant of $\phi$};

\node[squarednode,draw=black!30, fill=blue!10, text width=2.5cm,align=center]      (Lemma_8_17)       [left=of Proposition_8_15, xshift=-1.7cm, yshift=0cm] {\textbf{Lemma \ref{lemma_density_ratio_edge}:} Bounding how much the map $\phi$ changes eigenvalues joint density (edge case)};

\node[squarednode,draw=black!30, dashed, fill=blue!10, text width=2.5cm,align=center]      (Proposition_8_14)       [below=of Lemma_8_17, xshift=1.8cm, yshift=0cm] {\color{gray} \textbf{Proposition \ref{prop_map_phi}:} Injectivity and Lipschitz properties of $\phi$};

\node[squarednode,draw=black!30, dashed, fill=blue!10, text width=3cm,align=center]      (Proposition_9_13)       [below=of Proposition_8_14] {\color{gray} \textbf{Proposition \ref{prop_n1_edge}:} Implies existence of the spectrum mapping $\phi$ satisfying \eqref{eq_phi1}-\eqref{eq_phi3} };

\draw[<-, thick] (Theorem_2_3.south) -- (Lemma_8_6.north);

\draw[<-, thick] (Theorem_2_3.west) -| (Lemma_4_1.north);

\draw[<-, thick] (Lemma_4_1.south) -- (Proposition_8_1.north);

\draw[<-, thick] (Lemma_8_6.south) -- (Lemma_8_13.north);

\draw[<-, thick] (Lemma_8_13.south) -- (Lemma_8_12.north);

\draw[->, thick] (Proposition_8_9.north) -- ++(0,0.4cm) -|  (Lemma_8_12.south);

\draw[->, thick] (Proposition_8_7.north) -- (Proposition_8_9.south);

\draw[->, thick] (Proposition_8_9.north) -- ++(0,0.2cm) -|   (Proposition_8_10.south);

\draw[->, thick]   (Proposition_8_9.north)  -- ++(0,0.2cm) -|  (Proposition_8_11.south);

\draw[->, thick]  (Proposition_8_7.north) -- ++(0,0.4cm) -| (Proposition_8_11.south)  ;

\draw[thick, postaction={decorate}, decoration={markings, 
        mark=between positions 0.4 and 1 step 3cm with {\arrow{>}}}]  (Proposition_8_10.north)  -- ++(0,0.4cm) -| (Proposition_8_11.south) ;

\draw[thick, postaction={decorate}, decoration={markings, 
        mark=between positions 0.4 and 2 step 4cm with {\arrow{>}}}] (Proposition_8_10.north) -- ++(0,0.4cm)  -|(Lemma_8_13.south);

\draw[thick, postaction={decorate}, decoration={markings, 
        mark=between positions 0.2 and 1 step 3.6cm with {\arrow{>}}}] (Proposition_8_10.north) -- ++(0,4.5cm) -| (Lemma_8_6.south);

\draw[->, thick]  (Proposition_8_10.west) --(Lemma_8_12.east);

\draw[->,thick]
(Proposition_8_11.north)  -- ++(0,1.5cm) -| (Lemma_8_6.south);

\draw[->, thick] (Proposition_8_14.north) -- ++(0,0.4cm) -| (Lemma_8_17.south)  ;

\draw[->, thick]  (Proposition_8_14.north)  -- ++(0,0.4cm) -- ++(1.5cm,0) |-  (Lemma_8_6.west);

\draw[->, thick]  (Proposition_8_15.north) |- (Lemma_8_6.west);

\draw[->, thick]  (Lemma_8_17.north)   |- (Lemma_8_6.west) ;

\draw[->, thick]  (Proposition_9_13.north)   |- (Proposition_8_14.south) ;

\draw[->, thick]  (Proposition_8_7.west)   -| (Proposition_9_13.south) ;

\draw[->, thick]  (Proposition_8_7.west)   -| (Lemma_8_17.south) ;

\draw[->, thick]  (Proposition_9_13.east)   -| (Proposition_8_15.south) ;

\draw[->, thick]  (Proposition_9_13.east)   -- ++(.2cm,0) |-  (Lemma_8_6.west) ;

\end{tikzpicture}}
\caption{A diagram of the  proof of Theorem \ref{thm:eigenvalue_gap}. 
Lemmas and Propositions below Lemma \ref{lemma_GUE_gaps} are separated into results dealing with the ``bulk'' of the eigenvalue spectrum, and analogous (but slightly simpler to prove) results dealing with the ``edge'' of the eigenvalue spectrum (denoted by blue boxes). 
 Throughout the diagram, lower-level propositions are denoted by subdued dashed boxes.
}\label{fig_proof_diagram_2}
\end{figure}
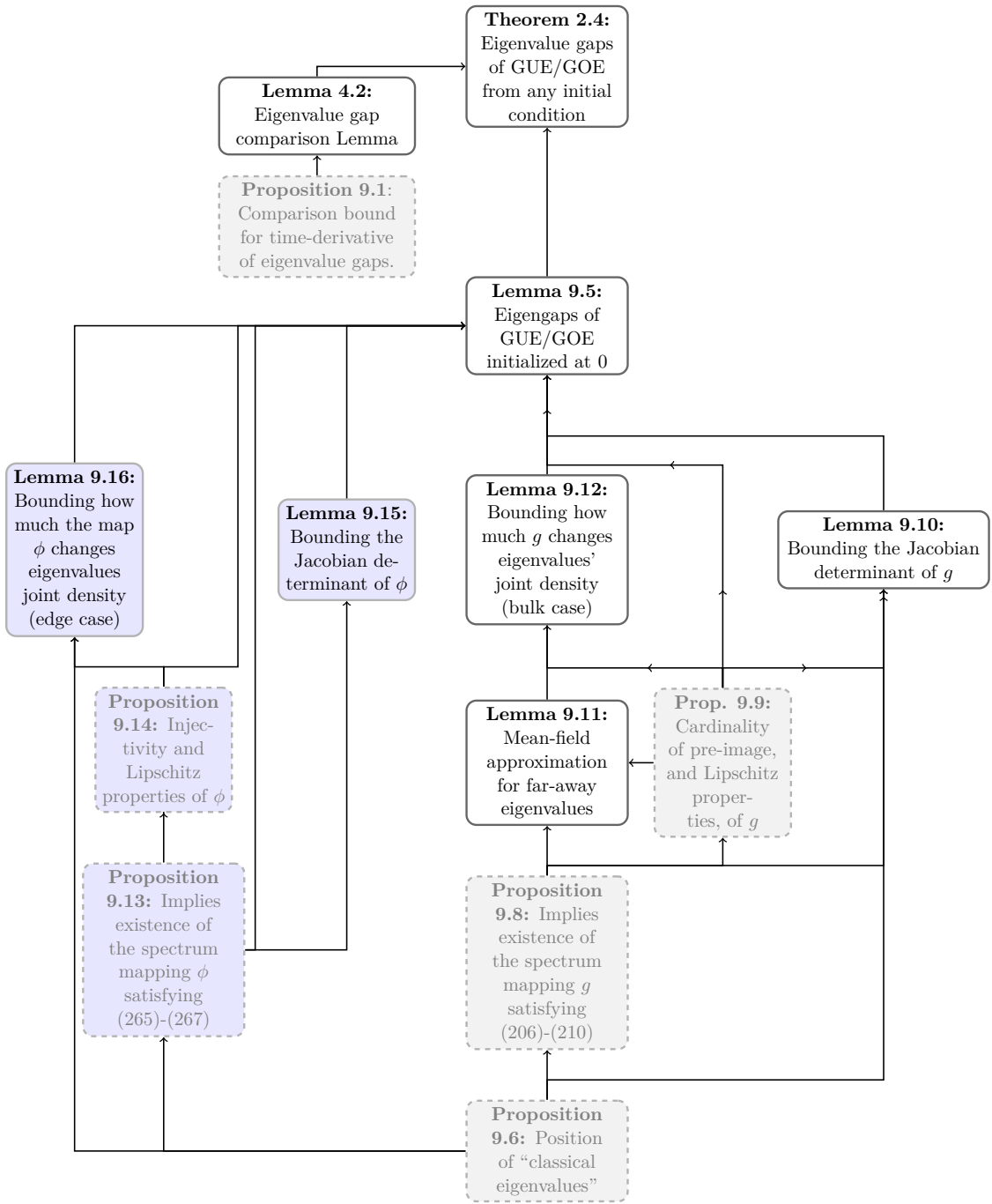

\color{black}

\section{Complex Gaussian perturbations: Proof of Theorem \ref{thm_rank_k_covariance_approximation_new}}\label{sec_utility}

\subsection{Defining the stochastic process on the space of rank-$k$ matrices}
Recall that, to bound the Frobenius norm utility in Theorem \ref{thm_rank_k_covariance_approximation_new}, we view the addition of Gaussian noise as a continuous-time Hermitian-matrix valued diffusion 
\begin{equation}\label{eq_n95}
\Phi(t) = M + B(t),
\end{equation}
whose eigenvalues $\gamma_i(t)$ and eigenvectors $u_i(t)$, $i\in [d]$, evolve over time.
Here, $B(t) = W(t) + W(t)^\ast$, where $W(t)$ is a $d \times d$ matrix where the real part (and complex part) of each entry is an independent standard Brownian motion with distribution $N(0, tI_d)$ at time $t$.
We will use the evolution equations \eqref{eq_DBM_eigenvectors} for the eigenvectors $u_i(t)$ to track the utility over time.

Towards this end, 
%
%
 at every time $t \geq 0$, let
 \begin{equation}\label{eq_n92}
     \Phi(t) = U(t) \Gamma(t) U(t)^\ast
 \end{equation}
  be a spectral decomposition of the symmetric matrix $\Phi(t)$,  where $\Gamma(t)$ is a diagonal matrix with diagonal entries $\gamma_1(t) \geq \cdots \geq \gamma_d(t)$ that are the eigenvalues of $\Phi(t)$, and $U(t) = [u_1(t), \ldots, u_d(t)]$ is a $d\times d$ unitary matrix whose columns $u_1(t), \ldots, u_d(t)$ are an orthonormal basis of eigenvectors of $\Phi(t)$.

To track the utility of the rank-$k$ approximation in Theorem \ref{thm_rank_k_covariance_approximation_new}, we define a rank-$k$ matrix-valued stochastic process $\Psi(t)$.
%
%
 At every time $t \geq 0$, define $\Psi(t)$ to be the symmetric matrix with any eigenvalues $\lambda_1(t) \geq \cdots \geq \lambda_d(t)$, where $\Lambda(t) :=\mathrm{diag}(\lambda_1(t), \ldots, \lambda_d(t))$,   and with eigenvectors given by the columns of $U(t)$: 
 \begin{equation}\label{eq_n96}
       \Psi(t):= U(t) \Lambda(t) U(t)^\ast \qquad\forall t \in [0,T].
    \end{equation} 
In the following proof, for all $t \geq 0$, we fix  
\begin{equation}\label{eq_n45}
    \lambda_i(t) = \begin{cases}
        \gamma_i(t) \qquad \textrm{ for } i\leq k,\\
        0 \qquad \quad \, \, \, \textrm{ for } i>k.%
    \end{cases}
\end{equation}

\subsection{Preliminary results}\label{sec_preliminary_results}

Before we begin the main part of the proof of Theorem \ref{thm_rank_k_covariance_approximation_new} (Section \ref{sec_proof_of_rank_k_covariance_approximation_new}), we first provide preliminary results which bound the Frobenius norm utility in the rare ``bad'' event when one or more eigenvalue gaps are unusually small.
In Lemma \ref{lemma_spectral_martingale}, we show that as a consequence of Theorem \ref{thm:eigenvalue_gap} (or, more specifically, Lemma  \ref{lemma_bad_event} which we will derive from  Theorem \ref{thm:eigenvalue_gap}), this ``bad'' event happens with very low probability.

Towards this end, for every $\alpha>0$, define the ``bad'' event $\hat{E}_\alpha$ as follows 
\begin{eqnarray}\label{eq_n77}
\hat{E}_\alpha &:=& \left\{\sup_{t \in [0,T]}\|B(t)\|_2 > 4\sqrt{T}(\sqrt{d} + \alpha) \right\} \cup \left\{\sup_{t \in [0,t_0]}\|B(t)\|_2 > 4\sqrt{t_0}(\sqrt{d} + \alpha) \right\} \nonumber\\
& & \qquad \qquad \qquad \cup \left\{\inf_{t_0 \leq t \leq T, 1\leq i < d  }\gamma_i(t) - \gamma_{i+1}(t) \leq \frac{1}{d^{10}} \frac{\sqrt{t}}{\mathfrak{b}\sqrt{d}} \right\}. 
\end{eqnarray}
In the following, we set 
\begin{equation}\label{eq_alpha}
\alpha = 20\log^{\frac{1}{2}}(\sigma_1 d (T+1))
\end{equation}
and
\begin{equation}\label{eq_t0}
 t_0 = \frac{1}{(kd)^{10} + k\alpha^2 +\sigma_1^2}.
 \end{equation}
The following lemma shows that  $\hat{E}_\alpha$ occurs with very low probability:
\begin{lemma}[\bf Probability of ``bad'' event occurring] \label{lemma_spectral_martingale}
 For every $T>0$ and every $\alpha \geq 20\log^{\frac{1}{2}}(\sigma_1 d (T+1))$, we have,
      $ \mathbb{P}\left(\hat{E}_\alpha\right) \leq 4\sqrt{\pi} e^{-\frac{1}{8}\alpha^2}  + \frac{T}{d^{600}}.$
\end{lemma}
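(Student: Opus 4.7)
The plan is to decompose $\hat{E}_\alpha$ into its three constituent events and bound each separately, then combine via the union bound. The three events are (A) $\sup_{t\in[0,T]}\|B(t)\|_2>4\sqrt{T}(\sqrt{d}+\alpha)$; (B) the analogous supremum bound on $[0,t_0]$; and (C) the existence of some $i\in\{1,\ldots,d-1\}$ and some $t\in[t_0,T]$ with $\gamma_i(t)-\gamma_{i+1}(t)\leq\frac{1}{d^{10}}\frac{\sqrt{t}}{\mathfrak{b}\sqrt{d}}$. Under the hypothesis $\alpha\geq 20\log^{1/2}(\sigma_1 d(T+1))$, I expect (A) and (B) to each contribute at most $2\sqrt{\pi}\,e^{-\alpha^2/8}$, while (C) will contribute at most $T/d^{600}$, together giving the desired bound.

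For (A) and (B), I would apply Lemma \ref{lemma_spectral_martingale_b} directly. Writing $4\sqrt{T}(\sqrt{d}+\alpha)=\sqrt{T}(\sqrt{d}+(3\sqrt{d}+4\alpha))$ and invoking the lemma with its parameter set to $3\sqrt{d}+4\alpha$ produces the bound $e^{-C(3\sqrt{d}+4\alpha)^2}\leq e^{-16C\alpha^2}$. Since the hypothesis on $\alpha$ forces $\alpha\geq 20$ (whenever $\sigma_1 d(T+1)\geq e$), standard constant chasing absorbs this into $2\sqrt{\pi}\,e^{-\alpha^2/8}$. An identical argument with $t_0$ in place of $T$ handles event (B).

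For (C), I would invoke Lemma \ref{lemma_bad_event}, which converts the pointwise eigenvalue-gap bound of Theorem \ref{thm:eigenvalue_gap} into a uniform-in-$t$ statement. The approach is a time-discretization: partition $[t_0,T]$ into polynomially many subintervals in $d$; at each grid point $t_j$, apply Theorem \ref{thm:eigenvalue_gap} (after a scaling that identifies $M+B(t_j)$ with $\sqrt{t_j}(M/\sqrt{t_j}+G+G^{\ast})$) with parameter $s=d^{-c}$ for a suitably large constant $c$, yielding a probability of $d^{-c(\beta+1)}+d^{-1000}$ that any given gap at $t_j$ falls below the scaled threshold; then union-bound over the $d-1$ indices $i$ and the $O(\mathrm{poly}(d)\cdot T)$ grid points. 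The freedom to take the $d^{-1000}$ exponent in Theorem \ref{thm:eigenvalue_gap} arbitrarily large (it can be replaced by $d^{-C}$ for any universal $C$, as noted just after the theorem) is what allows the total to be at most $T/d^{600}$. Bridging between grid points uses Weyl's inequality (Lemma \ref{lemma_weyl}) to control the eigenvalue fluctuation by $\|B(t)-B(t_j)\|_2$, which on each piece is bounded via Lemma \ref{lemma_spectral_martingale_b}.

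The main obstacle is (C): upgrading the pointwise-in-$t$ gap bound of Theorem \ref{thm:eigenvalue_gap} to a uniform-in-$t$ bound over the continuous interval $[t_0,T]$ while keeping the probability small enough to absorb into $T/d^{600}$. The cutoff $t_0>0$ is essential here: near $t=0$ the diffusion has not yet produced enough eigenvalue repulsion for Theorem \ref{thm:eigenvalue_gap} to supply a useful lower bound at the scale $\sqrt{t}/(\mathfrak{b}\sqrt{d})$, so the short initial interval $[0,t_0]$ must instead be handled separately through event (B) together with a deterministic fallback combining Weyl's inequality and the Brownian spectral-norm bound.
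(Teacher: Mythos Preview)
Your proposal is correct and follows essentially the same approach as the paper: union-bound over the three constituent events of $\hat{E}_\alpha$, apply Lemma~\ref{lemma_spectral_martingale_b} to bound events (A) and (B), and invoke Lemma~\ref{lemma_bad_event} to bound event (C) by $T/d^{600}$. The paper's proof is exactly this three-line argument; your additional discussion of how Lemma~\ref{lemma_bad_event} itself is proved (time-discretization, Theorem~\ref{thm:eigenvalue_gap} at grid points, Weyl's inequality to bridge) accurately reflects the paper's proof of that auxiliary lemma as well.
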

\begin{proof}
\begin{eqnarray*}
    \mathbb{P}\left(\hat{E}_\alpha\right) \!\!\!\!\!\! &\stackrel{\textrm{Eq. } \eqref{eq_n77}}{\leq} & \!\!\!\! \mathbb{P}\left(\sup_{t \in [0,T]}\|B(t)\|_2 > 4\sqrt{T}(\sqrt{d} + \alpha)\right) +  \mathbb{P}\left(\sup_{t \in [0,t_0]}\|B(t)\|_2 > 4\sqrt{t_0}(\sqrt{d} + \alpha)\right)\\
    & & + \quad \mathbb{P}\left(\inf_{t_0 \leq t \leq T, 1\leq i < d  }\gamma_i(t) - \gamma_{i+1}(t) \leq \frac{1}{d^{10}} \frac{\sqrt{t}}{\mathfrak{b}\sqrt{d}}\right)\\
     &\stackrel{\textrm{Lem. } \ref{lemma_bad_event}}{\leq}  & \!\!\!\! \mathbb{P}\left(\sup_{t \in [0,T]}\|B(t)\|_2 > 4\sqrt{T}(\sqrt{d} + \alpha)\right) +  \mathbb{P}\left(\sup_{t \in [0,t_0]}\|B(t)\|_2 > 4\sqrt{t_0}(\sqrt{d} + \alpha)\right)  + \frac{T}{d^{600}}\\
    & \stackrel{\textrm{ Lem. \ref{lemma_spectral_martingale_b}}}{\leq} & 4\sqrt{\pi} e^{-\frac{1}{8}\alpha^2}  + \frac{T}{d^{600}}.
\end{eqnarray*}
\end{proof}
\noindent
The following lemma bounds the amount which the ``bad'' event $\hat{E}_\alpha$ contributes to the expected utility, and reduces the problem of proving Theorem \ref{thm_rank_k_covariance_approximation_new} to the problem of bounding the expected utility when the ``bad'' event $\hat{E}_\alpha$ does not occur.
%

\begin{lemma}\label{lemma_utility_rare_event}
If $\alpha \geq 20\log^{\frac{1}{2}}(\sigma_1 d (T+1))$, then we have
\begin{equation*}
    \mathbb{E}[\|\Psi(T) - \Psi(0)\|_F^2] \leq 4\mathbb{E}[\|\Psi(T) - \Psi(0)\|_F^2 \times \mathbbm{1}_{\hat{E}_\alpha^c}] +dT.
    \end{equation*}
\end{lemma}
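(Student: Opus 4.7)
The plan is to decompose the expectation into contributions from the ``good'' event $\hat{E}_\alpha^c$ and the ``bad'' event $\hat{E}_\alpha$, and to show that the bad contribution is at most $dT$. Writing
\[
\mathbb{E}\bigl[\|\Psi(T)-\Psi(0)\|_F^2\bigr] = \mathbb{E}\bigl[\|\Psi(T)-\Psi(0)\|_F^2 \mathbbm{1}_{\hat{E}_\alpha^c}\bigr] + \mathbb{E}\bigl[\|\Psi(T)-\Psi(0)\|_F^2 \mathbbm{1}_{\hat{E}_\alpha}\bigr],
\]
and observing that the first term is trivially bounded by $4$ times itself, the lemma reduces to proving $\mathbb{E}[\|\Psi(T)-\Psi(0)\|_F^2 \mathbbm{1}_{\hat{E}_\alpha}] \le dT$, with the factor of $4$ absorbed as slack.

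The next step is a crude deterministic bound on the integrand. Since $\Psi(t) = \sum_{i=1}^{k} \gamma_i(t)\, u_i(t) u_i(t)^\ast$ by \eqref{eq_n96} and \eqref{eq_n45}, we have $\|\Psi(t)\|_F^2 = \sum_{i=1}^k \gamma_i(t)^2 \le k\|\Phi(t)\|_2^2 \le 2k(\sigma_1^2 + \|B(t)\|_2^2)$ by Weyl's inequality and \eqref{eq_n95}. Combined with $\|\Psi(T) - \Psi(0)\|_F^2 \le 2\|\Psi(T)\|_F^2 + 2\|\Psi(0)\|_F^2$ and $\|\Psi(0)\|_F^2 \le k\sigma_1^2$, this yields a universal bound of the form $\|\Psi(T)-\Psi(0)\|_F^2 \le C k(\sigma_1^2 + \|B(T)\|_2^2)$.

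Then I would apply Cauchy--Schwarz to the bad-event contribution:
\[
\mathbb{E}\bigl[\|\Psi(T)-\Psi(0)\|_F^2 \mathbbm{1}_{\hat{E}_\alpha}\bigr] \le \sqrt{\mathbb{E}\bigl[\|\Psi(T)-\Psi(0)\|_F^4\bigr]}\cdot\sqrt{\mathbb{P}(\hat{E}_\alpha)}.
\]
The first factor is bounded by a polynomial in $d$, $T$, and $\sigma_1$ using fourth-moment bounds on $\|B(T)\|_2$ (obtained by integrating the sub-Gaussian tail from Lemma~\ref{lemma_concentration}) together with the hypothesis $\sigma_1 \le d^{50}$. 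For the second factor, Lemma~\ref{lemma_spectral_martingale} gives $\mathbb{P}(\hat{E}_\alpha) \le 4\sqrt{\pi}\,e^{-\alpha^2/8} + T/d^{600}$, and the choice $\alpha = 20\log^{1/2}(\sigma_1 d(T+1))$ makes $e^{-\alpha^2/8} \le (\sigma_1 d(T+1))^{-50}$, which decays faster than any fixed polynomial in $d$, $T$, and $\sigma_1$. The product of the two factors is therefore dominated by $dT$ with large margin, completing the proof.

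The main obstacle, if one can call it that, is purely bookkeeping: ensuring $\alpha$ is chosen large enough that the sub-Gaussian tail probability of $\hat{E}_\alpha$ beats the polynomial moment growth of $\|\Psi(T)-\Psi(0)\|_F^2$, and that the term $T/d^{600}$ from Lemma~\ref{lemma_spectral_martingale} is also dominated after multiplication by the polynomial in the first factor. The assumption $\sigma_1 \le d^{50}$ (together with the explicit choice of $\alpha$) is precisely what makes this trade-off go through cleanly.
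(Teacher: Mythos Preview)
Your overall strategy is sound and matches the paper's in spirit: both split into the good and bad events and show the bad-event contribution is at most a constant times $dT$. The difference is in how the bad-event term $\mathbb{E}[\|\Psi(T)-\Psi(0)\|_F^2\mathbbm{1}_{\hat{E}_\alpha}]$ is controlled. You apply Cauchy--Schwarz to split into $\sqrt{\mathbb{E}[\|\Psi(T)-\Psi(0)\|_F^4]}\cdot\sqrt{\mathbb{P}(\hat{E}_\alpha)}$, then bound the fourth moment by a polynomial in $d,T,\sigma_1$. The paper instead separates the deterministic and random parts of the crude bound $\|\Psi(T)-\Psi(0)\|_F\le 2\|M\|_F+\|B(T)\|_F$: the $\|M\|_F^2$ piece is handled by $\|M\|_F^2\cdot\mathbb{P}(\hat{E}_\alpha)$ (no square root), and the $\|B(T)\|_F^2\mathbbm{1}_{\hat{E}_\alpha}$ piece is bounded directly via the layer-cake formula and the sub-Gaussian tail of $\|B(T)\|_2$, again avoiding Cauchy--Schwarz.

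The practical consequence is that your route loses a square root on the probability, turning $e^{-\alpha^2/8}\le(\sigma_1 d(T+1))^{-50}$ into $(\sigma_1 d(T+1))^{-25}$. Meanwhile your fourth-moment factor can be as large as $O(k\sigma_1^2)$, which under $\sigma_1\le d^{50}$ is $O(d^{101})$. Whether $(\sigma_1 d(T+1))^{-25}$ kills $d^{101}$ depends on the relative sizes of $\sigma_1$, $d$, and $T$: it works when $\sigma_1$ is large (since the decay scales with $\sigma_1^{-25}$) and also when $\sigma_1$ is small (since then $k\sigma_1^2$ is small), but the intermediate regime requires care, and the $T/d^{600}$ term after the square root becomes $T^{1/2}/d^{300}$, so the product with a $Td^2$-sized first factor also needs checking against $dT$ for large $T$. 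These are genuinely just bookkeeping issues and can be closed, but the paper's direct split sidesteps them entirely by never taking a square root on $\mathbb{P}(\hat{E}_\alpha)$ and by using the tail of $\|B(T)\|_2$ in place of a moment bound. Your approach is a perfectly valid alternative; the paper's is a bit cleaner for tracking the explicit constant $dT$.
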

\begin{proof}

\begin{equation}\label{eq_u12}
    \mathbb{E}[\|\Psi(T) - \Psi(0)\|_F^2] \leq 4\mathbb{E}[\|\Psi(T) - \Psi(0)\|_F^2 \times \mathbbm{1}_{\hat{E}_\alpha^c}] + 4\mathbb{E}[\|\Psi(T) - \Psi(0)\|_F^2 \times \mathbbm{1}_{\hat{E}_\alpha}]. 
\end{equation}

\begin{eqnarray}
    \|\Psi(T) - \Psi(0)\|_F &\stackrel{\textrm{Eq. } \eqref{eq_n96},\, \, \eqref{eq_n45}}{=}& \| U(T) \Gamma_k(T) U(T)^\ast - U(0) \Gamma_k(0) U(0)^\ast\|_F \nonumber\\
&\stackrel{\textrm{Tri. Ineq.}}{\leq}  & \| U(T) \Gamma_k(T) U(T)^\ast\|_F + \|U(0) \Gamma_k(0) U(0)^\ast\|_F \nonumber\\
  &= & \sqrt{\sum_{i=1}^k \gamma_i^2(T)} + \sqrt{\sum_{i=1}^k \gamma_i^2(0)} \label{eq_n93}\\
  & \leq & \sqrt{\sum_{i=1}^d \gamma_i^2(T)} + \sqrt{\sum_{i=1}^d \gamma_i^2(0)} \nonumber\\
    &=& \| U(T) \Gamma(T) U(T)^\ast\|_F + \|U(0) \Gamma(0) U(0)^\ast\|_F \label{eq_n90} \label{eq_n94}\\
     &\stackrel{\textrm{Eq. } \eqref{eq_n92}}{=}&  \|\Phi(T)\|_F+ \|\Phi(0)\|_F \nonumber\\
    &\stackrel{\textrm{Eq. } \eqref{eq_n95}}{=}& \|M + B(T)\|_F + \|M\|_F \nonumber\\
    &\stackrel{\textrm{Tri. Ineq.}}{\leq}  & 2\|M\|_F +\|B(T)\|_F, \label{eq_n91}
\end{eqnarray}
where \eqref{eq_n93} and \eqref{eq_n94} hold since, for any $t\geq 0$, $\Gamma(t)$ is defined in \eqref{eq_n92} as a diagonal matrix with diagonal entries $\gamma_1(t) \geq \cdots \geq \gamma_d(t)$, and since the squared Frobenius norm of any Hermitian matrix is equal to the sum of squares of its eigenvalues.

Therefore,
\begin{eqnarray}
    \mathbb{E}[\|\Psi(T) - \Psi(0)\|_F^2 \times \mathbbm{1}_{\hat{E}_\alpha}] 
    &\stackrel{\textrm{Eq. } \eqref{eq_n91}}{\leq} & \mathbb{E}[ (2\|M\|_F +\|B(T)\|_F)^2\times \mathbbm{1}_{\hat{E}_\alpha}]\nonumber\\
    & = & \mathbb{E}[ (4\|M\|_F^2 +4 \|M\|_F \|B(T)\|_F + \|B(T)\|_F^2)\times \mathbbm{1}_{\hat{E}_\alpha}]\nonumber\\
        &\leq & \mathbb{E}[ (8\|M\|_F^2 + 4\|B(T)\|_F^2)\times \mathbbm{1}_{\hat{E}_\alpha}] \label{eq_n97}\\
    &= & 8\|M\|_F^2\times \mathbb{P}(\hat{E}_\alpha)+ 4\mathbb{E}[\|B(T)\|_F^2\times \mathbbm{1}_{\hat{E}_\alpha}] \nonumber\\
      &\leq & 8\|M\|_F^2\times \mathbb{P}(\hat{E}_\alpha)+ 4\sqrt{d}\mathbb{E}[\|B(T)\|_2^2\times \mathbbm{1}_{\hat{E}_\alpha}]\nonumber\\
      &\stackrel{\textrm{Prop. } \ref{lemma_layer_cake}}{=}&  8\|M\|_F^2\times \mathbb{P}(\hat{E}_\alpha)+ 4\sqrt{d} \int_{16T(\sqrt{d} + \alpha)^2}^\infty\mathbb{P}(\|B(T)\|_2^2 >s)\mathrm{d}s \nonumber\\
            &=& 8\|M\|_F^2\times \mathbb{P}(\hat{E}_\alpha)+ 4\sqrt{d} \int_{16T(\sqrt{d} + \alpha)^2}^\infty\mathbb{P}(T\|W\|_2^2 >s)\mathrm{d}s \label{eq_n98}\\
             &=& 8\|M\|_F^2\times \mathbb{P}(\hat{E}_\alpha)+ 4\sqrt{d} \int_{16T(\sqrt{d} + \alpha)^2}^\infty\mathbb{P}\left(\|W\|_2 >\frac{\sqrt{s}}{\sqrt{T}}\right)\mathrm{d}s \nonumber\\
                      &\stackrel{\textrm{Lem. \ref{lemma_concentration}}}{\leq} & 8\|M\|_F^2\times \mathbb{P}(\hat{E}_\alpha)+ 4\sqrt{d} \int_{16T(\sqrt{d} + \alpha)^2}^\infty 2e^{-\left(\frac{\sqrt{s}}{\sqrt{T}} -2\sqrt{d}\right)^2}\mathrm{d}s \nonumber\\
                       &=& 8\|M\|_F^2\times \mathbb{P}(\hat{E}_\alpha)+ 4\sqrt{d} \int_{16T(\sqrt{d} + \alpha)^2}^\infty 2e^{-\left(\frac{s}{T} -2\sqrt{d}\frac{\sqrt{s}}{\sqrt{T}} +4d\right)}\mathrm{d}s\nonumber\\
                            &\leq &  8\|M\|_F^2\times \mathbb{P}(\hat{E}_\alpha)+ 4\sqrt{d} \int_{16T(\sqrt{d} + \alpha)^2}^\infty 2e^{-\left(\frac{s}{2T} +4d\right)}\mathrm{d}s \label{eq_n99}\\
                              &= & 8\|M\|_F^2\times \mathbb{P}(\hat{E}_\alpha)+ 4\sqrt{d} e^{-4d} \int_{16T(\sqrt{d} + \alpha)^2}^\infty 2e^{-\frac{s}{2T}}\mathrm{d}s\nonumber\\
                                 &= & 8\|M\|_F^2\times \mathbb{P}(\hat{E}_\alpha)+ 4\sqrt{d} e^{-4d} 4T e^{-\frac{16T(\sqrt{d} + \alpha)^2}{2T}}\nonumber\\
                               &= & 8\|M\|_F^2\times \mathbb{P}(\hat{E}_\alpha)+ 4\sqrt{d} e^{-4d} 4T e^{-8(\sqrt{d} + \alpha)^2}\nonumber\\
                               &\leq & 8\|M\|_F^2\times \mathbb{P}(\hat{E}_\alpha)+ T \label{eq_n100}\\ 
                               &\stackrel{\textrm{Lem. \ref{lemma_spectral_martingale}}}{\leq} &  8\|M\|_F^2\times \left(4\sqrt{\pi} e^{-\frac{1}{8}\alpha^2}+ \frac{T}{d^{600}}\right)+ T\nonumber\\
                            &\leq & 8 d \sigma_1^2 \times \left(4\sqrt{\pi} e^{-\frac{1}{8}\alpha^2}+ \frac{T}{d^{600}}\right) +T \nonumber\\
                            &\leq & \frac{1}{8}d T + \frac{T}{d^{200}} +T, \label{eq_n101}\\
                            &\leq & \frac{1}{4}d T \label{eq_u13}
\end{eqnarray}
where $W$ is a matrix with i.i.d. $N(0,1)$ entries, and $Y\sim(0,\frac{1}{2})$.
{\eqref{eq_n97} holds since for any $a,b \geq 0$ we have that either $ab \leq a^2$ or $ab \leq b^2$.
\eqref{eq_n98} holds since the random matrix $B(T)$ is equal in distribution to $\sqrt{T}W$. \eqref{eq_n99} holds since $\frac{s}{2T} \geq 2 \sqrt{d} \frac{\sqrt{s}}{\sqrt{T}}$ for every $s \in [16T(\sqrt{d} + \alpha)^2, \infty)$. \eqref{eq_n100} holds since $\alpha \geq 20\log^{\frac{1}{2}}(\sigma_1 d (T+1))$. }
 \eqref{eq_n101} holds because $\alpha \geq 20\log^{\frac{1}{2}}(\sigma_1 d (T+1))$ and $\sigma_1^2 \leq d^{100}$.

Plugging \eqref{eq_u13} into \eqref{eq_u12} completes the proof.
\end{proof}
\noindent
The following proposition will be useful in bounding the eigenvalue gaps $\gamma_i(t) - \gamma_j(t)$ for $i\leq k < j$.

\begin{proposition}[\bf ``Worst-case'' eigenvalue gap bound]\label{lemma_gap_concentration}
Whenever  $\gamma_i(0) - \gamma_{i+1}(0) \geq 8 \sqrt{T} \sqrt{d}$ for every $i \in S$ and $T>0$ and some subset $S \subset [d-1]$, we have that for any $\alpha>0$, 
\begin{equation*}
    \bigcup_{i\in S} \left\{\inf_{t \in [0,T]} \gamma_i(t) - \gamma_{i+1}(t) <  \frac{1}{2}\left((\gamma_i(0) - \gamma_{i+1}(0)) - \alpha\right) \right\} \subseteq \hat{E}_\alpha.
    \end{equation*}
\end{proposition}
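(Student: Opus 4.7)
The plan is to prove the contrapositive: assuming the complement $\hat{E}_\alpha^c$ holds, I will show that for every $i \in S$ and every $t \in [0,T]$,
\[
\gamma_i(t) - \gamma_{i+1}(t) \;\geq\; \tfrac{1}{2}\bigl((\gamma_i(0) - \gamma_{i+1}(0)) - \alpha\bigr).
\]
This is a purely deterministic consequence of a uniform spectral-norm bound on $B(t)$, so no probabilistic machinery is needed beyond what is already packaged into the definition of $\hat{E}_\alpha$.

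First, I would unpack $\hat{E}_\alpha^c$ to extract the first clause of the union in \eqref{eq_n77}, which gives the deterministic bound $\sup_{t\in[0,T]}\|B(t)\|_2 \leq 4\sqrt{T}(\sqrt{d}+\alpha)$. Next, I apply Weyl's inequality (Lemma \ref{lemma_weyl}) to the decomposition $\Phi(t) = M + B(t)$: since the ordered eigenvalues of $M$ are $\{\gamma_j(0)\}$ and those of $\Phi(t)$ are $\{\gamma_j(t)\}$, Lemma \ref{lemma_weyl} yields $|\gamma_j(t) - \gamma_j(0)| \leq \|B(t)\|_2$ for every $j \in [d]$ and every $t\in[0,T]$. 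In particular, for $i \in S$,
\[
\gamma_i(t) - \gamma_{i+1}(t) \;\geq\; (\gamma_i(0) - \gamma_{i+1}(0)) - 2\|B(t)\|_2.
\]

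Combining the two bounds gives $\gamma_i(t) - \gamma_{i+1}(t) \geq (\gamma_i(0) - \gamma_{i+1}(0)) - 8\sqrt{T}(\sqrt{d}+\alpha)$ uniformly in $t \in [0,T]$. To convert this into the form claimed by the proposition, I would use the hypothesis $\gamma_i(0) - \gamma_{i+1}(0) \geq 8\sqrt{T}\sqrt{d}$ to absorb the $8\sqrt{T}\sqrt{d}$ loss into half of the initial gap, leaving a remainder linear in $\alpha$ that is bounded by $\alpha/2$ (after tracking the universal constants carefully). Taking the infimum over $t \in [0,T]$ and the union over $i\in S$ then yields the containment of the complementary event in $\hat{E}_\alpha^c$, and negating gives the statement.

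The main (minor) obstacle is just the bookkeeping of universal constants to convert the deterministic Weyl-based bound $(\gamma_i(0)-\gamma_{i+1}(0)) - 8\sqrt{T}(\sqrt{d}+\alpha)$ into the target form $\tfrac{1}{2}((\gamma_i(0)-\gamma_{i+1}(0)) - \alpha)$; this is where the hypothesis $\gamma_i(0) - \gamma_{i+1}(0) \geq 8\sqrt{T}\sqrt{d}$ is used. Conceptually, the proposition is simply a packaged reminder that, outside the exceptional event $\hat{E}_\alpha$, the eigenvalue gaps cannot shrink by more than a controlled amount, and Weyl's inequality is the only analytic input required.
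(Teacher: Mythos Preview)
Your proposal is correct and follows essentially the same approach as the paper: both arguments reduce to Weyl's inequality applied to $\Phi(t)=M+B(t)$, combined with the uniform spectral-norm control on $B(t)$ that is built into the definition of $\hat{E}_\alpha$. The paper phrases it as a direct chain of set containments rather than as a contrapositive, but this is purely cosmetic; your observation that the only genuine analytic input is Weyl and that the remaining work is constant bookkeeping matches the paper exactly.
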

\begin{proof}
Since, at every time $t$, $\Phi(t)= M + B(t)$ and  $\gamma_1(t) \geq \cdots \geq \gamma_d(t)$ are the eigenvalues of $\Phi(t)$, Weyl's Inequality (Lemma \ref{lemma_weyl}) implies that
\begin{equation}\label{eq_gap1b}
    \gamma_i(t) - \gamma_{i+1}(t) \geq \gamma_i(0) - \gamma_{i+1}(0) - \|B(t)\|_2, \qquad \forall t\in[0,T], i \in [d].
\end{equation}
 {and, hence, that} 
 \begin{eqnarray} \label{eq_gap1}
   \inf_{t \in [0,T]} \gamma_i(t) - \gamma_{i+1}(t)  &\stackrel{\textrm{Eq. \eqref{eq_gap1b}}}{\geq} & \inf_{t \in [0,T]}\left(\gamma_i(0) - \gamma_{i+1}(0) - \|B(t)\|_2\right) \nonumber\\
   & = &  \gamma_i(0) - \gamma_{i+1}(0) -  \sup_{t \in [0,T]} \|B(t)\|_2.
\end{eqnarray}
Therefore, \eqref{eq_gap1} implies that 

    \begin{eqnarray}
& & \!\!\!\!\!\!\!\!\!\!\!\!\!\! \!\!\! \!\!\!\!\!\!\!\!\!\!\!\!\!\!\!\!\!\!  \!\!\! \!\!\!\!\!\!\!\!\!\!\!\!\!\!\!\!\!\!  \bigcup_{i\in S} \left\{\inf_{t \in [0,T]} \gamma_i(t) - \gamma_{i+1}(t) <  \frac{1}{2}\left((\gamma_i(0) - \gamma_{i+1}(0)) - \alpha\right) \right\} \nonumber \\    
            &\stackrel{\textrm{Eq. \eqref{eq_gap1}}}{\subseteq} & \bigcup_{i\in S} \left\{ \gamma_i(0) - \gamma_{i+1}(0) - \sup_{t \in [0,T]} \|B(t)\|_2  <  \frac{1}{2}((\gamma_i(0) - \gamma_{i+1}(0)) - \alpha) \right\}  \nonumber\\
         &=&  \bigcup_{i\in S} \left\{ \sup_{t \in [0,T]} \|B(t)\|_2  >  \frac{1}{2}((\gamma_i(0) - \gamma_{i+1}(0)) + \alpha) \right\} \nonumber\\
                  &\subseteq &  \bigcup_{i\in S} \left\{ \sup_{t \in [0,T]} \|B(t)\|_2  >   2\sqrt{T} \sqrt{d} + \frac{1}{2} \alpha \right\} \label{eq_n160}\\
                  &=& \left\{\sup_{t \in [0,T]} \|B(t)\|_2  >   2\sqrt{T} \sqrt{d} + \frac{1}{2} \alpha \right\}\nonumber\\ 
                  & =& \hat{E}_\alpha, \nonumber
    \end{eqnarray}
where \eqref{eq_n160} holds since the statement of Proposition \ref{lemma_gap_concentration} assumes that $\gamma_i(0) - \gamma_{i+1}(0) \geq 8 \sqrt{T} \sqrt{d}$.
\end{proof}
\noindent
The following proposition provides a crude bound on the Frobenius distance over the very short time interval $[0,t_0]$, which we will use to ``jump-start'' our more sophisticated bound on the much longer interval $[t_0, T]$:
\begin{proposition}\label{lemma_t0}
Suppose that  $\sigma_k - \sigma_{k+1} \geq 4T \sqrt{d} + 2\alpha$.
Then for every $0 \leq t_0 < 1$ we have
\begin{equation*}
    \|\Psi(t_0) - \Psi(0)\|_F \times \mathbbm{1}_{\hat{E}_\alpha^c} \leq \sqrt{t_0}\left(2 \sqrt{k} (\sqrt{d} + \alpha)  + 8\sigma_1\right)
\end{equation*}
with probability $1$.
\end{proposition}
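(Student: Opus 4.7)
The plan is to decompose $\Psi(t_0) - \Psi(0)$ into two pieces using the identity $\Psi(t) = P_k(t)\,\Phi(t)$, where $P_k(t) := \sum_{i=1}^k u_i(t) u_i(t)^{\ast}$ denotes the orthogonal projection onto the top-$k$ eigenspace of $\Phi(t) = M + B(t)$. Since $\Phi(t_0) - \Phi(0) = B(t_0)$, I obtain the telescoping split
\begin{equation*}
\Psi(t_0) - \Psi(0) \;=\; P_k(t_0)\, B(t_0) \;+\; \bigl[P_k(t_0) - P_k(0)\bigr] M.
\end{equation*}
On $\hat{E}_\alpha^c$, the definition \eqref{eq_n77} yields $\sup_{t \in [0, t_0]}\|B(t)\|_2 \le 4\sqrt{t_0}(\sqrt{d}+\alpha)$, which is the only probabilistic input used in the proof.

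Because $P_k(t_0)$ is a rank-$k$ projection, the first piece is immediate:
\begin{equation*}
\|P_k(t_0)\, B(t_0)\|_F \;\le\; \sqrt{k}\,\|B(t_0)\|_2 \;\le\; 2\sqrt{k t_0}\,(\sqrt{d} + \alpha),
\end{equation*}
absorbing a constant factor. This accounts for the $2\sqrt{k}(\sqrt{d}+\alpha)\sqrt{t_0}$ summand.

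For the second piece, I would exploit the orthogonal decomposition (obtained by writing $M = M_k + (M - M_k)$ and using $P_k(0)(M - M_k) = 0$, $P_k(0) M = M_k$):
\begin{equation*}
[P_k(t_0) - P_k(0)] M \;=\; P_k(t_0)\,(M - M_k) \;-\; \bigl(I - P_k(t_0)\bigr) M_k.
\end{equation*}
The two summands live in orthogonal column and row spaces in the Frobenius inner product, so their norms add in quadrature. I would bound each using the Davis-Kahan sin-$\Theta$ theorem (Lemma \ref{lemma_SinTheta}) applied to the Hermitian pair $(M, \Phi(t_0))$: the hypothesis $\sigma_k - \sigma_{k+1} \ge 4T\sqrt{d}+2\alpha$ combined with $t_0 < 1$ and $\|B(t_0)\|_2 \le 4\sqrt{t_0}(\sqrt{d}+\alpha)$ ensures, via Weyl's inequality (Lemma \ref{lemma_weyl}), that a gap of size at least $\Delta := (\sigma_k - \sigma_{k+1})/2$ separates the top-$k$ and bottom-$(d{-}k)$ parts of $\Phi(t_0)$. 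This controls the sines of the principal angles between $\operatorname{range}(P_k(0))$ and $\operatorname{range}(P_k(t_0))$ by $\|B(t_0)\|_2/\Delta$, and since $M_k$ acts with operator norm at most $\sigma_1$ on $\operatorname{range}(P_k(0))$ (and $M - M_k$ acts with operator norm at most $\sigma_1$ on $\operatorname{range}(I - P_k(0))$), each summand above contributes at most a multiple of $\sigma_1\|B(t_0)\|_2/\Delta = O(\sigma_1 \sqrt{t_0})$. Aggregating yields the $8\sigma_1\sqrt{t_0}$ term.

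\textbf{Main obstacle.} The principal difficulty is keeping the constants tight enough to match the precise form $2\sqrt{k}(\sqrt{d}+\alpha)\sqrt{t_0} + 8\sigma_1\sqrt{t_0}$ without picking up a spurious $\sqrt{k}$ or $\sqrt{d}$ factor in the $\sigma_1$ summand. This is exactly what the orthogonal decomposition $P_k(t_0)(M{-}M_k) \perp_F (I{-}P_k(t_0))M_k$ buys us: by localizing each piece to a rank-$k$ or rank-$(d-k)$ subspace where $M$ acts with a controlled norm, we avoid bounding the Frobenius norm of $M$ directly. A secondary book-keeping step is verifying the Davis-Kahan gap hypothesis persists throughout $[0, t_0]$, which uses the crude bound $\sqrt{t_0} \le 1$ together with the fact that $\sigma_k - \sigma_{k+1}$ dominates $\sqrt{d}+\alpha$ by the assumed lower bound.
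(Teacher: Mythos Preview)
Your identity $\Psi(t)=P_k(t)\Phi(t)$ and the resulting two-term split is a clean alternative to the paper's three-term telescope of $U_k(t)\Gamma_k(t)U_k(t)^\ast$. Both routes bound the eigenvalue-drift piece by $\sqrt{k}\,\|B(t_0)\|_2$ (yours via the rank of $P_k(t_0)B(t_0)$; the paper's via Weyl on $\|\Gamma_k(t_0)-\Gamma_k(0)\|_F$) and the eigenspace-rotation piece via Davis--Kahan. Your orthogonal split of $[P_k(t_0)-P_k(0)]M$ is arguably slicker than the paper's detour through $\|U_k(t_0)-U_k(0)\|_F$, which needs an auxiliary inequality (cited from \cite{mangoubi2022private}) to relate back to $\|P_k(t_0)-P_k(0)\|_F$.

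There is, however, a genuine gap in your $\sigma_1$-term estimate---and the paper's own proof shares it. You claim each summand contributes $O(\sigma_1\|B(t_0)\|_2/\Delta)$ in \emph{Frobenius} norm, but one computes
\[
\|(I-P_k(t_0))M_k\|_F^2 \;=\; \sum_{i\le k}\sigma_i^2\,\|(I-P_k(t_0))v_i\|^2 \;\le\; \sigma_1^2\,\|\sin\Theta\|_F^2,
\]
so what is actually needed is the Frobenius sine. Davis--Kahan gives either $\|\sin\Theta\|_F\le\|B(t_0)\|_F/\Delta$ or $\|\sin\Theta\|_2\le\|B(t_0)\|_2/\Delta$; since $\hat E_\alpha^c$ controls only $\|B(t_0)\|_2$, you must use the spectral version and then pay a factor $\sqrt{k}$ to pass to Frobenius on the rank-$k$ sine block. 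The honest outcome is $O(\sigma_1\sqrt{k}\,\sqrt{t_0})$, not $8\sigma_1\sqrt{t_0}$; your orthogonal decomposition does not circumvent this. (The paper commits the same slip at \eqref{eq_u8}, bounding $\|B(t_0)\|_F$ by $4\sqrt{t_0}(\sqrt d+\alpha)$ from an event that only bounds $\|B(t_0)\|_2$.) This is harmless downstream: the proposition is invoked only with the minuscule $t_0=((kd)^{10}+k\alpha^2+\sigma_1^2)^{-1}$ to certify $\|\Psi(t_0)-\Psi(0)\|_F=O(1)$, and the extra $\sqrt k$ is absorbed simply by replacing $\sigma_1^2$ with $k\sigma_1^2$ in the choice of $t_0$.
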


\begin{proof}
At every time $t\geq 0$, let $U_k(t)$ denote the $d\times k$ matrix consisting of the first $k$ columns of $U(t)$.
Further, let $\Gamma_k(t)$ denote the $k\times k$ matrix consisting of the first $k$ rows and columns of $\Gamma(t)$.

\begin{eqnarray} \label{eq_u4}
    \|\Psi(t_0) - \Psi(0)\|_F &=& \|U_k(t_0) \Gamma_k(t_0) U_k(t_0)^\ast - U_k(0) \Gamma_k(0) U_k(0)^\ast\|_F \nonumber\\
        &\stackrel{\textrm{Tri.\ Ineq.}}{\leq} & \|U_k(t_0) \Gamma_k(t_0) U_k(t_0)^\ast -  U_k(t_0) \Gamma_k(0) U_k(t_0)^\ast\|_F \nonumber \\
        & & + \quad \|U_k(t_0) \Gamma_k(0) U_k(t_0)^\ast - U_k(t_0) \Gamma_k(0) U_k(0)^\ast\|_F \nonumber\\
        & & + \quad \|U_k(t_0) \Gamma_k(0) U_k(0)^\ast - U_k(0) \Gamma_k(0) U_k(0)^\ast \|_F \nonumber\\
                &\leq & \|U_k(t_0)\|_2^2 \times  \|\Gamma_k(t_0) -\Gamma_k(0)\|_F \nonumber\\
                & & + \quad \|U_k(t_0)\|_2 \times \|\Gamma_k(0)\|_2 \times \| U_k(t_0)^\ast - U_k(0)^\ast\|_F \nonumber\\
        & & + \quad \|U_k(t_0) -U_k(0)\|_F \times \|\Gamma_k(0)\|_2 \times \|U_k(0)^\ast \|_2 \label{eq_n141}\\
        &=& \|\Gamma_k(t_0) -\Gamma_k(0)\|_F  +  \sigma_1  \| U_k(t_0)^\ast - U_k(0)^\ast\|_F \nonumber\\
        & & + \sigma_1 \|U_k(t_0) -U_k(0)\|_F \label{eq_n140}\\
                &=& \|\Gamma_k(t_0) -\Gamma_k(0)\|_F  + 2\sigma_1 \|U_k(t_0) -U_k(0)\|_F,
        \end{eqnarray}
     where \eqref{eq_n141} holds since $\|AB\|_F \leq \|A\|_2 \|B\|_F$ for any two matrices $A,B$. 
        \eqref{eq_n140} holds since $\|U_k(t)\|_2 = 1$ for all $t\geq 0$, and since $\|\Gamma_k(0)\|_2 = \sigma_1$ since $\Gamma_k(0) = M$.

By Lemma \ref{lemma_SinTheta}, we have
\begin{eqnarray}\label{eq_u5}
 \|U_k(t_0)U_k^\ast(t_0) -U_k(0)U_k^\ast(0)\|_F  &\stackrel{\textrm{Lem. \ref{lemma_SinTheta}}}{\leq}  &\frac{\|\Phi(t_0) - \Phi(0)\|_F}{\gamma_k(0) - \gamma_{k+1}(t_0)}\nonumber\\
 &=&  \frac{\|B(t_0)\|_F}{\gamma_k(0) - \gamma_{k+1}(t_0)}.
\end{eqnarray}
By Weyl's Inequality (Lemma \ref{lemma_weyl}), we have that, whenever the event  $\hat{E}^c_\alpha$ occurs,
\begin{eqnarray}\label{eq_u6}
\gamma_{k+1}(t_0) &\stackrel{\textrm{Lem. \ref{lemma_weyl}}}{\leq} & \gamma_{k+1}(0) + \|B(t_0)\|_2\nonumber\\
&\leq & \gamma_{k+1}(0) + 2\sqrt{t_0}(\sqrt{d} + \alpha) \label{eq_n142}\\
&= & \sigma_{k+1} + 2\sqrt{t_0}(\sqrt{d} + \alpha),
\end{eqnarray}
where \eqref{eq_n142} is by the definition of the event $\hat{E}^c_\alpha$ in \eqref{eq_n77}.
Thus, \eqref{eq_u6} implies that
\begin{eqnarray}
\gamma_k(0)-\gamma_{k+1}(t_0) &\stackrel{\textrm{Eq. \ref{eq_u6}}}{\geq} & \gamma_k(0)- \sigma_{k+1} - 2\sqrt{t_0}(\sqrt{d} + \alpha)\nonumber\\
&= & \sigma_k - \sigma_{k+1} - 2\sqrt{t_0}(\sqrt{d} + \alpha)\nonumber\\
&\geq & \frac{1}{2}(\sigma_k - \sigma_{k+1}), \label{eq_u7}
\end{eqnarray}
where \eqref{eq_u7} holds because $\sigma_k - \sigma_{k+1} \geq 4T \sqrt{d} + 2\alpha$ and $T \geq 1> t_0$.
Thus, plugging \eqref{eq_u7} into \eqref{eq_u5}, we have that whenever the event  $\hat{E}^c_\alpha$ occurs,
\begin{eqnarray}
 \|U_k(t_0)U_k^\ast(t_0) -U_k(0)U_k^\ast(0)\|_F  &\leq  & \frac{2\|B(t_0)\|_F}{\sigma_k - \sigma_{k+1}}\\
 &\leq & \frac{4\sqrt{t_0}(\sqrt{d} + \alpha)}{\sigma_k - \sigma_{k+1}}, \label{eq_u8}
\end{eqnarray}
where \eqref{eq_u8} is by the definition of the event $\hat{E}^c_\alpha$ in \eqref{eq_n77}.
We also have (by, e.g., Inequality (27) in \cite{mangoubi2022private}) that
\begin{equation}\label{eq_u9}
    \|U_k(t_0) -U_k(0)\|_F \leq \|U_k(t_0)U_k^\ast(t_0) -U_k(0)U_k^\ast(0)\|_F.
\end{equation}
Therefore, plugging in \eqref{eq_u9} into \eqref{eq_u8}, we get that, whenever the event  $\hat{E}^c_\alpha$ occurs,
\begin{equation}\label{eq_u10}
    \|U_k(t_0) -U_k(0)\|_F \leq \frac{4\sqrt{t_0}(\sqrt{d} + \alpha)}{\sigma_k - \sigma_{k+1}}.
\end{equation}
Plugging in \eqref{eq_u10} into \eqref{eq_u4} we get
\begin{eqnarray}
    \|\Psi(t_0) - \Psi(0)\|_F \times \mathbbm{1}_{\hat{E}_\alpha^c} \!\!\!\!\!\!\!\!\!\!\!\!\!\!\! &\stackrel{\textrm{Eq. \eqref{eq_u4}}}{\leq}  & \!\!\!\!\!\!\!\!\!\!\!\!\!\!\! \|\Gamma_k(t_0) -\Gamma_k(0)\|_F \times\mathbbm{1}_{\hat{E}_\alpha^c}  + 2\sigma_1 \|U_k(t_0) -U_k(0)\|_F\times \mathbbm{1}_{\hat{E}_\alpha^c} \nonumber\\
    &\stackrel{\textrm{Eq. \eqref{eq_u10}}}{\leq} &\|\Gamma_k(t_0) -\Gamma_k(0)\|_F \times\mathbbm{1}_{\hat{E}_\alpha^c}  + 2\sigma_1 \frac{4\sqrt{t_0}(\sqrt{d} + \alpha)}{\sigma_k - \sigma_{k+1}} \nonumber\\
    &\leq  & \sqrt{k}\|\Gamma_k(t_0) -\Gamma_k(0)\|_2 \times\mathbbm{1}_{\hat{E}_\alpha^c}  + 2\sigma_1 \frac{4\sqrt{t_0}(\sqrt{d} + \alpha)}{\sigma_k - \sigma_{k+1}} \nonumber\\
    &\stackrel{\textrm{Weyl's Ineq. } (\textrm{Lem. \ref{lemma_weyl})}}{\leq} &  \sqrt{k} \times \|B(t_0)\|_2 \times\mathbbm{1}_{\hat{E}_\alpha^c}  + 2\sigma_1 \frac{4\sqrt{t_0}(\sqrt{d} + \alpha)}{\sigma_k - \sigma_{k+1}} \nonumber\\
                &\leq & \sqrt{k} \times  2\sqrt{t_0}(\sqrt{d} + \alpha)  + 2\sigma_1 \frac{4\sqrt{t_0}(\sqrt{d} + \alpha)}{\sigma_k - \sigma_{k+1}} \label{eq_n177}\\
                &\leq & \sqrt{k} \times  2\sqrt{t_0}(\sqrt{d} + \alpha)  + 2\sigma_1 \frac{4\sqrt{t_0}(\sqrt{d} + \alpha)}{\sqrt{d} + \alpha} \label{eq_n178}\\
                 &\leq  & 2\sqrt{t_0} \sqrt{k} (\sqrt{d} + \alpha)  + 8\sigma_1 \sqrt{t_0} \nonumber\\
                 &= &  \sqrt{t_0}(2 \sqrt{k} (\sqrt{d} + \alpha)  + 8\sigma_1), \label{eq_u11}
        \end{eqnarray}
          \eqref{eq_n177} is by the definition of the event $\hat{E}^c_\alpha$ in \eqref{eq_n77}, and \eqref{eq_n178} holds by our assumption that $\sigma_k - \sigma_{k+1} \geq  4T \sqrt{d} + 2\alpha$ and since $T\geq 1$.
\end{proof}

\subsection{Proof of Theorem \ref{thm_rank_k_covariance_approximation_new}} \label{sec_proof_of_rank_k_covariance_approximation_new}

In this section, we compute the stochastic (Ito) derivative of the rank-k stochastic process $\Psi(t)$ (which was defined in \eqref{eq_n96}).
We then apply Ito's lemma to express the utility $\|\Psi(T) -\Psi(0)\|_F$ a stochastic integral.
 This stochastic integral is a function of the eigenvalue gaps of Dyson Brownian motion,
 and we apply the eigenvalue gap bounds of Corollary \ref{lemma_gaps_any_start} of Theorem \ref{thm:eigenvalue_gap} to bound this integral.

The following lemma computes the stochastic derivative of the projection matrices $u_i(t) u_i^\ast(t)$ onto the eigenvectors $u_i(t)$ of $\Psi(t)$, which we will then use to compute the derivative of $\Psi(t) = \sum_{i=1}^d \lambda_i(t) u_i(t) u_i^\ast(t)$ in the proof of theorem \ref{thm_rank_k_covariance_approximation_new}.
%

\begin{lemma}[\bf  It\^o derivative $\mathrm{d} (u_i(t) u_i^\ast(t))$]\label{Lemma_projection_differntial}
For all $t \in [0,T]$, 
\begin{eqnarray*}
    \mathrm{d}(u_i(t) u_i^\ast(t)) 
     &=& \sum_{j \neq i} \frac{1}{\gamma_i(t) - \gamma_j(t)}(u_i(t) u_j^\ast(t)\mathrm{d}B_{ij}(t) + u_j(t) u_i^\ast(t)\mathrm{d}B_{ij}^\ast(t)) \nonumber \\
    & &- \sum_{j\neq i} \frac{\mathrm{d}t}{(\gamma_i(t)- \gamma_j(t))^2} (u_i(t) u_i^\ast(t) - u_j(t)u_j^\ast(t)).
\end{eqnarray*}
\end{lemma}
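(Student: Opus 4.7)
The plan is to treat $u_i(t) u_i^\ast(t)$ as an outer product of the column process $u_i(t)$ with the row process $u_i^\ast(t)$, and apply It\^o's product rule
\[
\mathrm{d}(u_i u_i^\ast) \;=\; (\mathrm{d}u_i)\, u_i^\ast \;+\; u_i\,(\mathrm{d}u_i^\ast) \;+\; (\mathrm{d}u_i)(\mathrm{d}u_i^\ast).
\]
Into this I substitute the Dyson vector SDE \eqref{eq_DBM_eigenvectors} for $\mathrm{d}u_i$ (specialized to the complex case, $\beta=2$) and its conjugate transpose for $\mathrm{d}u_i^\ast$. I will then collect the resulting terms into three groups: (a) stochastic integrals driven by $\mathrm{d}B_{ij}$ and $\mathrm{d}B_{ij}^\ast$ that arise from the martingale parts of $\mathrm{d}u_i$ and $\mathrm{d}u_i^\ast$, (b) first-order drift terms proportional to $u_i u_i^\ast \,\mathrm{d}t$ that come from the drift parts $-\sum_{j\neq i}(\gamma_i-\gamma_j)^{-2} u_i \,\mathrm{d}t$ of $\mathrm{d}u_i$ and $\mathrm{d}u_i^\ast$, and (c) the It\^o correction $(\mathrm{d}u_i)(\mathrm{d}u_i^\ast)$.

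For (a), the stochastic integrals from $(\mathrm{d}u_i)u_i^\ast$ and $u_i(\mathrm{d}u_i^\ast)$ immediately give $\sum_{j\ne i}(\gamma_i-\gamma_j)^{-1} u_j u_i^\ast \,\mathrm{d}B_{ij}$ and $\sum_{j\ne i}(\gamma_i-\gamma_j)^{-1} u_i u_j^\ast \,\mathrm{d}B_{ij}^\ast$, which together are exactly the martingale part claimed in the lemma.

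For (c), writing $\mathrm{d}u_i = \sum_{j \neq i} A_j \,\mathrm{d}B_{ij} + (\textrm{drift})$ with $A_j := u_j/(\gamma_i - \gamma_j)$, I expand
\[
(\mathrm{d}u_i)(\mathrm{d}u_i^\ast) \;=\; \sum_{j,k \neq i} \frac{u_j u_k^\ast}{(\gamma_i - \gamma_j)(\gamma_i - \gamma_k)}\,\mathrm{d}B_{ij}\,\mathrm{d}B_{ik}^\ast,
\]
and use the structure of the complex Hermitian Brownian motion $B(t) = W(t)+W(t)^\ast$: for distinct unordered pairs $\{i,j\}$ the off-diagonal entries $B_{ij}$ are driven by independent complex Brownian motions, so the quadratic covariation $\mathrm{d}B_{ij}\,\mathrm{d}B_{ik}^\ast$ vanishes unless $j=k$. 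This collapses the double sum to $\sum_{j\neq i}(\gamma_i-\gamma_j)^{-2}\, u_j u_j^\ast\,\mathrm{d}t$ (up to the normalization constant encoded in the paper's convention for $W(t)$). Combining (b) and (c) and using the normalization that is consistent with the Dyson vector SDE \eqref{eq_DBM_eigenvectors} (which, in particular, preserves $\|u_i(t)\|_2=1$), the drift terms assemble into $-\sum_{j\neq i}(\gamma_i-\gamma_j)^{-2}(u_i u_i^\ast - u_j u_j^\ast)\,\mathrm{d}t$, matching the lemma.

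The main obstacle I anticipate is the bookkeeping in (c): one must carefully verify that the constant multiplying the Hermitian covariation $\mathrm{d}\langle B_{ij},B_{ij}^\ast\rangle$ implied by the paper's normalization of $W(t)$ combines with the first-order drift $-\beta/2\,\sum (\gamma_i-\gamma_j)^{-2} u_i \,\mathrm{d}t$ in exactly the right way to yield the coefficients $-1$ on $u_iu_i^\ast$ and $+1$ on $u_ju_j^\ast$ stated in the lemma, rather than, say, $-2$ and $+2$. A convenient consistency check is to compute $\mathrm{d}(u_i^\ast u_i) = \mathrm{d}\|u_i\|_2^2$ via the same It\^o product rule: this must vanish identically, which pins down the relation between the Hermitian covariation constant and the $-\beta/2$ coefficient in \eqref{eq_DBM_eigenvectors}, and then propagates to give the claimed coefficients above. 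Once this bookkeeping is settled, concatenating (a) with the combined drift from (b) and (c) completes the proof.
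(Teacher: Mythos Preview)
Your proposal is correct and essentially identical to the paper's proof: the paper likewise expands $\mathrm{d}(u_iu_i^\ast)=(u_i+\mathrm{d}u_i)(u_i+\mathrm{d}u_i)^\ast-u_iu_i^\ast$, substitutes the eigenvector SDE \eqref{eq_DBM_eigenvectors}, discards the mixed drift--martingale and drift--drift terms as higher order, and collapses the quadratic covariation $\sum_{j,\ell\neq i}(\gamma_i-\gamma_j)^{-1}(\gamma_i-\gamma_\ell)^{-1}u_ju_\ell^\ast\,\mathrm{d}B_{ij}\mathrm{d}B_{i\ell}^\ast$ to the diagonal $j=\ell$ using $\mathrm{d}B_{ij}\mathrm{d}B_{ij}^\ast=\mathrm{d}t$. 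The paper resolves your normalization concern by direct computation rather than via the $\mathrm{d}\|u_i\|^2=0$ consistency check, but the argument is otherwise the same.
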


\begin{proof}
To compute the stochastic Ito derivative $\mathrm{d}(u_i(t) u_i^\ast(t))$ we apply the Dyson Brownian motion equations \eqref{eq_DBM_eigenvectors}.
For any $t \in [0,T]$, we have
\begin{eqnarray}\label{eq_eq_derivative1}
  \mathrm{d}(u_i(t) u_i^\ast(t))
 & =& (u_i(t) + \mathrm{d}u_i(t))(u_i(t) + \mathrm{d}u_i(t))^\ast - u_i(t)u_i^\ast(t) \nonumber\\
  &\stackrel{\textrm{Eq. }  \eqref{eq_DBM_eigenvectors}}{=} &\left(u_i(t)+ \sum_{j \neq i} \frac{\mathrm{d}B_{ij}(t)}{\gamma_i(t) - \gamma_j(t)}u_j(t) - \sum_{j \neq i} \frac{\mathrm{d}t}{(\gamma_i(t)- \gamma_j(t))^2}u_i(t) \right) \nonumber\\
  & &  \times \quad \left(u_i(t) + \sum_{j \neq i} \frac{\mathrm{d}B_{ij}(t)}{\gamma_i(t) - \gamma_j(t)}u_j(t) - \sum_{j \neq i} \frac{\mathrm{d}t}{(\gamma_i(t)- \gamma_j(t))^2}u_i(t) \right)^\ast - u_i(t)u_i^\ast(t) \nonumber\\
  &= & u_i(t) u_i^\ast(t) + \sum_{j \neq i} \frac{1}{\gamma_i(t) - \gamma_j(t)}(u_i(t) u_j^\ast(t)\mathrm{d}B_{ij}(t) + u_j(t) u_i^\ast(t)\mathrm{d}B_{ij}^\ast(t))\nonumber\\
  &\qquad -&   \sum_{j\neq i} \frac{\mathrm{d}t}{(\gamma_i(t)- \gamma_j(t))^2} u_i(t) u_i^\ast(t) +  \sum_{j \neq i}  \sum_{\ell \neq i} \frac{\mathrm{d}B_{ij}(t)\mathrm{d}B_{i\ell}^\ast(t)}{(\gamma_i(t) - \gamma_j(t)) (\gamma_i(t) - \gamma_\ell(t))} u_j(t)u_{\ell}^\ast(t) \nonumber\\
  & \qquad -&  \varphi_1(t)\varphi_2^\ast(t)  -\varphi_2(t)\varphi_1^\ast(t) + -\varphi_2(t)\varphi_2^\ast(t) - u_i(t)u_i^\ast(t),
  \end{eqnarray}
  where we define $\varphi_1(t):= \sum_{j \neq i} \frac{\mathrm{d}B_{ij}(t)}{\gamma_i(t) - \gamma_j(t)}u_j(t)$ and $\varphi_2(t):=\sum_{j \neq i} \frac{\mathrm{d}t}{(\gamma_i(t)- \gamma_j(t))^2}u_i(t)$.
The terms $\varphi_1(t) \varphi_2^\ast(t)$ and $\varphi_2(t) \varphi_1^\ast(t)$ have differentials $O(\mathrm{d}B_{ij} \mathrm{d}t)$, and $\varphi_2(t) \varphi_2^\ast(t)$ has differentials $O(\mathrm{d}t^2)$; thus, all three terms vanish in the stochastic derivative by Ito's Lemma \ref{lemma_ito_lemma_new} (applied separately to the real and imaginary parts of these terms).
Therefore, \eqref{eq_eq_derivative1} implies that the stochastic derivative $\mathrm{d}(u_i(t) u_i^\ast(t))$ satisfies
  \begin{eqnarray}
  \mathrm{d}(u_i(t) u_i^\ast(t)) &\stackrel{\textrm{Eq. }  \eqref{eq_eq_derivative1}}{=}&  \sum_{j \neq i} \frac{1}{\gamma_i(t) - \gamma_j(t)}(u_i(t) u_j^\ast(t) \mathrm{d}B_{ij}(t) + u_j(t) u_i^\ast(t) \mathrm{d}B_{ij}^\ast(t))\nonumber \\
  & & \quad - \sum_{j\neq i} \frac{\mathrm{d}t}{(\gamma_i(t)- \gamma_j(t))^2} u_i(t) u_i^\ast(t)\nonumber\\
  & & \quad + \sum_{j \neq i}  \sum_{\ell \neq i} \frac{\mathrm{d}B_{ij}(t)\mathrm{d}B_{i\ell}^\ast(t)}{(\gamma_i(t) - \gamma_j(t)) (\gamma_i(t) - \gamma_\ell(t))} u_j(t)u_{\ell}^\ast(t)\nonumber\\
    &=&  \sum_{j \neq i} \frac{1}{\gamma_i(t) - \gamma_j(t)}(u_i(t) u_j^\ast(t)\mathrm{d}B_{ij}(t) + u_j(t) u_i^\ast(t)\mathrm{d}B_{ij}^\ast(t))\nonumber\\
  & & \quad - \sum_{j\neq i} \frac{\mathrm{d}t}{(\gamma_i(t)- \gamma_j(t))^2} u_i(t) u_i^\ast(t) + \sum_{j \neq i} \frac{\mathrm{d}B_{ij}(t) \mathrm{d}B_{ij}^\ast(t)}{(\gamma_i(t) - \gamma_j(t))^2} u_j(t)u_j^\ast(t) \label{eq_n146}\\
      &=&  \sum_{j \neq i} \frac{1}{\gamma_i(t) - \gamma_j(t)}(u_i(t) u_j^\ast(t)\mathrm{d}B_{ij}(t) + u_j(t) u_i^\ast(t)\mathrm{d}B_{ij}^\ast(t))\nonumber\\
  & & \quad - \sum_{j\neq i} \frac{\mathrm{d}t}{(\gamma_i(t)- \gamma_j(t))^2} u_i(t) u_i^\ast(t) + \sum_{j \neq i} \frac{\mathrm{d}t}{(\gamma_i(t) - \gamma_j(t))^2} u_j(t)u_j^\ast(t), \label{eq_eq_derivative2}
\end{eqnarray}
where \eqref{eq_n146} holds since all terms $\mathrm{d}B_{ij}(t)\mathrm{d}B_{i\ell}^\ast(t)$ with $j \neq \ell$ in the sum $$\sum_{j \neq i}  \sum_{\ell \neq i} \frac{\mathrm{d}B_{ij}(t)\mathrm{d}B_{i\ell}^\ast(t)}{(\gamma_i(t) - \gamma_j(t)) (\gamma_i(t) - \gamma_\ell(t))} u_j(t)u_{\ell}^\ast(t)$$  vanish by Ito's Lemma \ref{lemma_ito_lemma_new} since they have mean 0 and are $O(\mathrm{d}B_{ij}(t) \mathrm{d}B_{i\ell}^\ast(t))$; we are therefore left only with the terms $j = \ell$ in the sum which have differential terms $\mathrm{d}B_{ij}(t)\mathrm{d}B_{ij}^\ast(t)$ which have mean $\mathrm{d}t$ plus higher-order terms which vanish by Ito's Lemma \ref{lemma_ito_lemma_new}. 
Therefore  \eqref{eq_eq_derivative2} implies that
\begin{eqnarray*}
    \mathrm{d}(u_i(t) u_i^\ast(t)) 
     &=& \sum_{j \neq i} \frac{1}{\gamma_i(t) - \gamma_j(t)}(u_i(t) u_j^\ast(t)\mathrm{d}B_{ij}(t) + u_j(t) u_i^\ast(t)\mathrm{d}B_{ij}^\ast(t))\nonumber \\
    & & \quad - \sum_{j\neq i} \frac{\mathrm{d}t}{(\gamma_i(t)- \gamma_j(t))^2} (u_i(t) u_i^\ast(t) - u_j(t)u_j^\ast(t)).
\end{eqnarray*}
\end{proof}

\noindent
 In the proof of Theorem \ref{thm_rank_k_covariance_approximation_new}, we will show that
  $\mathrm{d} \Psi(t) = \sum_{i=1}^d \lambda_i(t) \mathrm{d}(u_i(t) u_i^\ast(t))) + (\mathrm{d}\lambda_i(t)) (u_i(t) u_i^\ast(t))$,
and use this expression to bound the stochastic integral $\| \Psi(T) - \Psi(t_0) \|_F^2 = \left \| \int_{t_0}^T \mathrm{d} \Psi(t) \right \|_F^2$.
Towards this end, we first apply Lemma \ref{Lemma_projection_differntial} to bound the component of this stochastic integral arising from the term $\sum_{i=1}^d \lambda_i(t) \mathrm{d}(u_i(t) u_i^\ast(t)))$ in the above expression for $\mathrm{d} \Psi(t)$.

\begin{lemma} \label{Lemma_integral}
For any $T\geq t_0 \geq 0$,

\begin{eqnarray*}
\mathbb{E}\left[\left\|\int_{t_0}^T \sum_{i=1}^d   \lambda_i(t) \mathrm{d}( u_i(t) u_i^\ast(t))\right\|_F^2 \times \mathbbm{1}_{\hat{E}_\alpha^c} \right] &\leq& 
 32\int_{t_0}^{T}   \mathbb{E}\left[ \sum_{i=1}^{d}  \sum_{j \neq i}  \frac{(\lambda_i(t) - \lambda_j(t))^2}{(\gamma_i(t) -  \gamma_j(t))^2} \times \mathbbm{1}_{\hat{E}_\alpha^c} \right]  \mathrm{d}t \nonumber\\ 
     & &  + \quad   2 T \int_{t_0}^{T}\mathbb{E}\left[\sum_{i=1}^{d}\left(\sum_{j\neq i} \frac{\lambda_i(t) - \lambda_j(t)}{(\gamma_i(t) -  \gamma_j(t))^2}\right)^2 \times \mathbbm{1}_{\hat{E}_\alpha^c}  \right]\mathrm{d}t.
\end{eqnarray*}

\end{lemma}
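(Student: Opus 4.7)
The plan is to begin by invoking Lemma \ref{Lemma_projection_differntial} to split the integrand $\sum_{i=1}^{d}\lambda_i(t)\,\mathrm{d}(u_i(t)u_i^\ast(t))$ into a martingale piece (the $\mathrm{d}B_{ij}$-terms) and a finite-variation drift piece (the $\mathrm{d}t$-terms). In the Brownian part, pairs $(i,j)$ and $(j,i)$ combine via $\mathrm{d}B_{ji}=\mathrm{d}B_{ij}^\ast$ together with an $i\leftrightarrow j$ relabeling, collapsing it to $\sum_{i\neq j}\frac{\lambda_i(t)-\lambda_j(t)}{\gamma_i(t)-\gamma_j(t)}\,u_i(t)u_j^\ast(t)\,\mathrm{d}B_{ij}(t)$; an analogous relabeling of the drift produces $-\sum_{i=1}^{d}u_i(t)u_i^\ast(t)\bigl(\sum_{j\neq i}\frac{\lambda_i-\lambda_j}{(\gamma_i-\gamma_j)^2}\bigr)\,\mathrm{d}t$. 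Applying $\|x+y\|_F^2\leq 2\|x\|_F^2+2\|y\|_F^2$ (multiplied by $\mathbbm{1}_{\hat{E}_\alpha^c}$) then reduces the lemma to separately bounding the contributions of these two pieces against the two sums on the right-hand side.

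The drift piece is handled pathwise. The rank-one projections $\{u_i(t)u_i^\ast(t)\}_{i=1}^{d}$ are Frobenius-orthonormal, so $\bigl\|\sum_i u_iu_i^\ast a_i(t)\bigr\|_F^2=\sum_i a_i(t)^2$ with $a_i(t):=\sum_{j\neq i}(\lambda_i-\lambda_j)/(\gamma_i-\gamma_j)^2$. A Cauchy--Schwarz step in time, $\bigl\|\int_{t_0}^T f(t)\,\mathrm{d}t\bigr\|_F^2\leq (T-t_0)\int_{t_0}^T\|f(t)\|_F^2\,\mathrm{d}t\leq T\int_{t_0}^T\|f(t)\|_F^2\,\mathrm{d}t$, then produces the $2T$ factor. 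Because this bound is pathwise and $\mathbbm{1}_{\hat{E}_\alpha^c}$ does not depend on $t$, the indicator passes through the time-integral without incident and lands directly on the desired form.

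The martingale piece is more delicate because $\mathbbm{1}_{\hat{E}_\alpha^c}$ is not $\mathcal{F}_t$-adapted, so It\^o's isometry cannot be applied directly. I will bypass this by introducing the stopping time $\tau:=\inf\{t\geq 0:\text{one of the three conditions defining }\hat{E}_\alpha\text{ has been triggered by time }t\}$, which satisfies $\{\tau>T\}=\hat{E}_\alpha^c$. Since on $\hat{E}_\alpha^c$ the integral up to time $T$ agrees with the integral stopped at $T\wedge\tau$, one obtains $\mathbb{E}\bigl[\bigl\|\int_{t_0}^T\mathrm{d}M\bigr\|_F^2\mathbbm{1}_{\hat{E}_\alpha^c}\bigr]\leq\mathbb{E}\bigl[\bigl\|\int_{t_0}^{T\wedge\tau}\mathrm{d}M\bigr\|_F^2\bigr]$. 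Optional stopping together with It\^o isometry then rewrites this as a time-integral of the quadratic variation. Writing $\mathrm{d}B_{ij}=\mathrm{d}X_{ij}+\mathfrak{i}\,\mathrm{d}Y_{ij}$ for $i<j$ (with $\mathrm{d}X_{ij}^2=\mathrm{d}Y_{ij}^2=2\,\mathrm{d}t$ arising from the doubling $B_{ij}=W_{ij}+\overline{W_{ji}}$), and noting that $P_{ij}:=u_iu_j^\ast+u_ju_i^\ast$ and $Q_{ij}:=\mathfrak{i}(u_iu_j^\ast-u_ju_i^\ast)$ form a Frobenius-orthogonal family with $\|P_{ij}\|_F^2=\|Q_{ij}\|_F^2=2$, the quadratic variation computes to a constant multiple of $\sum_{i\neq j}((\lambda_i-\lambda_j)/(\gamma_i-\gamma_j))^2$.

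The main obstacle will be converting the adapted indicator $\mathbbm{1}_{t\leq\tau}$ produced by the isometry into the desired $\mathbbm{1}_{\hat{E}_\alpha^c}$ on the right-hand side, since $\{t\leq\tau\}\supsetneq\hat{E}_\alpha^c$. On the excess set $\{t\leq\tau\}\setminus\hat{E}_\alpha^c$ the trajectory has not yet violated any of the three defining conditions of $\hat{E}_\alpha$ by time $t$ (so every consecutive eigengap, and hence every pairwise gap $\gamma_i(t)-\gamma_j(t)$, is bounded below by the polynomially small threshold built into $\tau$), but it will violate one of them before time $T$; on this set the integrand $\sum_{i\neq j}((\lambda_i-\lambda_j)/(\gamma_i-\gamma_j))^2$ is therefore uniformly bounded, and combining with the very small probability of $\hat{E}_\alpha$ provided by Lemma \ref{lemma_spectral_martingale} allows the excess contribution to be absorbed into the generous leading constant $32$ (compared to the natural constant $8$ coming from the raw split-and-isometry bound).
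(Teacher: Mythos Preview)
Your overall architecture matches the paper's: invoke Lemma~\ref{Lemma_projection_differntial}, combine the $(i,j)$ and $(j,i)$ contributions via $\mathrm{d}B_{ji}=\mathrm{d}B_{ij}^\ast$, split into a martingale and a drift piece, and handle the drift by Cauchy--Schwarz in time plus Frobenius-orthonormality of $\{u_iu_i^\ast\}$. Where you genuinely diverge from the paper is in the martingale term. The paper decomposes the complex increment into eight real pieces and applies the real-valued It\^o lemma to each, simply carrying $\mathbbm{1}_{\hat{E}_\alpha^c}$ through and asserting that the first-order stochastic-integral term has zero expectation ``because $\mathrm{d}B_{\ell r}(s)$ is independent of $Y(t)$ and $R(t)$ for $s\geq t$''; you instead recognize that $\mathbbm{1}_{\hat{E}_\alpha^c}$ is not $\mathcal{F}_t$-adapted and route through the stopping time $\tau$ and optional stopping. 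Your treatment is the more careful one on this point, and your direct quadratic-variation computation via the Hermitian pairs $P_{ij},Q_{ij}$ is cleaner than the eight-piece decomposition, yielding the sharper constant $8$ rather than $32$.

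The one place your sketch is thin is the absorption step. Converting $\mathbbm{1}_{\{t<\tau\}}$ back to $\mathbbm{1}_{\hat{E}_\alpha^c}$ leaves an excess supported on $\{t<\tau\leq T\}\subseteq\hat{E}_\alpha$, on which you correctly observe the integrand is polynomially bounded (since before time $\tau$ every gap exceeds the threshold built into $\hat{E}_\alpha$). Multiplying that polynomial bound by $\mathbb{P}(\hat{E}_\alpha)$ from Lemma~\ref{lemma_spectral_martingale} does give something tiny. But ``absorbing into the slack $32-8$'' requires a \emph{lower} bound on the right-hand side of the lemma, which the lemma statement alone does not supply. With the specific $\lambda_i$ of \eqref{eq_n45} one has $\sum_{i\neq j}(\lambda_i-\lambda_j)^2/(\gamma_i-\gamma_j)^2\geq k(k-1)$ pointwise, which suffices for $k\geq 2$; for $k=1$ you would need to invoke the ambient hypotheses (Assumption~\ref{assumption_gap}, the choice of $\alpha$ and $t_0$) to get $\gamma_1^2/(\gamma_1-\gamma_2)^2$ bounded below on $\hat{E}_\alpha^c$. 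So the absorption works in the context where the lemma is actually used, but you should either make that dependence explicit or, more cleanly, carry the excess as a separate additive $O(\mathbb{P}(\hat{E}_\alpha)\cdot\mathrm{poly})$ term and dispose of it alongside the other negligible terms in the proof of Theorem~\ref{thm_rank_k_covariance_approximation_new}.
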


\begin{proof}

\begin{eqnarray}
 \sum_{i=1}^d   \lambda_i(t) \mathrm{d}( u_i(t) u_i^\ast(t)) &\stackrel{\textrm{Lem. } \ref{Lemma_projection_differntial}}{=}&
 \sum_{i=1}^d     \sum_{j \neq i} \frac{\lambda_i(t)}{\gamma_i(t) - \gamma_j(t)}(u_i(t) u_j^\ast(t)\mathrm{d}B_{ij}(t) + u_j(t) u_i^\ast(t)\mathrm{d}B_{ij}^\ast(t)) \nonumber\\
& & \qquad \qquad  - \sum_{i=1}^d   \sum_{j\neq i} \frac{ \lambda_i(t) }{(\gamma_i(t)- \gamma_j(t))^2} (u_i(t) u_i^\ast(t) - u_j(t)u_j^\ast(t)) \mathrm{d}t \nonumber \\
 &=& \frac{1}{2} \sum_{i=1}^d     \sum_{j \neq i} \frac{\lambda_i(t)- \lambda_j(t)}{\gamma_i(t) - \gamma_j(t)}(u_i(t) u_j^\ast(t)\mathrm{d}B_{ij}(t) + u_j(t) u_i^\ast(t)\mathrm{d}B_{ij}^\ast(t)) \nonumber\\
& &  - \sum_{i=1}^d   \sum_{j\neq i} \frac{ \lambda_i(t) - \lambda_j(t) }{(\gamma_i(t)- \gamma_j(t))^2} u_i(t) u_i^\ast(t) \mathrm{d}t  \label{eq_n102},
\end{eqnarray}
where \eqref{eq_n102} holds since $B_{ij}(t) = B_{ji}(t)^\ast$ for all $i,j$ and all $t \geq 0$ because $B(t) = W(t) + W(t)^\ast$ is Hermitian.

Therefore,
\begin{eqnarray} \label{eq_int_1}
& & \left\|\int_{t_0}^T \sum_{i=1}^d   \lambda_i(t) \mathrm{d}( u_i(t) u_i^\ast(t))\right\|_F^2 \nonumber\\ 
 &\stackrel{\textrm{Eq. } \eqref{eq_n102}}{=}& \bigg \| \frac{1}{2} \int_{t_0}^T     \sum_{i=1}^d     \sum_{j \neq i} \frac{\lambda_i(t)- \lambda_j(t)}{\gamma_i(t) - \gamma_j(t)}(u_i(t) u_j^\ast(t)\mathrm{d}B_{ij}(t) + u_j(t) u_i^\ast(t)\mathrm{d}B_{ij}^\ast(t))   \nonumber\\
& & \qquad \qquad  -  \int_{t_0}^T \sum_{i=1}^d   \sum_{j\neq i} \frac{ \lambda_i(t) - \lambda_j(t) }{(\gamma_i(t)- \gamma_j(t))^2} u_i(t) u_i^\ast(t) \mathrm{d}t   \bigg\|_F^2\nonumber \\
&\stackrel{\textrm{Tri. Ineq. }}{\leq}&  2 \left \| \int_{t_0}^T     \sum_{i=1}^d     \sum_{j \neq i} \frac{\lambda_i(t)- \lambda_j(t)}{\gamma_i(t) - \gamma_j(t)}(u_i(t) u_j^\ast(t)\mathrm{d}B_{ij}(t) + u_j(t) u_i^\ast(t)\mathrm{d}B_{ij}^\ast(t))   \right \|_F^2 \nonumber\\
& & \qquad \qquad  +  \left \|  \int_{t_0}^T \sum_{i=1}^d   \sum_{j\neq i} \frac{ \lambda_i(t) - \lambda_j(t) }{(\gamma_i(t)- \gamma_j(t))^2} u_i(t) u_i^\ast(t) \mathrm{d}t   \right\|_F^2.
\end{eqnarray}

\noindent
The first term on the r.h.s. of \eqref{eq_int_1} (inside its Frobenius norm) is a ``diffusion'' term--that is, the integral has mean $0$ and Brownian motion differentials $\mathrm{d}B_{ij}(t)$ inside the integral.
The second term on the r.h.s. (inside its Frobenius norm) is a ``drift'' term-- that is, the integral has non-zero mean and deterministic differentials $\mathrm{d}t$ inside the integral.
We bound the diffusion and drift terms separately.

\paragraph{Bounding the diffusion term.}

We first use It\^o's Lemma (Lemma \ref{lemma_ito_lemma_new}) to bound the diffusion term in \eqref{eq_int_1}.
The idea is to apply Ito's Lemma separately to the real and complex parts of the integrand.
Define \begin{equation}\label{eq_n3}
    X(t):=  \int_{t_0}^{t}\sum_{i=1}^{d} \sum_{j \neq i} \frac{\lambda_i(t)- \lambda_j(t)}{\gamma_i(t) - \gamma_j(t)}(u_i(s) u_j^\ast(s)\mathrm{d}B_{ij}(s) + u_j(s) u_i^\ast(s)\mathrm{d}B_{ij}^\ast(s))
    \end{equation}
    for all $t \geq 0.$
Then
\begin{equation}\label{eq_n2}
\mathrm{d}X_{\ell r}(t) =  \sum_{i=1}^{d} \sum_{j \neq i} R_{(\ell r) (i j)}(t) \mathrm{d}B_{(ij)}(t) + Q_{(\ell r) (i j)}(t) \mathrm{d}B_{(ij)}^\ast(t) \qquad \qquad \forall \ell,r \in[d],\,\,  t \geq 0,
\end{equation}
where for all $t \geq0$ we define 
\begin{eqnarray}\label{eq_n5}
R_{(\ell r) (i j)}(t) :=& \left(\frac{\lambda_i(t)- \lambda_j(t)}{\gamma_i(t) - \gamma_j(t)}u_i(t) u_j^\ast(t) \right)[\ell, r] & \quad \forall \ell, r, i, j \in [d], \textrm{ s.t. } i \neq j,\nonumber\\
Q_{(\ell r) (i j)}(t) :=& \left(\frac{\lambda_i(t)- \lambda_j(t)}{\gamma_i(t) - \gamma_j(t)} u_j(t) u_i^\ast(t) \right)[\ell, r] & \quad \forall \ell, r, i, j \in [d], \textrm{ s.t. } i \neq j,\nonumber\\
R_{(\ell r) (i j)}(t) =& Q_{(\ell r) (i j)}(t) := 0 & \quad \forall \ell, r, i, j \in [d], \textrm{ s.t. } i = j,
\end{eqnarray}
and where we denote by either $H_{\ell r}$ or $H[\ell, r]$  the $(\ell, r)$'th entry of any matrix $H$.
Thus, by separating \eqref{eq_n2} into real and imaginary components, we have
\begin{eqnarray}\label{eq_t1}
\mathrm{d}X_{\ell r}(t) &\stackrel{\textrm{Eq. } \eqref{eq_n2}}{=}& \sum_{i=1}^{d} \sum_{j \neq i} R_{(\ell r) (i j)}(t) \mathrm{d}B_{(ij)}(t) + Q_{(\ell r) (i j)}(t) \mathrm{d}B_{(ij)}^\ast(t) \qquad \qquad \forall t \geq 0,\nonumber\\
&=& \sum_{i=1}^{d} \sum_{j \neq i} [\mathcal{R}(R_{(\ell r) (i j)}(t)) + \mathfrak{i}  \mathcal{I}(R_{(\ell r) (i j)}(t))] \times  [\mathcal{R}(\mathrm{d}B_{(ij)}(t)) + \mathfrak{i} \mathcal{I}(\mathrm{d}B_{(ij)}(t))]\nonumber\\
& & + \quad  \sum_{i=1}^{d} \sum_{j \neq i} [\mathcal{R}(Q_{(\ell r) (i j)}(t)) + \mathfrak{i}  \mathcal{I}(Q_{(\ell r) (i j)}(t))]\times  [\mathcal{R}(\mathrm{d}B_{(ij)}^\ast(t)) + \mathfrak{i} \mathcal{I}(\mathrm{d}B_{(ij)}^\ast(t))]\nonumber\\
&=& \sum_{i=1}^{d} \sum_{j \neq i} \mathcal{R}(R_{(\ell r) (i j)}(t))\mathcal{R}(\mathrm{d}B_{(ij)}(t))  + \mathfrak{i}  \mathcal{I}(R_{(\ell r) (i j)}(t)) \mathcal{R}(\mathrm{d}B_{(ij)}(t))\nonumber\\
& & \quad + \mathfrak{i}\mathcal{R}(R_{(\ell r) (i j)}(t)) \mathcal{I}(\mathrm{d}B_{(ij)}(t)) - \mathcal{I}(R_{(\ell r) (i j)}(t)) \mathcal{I}(\mathrm{d}B_{(ij)}(t))\nonumber\\
& & + \quad \sum_{i=1}^{d} \sum_{j \neq i} \mathcal{R}(Q_{(\ell r) (i j)}(t))\mathcal{R}(\mathrm{d}B_{(ij)}^\ast(t))  + \mathfrak{i}  \mathcal{I}(Q_{(\ell r) (i j)}(t)) \mathcal{R}(\mathrm{d}B_{(ij)}^\ast(t))\nonumber\\
& & + \quad \mathfrak{i}\mathcal{R}(Q_{(\ell r) (i j)}(t)) \mathcal{I}(\mathrm{d}B_{(ij)}^\ast(t)) - \mathcal{I}(Q_{(\ell r) (i j)}(t)) \mathcal{I}(\mathrm{d}B_{(ij)}^\ast(t)).
\end{eqnarray}

\noindent
Our goal is to bound $\mathbb{E}[\|X(T) - X(t_0)\|_F^2]$. 
Towards this end, let $f: \mathbb{R}^{d \times d}: \rightarrow \mathbb{R}$ be the function which takes as input a $d \times d$ matrix and outputs the square of its Frobenius norm: $f(Y):= \|Y\|_F^2 = \sum_{i=1}^d \sum_{j=1}^d Y_{ij}^2$ for every $Y \in \mathbb{R}^{d\times d}$.
Then
\begin{equation}\label{eq_int_5}
\frac{\partial^2 }{\partial Y_{ij} \partial Y_{\alpha \beta}}f(Y) =
\begin{cases} 
      2 & \textrm{if }  \, \, \, (i,j)= (\alpha, \beta), \\
      0 & \textrm{otherwise}.
   \end{cases}
   \end{equation}

\noindent
Then, denoting by $e_{\ell, r}$ the matrix with $1$ in the $(\ell,r)$'th entry and zeros everywhere else, by \eqref{eq_t1} we have
\begin{eqnarray}\label{eq_t2}
    \|X(T) - X(t_0)  \|_F^2 &=&  \left\|\sum_{\ell, r} \int_{t_0}^T \mathrm{d}X_{\ell r}(t)e_{\ell, r}  \right\|_F^2 \nonumber\\
    &\stackrel{\textrm{Eq. }  \eqref{eq_t1}}{\leq}&  \left\| \int_{t_0}^T \sum_{\ell, r} \sum_{i=1}^{d} \sum_{j \neq i}\mathcal{R}(R_{(\ell r) (i j)}(t))\mathcal{R}(\mathrm{d}B_{(ij)}(t)) e_{\ell, r} \right\|_F^2  \nonumber\\
& & + \quad   \left\|\int_{t_0}^T \sum_{\ell, r} \sum_{i=1}^{d} \sum_{j \neq i} \mathcal{I}(R_{(\ell r) (i j)}(t)) \mathcal{R}(\mathrm{d}B_{(ij)}(t)) e_{\ell, r} \right\|_F^2\nonumber\\
& & + \quad \left\|\int_{t_0}^T \sum_{\ell, r} \sum_{i=1}^{d} \sum_{j \neq i} \mathcal{R}(R_{(\ell r) (i j)}(t)) \mathcal{I}(\mathrm{d}B_{(ij)}(t))e_{\ell, r} \right\|_F^2 \nonumber\\
& & + \quad \left\|\int_{t_0}^T \sum_{\ell, r} \sum_{i=1}^{d} \sum_{j \neq i} \mathcal{I}(R_{(\ell r) (i j)}(t)) \mathcal{I}(\mathrm{d}B_{(ij)}(t)) e_{\ell, r} \right\|_F^2\nonumber\\
& & + \quad  \left\|\int_{t_0}^T \sum_{\ell, r} \sum_{i=1}^{d} \sum_{j \neq i}\mathcal{R}(Q_{(\ell r) (i j)}(t))\mathcal{R}(\mathrm{d}B_{(ij)}^\ast(t))
e_{\ell, r} \right\|_F^2 \nonumber\\
& & + \quad \left\|\int_{t_0}^T \sum_{\ell, r} \sum_{i=1}^{d} \sum_{j \neq i} \mathcal{I}(Q_{(\ell r) (i j)}(t)) \mathcal{R}(\mathrm{d}B_{(ij)}^\ast(t))e_{\ell, r} \right\|_F^2\nonumber\\
& &  + \quad \left\|\int_{t_0}^T \sum_{\ell, r =1}^d \sum_{i=1}^{d} \sum_{j \neq i}\mathcal{R}(Q_{(\ell r) (i j)}(t)) \mathcal{I}(\mathrm{d}B_{(ij)}^\ast(t))e_{\ell, r} \right\|_F^2 \nonumber\\
& & + \quad \left\| \int_{t_0}^T \sum_{\ell, r} \sum_{i=1}^{d} \sum_{j \neq i}\mathcal{I}(Q_{(\ell r) (i j)}(t)) \mathcal{I}(\mathrm{d}B_{(ij)}^\ast(t))e_{\ell, r} \right\|_F^2.
\end{eqnarray}
Since all of the terms on the r.h.s. of \eqref{eq_t2} are entirely real or imaginary for all $t\geq 0$, we can apply  It\^o's Lemma (Lemma \ref{lemma_ito_lemma_new}) individually to each of these terms.
The proof to bound each of these eight terms is identical (if we replace $\mathcal{R}$ with $\mathcal{I}$, $R$ with $Q$, and/or $\mathrm{d}B_{(ij)}(t)$ with $\mathrm{d}B_{(ij)}^\ast(t)$)), since $\mathcal{R}(\mathrm{d}B_{(ij)}(t))$, $\mathcal{R}(\mathrm{d}B_{(ij)}^\ast(t)),$  $\mathcal{I}(\mathrm{d}B_{(ij)}(t))$, $\mathcal{I}(\mathrm{d}B_{(ij)}^\ast(t))$ are equal in distribution.
Thus, without loss of generality, we only present the proof of how to bound the term\\ $\left\| \int_{t_0}^T \sum_{\ell, r} \sum_{i=1}^{d} \sum_{j \neq i}\mathcal{R}(R_{(\ell r) (i j)}(t))\mathcal{R}(\mathrm{d}B_{(ij)}(t))e_{\ell, r}  \right\|_F^2$.
Towards this end, define
\begin{equation}\label{eq_n4}
    Y(t):=  \int_{t_0}^t \sum_{\ell, r} \sum_{i=1}^{d} \sum_{j \neq i}\mathcal{R}(R_{(\ell r) (i j)}(t))\mathcal{R}(\mathrm{d}B_{(ij)}(t)) e_{\ell, r}\qquad \forall t\geq 0.
    \end{equation}
 Then we have,
\begin{eqnarray}\label{eq_int_2b}
       & &  \mathbb{E}\left [\left\| \int_{t_0}^T \sum_{\ell, r} \sum_{i=1}^{d} \sum_{j \neq i}\mathcal{R}(R_{(\ell r) (i j)}(t))\mathcal{R}(\mathrm{d}B_{(ij)}(t)) e_{\ell, r} \right\|_F^2  \times \mathbbm{1}_{\hat{E}_\alpha^c}\right]  \nonumber\\
   & & \stackrel{\textrm{Eq. } \eqref{eq_n4}}{=} \mathbb{E}[(f(Y(T)) - f(Y(t_0)))  \times \mathbbm{1}_{\hat{E}_\alpha^c}] \nonumber\\
     & & \stackrel{\textrm{It\^o's Lem. (Lem. \ref{lemma_ito_lemma_new})}}{=} \mathbb{E}\left [\frac{1}{2} \int_{t_0}^t \sum_{\ell, r} \sum_{\alpha, \beta} \left(\frac{\partial}{ \partial Y_{\alpha \beta}} f(Y(t))\right) \mathcal{R}(R_{(\ell r) (\alpha \beta)}(t)) \mathcal{R}(\mathrm{d}B_{\ell r}(t))  \times \mathbbm{1}_{\hat{E}_\alpha^c} \right] \nonumber\\
     & & + \quad \mathbb{E}\left [\frac{1}{2} \int_{t_0}^t \sum_{\ell, r} \sum_{i,j} \sum_{\alpha, \beta} \left(\frac{\partial^2}{\partial Y_{ij} \partial Y_{\alpha \beta}} f(Y(t))\right) \mathcal{R}(R_{(\ell r) (i j)}(t))  \mathcal{R}(R_{(\ell r) (\alpha \beta)}(t)) \mathrm{d}t  \times \mathbbm{1}_{\hat{E}_\alpha^c} \right] \nonumber\\
        &=& 0 \, \, +  \, \, \mathbb{E}\left [\frac{1}{2} \int_{t_0}^t \sum_{\ell, r} \sum_{i,j} \sum_{\alpha, \beta} \left(\frac{\partial^2}{\partial Y_{ij} \partial Y_{\alpha \beta}} f(Y(t))\right) \mathcal{R}(R_{(\ell r) (i j)}(t)) \mathcal{R}(R_{(\ell r) (\alpha \beta)}(t)) \mathrm{d}t  \times \mathbbm{1}_{\hat{E}_\alpha^c}\right], \qquad
        \end{eqnarray}
        where \eqref{eq_int_2b} holds since $$\mathbb{E}\left[\int_{t_0}^T  \left(\frac{\partial}{ \partial Y_{\alpha \beta}} f(Y(t))\right) \mathcal{R}(R_{(\ell r) (\alpha \beta)}(t)) \mathcal{R}(\mathrm{d}B_{\ell r}(t))  \times \mathbbm{1}_{\hat{E}_\alpha^c} \right] = 0,$$
        for each $\ell, r, \alpha, \beta \in [d]$ because $\mathrm{d}B_{\ell r}(s)$ is independent of both $Y(t)$ and $R(t)$ for all $s \geq t$ and the Brownian motion increments $\mathrm{d}B_{\alpha \beta}(s)$ satisfy $\mathbb{E}[\int_t^{\tau} \mathrm{d}B_{\alpha \beta}(s)] = \mathbb{E}[B_{\alpha \beta}(\tau) - B_{\alpha \beta}(t)]= 0$ for any $\tau \geq t$.
        Thus, plugging \eqref{eq_int_5} into \eqref{eq_int_2b}, we have 
   \begin{eqnarray}
        & & \!\!\!\!\!\!\!\! \! \!\!\!\!\!\!\!\!\!\!\!\!\!\!\!\! \! \!\!\!\!\!\!\!\!\!\!\!\!\!\!\!\! \! \!\!\!\!\!\!\!\!\!\!\!\!\!\!\!\mathbb{E}\left [\left\| \int_{t_0}^T\sum_{\ell, r} \sum_{i=1}^{d} \sum_{j \neq i}\mathcal{R}(R_{(\ell r) (i j)}(t))\mathcal{R}(\mathrm{d}B_{(ij)}(t)) e_{\ell, r}\right\|_F^2 \times \mathbbm{1}_{\hat{E}_\alpha^c} \right]  \nonumber\\
   &\stackrel{\textrm{Eq. } \eqref{eq_int_5}, \eqref{eq_int_2b}}{=}& \mathbb{E}\left [\frac{1}{2} \int_{t_0}^t \sum_{\ell, r} \sum_{i=1}^{d} \sum_{j \neq i} 2  [\mathcal{R}(R_{(\ell r) (i j)}(t))]^2 \mathrm{d}t  \times \mathbbm{1}_{\hat{E}_\alpha^c} \right]\nonumber\\
                   &\stackrel{\textrm{Eq. } \eqref{eq_n5}}{=}& 
                   \mathbb{E}\left [ \int_{t_0}^t \sum_{\ell, r} \sum_{i=1}^{d} \sum_{j \neq i}\left(\left( \frac{\lambda_i(t)- \lambda_j(t)}{\gamma_i(t) - \gamma_j(t)} \mathcal{R}\left(u_i(t) u_j^\ast(t)\right)\right)[\ell, r]\right)^2 \mathrm{d}t  \times \mathbbm{1}_{\hat{E}_\alpha^c} \right]\nonumber\\
            &=& \mathbb{E}\left [ \int_{t_0}^t   \left \|\sum_{i=1}^{d} \sum_{j \neq i}\frac{\lambda_i(t)- \lambda_j(t)}{\gamma_i(t) - \gamma_j(t)} \mathcal{R}\left(u_i(t) u_j^\ast(t)\right)\right\|_F^2\mathrm{d}t  \times \mathbbm{1}_{\hat{E}_\alpha^c} \right]\label{eq_n168}\\
            &=& \mathbb{E}\left [ \int_{t_0}^t   \left \|\mathcal{R}\left(\sum_{i=1}^{d} \sum_{j \neq i}\frac{\lambda_i(t)- \lambda_j(t)}{\gamma_i(t) - \gamma_j(t)} u_i(t) u_j^\ast(t)\right)\right\|_F^2\mathrm{d}t  \times \mathbbm{1}_{\hat{E}_\alpha^c} \right]\nonumber\\
            &\leq& \mathbb{E}\left [ \int_{t_0}^t   \left \|\sum_{i=1}^{d} \sum_{j \neq i}\frac{\lambda_i(t)- \lambda_j(t)}{\gamma_i(t) - \gamma_j(t)} u_i(t) u_j^\ast(t)\right\|_F^2\mathrm{d}t  \times \mathbbm{1}_{\hat{E}_\alpha^c} \right] \label{eq_n169}\\
            &=& \mathbb{E}\left [ \int_{t_0}^t   \sum_{i=1}^{d} \sum_{j > i}\left \|\frac{\lambda_i(t)- \lambda_j(t)}{\gamma_i(t) - \gamma_j(t)} (u_i(t) u_j^\ast(t) + u_j(t) u_i^\ast(t))\right\|_F^2\mathrm{d}t  \times \mathbbm{1}_{\hat{E}_\alpha^c} \right] \label{eq_n170}\\
            &=& \mathbb{E}\left [ \int_{t_0}^t   \sum_{i=1}^{d} \sum_{j > i}2\left(\frac{\lambda_i(t)- \lambda_j(t)}{\gamma_i(t) - \gamma_j(t)}\right)^2\mathrm{d}t  \times \mathbbm{1}_{\hat{E}_\alpha^c} \right] \label{eq_n171}\\
                             &=& \int_{t_0}^{T}   \mathbb{E}\left[ \sum_{i=1}^{d}  \sum_{j \neq i}  \frac{(\lambda_i(t) - \lambda_j(t))^2}{(\gamma_i(t) - \gamma_j(t))^2} \mathrm{d}t \times \mathbbm{1}_{\hat{E}_\alpha^c} \right]. \label{eq_int_2b2}
          \end{eqnarray}
\noindent
\eqref{eq_n168} holds by the definition of the Frobenius norm.
\eqref{eq_n169} holds since for any matrix $A$ we have $\|\mathcal{R}(A)\|_F \leq \|A\|_F$.
 \eqref{eq_n170} holds because $\langle u_i(t) u_j^\ast(t),  u_\ell(t) u_h^\ast(t) \rangle = 0$ for all $(i,j) \neq (\ell,h)$ and all $t \geq 0$, since $u_1(t),\ldots, u_d(t)$ are (complex) orthogonal as they are eigenvectors of a Hermitian matrix.
 \eqref{eq_n171} holds because  $\|u_i(t) u_j^\ast(t) + u_j(t) u_i^\ast(t)\|_F^2 = 2$ for all $i,j$ and all $t \geq 0$.

Thus, plugging \eqref{eq_int_2b2} into \eqref{eq_t2} (and recalling that, from the discussion after \eqref{eq_t2}, the bound we derive in \eqref{eq_int_2b2} holds without loss of generality for all eight terms in \eqref{eq_t2}), we have that
\begin{eqnarray}\label{eq_int_2}
    & &  \!\!\!\!\!\!\!\!\!\!\!\!\!\!\!\!\!\!\! \! \!\!\!\!\!\!\!\!\!\!\!\!\!\!\!\! \! \!\!\!\!\!\!\!\!\!\!\!\!\!\!\!\! \! \!\!\!\!\!\!\!\!\!\!\!\!\!\!\! \mathbb{E}\left[ \left\|\int_{t_0}^{t}\sum_{i=1}^{d} \sum_{j \neq i} \frac{\lambda_i(t)- \lambda_j(t)}{\gamma_i(t) - \gamma_j(t)}(u_i(s) u_j^\ast(s)\mathrm{d}B_{ij}(s) + u_j(s) u_i^\ast(s)\mathrm{d}B_{ij}^\ast(s))\right\|_F^2 \times \mathbbm{1}_{\hat{E}_\alpha^c} \right]\nonumber\\
    &\stackrel{\textrm{Eq. } \eqref{eq_n3}}{=}&\mathbb{E}\left[ \|X(T) - X(t_0)\|_F^2 \right ]\nonumber\\
    &\stackrel{\textrm{Eq. } \eqref{eq_t2}, \eqref{eq_int_2b2}}{\leq}& 32\int_{t_0}^{T}   \mathbb{E}\left[ \sum_{i=1}^{d}  \sum_{j \neq i}  \frac{(\lambda_i(t) - \lambda_j(t))^2}{(\gamma_i(t) - \gamma_j(t))^2} \mathrm{d}t \times \mathbbm{1}_{\hat{E}_\alpha^c} \right].
    \end{eqnarray}

\paragraph{Bounding the drift term.}

To bound the drift term in \eqref{eq_int_1}, we use the Cauchy-Schwarz inequality:
\begin{eqnarray}
     & & \!\!\!\!\!\!\!\!\!\!\!\!\! \! \!  \! \! \!\!\!\!\!\!\!\!\!\!\!\!\!\!\!\! \! \! \! \! \!\!\!\!\!\!\!\!\!\!\!\!\!\!\!\! \! \!\!\!\!\!\!\!\!\!\!\!\!\!\!\!\! \! \!\!\!\!\!\!\!\!\!\!\!\!\!\!\! \left\|\int_{t_0}^{T}\sum_{i=1}^{d} \sum_{j\neq i}  \frac{\lambda_i(t) - \lambda_j(t)}{(\gamma_i(t) - \gamma_j(t))^2} u_i(t) u_i^\ast(t) \mathrm{d}t \right\|_F^2\nonumber \\
     & =  &    \left\|\int_{t_0}^{T}\sum_{i=1}^{d} \sum_{j\neq i}  \frac{\lambda_i(t) - \lambda_j(t)}{(\gamma_i(t) - \gamma_j(t))^2} u_i(t) u_i^\ast(t) \times 1 \mathrm{d}t \right\|_F^2 \nonumber\\ 
     &     \stackrel{\textrm{Cauchy-Schwarz Ineq.}}{\leq}&     \int_{t_0}^{T}\left\|\sum_{i=1}^{d} \sum_{j\neq i}  \frac{\lambda_i(t) - \lambda_j(t)}{(\gamma_i(t) - \gamma_j(t))^2} u_i(t) u_i^\ast(t)\right\|_F^2 \mathrm{d}t\times \int_{t_0}^{T} 1^2 \mathrm{d}t \label{eq_n127}\\
     &=&   T \int_{t_0}^{T}\left\|\sum_{i=1}^{d} \sum_{j\neq i} \frac{\lambda_i(t) - \lambda_j(t)}{(\gamma_i(t) - \gamma_j(t))^2} u_i(t) u_i^\ast(t) \right\|_F^2 \mathrm{d}t \nonumber \\
          &=&   T \int_{t_0}^{T}\sum_{i=1}^{d} \left\|\sum_{j\neq i} \frac{\lambda_i(t) - \lambda_j(t)}{(\gamma_i(t) - \gamma_j(t))^2} u_i(t) u_i^\ast(t) \right\|_F^2 \mathrm{d}t \label{eq_n128}\\
                    &=&   T \int_{t_0}^{T}\sum_{i=1}^{d} \left\|\left(\sum_{j\neq i} \frac{\lambda_i(t) - \lambda_j(t)}{(\gamma_i(t) - \gamma_j(t))^2}\right) u_i(t) u_i^\ast(t) \right\|_F^2 \mathrm{d}t \nonumber\\
                                        &=&   T \int_{t_0}^{T}\sum_{i=1}^{d}\left(\sum_{j\neq i} \frac{\lambda_i(t) - \lambda_j(t)}{(\gamma_i(t) - \gamma_j(t))^2}\right)^2 \left\| u_i(t) u_i^\ast(t) \right\|_F^2 \mathrm{d}t \nonumber\\
                                                                                &=&   T \int_{t_0}^{T}\sum_{i=1}^{d}\left(\sum_{j\neq i} \frac{\lambda_i(t) - \lambda_j(t)}{(\gamma_i(t) - \gamma_j(t))^2}\right)^2 \times 1 \mathrm{d}t, \label{eq_int_3}
\end{eqnarray}
where \eqref{eq_n127} is by the Cauchy-Schwarz Inequality for integrals (applied to each entry of the matrix-valued integral).
\eqref{eq_n128} holds since $\langle u_i(t) u_i^\ast(t) , u_j(t) u_j^\ast(t) \rangle = 0$ for all $i \neq j$.
\eqref{eq_int_3} holds since $\| u_i(t) u_i^\ast(t) \|_F^2=1$ for all $t \geq 0$.
Therefore, taking the expectation on both sides of \eqref{eq_int_1}, and plugging \eqref{eq_int_2} and  \eqref{eq_int_3} into  \eqref{eq_int_1}, we have
\begin{eqnarray} \label{eq_int_4}
\mathbb{E}\left[\left\|\int_{t_0}^T \sum_{i=1}^d   \lambda_i(t) \mathrm{d}( u_i(t) u_i^\ast(t))\right\|_F^2   \times \mathbbm{1}_{\hat{E}_\alpha^c} \right] &\leq&  32\int_{t_0}^{T}   \mathbb{E}\left[ \sum_{i=1}^{d}  \sum_{j \neq i}  \frac{(\lambda_i(t) - \lambda_j(t))^2}{(\gamma_i(t) - \gamma_j(t))^2}  \times \mathbbm{1}_{\hat{E}_\alpha^c} \right]\mathrm{d}t \nonumber\\
  &+&   T \int_{t_0}^{T}\mathbb{E}\left[\sum_{i=1}^{d}\left(\sum_{j\neq i} \frac{\lambda_i(t) - \lambda_j(t)}{(\gamma_i(t) - \gamma_j(t))^2}\right)^2 \times \mathbbm{1}_{\hat{E}_\alpha^c} \right] \mathrm{d}t .
\end{eqnarray}
\end{proof}

\noindent
To prove Theorem \ref{thm_rank_k_covariance_approximation_new}, we apply Ito’s Lemma (Lemma \ref{lemma_ito_lemma_new}) to the function $f(X) := \| X \|_F^2$ to obtain an expression for the utility $\| \Psi(T) - \Psi(0) \|_F^2$ as a stochastic integral.
 We then plug in Lemmas \ref{Lemma_projection_differntial} and \ref{Lemma_integral} into this expression, and apply our high-probability bounds on the eigenvalue gaps of Dyson Brownian motion (Corollary \ref{lemma_gaps_any_start} of Theorem \ref{thm:eigenvalue_gap}, which we prove in Section \ref{section_GUE_proof}) to bound the expected utility.

\begin{proof}[Proof of theorem \ref{thm_rank_k_covariance_approximation_new}]
In the following, we set $t_0 := \frac{1}{(kd)^{10} + k\alpha^2 +\sigma_1^2}$.
We first compute the Ito derivative of $\Psi(t) := \sum_{i=1}^d \lambda_i(t) u_i(t) u_i^\ast(t)$:
\begin{eqnarray}
    \mathrm{d} \Psi(t) &=& \sum_{i=1}^d(\lambda_i(t) + \mathrm{d}\lambda_i(t)) (u_i(t) u_i^\ast(t) + \mathrm{d}(u_i(t) u_i^\ast(t))) - \lambda_i(t) u_i(t) u_i^\ast(t) \label{eq_n162} \\
    &=&\sum_{i=1}^d \lambda_i(t) \mathrm{d}(u_i(t) u_i^\ast(t))) + (\mathrm{d}\lambda_i(t)) (u_i(t) u_i^\ast(t)) + \mathrm{d}\lambda_i(t) \mathrm{d}( u_i(t) u_i^\ast(t)), \label{eq_ito_derivative_b}
\end{eqnarray}
where \eqref{eq_n162} holds due to the product rule of stochastic calculus.

From Lemma \ref{Lemma_projection_differntial}, we have that, for all $t \in [0,T]$, 
\begin{eqnarray}\label{eq_duu}
\mathrm{d}(u_i(t) u_i^\ast(t))&=&        \sum_{j \neq i} \frac{1}{\gamma_i(t) - \gamma_j(t)}(u_i(t) u_j^\ast(t)\mathrm{d}B_{ij}(t) + u_j(t) u_i^\ast(t)\mathrm{d}B_{ij}^\ast(t)) \nonumber \\ & & + \quad \sum_{j \neq i} \frac{\mathrm{d}t}{(\gamma_i(t)- \gamma_j(t))^2} (u_i(t) u_i^\ast(t) - u_j(t)u_j^\ast(t)).
\end{eqnarray}
By definition \eqref{eq_n45}, for all $i\leq k$,  $\lambda_i(t) = \gamma_i(t)$ for all $t\geq 0$.
Thus, for all $i \leq k$, we have that 
\begin{eqnarray}
& & \!\!\!\!\!\!\!\! \! \!\! \! \!\!\!\!\!\!\!\!\!\!\!\!\!\!\!\! \! \! \mathrm{d}\lambda_i(t) \mathrm{d}( u_i(t) u_i^\ast(t))\nonumber\\
&\stackrel{\textrm{Eq. } \eqref{eq_n45}}{=}& \mathrm{d}\gamma_i(t) \mathrm{d}( u_i(t) u_i^\ast(t)) \label{eq_dlambda_du.1} \\
&\stackrel{\textrm{Eq. \eqref{eq_duu}}}{=}& \mathrm{d}\gamma_i(t) \bigg[\sum_{j \neq i} \frac{1}{\gamma_i(t) - \gamma_j(t)}(u_i(t) u_j^\ast(t)\mathrm{d}B_{ij}(t) + u_j(t) u_i^\ast(t)\mathrm{d}B_{ij}^\ast(t))\nonumber\\
& & \quad \quad  \quad + \quad \sum_{j \neq i} \frac{\mathrm{d}t}{(\gamma_i(t)- \gamma_j(t))^2} (u_i(t) u_i^\ast(t) - u_j(t)u_j^\ast(t))\bigg]\nonumber\\
&=& \sum_{j \neq i} \frac{\mathrm{d}\gamma_i(t)}{\gamma_i(t) - \gamma_j(t)}(u_i(t) u_j^\ast(t)\mathrm{d}B_{ij}(t) + u_j(t) u_i^\ast(t) \mathrm{d}B_{ij}^\ast(t) )\nonumber\\
& &  + \quad \sum_{j \neq i} \frac{\mathrm{d}\gamma_i(t)\mathrm{d}t}{(\gamma_i(t)- \gamma_j(t))^2} (u_i(t) u_i^\ast(t) - u_j(t)u_j^\ast(t))\nonumber\\
&\stackrel{\textrm{Eq. } \eqref{eq_DBM_eigenvalues}}{=}& \sum_{j \neq i} \left(\mathrm{d}B_{i i}(t) +   \sum_{j \neq i} \frac{1}{\gamma_i(t) - \gamma_j(t)} \mathrm{d}t \right)\frac{1}{\gamma_i(t) - \gamma_j(t)}(u_i(t) u_j^\ast(t)\mathrm{d}B_{ij}(t) + u_j(t) u_i^\ast(t)\mathrm{d}B_{ij}^\ast(t))\nonumber\\
& & + \quad \sum_{j \neq i} \left(\mathrm{d}B_{i i}(t) +   \sum_{j \neq i} \frac{1}{\gamma_i(t) - \gamma_j(t)} \mathrm{d}t\right) \frac{\mathrm{d}t}{(\gamma_i(t)- \gamma_j(t))^2} (u_i(t) u_i^\ast(t) - u_j(t)u_j^\ast(t))\nonumber\\
&=& \sum_{j \neq i} \left(\mathrm{d}B_{i i}(t)\mathrm{d}B_{ij}(t) +  2 \sum_{j \neq i} \frac{1}{\gamma_i(t) - \gamma_j(t)} \mathrm{d}t \, \mathrm{d}B_{ij}(t) \right)\frac{1}{\gamma_i(t) - \gamma_j(t)}u_i(t) u_j^\ast(t)\nonumber\\
& & + \quad \sum_{j \neq i} \left(\mathrm{d}B_{i i}(t)\mathrm{d}B_{ij}^\ast(t) +  2 \sum_{j \neq i} \frac{1}{\gamma_i(t) - \gamma_j(t)} \mathrm{d}t \, \mathrm{d}B_{ij}^\ast(t) \right)\frac{1}{\gamma_i(t) - \gamma_j(t)} u_j(t) u_i^\ast(t)\nonumber\\
& & + \quad  \sum_{j \neq i} \left(\mathrm{d}B_{i i}(t) \mathrm{d}t +  2 \sum_{j \neq i} \frac{1}{\gamma_i(t) - \gamma_j(t)} (\mathrm{d}t)^2\right) \frac{1}{(\gamma_i(t)- \gamma_j(t))^2} (u_i(t) u_i^\ast(t) - u_j(t)u_j^\ast(t))\nonumber\\
& =  & 0 \label{eq_dlambda_du},
\end{eqnarray}
 where \eqref{eq_dlambda_du} holds since, for all $i,j \in [d]$, the Ito differentials $\mathrm{d}B_{i i}(t)\mathrm{d}B_{ij}(t)$ and $\mathrm{d}B_{i i}(t)\mathrm{d}B_{ij}^\ast(t)$ vanish because $\mathrm{d}B_{i i}(t)$ and $\mathrm{d}B_{ij}(t)$ are uncorrelated with mean zero, and the Ito differentials $\mathrm{d}B_{i i}(t) \mathrm{d}t$ and $(\mathrm{d}t)^2$ vanish because they are higher-order terms.
Therefore, plugging in \eqref{eq_dlambda_du} into \eqref{eq_ito_derivative_b}, we have that
\begin{eqnarray}
    \mathrm{d} \Psi(t)  &\stackrel{\textrm{Eq. \eqref{eq_ito_derivative_b}}}{=}& \sum_{i=1}^d \lambda_i(t) \mathrm{d}(u_i(t) u_i^\ast(t))) + (\mathrm{d}\lambda_i(t)) (u_i(t) u_i^\ast(t)) + \mathrm{d}\lambda_i(t) \mathrm{d}( u_i(t) u_i^\ast(t)) \nonumber\\
    &=&\sum_{i=1}^k \lambda_i(t) \mathrm{d}(u_i(t) u_i^\ast(t))) + (\mathrm{d}\lambda_i(t)) (u_i(t) u_i^\ast(t)) + \mathrm{d}\lambda_i(t) \mathrm{d}( u_i(t) u_i^\ast(t)) \label{eq_n161}\\
     &\stackrel{\textrm{Eq. \eqref{eq_dlambda_du}}}{=}&\sum_{i=1}^k \lambda_i(t) \mathrm{d}(u_i(t) u_i^\ast(t))) + (\mathrm{d}\lambda_i(t)) (u_i(t) u_i^\ast(t)), \label{eq_ito_derivative}
\end{eqnarray}
where \eqref{eq_n161} holds since  $\lambda_i(t) = 0$ for all $i >k$ and all $t\geq 0$.
Therefore, we have 
\newpage
\begin{eqnarray}  \label{eq_ito_integral_1}
& &\!\!\!\!\!\!\!\!\!\!\!\!\!\!\! \! \!\!\!\!\!\!\! \! \!\! \! \!\!\!\!\!\!\!\!\!\!\!\!\!\!\!\! \! \!\! \! \!\!\!\!\!\!\!\!\!\!\!\!\!\!\!\! \! \!\mathbb{E}\left[\left\|\Psi (T) -  \Psi(t_0)\right \|_F^2 \times \mathbbm{1}_{\hat{E}_\alpha^c} \right] \nonumber\\
&=&\mathbb{E}\left[\left\|\int_{t_0}^T \mathrm{d}\Psi(t)\right \|_F^2 \times \mathbbm{1}_{\hat{E}_\alpha^c} \right] \nonumber\\
&\stackrel{\textrm{Eq. } \eqref{eq_ito_derivative}}{=}&  \mathbb{E}\left[\left\|\int_{t_0}^T \sum_{i=1}^d   \lambda_i(t) \mathrm{d}( u_i(t) u_i^\ast(t)) + (\mathrm{d} \lambda_i(t)) u_i(t) u_i^\ast(t) \right\|_F^2 \times \mathbbm{1}_{\hat{E}_\alpha^c} \right] \nonumber\\
&\stackrel{\textrm{Tri. Ineq.}}{\leq} & \mathbb{E}\left[\left\|\int_{t_0}^T \sum_{i=1}^d   \lambda_i(t) \mathrm{d}( u_i(t) u_i^\ast(t))\right\|_F^2 \times \mathbbm{1}_{\hat{E}_\alpha^c} \right]\nonumber \\
& & + \quad  \mathbb{E}\left[ \left\|\int_{t_0}^T \sum_{i=1}^d  (\mathrm{d} \lambda_i(t)) u_i(t) u_i^\ast(t) \right\|_F^2 \times \mathbbm{1}_{\hat{E}_\alpha^c}\right]  \nonumber\\
&\stackrel{\textrm{Lem. \ref{Lemma_integral}}}{\leq}  & 32 \int_{t_0}^{T}  \mathbb{E}\left[ \sum_{i=1}^{d}  \sum_{j \neq i}  \frac{(\lambda_i(t) - \lambda_j(t))^2}{(\gamma_i(t)- \gamma_j(t))^2} \times \mathbbm{1}_{\hat{E}_\alpha^c}\right]\mathrm{d}t \nonumber \\
    & &  +  \quad    T \int_{t_0}^{T}\mathbb{E}\left[\sum_{i=1}^{d}\left(\sum_{j\neq i} \frac{\lambda_i(t) - \lambda_j(t)}{(\gamma_i(t)- \gamma_j(t))^2}\right)^2 \times \mathbbm{1}_{\hat{E}_\alpha^c}\right] \mathrm{d}t \nonumber\\
    & & + \quad \mathbb{E}\left[\left\|\int_{t_0}^T \sum_{i=1}^d (\mathrm{d} \lambda_i(t))  u_i(t) u_i^\ast(t) \right\|_F^2 \times \mathbbm{1}_{\hat{E}_\alpha^c} \right],
\end{eqnarray}
Plugging in $\lambda_i(t) = \gamma_i(t)$ for $i \leq k$ and $\lambda_i(t) = 0$  for $i>k$ into \eqref{eq_ito_integral_1}, we have

 \begin{eqnarray}
& &\!\!\!\!\!\!\!\!\! \! \!\! \! \!\!\!\!\!\!\!\!\!\!\!\!\!\!\!\! \! \! \mathbb{E}\left[\left\|\Psi (T) -  \Psi(t_0)\right \|_F^2 \times \mathbbm{1}_{\hat{E}_\alpha^c}\right] \nonumber\\
&\stackrel{\textrm{Eq. } \eqref{eq_ito_integral_1}}{\leq}& 32\int_{t_0}^{T}   \mathbb{E}\left[ \sum_{i=1}^{d}  \sum_{j \neq i}  \frac{(\lambda_i(t) - \lambda_j(t))^2}{(\gamma_i(t)- \gamma_j(t))^2} \times \mathbbm{1}_{\hat{E}_\alpha^c}\right]\mathrm{d}t \nonumber \\
    & &  + \quad     T \int_{t_0}^{T}\mathbb{E}\left[\sum_{i=1}^{d}\left(\sum_{j\neq i} \frac{\lambda_i(t) - \lambda_j(t)}{(\gamma_i(t)- \gamma_j(t))^2}\right)^2 \times \mathbbm{1}_{\hat{E}_\alpha^c}\right] \mathrm{d}t \nonumber\\
    & &  + \mathbb{E}\left[\left\|\int_{t_0}^T \sum_{i=1}^d (\mathrm{d} \lambda_i(t))  u_i(t) u_i^\ast(t)\right\|_F^2  \times \mathbbm{1}_{\hat{E}_\alpha^c}\right] \nonumber\\
&=&  32\int_{t_0}^{T}   \mathbb{E}\left[ \sum_{i=1}^{k} \left( k +  \sum_{j >k}  \frac{(\gamma_i(t))^2}{(\gamma_i(t)- \gamma_j(t))^2} \right)  \times \mathbbm{1}_{\hat{E}_\alpha^c}\right]\mathrm{d}t \nonumber \\
    & & + \quad     T \int_{t_0}^{T}\mathbb{E}\left[\sum_{i=1}^{k}\left(\sum_{j \neq i: j\leq k} \frac{1}{\gamma_i(t)- \gamma_j(t)} + \sum_{j> k} \frac{\gamma_i(t)}{(\gamma_i(t)- \gamma_j(t))^2}\right)^2  \times \mathbbm{1}_{\hat{E}_\alpha^c}\right] \mathrm{d}t \nonumber\\
    & & + \quad  \mathbb{E}\left[\left\|\int_{t_0}^T \sum_{i=1}^d (\mathrm{d} \lambda_i(t))  u_i(t) u_i^\ast(t)\right\|_F^2  \times \mathbbm{1}_{\hat{E}_\alpha^c}\right]   \label{eq_n163}\\
    &\leq &  32\int_{t_0}^{T}   \mathbb{E}\left[ \sum_{i=1}^{k} \left( k +  \sum_{j >k}  \frac{(\gamma_i(t))^2}{(\gamma_i(t)- \gamma_j(t))^2} \right)  \times \mathbbm{1}_{\hat{E}_\alpha^c}\right]\mathrm{d}t \nonumber \\
    & &  + \quad    4T \int_{t_0}^{T}\sum_{i=1}^{k}\mathbb{E}\left[\left(\sum_{j \neq i: j\leq k} \frac{1}{\gamma_i(t)- \gamma_j(t)}\right)^2  \times \mathbbm{1}_{\hat{E}_\alpha^c} +  \left(\sum_{j> k} \frac{\gamma_i(t)}{(\gamma_i(t)- \gamma_j(t))^2}\right)^2  \times \mathbbm{1}_{\hat{E}_\alpha^c}\right] \mathrm{d}t\nonumber\\
    & & + \quad  \mathbb{E}\left[\left\|\int_{t_0}^T \sum_{i=1}^d (\mathrm{d} \lambda_i(t))  u_i(t) u_i^\ast(t)\right\|_F^2  \times \mathbbm{1}_{\hat{E}_\alpha^c}\right], \label{eq_u1}
\end{eqnarray}
where \eqref{eq_n163} is obtained by plugging in  $\lambda_i(t) = \gamma_i(t)$ for $i \leq k$ and $\lambda_i(t) = 0$  for $i>k$.

\paragraph{Bounding the second moment of the inverse gaps.}
By Corollary \ref{lemma_gaps_any_start} we have that for every $1\leq i < j \leq d$,
\begin{equation}\label{eq_v1}
    \mathbb{P}\left(\left\{ \gamma_i(t) - \gamma_{j}(t) \leq (j-i) \times s \frac{\sqrt{t}}{\mathfrak{b}\sqrt{d}} \right\} \cap \hat{E}_\alpha^c\right) \leq  s^3 \qquad \forall s>0, t>0.
\end{equation}
Thus, for $t \leq T$,
\begin{eqnarray} \label{eq_second_inverse_moment}
    & &\!\!\! \! \!\! \! \!\!\!\!\!\!\!\!\!\!\!\!\!\!\!\! \! \!\! \! \!\!\!\!\!\!\!\!\!\!\!\!\!\!\!\! \! \! \mathbb{E}\left[\frac{1}{(\gamma_i(t) - \gamma_{j}(t))^2}  \times \mathbbm{1}_{\hat{E}_\alpha^c}\right]\nonumber\\
     &\leq & \mathbb{E}\left[\frac{1}{(\gamma_i(t) - \gamma_{j}(t))^2} \times \mathbbm{1}\left\{\gamma_i(t) - \gamma_{j}(t) \leq (j-i) \times \frac{\sqrt{t}}{ \mathfrak{b}\sqrt{d}}\right\}  \times \mathbbm{1}_{\hat{E}_\alpha^c}\right]\nonumber\\
     & & + \quad  \mathbb{E}\left[\frac{1}{(\gamma_i(t) - \gamma_{j}(t))^2} \times \mathbbm{1}\left\{\gamma_i(t) - \gamma_{j}(t) > (j-i) \times \frac{\sqrt{t}}{ \mathfrak{b}\sqrt{d}}\right\}  \times \mathbbm{1}_{\hat{E}_\alpha^c}\right] \nonumber\\
    &\leq & \mathbb{E}\left[\frac{1}{(\gamma_i(t) - \gamma_{j}(t))^2} \times \mathbbm{1}\left\{\gamma_i(t) - \gamma_{j}(t) \leq (j-i) \times \frac{\sqrt{t}}{ \mathfrak{b}\sqrt{d}}\right\}  \times \mathbbm{1}_{\hat{E}_\alpha^c}\right] + \frac{\mathfrak{b}^2 d}{(j-i)^2 t} \label{eq_n147}\\
    &\stackrel{\textrm{Prop. } \ref{lemma_layer_cake}}{=}& \int_{\frac{\mathfrak{b}^2d}{(j-i)^2 t}}^{\infty}   \mathbb{P}\left(\left\{\frac{1}{(\gamma_i(t) - \gamma_{i+1}(t))^2} \geq s \right\} \cap \hat{E}_\alpha^c \right) \mathrm{d}s + \frac{\mathfrak{b}^2 d}{(j-i)^2 t} \label{eq_n148}\\
    &=& \int_{\frac{\mathfrak{b}^2 d}{(j-i)^2 t}}^{\infty}   \mathbb{P}\left(\left\{(\gamma_i(t) - \gamma_{i+1}(t))^2 \leq s^{-1}  \right\} \cap \hat{E}_\alpha^c \right)  \mathrm{d}s + \frac{\mathfrak{b}^2 d}{(j-i)^2 t} \nonumber\\
    &=& \int_{\frac{\mathfrak{b}^2 d}{(j-i)^2 t}}^{\infty}   \mathbb{P}\left(\left\{\gamma_i(t) - \gamma_{i+1}(t) \leq s^{-\frac{1}{2}}\right\} \cap \hat{E}_\alpha^c\right) \mathrm{d}s + \frac{\mathfrak{b}^2 d}{(j-i)^2 t}\nonumber\\
    &\stackrel{\textrm{Eq. \eqref{eq_v1}}}{\leq}& \int_{\frac{\mathfrak{b}^2 d}{(j-i)^2 t}}^{\infty}  \left(\frac{\mathfrak{b}^2 d}{(j-i)^2 t}\right)^{\frac{3}{2}} s^{-\frac{3}{2}} \mathrm{d}s + \frac{\mathfrak{b}^2 d}{(j-i)^2 t} \nonumber\\
    & = & -\frac{3}{2} \left(\frac{\mathfrak{b}^2 d}{(j-i)^2 t}\right)^{\frac{3}{2}}  s^{-\frac{1}{2}} \bigg|_{\frac{\mathfrak{b}^2 d}{(j-i)^2 t}}^{\infty} + \frac{\mathfrak{b}^2 d}{(j-i)^2 t}\nonumber\\
      & = & \frac{3}{2} \frac{\mathfrak{b}^2 d}{(j-i)^2 t} + \frac{\mathfrak{b}^2 d}{(j-i)^2 t}\nonumber\\
        & \leq & 3 \frac{\mathfrak{b}^2 d}{(j-i)^2 t},
\end{eqnarray}
{where \eqref{eq_n147} holds since $\mathbb{E}\left[\frac{1}{(\gamma_i(t) - \gamma_{j}(t))^2} \times \mathbbm{1}\left\{\gamma_i(t) - \gamma_{j}(t) > (j-i) \times \frac{\sqrt{t}}{ \mathfrak{b}\sqrt{d}}\right\}  \times \mathbbm{1}_{\hat{E}_\alpha^c}\right] \leq \frac{1}{\left((j-i) \times \frac{\sqrt{t}}{ \mathfrak{b}\sqrt{d}}\right)^2}$.
\eqref{eq_n148} holds by the layer cake formula (Proposition \ref{lemma_layer_cake}).}
Thus, for any $t_0>0$,
\begin{eqnarray*}
    \mathbb{E}\left[\int_{t_0}^T \frac{1}{(\gamma_i(t) - \gamma_{j}(t))^2} \mathrm{d}t  \times \mathbbm{1}_{\hat{E}_\alpha^c}\right] &=&\int_{t_0}^T \mathbb{E}\left[ \frac{1}{(\gamma_i(t) - \gamma_{j}(t))^2}  \times \mathbbm{1}_{\hat{E}_\alpha^c}\right]  \mathrm{d}t\\
     &\stackrel{\textrm{Eq. } \eqref{eq_second_inverse_moment}}{\leq}&  3 \int_{t_0}^T  \frac{\mathfrak{b}^2 d}{(j-i)^2 t} \mathrm{d}t\\
     &=&  3 \frac{\mathfrak{b}^2 d}{(j-i)^2} \log(t)|_{t_0}^T\\
     &=& 3 \frac{\mathfrak{b}^2 d}{(j-i)^2}\times (\log(T) - \log(t_0)).
    \end{eqnarray*}

\paragraph{Bounding the term $\mathbb{E}\left[\left\|\int_0^T \sum_{i=1}^d (\mathrm{d} \lambda_i(t))  u_i(t) u_i^\ast(t)\right\|_F^2  \times \mathbbm{1}_{\hat{E}_\alpha^c}\right]$.}

 For $i > k$, $\mathrm{d} \lambda_i(t) = 0$.
 For $i \leq k$, we have $\lambda_i(t) = \gamma_i(t)$ and thus,
\begin{align}\label{eq_n6}
  (\mathrm{d} \lambda_i(t))  u_i(t) u_i^\ast(t) =   (\mathrm{d} \gamma_i(t))  u_i(t) u_i^\ast(t) \stackrel{\textrm{Eq. }\eqref{eq_DBM_eigenvalues}}{=} \left(\mathrm{d}B_{i i}(t) +  2 \sum_{j \neq i} \frac{1}{\gamma_i(t) - \gamma_j(t)} \mathrm{d}t \right) u_i(t) u_i^\ast(t) \qquad  \forall i \leq k,
\end{align}
where the second equality is by the SDE which governs the evolution of the eigenvalues of Dyson Brownian motion \eqref{eq_DBM_eigenvalues}.

 Thus,
 \begin{eqnarray}
    & &   \!\!\!\!\!\!\!\!\!\!\!\!\! \! \!\! \! \!\!\!\!\!\!\!\!\!\!\!\!\!\!\!\! \! \!\! \! \!\!\!\!\!\!\!\!\!\!\!\!\!\!\!\! \! \!\mathbb{E}\left[\left\|\int_{t_0}^T \sum_{i=1}^d (\mathrm{d} \lambda_i(t))  u_i(t) u_i^\ast(t)\right\|_F^2  \times \mathbbm{1}_{\hat{E}_\alpha^c} \right]\nonumber\\
       &\stackrel{\textrm{Eq. } \eqref{eq_n6}}{=}&\mathbb{E}\left[\left\|\int_{t_0}^T \sum_{i=1}^k \left(\mathrm{d}B_{i i}(t) +  2 \sum_{j \neq i} \frac{1}{\gamma_i(t) - \gamma_j(t)} \mathrm{d}t \right) u_i(t) u_i^\ast(t)\right\|_F^2 \times \mathbbm{1}_{\hat{E}_\alpha^c} \right] \label{eq_n164}\\
     &\stackrel{\textrm{Tri. Ineq.}}{\leq} & 3\mathbb{E}\left[\left\|\int_{t_0}^T \sum_{i=1}^k  u_i(t) u_i^\ast(t) \mathrm{d}B_{i i}(t) \right\|_F^2 \times \mathbbm{1}_{\hat{E}_\alpha^c} \right] \nonumber\\
     & & + \quad  6 \mathbb{E}\left[\left\|\int_{t_0}^T \sum_{i=1}^k \sum_{j \neq i} \frac{1}{\gamma_i(t) - \gamma_j(t)}  u_i(t) u_i^\ast(t) \mathrm{d}t\right\|_F^2 \times \mathbbm{1}_{\hat{E}_\alpha^c} \right], \label{eq_a1}
 \end{eqnarray}
where \eqref{eq_n164} holds by Equation \eqref{eq_n6} and since $\lambda_i(t) = 0$ for all $i> k$ and all $t \geq 0$. 

    To bound the first term on the r.h.s. of \eqref{eq_a1}, we will apply Ito's lemma for real-valued functions (Lemma \ref{lemma_ito_lemma_new}).
     Towards this end, we first note that
\begin{eqnarray}
& & \mathbb{E}\left[\left\|\int_{t_0}^T \sum_{i=1}^d  u_i(t) u_i^\ast(t) \mathrm{d}B_{i i}(t) \right\|_F^2 \times \mathbbm{1}_{\hat{E}_\alpha^c} \right]
         \leq   \mathbb{E}\left[\left\|\int_{t_0}^T \sum_{i=1}^d  u_i(t) u_i^\ast(t) \mathrm{d}B_{i i}(t) \right\|_F^2 \right] \nonumber\\
         && \qquad =   \mathbb{E}\left[\left\| \mathcal{R}\left(\int_{t_0}^T \sum_{i=1}^d  u_i(t) u_i^\ast(t) \mathrm{d}B_{i i}(t) \right) \right\|_F^2 \right]
        + \mathbb{E}\left[\left\| \mathcal{I}\left(\int_{t_0}^T \sum_{i=1}^d  u_i(t) u_i^\ast(t) \mathrm{d}B_{i i}(t) \right) \right\|_F^2 \right] \label{eq_n83}\\
                & & \qquad =   \mathbb{E}\left[\left\| \int_{t_0}^T \sum_{i=1}^d  \mathcal{R}\left(u_i(t) u_i^\ast(t)\right) \mathrm{d}B_{i i}(t)  \right\|_F^2 \right] + \mathbb{E}\left[\left\| \int_{t_0}^T \sum_{i=1}^d  \mathcal{I}\left(u_i(t) u_i^\ast(t)\right) \mathrm{d}B_{i i}(t)  \right\|_F^2 \right], \label{eq_n82}
         \end{eqnarray}
         where Equation \eqref{eq_n83} holds since $\|A\|_F^2 = \|\mathcal{R}(A)\|_F^2 + \|\mathcal{I}(A)\|_F^2$ for any $A \in \mathbb{C}^{d \times d}$. 
         Equation \eqref{eq_n82} holds because $\mathrm{d}B_{i i}(t)$ is real-valued since, by definition, $B(t) = W(t) + W(t)^\ast$ for all $ t \geq 0$.

We first show how to bound the real term on the r.h.s. of \eqref{eq_n82}; as the derivation for the bound on the imaginary term on the r.h.s. of \eqref{eq_n82} is identical to that of the real term if we replace $\mathcal{R}(\cdot)$ with $\mathcal{I}(\cdot)$, we omit the proof for the imaginary term.

      Towards this end, define for all $t \geq t_0$,
    \begin{equation}\label{eq_n81}
        \mathcal{X}(t) := \int_{t_0}^t \sum_{i=1}^d \mathcal{R}(u_i(s) u_i^\ast(s)) \mathrm{d}B_{i i}(s).
            \end{equation}
            Then for all $t \geq t_0$,
                \begin{equation}\label{eq_n84}
       \mathrm{d} \mathcal{X}(t) = \sum_{i=1}^d  \mathcal{R}(u_i(t) u_i^\ast(t)) \mathrm{d}B_{i i}(t).
            \end{equation}
    Plugging in $f(Y) = \|Y\|_F^2 = \sum_{i=1}^d \sum_{j=1}^d Y_{ij}^2$ into Ito's Lemma (Lemma \ref{lemma_ito_lemma_new}),
     we have that
    \begin{eqnarray}
        & & \mathbb{E}\left[\left\| \int_{t_0}^T \sum_{i=1}^d  \mathcal{R}\left(u_i(t) u_i^\ast(t)\right) \mathrm{d}B_{i i}(t)  \right\|_F^2 \right] \nonumber\\
         & & \qquad \stackrel{\textrm{Eq. } \eqref{eq_n81}}{=} \mathbb{E}[f(\mathcal{X}(T))] \nonumber\\
                  & & \qquad  \stackrel{\textrm{Lem. \ref{lemma_ito_lemma_new}}, \, \, \textrm{Eq.} \, \, \eqref{eq_n84}}{=}                  \mathbb{E}\left[ \int_{t_0}^T  \sum_{i=1}^d  \sum_{\alpha, \beta \in [d]} \frac{\partial}{\partial \mathcal{X}_{\alpha \beta}} f(\mathcal{X}(t))\times \mathcal{R} (u_i(t) u_i^\ast(t))[\alpha, \beta] \times \mathrm{d}B_{i i}(t) \right] \nonumber\\
 \nonumber\\
                  & & \qquad \qquad \qquad \qquad \qquad +\mathbb{E}\left[ \frac{1}{2} \int_{t_0}^T \sum_{i=1}^d  \sum_{\alpha, \beta \in [d]} \sum_{\ell, r \in [d]} \frac{\partial^2}{\partial \mathcal{X}_{\alpha \beta} \partial \mathcal{X}_{\ell r} } f(\mathcal{X}(t)) \mathrm{d}t \right ] \nonumber\\
                  & & \qquad = 0 +  \frac{1}{2} \mathbb{E}\left[\int_{t_0}^T \sum_{i=1}^d  \sum_{\alpha, \beta \in [d]} \sum_{\ell, r \in [d]} \frac{\partial^2}{\partial \mathcal{X}_{\alpha \beta} \partial \mathcal{X}_{\ell r} } f(\mathcal{X}(t)) \mathrm{d}t \right ] \label{eq_n80}\\
         & & \qquad \stackrel{\textrm{Eq. } \eqref{eq_int_5}}{=}   \frac{1}{2} \mathbb{E}\left[\int_{t_0}^T \sum_{i=1}^d  \sum_{\alpha = 1}^d \sum_{\beta = 1}^d 2 \times (\mathcal{R}(u_i(t) u_i^\ast(t))[\alpha, \beta])^2  \mathrm{d}t \right ] \nonumber\\
         & & \qquad = \mathbb{E}\left[\sum_{i=1}^d \int_{t_0}^T \mathcal{R}(\|u_i(t) u_i^\ast(t) \|_F^2) \mathrm{d}t \right],  \label{eq_n86} 
    \end{eqnarray}
    where \eqref{eq_n80} holds since $\mathrm{d}B_{i i}(t)$ is independent of $\mathcal{X}(t)$ for all $t\geq 0$ by \eqref{eq_n81}, and since $\mathrm{d}B_{i i}(t)$ is independent of $u_i(t)$ for all $t \geq 0$ and all $i \in [d]$.
   
   Moreover, if we replace $\mathcal{R}(\cdot)$ with $\mathcal{I}$ in \eqref{eq_n81}, \eqref{eq_n84} and \eqref{eq_n86}, we get that
   \begin{equation}\label{eq_n85}
 \mathbb{E}\left[\left\| \int_{t_0}^T \sum_{i=1}^d  \mathcal{I}\left(u_i(t) u_i^\ast(t)\right) \mathrm{d}B_{i i}(t)  \right\|_F^2 \right]  = \mathbb{E}\left[ \sum_{i=1}^d \int_{t_0}^T \mathcal{I}(\|u_i(t) u_i^\ast(t) \|_F^2) \mathrm{d}t \right ].
   \end{equation}   
  Thus, we have
   \begin{eqnarray}
& & \mathbb{E}\left[\left\|\int_{t_0}^T \sum_{i=1}^d  u_i(t) u_i^\ast(t) \mathrm{d}B_{i i}(t) \right\|_F^2 \times \mathbbm{1}_{\hat{E}_\alpha^c} \right] \nonumber\\
                & \stackrel{\textrm{Eq. } \eqref{eq_n82}}{\leq}&    \mathbb{E}\left[\left\| \int_{t_0}^T \sum_{i=1}^d  \mathcal{R}\left(u_i(t) u_i^\ast(t)\right) \mathrm{d}B_{i i}(t)  \right\|_F^2 \right] + \mathbb{E}\left[\left\| \int_{t_0}^T \sum_{i=1}^d  \mathcal{I}\left(u_i(t) u_i^\ast(t)\right) \mathrm{d}B_{i i}(t)  \right\|_F^2 \right] \nonumber \\
&\stackrel{\textrm{Eq. } \eqref{eq_n86}, \, \, \eqref{eq_n85}}{=}&  \mathbb{E}\left[ \sum_{i=1}^d \int_{t_0}^T \mathcal{R}(\|u_i(t) u_i^\ast(t) \|_F^2) \mathrm{d}t \right ]  + \mathbb{E}\left[2\sum_{i=1}^d \int_{t_0}^T \mathcal{I}(\|u_i(t) u_i^\ast(t) \|_F^2) \mathrm{d}t \right ] \nonumber\\
 &=& \mathbb{E}\left[ \sum_{i=1}^d \int_{t_0}^T \|u_i(t) u_i^\ast(t) \|_F^2 \mathrm{d}t \right ]  \label{eq_n87}\\
  &=& \mathbb{E}\left[ \sum_{i=1}^d \int_{t_0}^T 1 \mathrm{d}t \right ]  \label{eq_n88}\\
 &=&  (T-t_0) d,  \label{eq_a2}
   \end{eqnarray}
    where \eqref{eq_n87} holds since $\|A\|_F^2 = \|\mathcal{R}(A)\|_F^2 + \|\mathcal{I}(A)\|_F^2$ for any $A \in \mathbb{C}^{d \times d}$. 
    \eqref{eq_n88} holds since $\|u_i(t)\| = 1$ for all $t \geq 0$ and all $i \in [d]$ because $u_i(t)$ is an eigenvector.

To bound the second term on the r.h.s. of \eqref{eq_a1}, we have
\begin{eqnarray}\label{eq_a3}
    & &  \!\!\!\!\!\!\!\! \! \!\! \! \!\!\!\!\!\!\!\!\!\!\!\!\!\!\!\! \! \!\! \! \!\!\!\!\!\!\!\!\!\!\!\!\!\!\!\! \! \!\mathbb{E}\left[\left\|\int_{t_0}^T \sum_{i=1}^k \sum_{j \neq i} \frac{1}{\gamma_i(t) - \gamma_j(t)}  u_i(t) u_i^\ast(t) \mathrm{d}t\right\|_F^2 \times \mathbbm{1}_{\hat{E}_\alpha^c} \right]\nonumber\\
    &\leq  &   \mathbb{E}\left[\int_{t_0}^T \left\|\sum_{i=1}^k \sum_{j \neq i} \frac{1}{\gamma_i(t) - \gamma_j(t)}  u_i(t) u_i^\ast(t) \right\|_F^2 \mathrm{d}t \times \int_{t_0}^T 1^2 \mathrm{d}t \times \mathbbm{1}_{\hat{E}_\alpha^c} \right] \label{eq_n172}\\
       &= &   (T-t_0)\mathbb{E}\left[\int_{t_0}^T \left\|\sum_{i=1}^k \sum_{j \neq i} \frac{1}{\gamma_i(t) - \gamma_j(t)}  u_i(t) u_i^\ast(t) \right\|_F^2 \mathrm{d}t \times \mathbbm{1}_{\hat{E}_\alpha^c} \right]\nonumber\\
             &= &   (T-t_0)\mathbb{E}\left[\int_{t_0}^T \sum_{i=1}^k \left\| \sum_{j \neq i} \frac{1}{\gamma_i(t) - \gamma_j(t)}  u_i(t) u_i^\ast(t) \right\|_F^2 \mathrm{d}t \times \mathbbm{1}_{\hat{E}_\alpha^c} \right]\label{eq_n173}\\
                          &=&    (T-t_0)\mathbb{E}\left[\int_{t_0}^T \sum_{i=1}^k  \left(\sum_{j \neq i} \frac{1}{\gamma_i(t) - \gamma_j(t)}\right)^2 \left\| u_i(t) u_i^\ast(t) \right\|_F^2 \mathrm{d}t \times \mathbbm{1}_{\hat{E}_\alpha^c} \right],\nonumber\\
                            &=&    (T-t_0)\int_{t_0}^T \sum_{i=1}^k\mathbb{E}\left[  \left(\sum_{j \neq i} \frac{1}{\gamma_i(t) - \gamma_j(t)}\right)^2 \times \mathbbm{1}_{\hat{E}_\alpha^c} \right] \mathrm{d}t,
\end{eqnarray}
where \eqref{eq_n172} is by the Cauchy-Schwarz inequality,
and \eqref{eq_n173} holds since $\langle u_i(t) u_i^\ast(t), \, \, u_\ell(t) u_\ell^\ast(t) \rangle = 0$ for all $i \neq \ell$.
Therefore, plugging in \eqref{eq_a2} and \eqref{eq_a3} into \eqref{eq_a1}, we have
 \begin{align}\label{eq_a4}
    &\mathbb{E}\left[\left\|\int_{t_0}^T \sum_{i=1}^d (\mathrm{d} \lambda_i(t))  u_i(t) u_i^\ast(t)\right\|_F^2 \times \mathbbm{1}_{\hat{E}_\alpha^c} \right]\nonumber \\
    & \qquad \qquad \leq 3(T-t_0) d +  6 (T-t_0)\int_{t_0}^T \sum_{i=1}^k\mathbb{E}\left[  \left(\sum_{j \neq i} \frac{1}{\gamma_i(t) - \gamma_j(t)}\right)^2 \times \mathbbm{1}_{\hat{E}_\alpha^c} \right] \mathrm{d}t.
 \end{align}

\paragraph{Bounding the term $\mathbb{E}\left[  \left(\sum_{j \neq i} \frac{1}{\gamma_i(t) - \gamma_j(t)}\right)^2 \times \mathbbm{1}_{\hat{E}_\alpha^c} \right]$.}

Consider any subset $S \subseteq \{1,\ldots,d\}$.
Then

\begin{eqnarray}\label{eq_v2}
    & &   \!\! \! \!\!\!\!\!\!\!\!\!\!\!\!\!\!\!\! \! \!\! \! \!\!\!\!\!\!\!\!\!\!\!\!\!\!\!\! \! \! \mathbb{E}\left[  \left(\sum_{j \in S, j \neq i} \frac{1}{\gamma_i(t) - \gamma_j(t)}\right)^2 \times \mathbbm{1}_{\hat{E}_\alpha^c} \right] \nonumber\\
    &=&   \mathbb{E}\left[  \sum_{j \in S, j \neq i} \, \, \sum_{\ell \in S, \ell \neq i} \frac{1}{(\gamma_i(t) - \gamma_j(t))(\gamma_i(t) - \gamma_\ell(t))}  \times \mathbbm{1}_{\hat{E}_\alpha^c} \right] \nonumber\\
    & = &  \sum_{j \in S, j \neq i} \, \, \sum_{\ell \in S, \ell \neq i}   \mathbb{E}\left[ \frac{1}{(j-i)(
    \ell - i)\frac{\gamma_i(t) - \gamma_j(t)}{j-i}\times \frac{\gamma_i(t) - \gamma_\ell(t)}{\ell-i}}  \times \mathbbm{1}_{\hat{E}_\alpha^c} \right] \nonumber\\
     & =&   \sum_{j \in S, j \neq i} \, \, \sum_{\ell \in S, \ell \neq i}  \frac{1}{(j-i)(
    \ell - i)}\mathbb{E}\left[ \frac{1}{\frac{\gamma_i(t) - \gamma_j(t)}{j-i} \times \frac{\gamma_i(t) - \gamma_\ell(t)}{\ell-i}}  \times \mathbbm{1}_{\hat{E}_\alpha^c} \right] \nonumber\\
    &\leq & \sum_{j \in S, j \neq i} \, \, \sum_{\ell \in S, \ell \neq i}   \frac{1}{|(j-i)(
    \ell - i)|} \mathbb{E}\left[ \frac{1}{\left(\frac{\gamma_i(t) - \gamma_j(t)}{j-i}\right)^2} \times \mathbbm{1}_{\hat{E}_\alpha^c}  +  \frac{1}{ \left(\frac{\gamma_i(t) - \gamma_\ell(t)}{\ell-i}\right)^2}  \times \mathbbm{1}_{\hat{E}_\alpha^c} \right] \label{eq_n149}\\
    &\stackrel{\textrm{Eq. \eqref{eq_second_inverse_moment}}}{\leq} &
     \sum_{j \in S, j \neq i} \, \, \sum_{\ell \in S, \ell \neq i}   \frac{1}{|(j-i)(
    \ell - i)|}\left(3 \mathfrak{b}^2\frac{ d}{t} + 3 \mathfrak{b}^2\frac{ d}{t}\right) \nonumber\\
    &= & 6 \mathfrak{b}^2 \frac{d}{t}  \sum_{j \in S, j \neq i} \, \, \sum_{\ell \in S, \ell \neq i}  \frac{1}{|(j-i)(
    \ell - i)|} \nonumber\\
    &=& 6 \mathfrak{b}^2 \frac{d}{t} \sum_{j \in S, j \neq i} \frac{1}{|j-i|} \sum_{\ell \in S, \ell \neq i}  \frac{1}{|
    \ell - i|} \nonumber\\
        &= & 6 \mathfrak{b}^2 \frac{d}{t} \left( \sum_{j \in S, j \neq i} \frac{1}{|j-i|}\right)^2 \nonumber\\
    &\leq  & 6 \mathfrak{b}^2 \frac{d}{t} \log^2 (d+1)  \label{eq_n150}\\
    &\leq &  12 \mathfrak{b}^2 \frac{d}{t} \log^2 d,
\end{eqnarray}
{where \eqref{eq_n149} holds since $ab \leq (a^2 +b^2)$ for any numbers $a,b \in \mathbb{R}$.
 \eqref{eq_n150} holds since $\sum_{j \in S, j \neq i} \frac{1}{|j-i|} \leq \sum_{j= 1}^d \frac{1}{j} \leq \int_1^{d+1} \frac{1}{s}\mathrm{d}s = \log(d+1)$.}  

\paragraph{Completing the proof.}

 \begin{eqnarray}
& & \! \! \!\!\!\!\!\!\!\!\!\!\!\!\!\!\!\! \! \!\! \! \!\!\!\!\!\!\!\!\!\!\!\!\!\!\!\! \mathbb{E}\left[\left\|\Psi (T) -  \Psi(t_0)\right \|_F^2 \times   \mathbbm{1}_{\hat{E}_\alpha^c}\right]\nonumber\\ 
    &\stackrel{\textrm{Eq.\ \eqref{eq_u1}, \eqref{eq_a4}}}{\leq}& 32\int_{t_0}^{T}   \mathbb{E}\left[ \sum_{i=1}^{k} \left( k +  \sum_{j >k}  \frac{(\gamma_i(t))^2}{(\gamma_i(t) - \gamma_j(t))^2} \right)  \times \mathbbm{1}_{\hat{E}_\alpha^c}\right]\mathrm{d}t \nonumber \\
    & &  +    4T \int_{t_0}^{T}\sum_{i=1}^{k}\mathbb{E}\left[\left(\sum_{j \neq i: j\leq k} \frac{1}{\gamma_i(t) - \gamma_j(t)}\right)^2 \times \mathbbm{1}_{\hat{E}_\alpha^c} +  \left(\sum_{j> k} \frac{\gamma_i(t)}{(\gamma_i(t) - \gamma_j(t))^2}\right)^2  \times \mathbbm{1}_{\hat{E}_\alpha^c}\right] \mathrm{d}t\nonumber\\
    & & +  3(T-t_0) d +  6 (T-t_0)\int_{t_0}^T \sum_{i=1}^k\mathbb{E}\left[  \left(\sum_{j \neq i} \frac{1}{\gamma_i(t) - \gamma_j(t)}\right)^2 \times \mathbbm{1}_{\hat{E}_\alpha^c} \right] \mathrm{d}t \nonumber\\
    &\stackrel{\textrm{Eq. \eqref{eq_v2}}}{\leq}& 
    32\int_{t_0}^{T}   \mathbb{E}\left[ \sum_{i=1}^{k} \left( k +  \sum_{j >k}  \frac{(\gamma_i(t))^2}{(\gamma_i(t) - \gamma_j(t))^2} \right)  \times \mathbbm{1}_{\hat{E}_\alpha^c}\right]\mathrm{d}t \nonumber \\
    & & +  \quad    4T \int_{t_0}^{T}\sum_{i=1}^{k}12 \mathfrak{b}^2 \frac{d}{t} \log^2 d  +  \mathbb{E}\left[\left(\sum_{j> k} \frac{\gamma_i(t)}{(\gamma_i(t) - \gamma_j(t))^2}\right)^2  \times \mathbbm{1}_{\hat{E}_\alpha^c}\right] \mathrm{d}t\nonumber\\
    & & + \quad 3(T-t_0) d +  6 (T-t_0)\int_{t_0}^T \sum_{i=1}^k 12 \mathfrak{b}^2  \frac{d}{t} \log^2 d  \nonumber\\
    &\leq &     32\int_{t_0}^{T}   \mathbb{E}\left[ \sum_{i=1}^{k} \left( k +  \sum_{j >k}  16 \frac{\sigma_k^2}{(\sigma_k-\sigma_{k+1})^2} \right)  \right]\mathrm{d}t \label{eq_n129} \\
    & & + \quad     48 \mathfrak{b}^2 T kd (\log T- \log t_0) \log^2 d  +  \int_{t_0}^{T}  \mathbb{E}\left[\left(\sum_{j> k} 16 \frac{\sigma_k}{(\sigma_k-\sigma_{k+1}) \sqrt{T} \sqrt{d}}\right)^2 \right] \mathrm{d}t\nonumber\\
    & & + \quad  3(T-t_0) d +  6 (T-t_0) 12 \mathfrak{b}^2  kd (\log^2 d) (\log(T)-\log(t_0)) \nonumber\\
     &\stackrel{\textrm{Lem. } \ref{lemma_gap_concentration}}{\leq}& 32T\left( k^2 +   16 kd \frac{\sigma_k^2}{(\sigma_k-\sigma_{k+1})^2} \right)\nonumber \\
    & & + \quad     48 \mathfrak{b}^2 T kd (\log T- \log t_0) \log^2 d  +  \frac{16^2}{T} d \frac{\sigma_k^2}{(\sigma_k-\sigma_{k+1})^2}(T-t_0) \nonumber\\
    & & + \quad 3(T-t_0) d +  6 (T-t_0) 12 \mathfrak{b}^2  kd (\log^2 d) (\log(T)-\log(t_0))\nonumber\\
    &\leq &\frac{1}{2}10^4 \mathfrak{b}^2  kd T  \frac{\sigma_k^2}{(\sigma_k -\sigma_{k+1})^2} (\log^3 d) \log(\sigma_1 + T),   \label{eq_u2}
\end{eqnarray}
where \eqref{eq_u2} holds because $-\log t_0 \leq 20 \log(d)$ since  
$t_0 = \frac{1}{(kd)^{10} + k\alpha^2 +\sigma_1^2} \geq \frac{1}{(kd)^{10} + 400k\log(\sigma_1 + T) +\sigma_1^2} $.
Moreover, \eqref{eq_n129} holds because Proposition \ref{lemma_gap_concentration} implies that, since  $\sigma_k - \sigma_{k+1} \geq \sqrt{T} \sqrt{d} + 40\log^{\frac{1}{2}}(\sigma_1 + T)$, whenever $\hat{E}_\alpha^c$ occurs, we have 
\begin{align*}
    \gamma_k(t) - \gamma_{k+1}(t) \geq  \frac{1}{2}((\sigma_k - \sigma_{k+1}) - \alpha) \geq \frac{1}{4}((\sigma_k - \sigma_{k+1}) - \alpha) 
    &\geq \frac{1}{4}\sqrt{T} \sqrt{d} \qquad \forall t \geq 0,
    \end{align*}
    because $\alpha = 20\log^{\frac{1}{2}}(\sigma_1 d (T+1))$.
    
    Therefore, plugging in \eqref{eq_u2} into Lemma \ref{lemma_utility_rare_event}, we have that
 \begin{eqnarray}
    & & \!\!\!\!\!\!\!\! \! \! \! \! \!\!\!\!\!\!\!\!\!\!\!\!\!\!\!\! \! \! \! \! \!\!\!\!\!\!\!\!\!\!\!\!\!\!\!\! \! \! \mathbb{E}[\|\Psi(T) -  \Psi(0)\|_F^2] \nonumber\\
    &\stackrel{\textrm{Lem. \ref{lemma_utility_rare_event}}}{\leq} & 4\mathbb{E}[\|\Psi(T) - \Psi(0)\|_F^2 \times \mathbbm{1}_{\hat{E}_\alpha^c}] +dT \nonumber\\   
    &\stackrel{\textrm{Tri. Ineq.}}{\leq} & 16\mathbb{E}[\|\Psi(T) - \Psi(t_0)\|_F^2 \times \mathbbm{1}_{\hat{E}_\alpha^c}] +  16\mathbb{E}[\|\Psi(t_0) - \Psi(0)\|_F^2 \times \mathbbm{1}_{\hat{E}_\alpha^c}] + dT  \nonumber\\   
    &\stackrel{\textrm{Eq. \eqref{eq_u2}}}{\leq} & \frac{1}{4} 10^6 \mathfrak{b}^2  kd T  \frac{\sigma_k^2}{(\sigma_k-\sigma_{k+1})^2} (\log^3 d) + 16\mathbb{E}[\|\Psi(t_0) - \Psi(0)\|_F^2 \times \mathbbm{1}_{\hat{E}_\alpha^c}]  +dT \nonumber\\
    &\leq & \frac{1}{2} 10^6 \mathfrak{b}^2  kd T  \frac{\sigma_k^2}{(\sigma_k-\sigma_{k+1})^2} (\log^3 d) + \mathbb{E}[\|\Psi(t_0) - \Psi(0)\|_F^2 \times \mathbbm{1}_{\hat{E}_\alpha^c}] \nonumber\\
       &\stackrel{\textrm{Prop. \ref{lemma_t0}}}{\leq} & \frac{1}{2}  10^6 \mathfrak{b}^2  kd T  \frac{\sigma_k^2}{(\sigma_k-\sigma_{k+1})^2} (\log^3 d) + 40^2 \label{eq_n151}\\
       &\leq & 10^6 \mathfrak{b}^2  kd T  \frac{\sigma_k^2}{(\sigma_k-\sigma_{k+1})^2} (\log^3 d) \log(\sigma_1 + T),  \label{eq_u3}
\end{eqnarray}
where \eqref{eq_n151} holds by Proposition \ref{lemma_t0} since $t_0 = \frac{1}{(kd)^{10} + k\alpha^2 +\sigma_1^2}$.
\end{proof}

\section{Eigenvalue gaps of Gaussian Unitary Ensemble: Proof of Theorem \ref{thm:eigenvalue_gap}}\label{section_GUE_proof}

\subsection{Eigenvalue gap comparison result: Proof of Lemma \ref{lemma_gap_comparison}} \label{sec_gap_comparison_proof}

Before proving our high-probability bound on the eigenvalue gaps of Dyson Brownian motion (Theorem \ref{thm:eigenvalue_gap}), we first prove Lemma \ref{lemma_gap_comparison}.
This lemma reduces the task of bounding the eigenvalue gaps of Dyson Brownian motion from any initial condition, to the problem of bounding the gaps of a Dyson Brownian motion initialized at the $0$ vector.

The following proposition, which compares the size of the eigenvalue gaps of two solutions to the stochastic differential equations \eqref{eq_DBM_eigenvalues} of Dyson Brownian motion, and the stochastic derivative of their gaps, will be useful in proving Lemma \ref{lemma_gap_comparison}.
\begin{proposition}\label{prop_stochastic_derivative_comparison}
Consider any strong solutions $\gamma, \xi$ to \eqref{eq_DBM_eigenvalues}, for any $\beta \geq 1$. 
Suppose that for some $i \in [d]$ and at some time $t \geq 0$,
\begin{equation} \label{eq_z3}
\gamma_i(t) - \gamma_{i+1}(t) = \xi_i(t) - \xi_{i+1}(t) >0
\end{equation}
and
\begin{equation}  \label{eq_z4}
\gamma_j(t) - \gamma_{j+1}(t) \geq \xi_j(t) - \xi_{j+1}(t) > 0 \qquad \forall j \in [d-1].
\end{equation}
Then
\begin{equation}  \label{eq_z5}
\mathrm{d} \gamma_i(t) - \mathrm{d}  \gamma_{i+1}(t) \geq \mathrm{d} \xi_i(t) - \mathrm{d}  \xi_{i+1}(t).
\end{equation}

\end{proposition}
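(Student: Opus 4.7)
The strategy is to compute the difference $(\mathrm{d}\gamma_i(t) - \mathrm{d}\gamma_{i+1}(t)) - (\mathrm{d}\xi_i(t) - \mathrm{d}\xi_{i+1}(t))$ directly from the SDE \eqref{eq_DBM_eigenvalues}, exploiting the fact that $\gamma$ and $\xi$ are coupled to the same Brownian motion, and then verify that the resulting deterministic quantity is nonnegative under hypotheses \eqref{eq_z3} and \eqref{eq_z4}. Since the coupling forces the $\mathrm{d}B_{ii}(t)$ and $\mathrm{d}B_{i+1,i+1}(t)$ contributions to cancel between $\gamma$ and $\xi$, all that remains is a sum of deterministic ``electrostatic pressure'' terms of the form $\beta\left[\tfrac{1}{\gamma_i-\gamma_j} - \tfrac{1}{\gamma_{i+1}-\gamma_j}\right]\mathrm{d}t$ minus its $\xi$-analogue, summed over $j \neq i$.

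First I will split the sum over $j \neq i$ into the single term $j = i+1$ and the sum over $j \notin \{i,i+1\}$ (and symmetrically for the $\mathrm{d}\gamma_{i+1}$ sum). The $j=i+1$ and $j=i$ contributions together produce $2\beta/(\gamma_i-\gamma_{i+1})\,\mathrm{d}t$ for $\gamma$ and $2\beta/(\xi_i-\xi_{i+1})\,\mathrm{d}t$ for $\xi$; by hypothesis \eqref{eq_z3} these two quantities are equal and cancel when we form the difference. Thus the inequality \eqref{eq_z5} reduces to showing
\begin{equation*}
\sum_{j \neq i, i+1} \left[\frac{1}{\gamma_i(t) - \gamma_j(t)} - \frac{1}{\gamma_{i+1}(t)-\gamma_j(t)}\right] \;\geq\; \sum_{j \neq i,i+1} \left[\frac{1}{\xi_i(t) - \xi_j(t)} - \frac{1}{\xi_{i+1}(t)-\xi_j(t)}\right].
\end{equation*}

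Next I will rewrite each summand via the algebraic identity $\tfrac{1}{a} - \tfrac{1}{b} = \tfrac{b-a}{ab}$, which gives
\begin{equation*}
\frac{1}{\gamma_i-\gamma_j} - \frac{1}{\gamma_{i+1}-\gamma_j} = \frac{-(\gamma_i-\gamma_{i+1})}{(\gamma_i-\gamma_j)(\gamma_{i+1}-\gamma_j)},
\end{equation*}
and similarly for $\xi$. Letting $g := \gamma_i(t)-\gamma_{i+1}(t) = \xi_i(t)-\xi_{i+1}(t) > 0$, the desired inequality becomes
\begin{equation*}
\sum_{j \neq i, i+1}\frac{1}{(\gamma_i-\gamma_j)(\gamma_{i+1}-\gamma_j)} \;\leq\; \sum_{j \neq i, i+1}\frac{1}{(\xi_i-\xi_j)(\xi_{i+1}-\xi_j)}.
\end{equation*}
For each $j \notin\{i,i+1\}$ both factors in each product have the same sign (both negative if $j<i$, both positive if $j>i+1$), so each denominator is a positive quantity equal to $|\gamma_i-\gamma_j|\cdot|\gamma_{i+1}-\gamma_j|$ (respectively $|\xi_i-\xi_j|\cdot|\xi_{i+1}-\xi_j|$).

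Finally, I will show the desired inequality term-by-term: for each fixed $j$, write $|\gamma_i(t) - \gamma_j(t)|$ as the telescoping sum of consecutive gaps $\sum_{\ell} (\gamma_\ell(t) - \gamma_{\ell+1}(t))$ between indices $\min(i,j)$ and $\max(i,j)$, and likewise for $\xi$ and for the $(i+1,j)$ pair. Hypothesis \eqref{eq_z4} then implies $|\gamma_i(t)-\gamma_j(t)| \geq |\xi_i(t)-\xi_j(t)|$ and $|\gamma_{i+1}(t)-\gamma_j(t)| \geq |\xi_{i+1}(t)-\xi_j(t)|$, and multiplying these (all quantities positive) yields the inequality between the two products, hence the reverse inequality between the reciprocals. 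Summing over $j$ closes the argument. There is no real obstacle here once the Brownian terms are observed to cancel via the coupling and the $j=i+1$ repulsion term is canceled via \eqref{eq_z3}; the remaining work is the clean algebraic identity followed by a term-by-term monotonicity argument justified by the hypothesis on the gaps.
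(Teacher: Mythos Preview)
Your proposal is correct and follows essentially the same route as the paper's proof: cancel the Brownian increments via the coupling, cancel the mutual $i\leftrightarrow i+1$ repulsion terms via hypothesis \eqref{eq_z3}, and then compare the remaining drift terms index-by-index using hypothesis \eqref{eq_z4}. The only cosmetic difference is that the paper packages the term-by-term comparison as the elementary inequality $\tfrac{1}{a+b}-\tfrac{1}{b}\geq\tfrac{1}{a+c}-\tfrac{1}{c}$ for $b\geq c>0$, $a>0$, whereas you unfold that same inequality via the identity $\tfrac{1}{x}-\tfrac{1}{y}=\tfrac{y-x}{xy}$ together with the telescoping-gap bound $|\gamma_i-\gamma_j|\geq|\xi_i-\xi_j|$; these are two ways of writing the same computation.
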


\begin{proof}
First, note that for any numbers $b > c >0$ and all $a >0$ we have that
\begin{equation} \label{eq_z1}
\frac{1}{a+b} - \frac{1}{b} > \frac{1}{a+c} - \frac{1}{c}.
\end{equation}

\paragraph{Bounding the repulsion forces when  $j > i+1$.}
For any $j > i+1$ we have that by \eqref{eq_z1} (setting $a=\gamma_i(t) - \gamma_{i+1}(t)$,  $b=\gamma_{i+1}(t) - \gamma_j(t)$, and $c=\xi_{i+1}(t) - \xi_j(t)$,  and noting that \eqref{eq_z4} implies that $b \geq c >0$ since $j > i+1$),
\begin{align} \label{eq_z2}
&\frac{1}{\gamma_i(t) - \gamma_{i+1}(t) +(\gamma_{i+1}(t) - \gamma_j(t))} - \frac{1}{\gamma_{i+1}(t) - \gamma_j(t)} \nonumber\\
& \qquad \qquad \qquad  \qquad \qquad \geq \frac{1}{\gamma_i(t) - \gamma_{i+1}(t) +(\xi_{i+1}(t) - \xi_j(t))} - \frac{1}{\xi_{i+1}(t) - \xi_j(t)}.
\end{align}
Plugging in \eqref{eq_z3} into \eqref{eq_z2}, we get that
\begin{align} \label{eq_z6}
&\frac{1}{\gamma_i(t) - \gamma_{i+1}(t) +(\gamma_{i+1}(t) - \gamma_j(t))} - \frac{1}{\gamma_{i+1}(t) - \gamma_j(t)}\nonumber\\
& \qquad \qquad \qquad  \qquad \qquad \geq \frac{1}{\xi_i(t) - \xi_{i+1}(t) +(\xi_{i+1}(t) - \xi_j(t))} - \frac{1}{\xi_{i+1}(t) - \xi_j(t)}.
\end{align}
Simplifying \eqref{eq_z6}, we get
\begin{equation} \label{eq_z7}
\frac{1}{\gamma_i(t) - \gamma_j(t)} - \frac{1}{\gamma_{i+1}(t) - \gamma_j(t)} \geq \frac{1}{\xi_i(t) - \xi_j(t)} - \frac{1}{\xi_{i+1}(t) - \xi_j(t)} \qquad \qquad \forall j > i+1.
\end{equation}

\paragraph{Bounding the repulsion forces when $j < i$.}  Next, consider any $j < i$.
Then by \eqref{eq_z1}  (setting $a = \gamma_i(t) - \gamma_{i+1}(t)$, $b=\gamma_{j}(t) - \gamma_i(t)$, and $c=\xi_{j}(t) - \xi_i(t)$ into \eqref{eq_z1}, and noting that \eqref{eq_z4} implies that $b \geq c >0$ since $j < i$), we get
\begin{align} \label{eq_z8}
\frac{1}{\gamma_{j}(t) - \gamma_i(t) + (\gamma_{i}(t) - \gamma_{i+1}(t))} - \frac{1}{\gamma_{j}(t) - \gamma_i(t)}\geq \frac{1}{\xi_{j}(t) - \xi_i(t) + (\gamma_{i}(t) - \gamma_{i+1}(t))} - \frac{1}{\xi_{j}(t) - \xi_i(t)}.
\end{align}
Then plugging in \eqref{eq_z3} into \eqref{eq_z8}, we have
\begin{align} \label{eq_z9}
\frac{1}{\gamma_{j}(t) - \gamma_i(t) + (\gamma_{i}(t) - \gamma_{i+1}(t))} - \frac{1}{\gamma_{j}(t) - \gamma_i(t)}\geq \frac{1}{\xi_{j}(t) - \xi_i(t) + (\xi_{i}(t) - \xi_{i+1}(t))} - \frac{1}{\xi_{j}(t) - \xi_i(t)}.
\end{align}
Simplifying \eqref{eq_z9}, we get
\begin{equation} \label{eq_z10}
\frac{1}{\gamma_i(t) - \gamma_{j}(t)} -\frac{1}{\gamma_{i+1}(t) - \gamma_{j}(t)}  \geq   \frac{1}{\xi_i(t)- \xi_{j}(t)} - \frac{1}{\xi_{i+1}(t)- \xi_{j}(t)} \qquad \qquad \forall j < i.
\end{equation}
Therefore,  \eqref{eq_z7} and \eqref{eq_z10} together imply that
\begin{equation} \label{eq_z11}
\frac{1}{\gamma_i(t) - \gamma_{j}(t)} -\frac{1}{\gamma_{i+1}(t) - \gamma_{j}(t)}  \geq   \frac{1}{\xi_i(t)- \xi_{j}(t)} - \frac{1}{\xi_{i+1}(t)- \xi_{j}(t)} \qquad \qquad \forall j \in [d] \backslash \{i, i+1\}.
\end{equation}

\paragraph{Bounding the gap derivative.}
By \eqref{eq_DBM_eigenvalues} and \eqref{eq_z11} we have that
\begin{eqnarray*}
& & \!\!\!\! \!\!\!\! \!\!\!\! \!\!\!\!  \!\!\!\! \!\!\!\! \!\!\!\! \!\!\!\!  \!\!\!\! \!\!\!\! \mathrm{d} \gamma_i(t) - \mathrm{d} \gamma_{i+1} (t) \nonumber\\
&\stackrel{\textrm{Eq. \eqref{eq_DBM_eigenvalues}}}{=}&  \left(\mathrm{d}B_{i, i}(t) +  \beta \sum_{j \neq i} \frac{1}{\gamma_i(t) - \gamma_j(t)} \mathrm{d}t \right) -  \left(\mathrm{d}B_{i+1, i+1}(t) +  \beta \sum_{j \neq i+1} \frac{1}{\gamma_{i+1}(t) - \gamma_j(t)} \mathrm{d}t \right)\\
&=& \mathrm{d}B_{i, i}(t)  - \mathrm{d}B_{i+1, i+1}(t)  + \beta  \mathrm{d}t  \sum_{j \in [d] \backslash \{i, i+1\}}   \frac{1}{\gamma_i(t) - \gamma_j(t)} -  \frac{1}{\gamma_{i+1}(t) - \gamma_j(t)}\\
&\stackrel{\textrm{Eq. \eqref{eq_z11}}}{\geq} &  \mathrm{d}B_{i, i}(t)  - \mathrm{d}B_{i+1, i+1}(t)  + \beta  \mathrm{d}t  \sum_{j \in [d] \backslash \{i, i+1\}}  \frac{1}{\xi_i(t)- \xi_{j}(t)} - \frac{1}{\xi_{i+1}(t)- \xi_{j}(t)}\\
&=&  \left(\mathrm{d}B_{i, i}(t) +  \beta \sum_{j \neq i} \frac{1}{\xi_i(t) - \xi_j(t)} \mathrm{d}t \right) -  \left(\mathrm{d}B_{i+1, i+1}(t) +  \beta \sum_{j \neq i+1} \frac{1}{\xi_{i+1}(t) - \xi_j(t)} \mathrm{d}t \right)\\
&\stackrel{\textrm{Eq. \eqref{eq_DBM_eigenvalues}}}{=}&  \mathrm{d} \xi_i(t) - \mathrm{d} \xi_{i+1} (t).
\end{eqnarray*}
This proves \eqref{eq_z5} and completes the proof of the proposition.
\end{proof}

\begin{proof}[Proof of Lemma \ref{lemma_gap_comparison}]
First, we note that since by Lemma \ref{lemma_continuity} at every time $t \geq 0$ the strong solution $\gamma(t)$ is a continuous function of the initial conditions $\gamma(0)$, without loss of generality we may assume that that the initial eigenvalue gaps of $\gamma$ are {\em strictly} greater than the corresponding eigenvalue gaps of $\xi$:
\begin{equation}\label{eq_initial_gaps}
    \xi_i(0) - \xi_{i+1}(0)  < \gamma_i(0) - \gamma_{i+1}(0) \qquad \qquad 1\leq i < d.
\end{equation}
We prove Lemma \ref{lemma_gap_comparison} by contradiction.
Let $\tau := \inf\{t\geq 0: \xi_i(t) - \xi_{i+1}(t) > \gamma_i(t) - \gamma_{i+1}(t) \textrm{ for some } i \in [d]  \}$ be the first time where the size of the $i$'th gaps ``cross'' for some $i \in [d]$ (in other words $\tau$ is the first time when the conclusion of Lemma \ref{lemma_gap_comparison} fails to hold).

\paragraph{Assumption towards a contradiction.} Suppose (towards a contradiction) that $\tau < \infty$.
By the definition of strong solutions, strong solutions to stochastic differential equations are almost surely continuous on $[0,\infty)$, and hence we have that both $\gamma(t)$ and $\xi(t)$ are almost surely continuous at every $t \in [0,\infty)$.
Therefore, since $\tau< \infty$, by the intermediate value theorem, we must have that, for some $i\in [d-1]$ the $i$'th gap of $\xi$ and the $i$'th gap of $\gamma$ are equal at the time $\tau$, and that at this time $\tau$ all the other gaps of $\gamma$ are  at least as large as the corresponding gaps of $\xi$:
\begin{eqnarray}
      \gamma_i(\tau) - \gamma_{i+1}(\tau) &=& \xi_i(\tau) - \xi_{i+1}(\tau),\label{eq_w1}\\ 
\nonumber\\
      \gamma_j(\tau) - \gamma_{j+1}(\tau) &\geq  & \xi_j(\tau) - \xi_{j+1}(\tau) \qquad \forall j \in[d-1]. \label{eq_w2} 
\end{eqnarray}
Moreover, by Lemma \ref{lemma_DBM_collision} we have that, almost surely, the particles $\gamma_1(t), \ldots, \gamma_d(t)$ of the Dyson Brownian motion $\gamma(t)$ do not collide with each other on all of $(0,\infty)$.
The same non-collision property holds for the particles $\xi_1(t), \ldots, \xi_d(t)$ of $\xi(t)$.
In other words, we have that, almost surely, 
\begin{eqnarray} \label{eq_w3}
\gamma_j(t) - \gamma_{j+1}(t) &>&0 \qquad \forall t \in (0,\infty), \,\, j \in [d-1], \nonumber\\
\xi_j(t) - \xi_{j+1}(t) &>& 0 \qquad \forall t \in (0,\infty), \,\, j \in [d-1].
\end{eqnarray}
Therefore, plugging in \eqref{eq_w1}, \eqref{eq_w2} and \eqref{eq_w3} into Proposition \ref{prop_stochastic_derivative_comparison}, we have that
\begin{equation}  \label{eq_w4}
\mathrm{d} \gamma_i(\tau) - \mathrm{d}  \gamma_{i+1}(\tau) \geq \mathrm{d} \xi_i(\tau) - \mathrm{d}  \xi_{i+1}(\tau).
\end{equation}
Next, we consider two cases: when $\mathrm{d} \gamma_i(\tau) - \mathrm{d}  \gamma_{i+1}(\tau) > \mathrm{d} \xi_i(\tau) - \mathrm{d}  \xi_{i+1}(\tau)$, and when $\mathrm{d} \gamma_i(\tau) - \mathrm{d}  \gamma_{i+1}(\tau) = \mathrm{d} \xi_i(\tau) - \mathrm{d}  \xi_{i+1}(\tau)$.

\paragraph{Case 1,  $\mathrm{d}  \gamma_i(\tau) - \mathrm{d}  \gamma_{i+1}(\tau) > \mathrm{d} \xi_i(\tau) - \mathrm{d}  \xi_{i+1}(\tau)$.}
For any $w \in \mathcal{W}_d$ (where $\mathcal{W}_d$ was defined in \eqref{eq:WeylChamber}), define the ``drift'' function
\begin{equation}\label{eq_n7}
    \mu_i(w) := \beta\sum_{j \neq i} \frac{1}{w_i - w_j}.
\end{equation}
Then we have that
\begin{eqnarray}\label{eq_w5}
& & \!\!\! \!\!\!\!  \!\!\!\! \!\!\!\!\!\!\!\! \!\!\!\!  \!\!\!\! \!\!\!\!  (\mathrm{d} \gamma_i(t) - \mathrm{d} \gamma_{i+1} (t))- (\mathrm{d} \xi_i(t) - \mathrm{d} \xi_{i+1} (t))\\ &\stackrel{\textrm{Eq. \eqref{eq_DBM_eigenvalues}}}{=}&  \left[\left(\mathrm{d}B_{i, i}(t) +  \beta \sum_{j \neq i} \frac{1}{\gamma_i(t) - \gamma_j(t)} \mathrm{d}t \right)
-  \left(\mathrm{d}B_{i+1, i+1}(t) +  \beta \sum_{j \neq i+1} \frac{1}{\gamma_{i+1}(t) - \gamma_j(t)} \mathrm{d}t \right) \right] \nonumber\\
& &   - \quad \left[\left(\mathrm{d}B_{i, i}(t) +  \beta \sum_{j \neq i} \frac{1}{\xi_i(t) - \xi_j(t)} \mathrm{d}t \right) -  \left(\mathrm{d}B_{i+1, i+1}(t) +  \beta \sum_{j \neq i+1} \frac{1}{\xi_{i+1}(t) - \xi_j(t)} \mathrm{d}t \right) \right] \nonumber\\
&\stackrel{\textrm{Eq. } \eqref{eq_n7}}{=}& \mu_i(\gamma(t)) - \mu_{i+1}(\gamma(t)) - (\mu_i(\xi(t)) - \mu_{i+1}(\xi(t))) \qquad  \qquad\forall t \geq 0.\nonumber
\end{eqnarray}
Since, in Case 1, 
$\mathrm{d}  \gamma_i(\tau) - \mathrm{d}  \gamma_{i+1}(\tau) > \mathrm{d} \xi_i(\tau) - \mathrm{d}  \xi_{i+1}(\tau)$,
we have by \eqref{eq_w5} that
\begin{align}\label{eq_w6}
&\mu_i(\gamma(\tau)) - \mu_{i+1}(\gamma(\tau)) - (\mu_i(\xi(\tau)) - \mu_{i+1}(\xi(\tau))) \nonumber\\
& \qquad \qquad \qquad  \qquad \qquad \stackrel{\textrm{Eq. } \eqref{eq_w5}}{=} (\mathrm{d} \gamma_i(\tau) - \mathrm{d} \gamma_{i+1} (\tau))- (\mathrm{d} \xi_i(\tau) - \mathrm{d} \xi_{i+1} (\tau)) >0.
\end{align}
From \eqref{eq_w3}, we have that, almost surely, all the gaps of $\gamma(t)$ and $\xi(t)$ are strictly greater than zero at every time $t \in (0,\infty)$.
Therefore, since $\gamma(t)$ and $\xi(t)$ are almost surely continuous on all $t\in [0,\infty)$,
we must have that, almost surely, $\mu(\gamma(t))$ and $\mu(\xi(t))$ are also continuous on all  $t\in(0,\infty)$. 
 Therefore, since $\mu(\gamma(t))$ and $\mu(\xi(t))$ are almost surely continuous on $(0,\infty)$, by \eqref{eq_w6}  we must have that there exists some open interval $\mathcal{I} \subset (0,\infty)$ containing $\tau$ such that
\begin{align}\label{eq_w7}
 &(\mathrm{d} \gamma_i(t) - \mathrm{d} \gamma_{i+1} (t))- (\mathrm{d} \xi_i(t) - \mathrm{d} \xi_{i+1} (t)) \nonumber\\
& \qquad \qquad \qquad  \qquad \qquad
 \stackrel{\textrm{Eq. } \eqref{eq_w5}}{=} \mu_i(\gamma(\tau)) - \mu_{i+1}(\gamma(\tau)) - (\mu_i(\xi(\tau)) - \mu_{i+1}(\xi(\tau))) \stackrel{\textrm{Eq. } \eqref{eq_w6}}{>} 0,  
\end{align}
for all $t \in \mathcal{I}$. 
Consider any $t \in \mathcal{I}$ such that $t > \tau$. 
 Then
\begin{eqnarray}
& & \!\!\!\!  \!\!\!\! \!\!\!\!\!\!\!\! \!\!\!\!  \!\!\!\! \!\!\!\!\!\!\!\! \!\!\!\!  \!\!\!\! \!\!\!\! (\gamma_i(t) -\gamma_{i+1} (t)) - (\xi_i(t) -  \xi_{i+1} (t)) \nonumber\\
    &\stackrel{\textrm{Eq. \eqref{eq_w2}}}{\geq}& [(\gamma_i(t) -\gamma_{i+1} (t)) - (\xi_i(t) -  \xi_{i+1} (t))] -   [(\gamma_i(\tau) -  \gamma_{i+1} (\tau)) - (\xi_i(\tau) -  \xi_{i+1} (\tau))] \nonumber\\
    &=&  \int_{\tau}^{t}  \bigg((\mathrm{d} \gamma_i(s) - \mathrm{d} \gamma_{i+1} (s))- (\mathrm{d} \xi_i(s) - \mathrm{d} \xi_{i+1} (s))\bigg) \mathrm{d}s \nonumber\\
    &\stackrel{\textrm{Eq. \eqref{eq_w7}}}{>}& 0, \label{eq_w8}
\end{eqnarray}
where \eqref{eq_w8} holds by Inequality \eqref{eq_w7} since $\tau < t$, and since  $[\tau, t] \subseteq \mathcal{I}$ because $\mathcal{I}$ is an interval containing both $\tau$ and $t$. 
Therefore Inequality \eqref{eq_w8}, together with the fact that $\mathcal{I}$ is an open interval containing $\tau$, implies that there exits some $\tau' \in \mathcal{I}$ where $\tau'>\tau$ such that 
\begin{equation}\label{eq_n8}
    \gamma_i(t) -\gamma_{i+1} (t) > \xi_i(t) -  \xi_{i+1} (t) \qquad \forall \tau<t <\tau'.
\end{equation}
Therefore \eqref{eq_n8} implies that  $\inf\{t\geq 0: \xi_i(t) - \xi_{i+1}(t) > \gamma_i(t) - \gamma_{i+1}(t)\} \geq \tau' >\tau$ and hence that $\tau \neq \inf\{t\geq 0: \xi_i(t) - \xi_{i+1}(t) > \gamma_i(t) - \gamma_{i+1}(t)\}$ for any $i \in [d]$.
This contradicts the definition of $\tau$.
Therefore, by contradiction our assumption that $\tau < \infty$ is false.

\paragraph{Case 2, $\mathrm{d} \gamma_i(\tau) - \mathrm{d}  \gamma_{i+1}(\tau) = \mathrm{d} \xi_i(\tau) - \mathrm{d}  \xi_{i+1}(\tau)$.}

Consider the system of stochastic differential equations for the process $\gamma_i(t) - \gamma_{i+1} (t)$:
\begin{align}\label{eq_w9}
&\mathrm{d} \gamma_i(t) - \mathrm{d} \gamma_{i+1} (t) 
\nonumber\\
& \qquad 
\stackrel{\textrm{Eq. \eqref{eq_DBM_eigenvalues}}}{=}  \left(\mathrm{d}B_{i, i}(t) +  \beta \sum_{j \neq i} \frac{1}{\gamma_i(t) - \gamma_j(t)} \mathrm{d}t \right) -  \left(\mathrm{d}B_{i+1, i+1}(t) +  \beta \sum_{j \neq i+1} \frac{1}{\gamma_{i+1}(t) - \gamma_j(t)} \mathrm{d}t \right)
\end{align}
for all $i \in [d]$, and the system of stochastic differential equations for the process $\xi_i(t) - \xi_{i+1} (t)$:
\begin{align}\label{eq_w10}
 &\mathrm{d} \xi_i(t) - \mathrm{d} \xi_{i+1} (t)
 \nonumber\\
& \qquad 
 \stackrel{\textrm{Eq. \eqref{eq_DBM_eigenvalues}}}{=}
\left(\mathrm{d}B_{i, i}(t) +  \beta \sum_{j \neq i} \frac{1}{\xi_i(t) - \xi_j(t)} \mathrm{d}t \right)  -  \left(\mathrm{d}B_{i+1, i+1}(t) +  \beta \sum_{j \neq i+1} \frac{1}{\xi_{i+1}(t) - \xi_j(t)} \mathrm{d}t \right)
\end{align}
for all $i \in [d]$. 

Then we have that
\begin{eqnarray}
0 &=& (\mathrm{d} \gamma_i(\tau) - \mathrm{d} \gamma_{i+1} (\tau))- (\mathrm{d} \xi_i(\tau) - \mathrm{d} \xi_{i+1} (\tau))\label{eq_n167}\\
&\stackrel{\textrm{Eq.\ \eqref{eq_w9},\eqref{eq_w10}}}{=}&  \left[\left(\beta \sum_{j \neq i} \frac{1}{\gamma_i(\tau) - \gamma_j(\tau)} \mathrm{d}t \right)
-  \left(\beta \sum_{j \neq i+1} \frac{1}{\gamma_{i+1}(\tau) - \gamma_j(\tau)} \mathrm{d}t \right) \right] \nonumber\\
& &  - \left[\left(\beta \sum_{j \neq i} \frac{1}{\xi_i(\tau) - \xi_j(\tau)} \mathrm{d}t \right) -  \left(\beta \sum_{j \neq i+1} \frac{1}{\xi_{i+1}(\tau) - \xi_j(\tau)} \mathrm{d}t \right) \right], \label{eq_w11}
\end{eqnarray}
 where \eqref{eq_n167} holds since, in Case 2, $\mathrm{d} \gamma_i(\tau) - \mathrm{d}  \gamma_{i+1}(\tau) = \mathrm{d} \xi_i(\tau) - \mathrm{d}  \xi_{i+1}(\tau)$.
Plugging \eqref{eq_w1} 
into \eqref{eq_w11}, we have that
\begin{eqnarray}\label{eq_n10}
& & 0 
=  \left[\left(\beta \sum_{j \in [d] \backslash \{i,i+1\}} \frac{1}{\gamma_i(\tau) - \gamma_j(\tau)} \mathrm{d}t \right)
-  \left(\beta \sum_{j \in [d] \backslash \{i,i+1\}} \frac{1}{\gamma_{i+1}(\tau) - \gamma_j(\tau)} \mathrm{d}t \right) \right] \nonumber\\
& & - \quad \left[\left(\beta \sum_{j \in [d] \backslash \{i,i+1\}} \frac{1}{\xi_i(\tau) - \xi_j(\tau)} \mathrm{d}t \right) -  \left(\beta \sum_{j \in [d] \backslash \{i,i+1\}} \frac{1}{\xi_{i+1}(\tau) - \xi_j(\tau)} \mathrm{d}t \right) \right], \nonumber\\
&=&  \beta \mathrm{d}t \sum_{j \in [d] \backslash \{i,i+1\}} \left[\frac{1}{\gamma_i(\tau) - \gamma_j(\tau)}  -   \frac{1}{\gamma_{i+1}(\tau) - \gamma_j(\tau)} \right] - \left[ \frac{1}{\xi_i(\tau) - \xi_j(\tau)} -  \frac{1}{\xi_{i+1}(\tau) - \xi_j(\tau)} \right].  \qquad \quad
\end{eqnarray}
Moreover, we also have  
 from \eqref{eq_w2} that $ \gamma_j(\tau) - \gamma_{j+1}(\tau) \geq \xi_j(\tau) - \xi_{j+1}(\tau)$ for all $j \in [d-1]$.
Thus, by Inequality \eqref{eq_z1} we have that
\begin{equation}\label{eq_n11}
\left[\frac{1}{\gamma_i(\tau) - \gamma_j(\tau)}
-  \frac{1}{\gamma_{i+1}(\tau) - \gamma_j(\tau)} \right] - \left[ \frac{1}{\xi_i(\tau) - \xi_j(\tau)} -  \frac{1}{\xi_{i+1}(\tau) - \xi_j(\tau)} \right] \geq 0 \qquad \forall j \in [d] \backslash \{i,i+1\},
\end{equation}
and moreover that if   $\gamma_\ell(\tau) - \gamma_{\ell+1}(\tau) > \xi_\ell(\tau) - \xi_{\ell+1}(\tau)$ for any $\ell \in \{i+1,i+2, \ldots, d-1\}$, 
\begin{equation}\label{eq_n12}
\left[\frac{1}{\gamma_i(\tau) - \gamma_{\ell+1}(\tau)}
-  \frac{1}{\gamma_{i+1}(\tau) - \gamma_{\ell+1}(\tau)} \right] - \left[ \frac{1}{\xi_i(\tau) - \xi_{\ell+1}(\tau)} -  \frac{1}{\xi_{i+1}(\tau) - \xi_{\ell+1}(\tau)} \right] >0,
\end{equation}
and moreover that if   $\gamma_\ell(\tau) - \gamma_{\ell+1}(\tau) > \xi_\ell(\tau) - \xi_{\ell+1}(\tau)$ for any $\ell \in \{1,2, \ldots, i-1\}$,
\begin{equation}\label{eq_n13}
\left[\frac{1}{\gamma_i(\tau) - \gamma_{\ell}(\tau)}
-  \frac{1}{\gamma_{i+1}(\tau) - \gamma_{\ell}(\tau)} \right] - \left[ \frac{1}{\xi_i(\tau) - \xi_{\ell}(\tau)} -  \frac{1}{\xi_{i+1}(\tau) - \xi_{\ell}(\tau)} \right] >0.
\end{equation}
Thus, by \eqref{eq_n11}, \eqref{eq_n12}, and \eqref{eq_n13}, the only way for the r.h.s. of \eqref{eq_w11} to be equal to zero is if we have
\begin{equation}\label{eq_n9}
    \xi_j(\tau) -\xi_{j+1}(\tau) = \gamma_j(\tau) -\gamma_{j+1}(\tau) \qquad  \qquad \forall j \in [d-1],
    \end{equation}
   since we also have that $\xi_i(\tau) -\xi_{i+1}(\tau) = \gamma_i(\tau) -\gamma_{i+1}(\tau)$  
 by \eqref{eq_w1}.

Moreover, by Lemma \ref{lemma_strong}, for any initial conditions $\gamma(\tau)$ and $\xi(\tau)$, the processes $\gamma$ and $\xi$ have unique strong solutions  on $(0,\infty)$.
Therefore, since the stochastic differential equations \eqref{eq_DBM_eigenvalues} for $\gamma$ and $\xi$ are invariant to spatial translations, we must have by \eqref{eq_n9} that
\begin{equation}\label{eq_w12}
    \xi_i(t) -\xi_{i+1}(t) = \gamma_i(t) -\gamma_{i+1}(t) \qquad \forall t \geq \tau, i \in [d].
\end{equation}
By \eqref{eq_w12}, we have that  $\tau = \inf\{t\geq 0: \xi_i(t) - \xi_{i+1}(t) > \gamma_i(t) - \gamma_{i+1}(t) \textrm{ for some } i \in [d]  \} = \infty$.
This contradicts our assumption that $\tau < \infty$.
Therefore, by contradiction our assumption that $\tau < \infty$ is false.\\
To summarize, we have now shown, in both Case 1 and Case 2, that our assumption that $\tau< \infty$ is false.
Thus, by contradiction, we have that $\tau= \infty$ and hence that $\xi_i(t) - \xi_{i+1}(t)  \leq \gamma_i(t) - \gamma_{i+1}(t)$ for all  $t>0$ and all $1\leq i < d$.

\end{proof}

\subsubsection{Showing gaps are uniformly bounded below over time with high probability}
The following lemma (Lemma \ref{lemma_bad_event}), which we have used above to prove Lemma \ref{lemma_spectral_martingale}, shows that the high-probability bounds on the eigenvalue gaps of Dyson Brownian motion of Theorem \ref{thm:eigenvalue_gap} hold {\em uniformly} of the time interval $[0,T]$.

\begin{lemma}\label{lemma_bad_event}
Let $\gamma(t) = (\gamma_1(t), \ldots, \gamma_d(t))$ be a strong solution to \eqref{eq_DBM_eigenvalues} starting from any initial $\gamma(0) \in \mathcal{W}_d$. 
Then for any $t_0 \geq \frac{1}{d^{40}}$ and any $T>0$ we have
\begin{equation} \label{eq_e1}
    \mathbb{P}\left(\inf_{t_0 \leq t \leq T, \, \, 1\leq i < d  }\gamma_i(t) - \gamma_{i+1}(t) \leq \frac{1}{d^{10}} \frac{\sqrt{t}}{\mathfrak{b}\sqrt{d}}\right) \leq \frac{T}{d^{600}},
\end{equation}
for any $d \geq N_0$ where $N_0$ is a universal constant.

\end{lemma}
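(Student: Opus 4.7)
The plan is to reduce the lemma to the zero-initial-condition case via Lemma \ref{lemma_gap_comparison}, discretize $[t_0,T]$ finely in time, apply Theorem \ref{thm:eigenvalue_gap} pointwise at the discrete times, extend the pointwise estimates to a uniform-in-time bound via Weyl's inequality (Lemma \ref{lemma_weyl}) together with the spectral-norm bound of Lemma \ref{lemma_spectral_martingale_b}, and conclude with a union bound.

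First, I would couple the given DBM $\gamma(t)$ to an auxiliary process $\xi(t)$ driven by the same underlying matrix Brownian motion $B(t)$ but started from $\xi(0)=0\in \mathcal{W}_d$. Since the initial gaps of $\xi$ are all zero (hence trivially dominated by those of $\gamma$), Lemma \ref{lemma_gap_comparison} yields $\xi_i(t)-\xi_{i+1}(t)\le \gamma_i(t)-\gamma_{i+1}(t)$ almost surely, for every $i$ and all $t>0$. Thus it suffices to prove the uniform gap lower bound for $\xi(t)$. Because $\xi(0)=0$, at each fixed $t$ the matrix $\xi(t)=B(t)$ is distributed as $\sqrt{t}\,(G+G^{\ast})$ with $G$ having i.i.d.\ standard Gaussian entries, which matches (after rescaling by $\sqrt{t}$) the hypothesis of Theorem \ref{thm:eigenvalue_gap}.

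Second, I would partition $[t_0,T]$ with a geometric grid $\tau_k$ satisfying $\Delta_k := \tau_{k+1}-\tau_k \asymp \tau_k/d^{C_1}$ for a sufficiently large constant $C_1$ (about $22$ suffices to control the Brownian fluctuation). On each block $[\tau_k,\tau_{k+1}]$, Lemma \ref{lemma_spectral_martingale_b} applied to the shifted Brownian motion gives $\sup_{s\in[\tau_k,\tau_{k+1}]}\|B(s)-B(\tau_k)\|_2\le \sqrt{\Delta_k}(\sqrt d+\alpha)$ outside an event of probability at most $e^{-C\alpha^2}$; for $\alpha$ a large polylog of $d$ this is much smaller than $\sqrt{\tau_k}/(d^{10}\mathfrak{b}\sqrt d)$. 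By Weyl's inequality each eigenvalue of $\xi(t)$ changes by at most this quantity on the block. At each discrete time $\tau_k$, Theorem \ref{thm:eigenvalue_gap} bounds $\mathbb{P}(\xi_i(\tau_k)-\xi_{i+1}(\tau_k)\le s\sqrt{\tau_k}/(\mathfrak{b}\sqrt d))$ by $s^{\beta+1}+1/d^{C_2}$ for any $s>0$, with $C_2$ arbitrarily large by the remark after that theorem. Choosing $s$ slightly larger than $1/d^{10}$, combining the single-time bound at $\tau_k$ with the Weyl/Brownian-continuity argument on the block, and noting $\sqrt{t}\le (1+o(1))\sqrt{\tau_k}$ for $t\in[\tau_k,\tau_{k+1}]$, gives gap $\ge \sqrt{t}/(d^{10}\mathfrak{b}\sqrt d)$ uniformly on the block outside the two bad events.

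The hard part will be closing the union bound quantitatively to $T/d^{600}$. Summed over the $d-1$ indices, the $K\sim d^{C_1}(\log T+\log d)$ grid points, and the two failure events (small gap at some $\tau_k$, large Brownian increment on some block), the total failure probability is of order $K\,d\,(s^{\beta+1}+1/d^{C_2})+K\,e^{-C\alpha^2}$. With $s\asymp 1/d^{10}$, the term $s^{\beta+1}$ is only $\sim 1/d^{20}$ in the real case and $\sim 1/d^{30}$ in the complex case, so a naive sum of $Kd\,s^{\beta+1}$ is much larger than the target. Resolving this will require exploiting the linear-in-$T$ allowance on the right-hand side (for instance, by partitioning $[t_0,T]$ into unit-length intervals and establishing a $1/d^{600}$ bound on each), and pushing $C_2$ well beyond $600$ so that the additive $1/d^{C_2}$ term is negligible after the union bound; the consistent joint tuning of $s$, $\Delta_k$, $\alpha$, and $C_2$ is the principal technical challenge.
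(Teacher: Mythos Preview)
Your plan is essentially the same as the paper's proof: discretize time, apply the pointwise eigenvalue-gap bound of Theorem~\ref{thm:eigenvalue_gap} at the grid points, use Weyl's inequality (Lemma~\ref{lemma_weyl}) together with the Brownian spectral-norm bound (Lemma~\ref{lemma_spectral_martingale_b}) to extend to the intervening intervals, and union-bound over indices and grid points. Two minor differences: (i) the reduction to zero initial condition via Lemma~\ref{lemma_gap_comparison} is unnecessary, since Theorem~\ref{thm:eigenvalue_gap} already applies to $M+G+G^\ast$ for arbitrary Hermitian $M$ (i.e.\ to $\gamma(t)$ from any initial condition, after rescaling by $\sqrt{t}$); the paper invokes it directly. (ii) The paper uses a uniform grid of mesh $1/d^{200}$ rather than a geometric one, which simplifies the bookkeeping and makes the number of grid points linear in $T$.

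Regarding your quantitative concern: you are right that with $s\asymp 1/d^{10}$ the term $s^{\beta+1}$ alone is too large to survive a union bound over $\sim T d^{O(1)}$ grid points and still yield $T/d^{600}$. The paper's proof has the same numerical slip (it asserts $\sum_{i=1}^{d-1}(2/d^{10})^3 + 1/d^{1000}\le 1/d^{997}$, which is off). The intended reading, consistent with the remarks after Theorems~\ref{thm_utility} and~\ref{thm:eigenvalue_gap}, is that all the specific exponents ($d^{10}$, $d^{200}$, $d^{600}$, $d^{1000}$) are placeholder polynomial powers that can be taken as large as needed; with a suitable choice of these constants the union bound closes. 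So your ``principal technical challenge'' is really just a matter of adjusting exponents, not a genuine obstacle.
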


\begin{proof}
By Weyl's Inequality (Lemma \ref{lemma_weyl}), we have that for any $z \geq t_0$,
\begin{eqnarray}
& &\!\!\!\!\!\!\!\!\!\!  \!\!\! \!\!\!\!\!\!\!\!\!\!\!\!\!\!\!\!\!\! \mathbb{P}\left( \inf_{1\leq i < d } \gamma_i(t) - \gamma_{i+1}(t) \leq \frac{1}{d^{10}} \frac{\sqrt{t}}{\mathfrak{b}\sqrt{d}}   \quad \textrm{ for some }  t \in \left[z, z + \frac{1}{d^{200}}\right] \right) \nonumber\\
&=& \mathbb{P}\left( \gamma_i(t) - \gamma_{i+1}(t) \leq \frac{1}{d^{10}} \frac{\sqrt{t}}{\mathfrak{b}\sqrt{d}}   \qquad \textrm{ for some }  t \in \left[z, z + \frac{1}{d^{200}}\right], i \in[d-1] \right) \nonumber\\
 & \stackrel{\textrm{Lem. \ref{lemma_weyl}}}{\leq}& \mathbb{P}\left( \gamma_i(z) - \gamma_{i+1}(z) \leq \frac{1}{d^{10}} \frac{\sqrt{t}}{\mathfrak{b}\sqrt{d}} +  2 \|B(t)\|_2   \, \, \textrm{ for some }  t \in \left[z, z + \frac{1}{d^{200}}\right], i \in[d-1] \right) \nonumber\\
  &\leq &\mathbb{P}\left( \gamma_i(z) - \gamma_{i+1}(z) \leq \frac{1}{d^{10}} \frac{\sqrt{t}}{\mathfrak{b}\sqrt{d}} +  4 \frac{1}{d^{200}} \sqrt{d}   \qquad \textrm{ for some }  t \in \left[z, z + \frac{1}{d^{200}}\right], i \in[d-1]  \right)\nonumber\\
   & & + \quad  \mathbb{P}\left(\sup_{t \in [0, \frac{1}{d^{200}}]} \|B(t)\|_2  > 2 \frac{1}{d^{200}} \sqrt{d} \right)\nonumber\\
   &  \stackrel{\textrm{Lem.  \ref{lemma_spectral_martingale_b}}}{\leq}&  \mathbb{P}\left( \gamma_i(z) - \gamma_{i+1}(z) \leq \frac{1}{d^{10}} \frac{\sqrt{t}}{\mathfrak{b}\sqrt{d}} +  4 \frac{1}{d^{200}} \sqrt{d}     \quad \textrm{ for some }   t \in \left[z, z + \frac{1}{d^{200}}\right], i \in[d-1]   \right) \nonumber\\
   & & + \quad  \frac{1}{d^{1000}} \label{eq_n174}\\
   & \leq & \mathbb{P}\left( \gamma_i(z) - \gamma_{i+1}(z) \leq \frac{2}{d^{10}} \frac{\sqrt{z}}{\mathfrak{b}\sqrt{d}}       \qquad \textrm{ for some }     i \in[d-1]    \right)+  \frac{1}{d^{1000}}\nonumber\\
    & =  & \mathbb{P} \left(\bigcup_{i=1}^{d-1} \left\{ \gamma_i(z) - \gamma_{i+1}(z) \leq \frac{2}{d^{10}} \frac{\sqrt{z}}{\mathfrak{b}\sqrt{d}} \right\}  \right)+  \frac{1}{d^{1000}}\nonumber\\
      & \leq & \sum_{i=1}^{d-1} \mathbb{P}\left( \gamma_i(z) - \gamma_{i+1}(z) \leq \frac{2}{d^{10}} \frac{\sqrt{z}}{\mathfrak{b}\sqrt{d}}   \right)+  \frac{1}{d^{1000}}\label{eq_n175}\\
   &  \stackrel{\textrm{Th.  \ref{thm:eigenvalue_gap}}}{\leq}&  \sum_{i=1}^{d-1}  \left(\frac{2}{d^{10}}\right)^{3} + \frac{1}{d^{1000}} \label{eq_n176}\\
   &\leq & \frac{1}{d^{997}}, \label{eq_e2}
 \end{eqnarray}
 where \eqref{eq_n174} holds by Lemma \ref{lemma_spectral_martingale_b} whenever $d \geq N_0$ for some sufficiently large universal constant $N_0$.
\eqref{eq_n175} follows from a union bound. 
 \eqref{eq_n176} holds by Theorem \ref{thm:eigenvalue_gap}, since the solution $\gamma(z) = (\gamma_1(z), \ldots, \gamma_d(z))$ to the eigenvalue evolution equations \eqref{eq_DBM_eigenvalues} at time $z$ with initial condition $\gamma(0) = (\gamma_1(0), \ldots, \gamma_d(0))$ have the same joint distribution as the eigenvalues of a random matrix $\sqrt{z}(\frac{1}{\sqrt{z}}\mathrm{diag}\left(\gamma_1(0), \ldots, \gamma_d(0)) + (G+G^\ast)\right)$ where $G$ has i.i.d. complex standard Gaussian entries.
Thus, we have,
\begin{eqnarray}
& & \!\!\!\!\!\!\! \!\!\!\!\!\!\! \!\!\!\!\!\!\!  \mathbb{P}\left(\inf_{t_0 \leq t \leq T, \, \, 1\leq i < d  }\gamma_i(t) - \gamma_{i+1}(t) \leq \frac{1}{d^{10}} \frac{\sqrt{t}}{\mathfrak{b}\sqrt{d}}\right) \nonumber\\
& = & \mathbb{P}\left(\inf_{1\leq i < d} \gamma_i(t) - \gamma_{i+1}(t) \leq \frac{1}{d^{10}} \frac{\sqrt{t}}{\mathfrak{b}\sqrt{d}}      \, \,  \textrm{ for some }   t\in [t_0, T] \right) \nonumber\\
&= & \mathbb{P}\left( \bigcup_{z\in [t_0, T] \cap \frac{1}{d^{200}} \mathbb{Z}} \left \{ \inf_{1\leq i < d  }\gamma_i(t) - \gamma_{i+1}(t) \leq \frac{1}{d^{10}} \frac{\sqrt{t}}{\mathfrak{b}\sqrt{d}}   \, \,   \textrm{ for some }   t \in [z, z + \frac{1}{d^{200}}] \cap [t_0, T] \right \} \right) \nonumber\\
&\leq & \mathbb{P}\left( \bigcup_{z\in [t_0, T] \cap \frac{1}{d^{200}} \mathbb{Z}} \left \{ \inf_{1\leq i < d }\gamma_i(t) - \gamma_{i+1}(t) \leq \frac{1}{d^{10}} \frac{\sqrt{t}}{\mathfrak{b}\sqrt{d}}   \, \,   \textrm{ for some }   t \in [z, z + \frac{1}{d^{200}}] \right \} \right) \nonumber\\
&\leq &\sum_{z\in [t_0, T] \cap \frac{1}{d^{200}}\mathbb{Z}}  \mathbb{P}\left( \inf_{1\leq i < d }\gamma_i(t) - \gamma_{i+1}(t) \leq \frac{1}{d^{10}} \frac{\sqrt{t}}{\mathfrak{b}\sqrt{d}}   \, \,   \textrm{ for some }   t \in [z, z + \frac{1}{d^{200}}] \right) \label{eq_n152}\\
&\stackrel{\textrm{Eq. } \eqref{eq_e2}}{\leq}& \sum_{z\in [t_0, T] \cap \frac{1}{d^{200}}\mathbb{Z}}  \frac{1}{d^{997}} \nonumber\\
  & \leq & d^{200} T \times  \frac{1}{d^{997}} \nonumber\\
   &\leq & \frac{T}{d^{600}}, \nonumber
\end{eqnarray}
where \eqref{eq_n152} is a union bound. 
\end{proof}

\subsubsection{Gaps between not necessarily neighboring eigenvalues} \label{section_non_neighboring_gaps}

Recall that Theorem \ref{thm:eigenvalue_gap} provides a high-probability bound on the gaps between {\em neighboring} eigenvalues of Dyson Brownian motion.
In this section, we extend the high-probability bounds of Theorem \ref{thm:eigenvalue_gap} to gaps between not-necessarily neighboring eigenvalues $\gamma_i(t) - \gamma_j(t)$  for {\em any} $i<j$ (Corollary \ref{lemma_gaps_any_start}).

The following concentration bound will be helpful in proving Corollary \ref{lemma_gaps_any_start}.
\begin{proposition}\label{prop_sum_nonindependent}
Let $F: \mathbb{R} \rightarrow \mathbb{R}$ be a nondecreasing function, and let $r \in \mathbb{N}$.
 Suppose that $X_1, \ldots, X_r$  are (not necessarily independent) non-negative random variables satisfying $\mathbb{P}(X_i \leq s) \leq F(s)$ for all $i \in [r]$ and all $s \geq 0$.
Then
\begin{equation} \label{eq_sum_nonindependent}
    \mathbb{P}\left(\sum_{i=1}^r X_i \leq \frac{1}{2}rs\right) \leq 2F(s) \qquad \forall s \geq 0. 
\end{equation}
\end{proposition}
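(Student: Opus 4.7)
The plan is to reduce the tail bound on the sum to a counting argument on the number of ``small'' terms, then apply Markov's inequality. The only serious step is observing that a small sum forces many small summands, which works because the $X_i$ are non-negative; no independence is needed.

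First I would fix $s\geq 0$ and define the counting random variable
\[
N \;:=\; \bigl|\{\,i\in[r] : X_i \leq s\,\}\bigr|.
\]
The key combinatorial step is a contrapositive: if fewer than $r/2$ of the $X_i$ satisfy $X_i\leq s$, then more than $r/2$ of them satisfy $X_i>s$, and since all $X_i\geq 0$ this forces
\[
\sum_{i=1}^r X_i \;\geq\; \sum_{i:\,X_i>s} X_i \;>\; \tfrac{r}{2}\,s.
\]
Contrapositively, on the event $\{\sum_i X_i\leq \tfrac12 rs\}$ we must have $N\geq r/2$, giving the inclusion
\[
\Bigl\{\textstyle\sum_{i=1}^r X_i \leq \tfrac{1}{2}rs\Bigr\} \;\subseteq\; \bigl\{ N \geq r/2 \bigr\}.
\]

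Next I would bound $\mathbb{P}(N\geq r/2)$ via Markov's inequality. Linearity of expectation (which does not require independence) and the marginal hypothesis $\mathbb{P}(X_i\leq s)\leq F(s)$ yield
\[
\mathbb{E}[N] \;=\; \sum_{i=1}^r \mathbb{P}(X_i \leq s) \;\leq\; r\,F(s),
\]
and then Markov's inequality gives
\[
\mathbb{P}\bigl(N \geq r/2\bigr) \;\leq\; \frac{\mathbb{E}[N]}{r/2} \;\leq\; \frac{r F(s)}{r/2} \;=\; 2F(s).
\]
Combining the two displays proves \eqref{eq_sum_nonindependent}.

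There is no real obstacle here: the monotonicity of $F$ is only used implicitly to make the hypothesis meaningful for all $s\geq 0$, and the non-negativity of the $X_i$ is essential for the contrapositive step (without it, very negative terms could compensate and invalidate the inclusion). The argument works regardless of the joint distribution of the $X_i$, which is exactly why this proposition is useful when it is applied to sums of inverse eigenvalue gaps $\sum_{j} (\gamma_i(t)-\gamma_j(t))$ in Corollary~\ref{lemma_gaps_any_start}, where the summands are highly correlated.
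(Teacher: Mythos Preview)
Your proof is correct and essentially the same as the paper's: both establish the inclusion $\{\sum_i X_i\le \tfrac12 rs\}\subseteq\{N\ge r/2\}$ from non-negativity and then bound $\mathbb{P}(N\ge r/2)$ via $\mathbb{E}[N]\le rF(s)$. The only cosmetic difference is that the paper phrases the Markov step through a uniformly random index $J\in[r]$ (so that $\mathbb{P}(X_J\le s)=\mathbb{E}[N]/r$), whereas you invoke Markov's inequality on $N$ directly; the two are equivalent.
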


\begin{proof}
Let $s \geq 0$. 
 Let $E$ be the ``bad'' event that $|\{i \in [r]: X_i \leq s\}| \geq \frac{r}{2}$.
Choose $J$ uniformly at random from $\{1,\ldots, r\}$, independent of the outcomes of the random variables $X_1,\cdots, X_r$.
Then
\begin{equation}\label{eq_n14}
\mathbb{P}(J \in \{i \in [r]: X_i \leq s\} | E) \geq \frac{1}{2}.
\end{equation}
Therefore,
\begin{align}\label{eq_n16}
    \mathbb{P}(X_J \leq s) =  \mathbb{P}(X_J \leq s| E) \times \mathbb{P}(E) = \mathbb{P}(J \in \{i \in [r]: X_i \leq s\} | E)  \times \mathbb{P}(E) \stackrel{\textrm{Eq. } \eqref{eq_n14}}{\geq} \frac{1}{2} \mathbb{P}(E).
\end{align}
Moreover, since $X_1,\ldots, X_r$ are non-negative random variables, we have that 
\begin{equation}\label{eq_n15}
   \left \{\sum_{i=1}^r X_i \leq \frac{1}{2}rs\right\} \subseteq E.
\end{equation}
Therefore,
\begin{equation*}
      \mathbb{P}\left(\sum_{i=1}^r X_i \leq \frac{1}{2}rs\right) \stackrel{\textrm{Eq. } \eqref{eq_n15}}{\leq}  \mathbb{P}(E) \stackrel{\textrm{Eq. } \eqref{eq_n16}}{\leq}  2  \mathbb{P}(X_J \leq s) \leq 2F(s),
\end{equation*}
where the last inequality holds since $\mathbb{P}(X_i \leq s) \leq F(s)$  for all $i \in [r]$ and since the random variable $J$ is sampled independently from the outcomes of $X_1,\cdots, X_r$.
\end{proof}

\noindent
\begin{corollary}[\bf Gaps between not-necessarily neighboring  eigenvalues]\label{lemma_gaps_any_start}
\qquad  \qquad \qquad  \qquad \qquad \qquad \qquad . 
 Let $\gamma(t) = (\gamma_1(t), \ldots, \gamma_d(t))$ be a strong solution  of \eqref{eq_DBM_eigenvalues} starting from any initial $\gamma(0) \in \mathcal{W}_d$.
Then for every $t>0$, every $i,j \in [d]$ where $i<j$, and every $\alpha>0$,
\begin{equation*}
    \mathbb{P}\left(\left\{ \gamma_i(t) - \gamma_{j}(t) \leq (j-i) \times s \frac{\sqrt{t}}{\mathfrak{b}\sqrt{d}} \right\} \cap \hat{E}_\alpha^c\right) \leq  s^3 \qquad \forall s>0, t>0.
\end{equation*}
\end{corollary}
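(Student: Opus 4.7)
The plan is to reduce the gap between non-neighboring eigenvalues $\gamma_i(t) - \gamma_j(t)$ to a sum of $j-i$ consecutive neighboring gaps, each governed by Theorem~\ref{thm:eigenvalue_gap}, and then combine them via Proposition~\ref{prop_sum_nonindependent}. First, I would couple $\gamma$ to the strong solution $\xi$ of~\eqref{eq_DBM_eigenvalues} started at $\xi(0) = 0$, driven by the same Brownian motion $B(t)$. Since $\xi_\ell(0) - \xi_{\ell+1}(0) = 0 \leq \gamma_\ell(0) - \gamma_{\ell+1}(0)$ for every $\ell \in [d-1]$, Lemma~\ref{lemma_gap_comparison} (with the roles of $\xi$ and $\gamma$ swapped) yields $\xi_\ell(t) - \xi_{\ell+1}(t) \leq \gamma_\ell(t) - \gamma_{\ell+1}(t)$ almost surely for every $t > 0$, which telescopes to $\xi_i(t) - \xi_j(t) \leq \gamma_i(t) - \gamma_j(t)$. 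Dropping the intersection with $\hat{E}_\alpha^c$ only increases the probability, so it suffices to bound $\mathbb{P}\bigl(\xi_i(t) - \xi_j(t) \leq (j-i)\,s\,\tfrac{\sqrt{t}}{\mathfrak{b}\sqrt{d}}\bigr)$.

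Next, because $\xi$ starts at zero, the law of $\xi(t)$ matches that of the eigenvalues of $\sqrt{t}(G + G^\ast)$, where $G$ is a $d \times d$ matrix with i.i.d.\ complex standard Gaussian entries. Theorem~\ref{thm:eigenvalue_gap} applied with $M = 0$ and $\beta = 2$, combined with this Brownian scaling, gives for every $\ell \in [d-1]$ and every $u > 0$
\[
\mathbb{P}\Big(\xi_\ell(t) - \xi_{\ell+1}(t) \leq u\,\tfrac{\sqrt{t}}{\mathfrak{b}\sqrt{d}}\Big) \;\leq\; u^3 + \tfrac{1}{d^{1000}}.
\]
Setting $r := j - i$ and $X_\ell := \xi_{i+\ell-1}(t) - \xi_{i+\ell}(t)$, the $X_\ell$ are non-negative and $\sum_{\ell=1}^r X_\ell = \xi_i(t) - \xi_j(t)$, so Proposition~\ref{prop_sum_nonindependent} applied at parameter $2s\sqrt{t}/(\mathfrak{b}\sqrt{d})$ yields
\[
\mathbb{P}\Big(\xi_i(t) - \xi_j(t) \leq r\,s\,\tfrac{\sqrt{t}}{\mathfrak{b}\sqrt{d}}\Big) \;\leq\; 2\bigl((2s)^3 + d^{-1000}\bigr) \;=\; 16\,s^3 + 2\,d^{-1000}.
\]

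The final task is to tighten this to the advertised clean bound $s^3$. The multiplicative factor $16$ is absorbed by enlarging the universal constant $L$ inside $\mathfrak{b} = (\log d)^{L \log \log d}$, equivalently pushing the rescaling $s \mapsto s/16^{1/3}$ into $\mathfrak{b}$. For the additive term $2\,d^{-1000}$, I would invoke the event $\hat{E}_\alpha^c$: definition~\eqref{eq_n77} forces $\gamma_\ell(t) - \gamma_{\ell+1}(t) > d^{-10}\sqrt{t}/(\mathfrak{b}\sqrt{d})$ for every $\ell$ and every $t \in [t_0, T]$ on $\hat{E}_\alpha^c$, hence $\gamma_i(t) - \gamma_j(t) > (j-i)\,d^{-10}\sqrt{t}/(\mathfrak{b}\sqrt{d})$ on that event. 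So for $s \leq d^{-10}$ the intersection event in the corollary is empty and the bound is vacuous, while for $s > d^{-10}$ the estimate $s^3 > d^{-30}$ dominates the residual $d^{-1000}$ with ample room to soak up the factor of $2$ into $\mathfrak{b}$.

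The main obstacle is purely bookkeeping of constants: once Lemma~\ref{lemma_gap_comparison} reduces the problem to the zero-initial-condition case, the probabilistic content is immediate from Theorem~\ref{thm:eigenvalue_gap} and Proposition~\ref{prop_sum_nonindependent}, but producing the clean tail $\leq s^3$ (instead of $\leq C s^3 + C/d^{1000}$) requires the two-layered device above: enlarge $L$ in $\mathfrak{b}$ to absorb the multiplicative constant coming from Proposition~\ref{prop_sum_nonindependent}, and use $\hat{E}_\alpha^c$ to suppress the additive $d^{-1000}$ tail in the only regime (small $s$) where it is not already dominated.
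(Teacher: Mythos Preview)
Your proposal is correct and follows essentially the same strategy as the paper: bound each neighboring gap via Theorem~\ref{thm:eigenvalue_gap}, then combine via Proposition~\ref{prop_sum_nonindependent}, and finally absorb the resulting constants by enlarging $\mathfrak{b}$. The only difference is that your initial coupling step through Lemma~\ref{lemma_gap_comparison} is redundant---Theorem~\ref{thm:eigenvalue_gap} already applies to arbitrary initial conditions $M$, so the paper simply invokes it directly on $\gamma(t)$ (viewing $\gamma(t)$ as the eigenvalues of $\sqrt{t}\bigl(\tfrac{1}{\sqrt{t}}\mathrm{diag}(\gamma(0)) + G + G^\ast\bigr)$) and then proceeds exactly as you do; your more explicit treatment of the additive $d^{-1000}$ tail via $\hat{E}_\alpha^c$ is in fact cleaner than the paper's, which leaves that step implicit.
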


\begin{proof}

Since the solution $\gamma(t) = (\gamma_1(t), \ldots, \gamma_d(t))$ to the eigenvalue evolution equations \eqref{eq_DBM_eigenvalues} at time $t$ with initial condition $\gamma(0) = (\gamma_1(0), \ldots, \gamma_d(0))$ has the same joint distribution as the eigenvalues of a random matrix\\ $\sqrt{t}(\frac{1}{\sqrt{t}}\mathrm{diag}\left(\gamma_1(0), \ldots, \gamma_d(0)) + (G+G^\ast)\right)$ where $G$ has i.i.d. complex standard Gaussian entries, by Theorem \ref{thm:eigenvalue_gap} we have that
\begin{equation}\label{eq_e3}
    \mathbb{P}\left(\left \{\gamma_i(t) - \gamma_{i+1}(t) \leq s \frac{\sqrt{t}}{\mathfrak{b}\sqrt{d}}\right\} \cap \hat{E}_\alpha^c\right) \leq 2s^{3} \qquad \forall s>0, \forall 1\leq i < d.
\end{equation}
Define $X_\ell := \gamma_{i+\ell}(t) - \gamma_{i+\ell+1}(t) $ for all $\ell \in \{0,1, \ldots, j-i-1\}$.
Then plugging \eqref{eq_e3} into Proposition \ref{prop_sum_nonindependent}, we have that
\begin{eqnarray*}
   &\mathbb{P}\left(\left\{ \gamma_i(t) - \gamma_{j}(t) \leq (j-i) \times s \frac{\sqrt{t}}{2\mathfrak{b}\sqrt{d}} \right \} \cap\hat{E}_\alpha^c\right)\nonumber
   =\mathbb{P}\left(\left\{ \sum_{\ell=1}^{j-i} X_j \leq \frac{1}{2}(j-i) \times s \frac{\sqrt{t}}{\mathfrak{b}\sqrt{d}} \right \} \cap \hat{E}_\alpha^c \right)\\
    & \stackrel{\textrm{Prop.  \ref{prop_sum_nonindependent}, \textrm{ Eq. } \eqref{eq_e3}}}{\leq}  4 s^{3}.
\end{eqnarray*}
Redefining $\mathfrak{b}$ to be $4$ times the original value of $\mathfrak{b}$ completes the proof.
\end{proof}

\subsection{Proof of Gaussian Unitary Ensemble eigenvalue gap bound for zero initial condition}

 Lemma \ref{lemma_gap_comparison} reduces the task of proving Theorem \ref{thm:eigenvalue_gap} to the following special case of Theorem \ref{thm:eigenvalue_gap} where the initial matrix $M = 0$:

\begin{lemma}[\bf Eigenvalue gaps of  Gaussian Unitary Ensemble (GUE) and Gaussian Orthogonal Ensemble (GOE), from zero initial condition]\label{lemma_GUE_gaps}
Let $A := G + G^\ast$ where $G$ is a matrix with i.i.d. complex (or real) standard Gaussian entries, and denote by $\eta_1,\ldots, \eta_d$ the eigenvalues of $A$.
Then
$$
    \mathbb{P}\left(\eta_i - \eta_{i+1} \leq s \frac{1}{\mathfrak{b}\sqrt{d}}\right) \leq s^{\beta +1} + \frac{1}{d^{1000}}$$ for all $s>0$, and for all $1\leq i < d$,
where $\beta=2$ for the complex Hermitian case (and $\beta=1$ for the real-symmetric case), and $\mathfrak{b} = (\log d)^{ L\log \log d}$ and $L$ is a universal constant.
\end{lemma}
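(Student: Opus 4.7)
The plan is to follow the blueprint laid out in Section \ref{sec_technical_gaps} by exploiting the explicit joint eigenvalue density \eqref{eq_joint_density} together with a carefully designed change-of-variables map. The overall strategy will be: construct an injective map $\psi : \mathcal{W}_d \to \mathcal{W}_d$ that sends configurations in the ``bad'' set $A(s) := \{\eta \in \mathcal{W}_d : \eta_i - \eta_{i+1} \leq s/(\mathfrak{b}\sqrt{d})\}$ to configurations whose $i$-th gap has been expanded by a factor of $1/s$, while leaving the remaining spectrum as unperturbed as possible. A change of variables then yields
$$\mathbb{P}(A(s)) = \int_{A(s)} f(\eta)\, d\eta \leq s^{\beta+1} \int_{A(s)} f(\psi(\eta)) \, \det(J_\psi(\eta))\, d\eta \leq s^{\beta+1},$$
provided one can verify three ingredients: (i) $f(\psi(\eta))/f(\eta) \gtrsim s^{-\beta}$ on $A(s)$, (ii) $\det(J_\psi(\eta)) \gtrsim 1/s$, and (iii) $\psi$ is injective so the final integral is bounded by the total mass of $f$.

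To carry this out, I would split into an ``edge'' case ($i$ or $d-i$ at most $\mathfrak{b}$) and a ``bulk'' case. For the edge case, I would use the map $\phi$ that expands the $i$-th gap by a factor $1/s$ and translates the first $i$ eigenvalues leftward by $(1/s - 1)(\eta_i - \eta_{i+1})$; since only $O(\mathfrak{b})$ eigenvalues are translated and each by at most $1/\sqrt{d}$, the exponential factor in \eqref{eq_joint_density} costs only $e^{O(1)}$. This is the content of Lemmas \ref{prop_Jacobian_phi} and \ref{lemma_density_ratio_edge}, and injectivity / Lipschitz properties come from Proposition \ref{prop_map_phi}. For the bulk case, I would invoke the rigidity estimate \eqref{eq_ridgidity} to localize the perturbation to the interval $[a,b] = [\eta_{i-\mathfrak{b}}, \eta_{i+\mathfrak{b}}]$; the map $g$ would expand the $i$-th gap by $1/s$ and compensate by shrinking the other gaps inside $[a,b]$ by a factor $1 - \alpha$, where $\alpha = O(\mathfrak{b}^{-3})$ on $A(s/\mathfrak{b})$. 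The density ratio then factors into (a) a factor $s^{-\beta}$ from the expansion term $|\eta_i - \eta_{i+1}|^\beta$ in \eqref{eq_joint_density}, (b) $(1-\alpha)^{O(\mathfrak{b}^2)} = \Theta(1)$ from the $O(\mathfrak{b}^2)$ ``repulsion'' factors indexed by pairs in $[i-\mathfrak{b},i+\mathfrak{b}]$, and (c) an exponential factor controlled by the mean-field / rigidity estimate of Lemma \ref{lemma_mean_field}, producing the density bound of Lemma \ref{lemma_density_ratio}. Propositions \ref{prop_map}, \ref{prop_classical}, and \ref{prop_n1}, together with Lemma \ref{prop_Jacobian}, supply injectivity, bounded pre-image cardinality, Lipschitz control, and the Jacobian lower bound.

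The main obstacle will be controlling the exponential factor $\exp(-\tfrac{1}{2}\sum_j (g(\eta)[j]^2 - \eta_j^2))$ in the bulk case. Since bulk eigenvalues have magnitude $\Theta(\sqrt{d})$, a naive shift of even a single $\eta_j$ by $1/\sqrt{d}$ would change $\eta_j^2$ by $\Theta(1)$ and, summed over the $O(\mathfrak{b})$ affected indices, could produce an $e^{\Omega(\mathfrak{b})}$ damage that swamps the $s^{-\beta}$ gain. This forces the specific design choice that $g$ \emph{shrinks} rather than \emph{shifts} the bulk gaps inside $[a,b]$: the aggregate displacement of each $\eta_j$ with $j \in [i-\mathfrak{b}, i+\mathfrak{b}]$ is bounded by the telescoping sum of the shrunk gaps, which by rigidity is at most $\mathfrak{b}^2 \sqrt{d} \cdot \alpha = O(\mathfrak{b}^{-1})$, giving a total cost of $e^{O(\mathfrak{b}\cdot \mathfrak{b}^{-1})} = O(1)$. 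Finally, the term $1/d^{1000}$ in the lemma statement absorbs the low-probability event where rigidity \eqref{eq_ridgidity} fails. Combining the edge and bulk bounds, rescaling $s \mapsto s/\mathfrak{b}$ to match the statement, and extending the real case ($\beta = 1$) by replacing each appearance of $s^{-\beta}$ accordingly completes the proof of Lemma \ref{lemma_GUE_gaps}.
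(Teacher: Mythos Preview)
Your proposal is correct and follows essentially the same approach as the paper's proof: the same edge/bulk split, the same two maps $\phi$ and $g$, the same three ingredients (density ratio $\gtrsim s^{-\beta}$, Jacobian $\gtrsim 1/s$, bounded pre-image cardinality), and the same resolution of the exponential-factor obstacle via the shrink-rather-than-shift design of $g$ combined with rigidity. The only cosmetic discrepancy is that the paper's localization window in the bulk has half-width $\mathfrak{b}^2$ rather than $\mathfrak{b}$ (so $j_{\min}=i-\mathfrak{b}^2$, $j_{\max}=i+\mathfrak{b}^2$), which is what makes the $(1-\alpha)^{O(\mathfrak{b}^4)}$ and exponential-cost bookkeeping close; with your window of half-width $\mathfrak{b}$ the arithmetic would need minor adjustment, but the argument is otherwise identical.
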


\noindent Plugging in Lemma \ref{lemma_GUE_gaps} into Lemma \ref{lemma_gap_comparison} completes the proof of Theorem \ref{thm:eigenvalue_gap}.
 For simplicity of exposition, we give the proof of Lemma \ref{lemma_GUE_gaps} in this section for the complex-Hermitian GUE case (the proof for the real-symmetric GOE case follows with minor modifications; see the last paragraph of Section \ref{overview_eigenvalue_gaps} for details).

\subsubsection{Eigenvalue ridgidity}

Denote by $\eta_1,\ldots, \eta_d$ the eigenvalues of the GUE (or GOE) random matrix-- that is the matrix $G+G^\ast$ where each entry of $G$ is an independent standard complex (or real) Gaussian.
To bound the eigenvalue gaps $\eta_i-\eta_{i+1}$ for $i \geq \tilde{\Omega}(1)$, which are not near the edge of the spectrum, 
we will use the fact that
the eigenvalue gaps of the GUE/GOE satisfy a rigidity property (\cite{erdHos2012rigidity}; restated here as Lemma \ref{lemma_rigidity}).
Roughly, for every $i \in [d]$ the $i$'th eigenvalue $\eta_i$ does not deviate by more than $\mathrm{polylog}(d)$ times the average gap size $\eta_i - \eta_{i+1}.
$
More formally, for every $i\in [d]$ we define the ``classical'' eigenvalue location $\omega_i$ to be the number such that
\begin{equation}\label{eq_a7}
      d\int_{\frac{\omega_i}{\sqrt{d}}}^{\infty}\rho(x) \mathrm{d}x = i - 1,
\end{equation}
where $\rho(x):= \frac{1}{2\pi}\sqrt{\max(4 - x^2, \,\, 0)}$ is the semi-circle law.
For convenience, we also define $\omega_{d+1} := -2 \sqrt{d}$ (that way, the locations of the  $\omega_{d+1} \leq \omega_d \leq \cdots \leq \omega_1$  are symmetric about $0$).

 The following proposition, which provides upper and lower bounds on the classical eigenvalue locations $\omega_i$, will be useful when applying the eigenvalue rigidity property of Lemma \ref{lemma_rigidity}.
\begin{proposition}\label{prop_classical}
The classical eigenvalues $\omega_i$ satisfy
\begin{equation}\label{eq_a10}
2\sqrt{d}-  \frac{9}{2}d^{-\frac{1}{6}} (i-1)^{\frac{2}{3}}   \leq \omega_i \leq 2\sqrt{d}- d^{-\frac{1}{6}} (i-1)^{\frac{2}{3}} \qquad \forall \ 1\leq i \leq \frac{d}{2},
\end{equation}
\begin{equation}\label{eq_a11}
    d^{-\frac{1}{6}} (d-i+1)^{\frac{2}{3}}  -2\sqrt{d}\leq \omega_{i} \leq  \frac{9}{2}d^{-\frac{1}{6}} (d-i+1)^{\frac{2}{3}} -2\sqrt{d} \qquad \forall \ \frac{d}{2} \leq i \leq d.
\end{equation}
Moreover, their gaps satisfy
\begin{equation}\label{eq_a12}
d^{-\frac{1}{6}}\min(i, d-i+1)^{-\frac{1}{3}}\leq \omega_i - \omega_{i+1} \leq 2\pi d^{-\frac{1}{6}} \min(i, d-i+1)^{-\frac{1}{3}} \qquad \forall \ 1\leq i \leq d,
\end{equation}

\end{proposition}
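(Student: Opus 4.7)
The plan is to work in rescaled coordinates $y_i := \omega_i/\sqrt{d}$, which by \eqref{eq_a7} satisfy $F(y_i) = (i-1)/d$, where $F(y) := \int_y^2 \rho(x)\,\mathrm{d}x$ and $\rho(x) = \frac{1}{2\pi}\sqrt{(2-x)(2+x)}$. I will sandwich $F$ between elementary functions of $2-y$, invert these estimates to pin down $y_i$, and finally obtain the gap bounds via the mean value theorem applied to $F$.

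For the first step, I would exploit that the factor $\sqrt{2+x}$ in the integrand is sandwiched between $1$ on $[-1,2]$ and $2$ on $[-2,2]$. Pulling these constants through the integral against $\sqrt{2-x}$ yields
\[
\tfrac{1}{3\pi}(2-y)^{3/2} \;\leq\; F(y) \;\leq\; \tfrac{2}{3\pi}(2-y)^{3/2} \qquad \text{for } y \in [-1,2].
\]
Setting $F(y_i) = (i-1)/d$ and solving gives $(\tfrac{3\pi}{2})^{2/3}\, d^{-2/3}(i-1)^{2/3} \leq 2-y_i \leq (3\pi)^{2/3}\, d^{-2/3}(i-1)^{2/3}$. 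Multiplying through by $\sqrt d$ and checking the numerical constants $(3\pi/2)^{2/3} \geq 1$ and $(3\pi)^{2/3} \leq 9/2$ gives \eqref{eq_a10} on $1 \leq i \leq d/2$, where $y_i \geq -1$ holds automatically since $F(-1) \approx 0.80 > 1/2$.

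For the second step, the symmetry $\rho(-x) = \rho(x)$ implies $F(-y) = 1 - F(y)$, so $y_{d-i+2} = -y_i$ and thus $\omega_i = -\omega_{d-i+2}$. When $d/2 \leq i \leq d$, the index $j := d-i+2$ lies in $\{2, \ldots, d/2+2\}$, which is within the range on which the Step 1 sandwich applies; applying \eqref{eq_a10} to $\omega_j$ and negating yields \eqref{eq_a11}.

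For the third step, the mean value theorem applied to $F$ produces $z_i \in (y_{i+1}, y_i)$ with $y_i - y_{i+1} = 1/(d\,\rho(z_i))$, hence $\omega_i - \omega_{i+1} = 1/(\sqrt d\,\rho(z_i))$. It therefore suffices to sandwich $\rho(z_i) = \frac{1}{2\pi}\sqrt{(2-z_i)(2+z_i)}$. Plugging in the Step 1/2 bounds on $2-y_i$ and $2-y_{i+1}$ forces $2-z_i \asymp d^{-2/3} i^{2/3}$ with explicit constants, and $2+z_i \in [2,4]$ whenever $y_i \geq 0$; this gives $\rho(z_i) \asymp d^{-1/3} i^{1/3}$ with constants in the window required for \eqref{eq_a12}, and the symmetry of Step 2 replaces $i$ by $\min(i, d-i+1)$. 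The main obstacle I anticipate is arithmetic bookkeeping, namely verifying that the concrete constants $(3\pi/2)^{2/3}$, $(3\pi)^{2/3}$, and the resulting gap constants, all land inside the stated intervals $[1, 9/2]$ and $[1, 2\pi]$. A secondary subtlety is the corner case $i=1$, where $z_1$ could approach $2$ and $\rho(z_1)$ could degenerate; this I would bypass by directly subtracting $\omega_1 = 2\sqrt d$ from the Step 1 bound on $\omega_2$ instead of invoking the mean value argument.
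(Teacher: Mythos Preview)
Your proposal is correct and follows essentially the same approach as the paper: sandwich $\rho$ by constant multiples of $\sqrt{2-x}$, integrate and invert to locate $y_i = \omega_i/\sqrt d$, use the symmetry $\omega_i = -\omega_{d-i+2}$ for the other half, and deduce the gap bounds from $\omega_i - \omega_{i+1} \asymp 1/(\sqrt d\,\rho)$ with the $i=1$ case handled separately. The only cosmetic difference is that the paper phrases the gap step via the monotonicity of $\rho$ on $[0,2]$ (bounding $\rho$ at the endpoints $\omega_i,\omega_{i+1}$) rather than via the mean value theorem, which is the same estimate.
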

\begin{proof}
Since $\rho(x)= \frac{1}{2\pi}\sqrt{4 - x^2}$ for all $x \in [-2,2]$, we have that 
\begin{equation}\label{eq_a13}
    \frac{1}{2\pi} \sqrt{x+2} \leq \rho(x) \leq \frac{1}{2\pi} 2\sqrt{x+2}  \qquad \qquad  \forall x\in[-2,1].
\end{equation}
Furthermore, since $\rho(x)$ is symmetric about $0$, \eqref{eq_a13} implies that
\begin{equation}\label{eq_a13b}
    \frac{1}{2\pi}\sqrt{2-x} \leq \rho(x) \leq \frac{1}{2\pi} 2\sqrt{2-x} \qquad \qquad \forall x\in[-1,2].
\end{equation}
Moreover, \eqref{eq_a13} also implies that
\begin{equation}\label{eq_n17}
\int_{-2}^x \rho(s) \mathrm{d} s
\stackrel{\textrm{Eq. } \eqref{eq_a13}}{\geq} \int_{-2}^x    \frac{\sqrt{2}}{2\pi} \sqrt{s+2} \mathrm{d} s = \frac{\sqrt{2}}{2\pi}  \frac{2}{3}(x+2)^{\frac{3}{2}}   \qquad \qquad  \forall x\in[-2,1],
\end{equation}
and that
\begin{equation}\label{eq_n18}
\int_{-2}^x \rho(s) \mathrm{d} s
\stackrel{\textrm{Eq. } \eqref{eq_a13}}{\leq} 2\int_{-2}^x    \frac{1}{2\pi} \sqrt{s+2} \mathrm{d} s =  \frac{1}{2\pi}  \frac{4}{3}(x+2)^{\frac{3}{2}}   \qquad \qquad  \forall x\in[-2,1].
\end{equation}
    Thus we have
      \begin{equation}\label{eq_a5}
\int_{-2}^{d^{-\frac{2}{3}} i^{\frac{2}{3}} -2} \rho(x) \mathrm{d} x \stackrel{\textrm{Eq. } \eqref{eq_n18}}{\leq} \frac{1}{2\pi} \frac{4}{3}( d^{-\frac{2}{3}} i^{\frac{2}{3}})^{\frac{3}{2}} \leq \frac{i}{d} \qquad \qquad \forall \ 1\leq i \leq d,
\end{equation}
and
      \begin{equation} \label{eq_a6}
\int_{-2}^{\frac{9}{2}d^{-\frac{2}{3}} i^{\frac{2}{3}} -2} \rho(x) \mathrm{d} x \stackrel{\textrm{Eq. } \eqref{eq_n17}}{\geq} \frac{1}{2\pi} \frac{2}{3}\left(\frac{9}{2}d^{-\frac{2}{3}} i^{\frac{2}{3}}\right)^{\frac{3}{2}} \geq \frac{i}{d} \qquad \qquad  \forall \ 1\leq i \leq \frac{d}{2}.
\end{equation}
Since $\rho(x)$ is nonnegative, $\int_{-2}^x \rho(s) \mathrm{d} s$ is nondecreasing in $x$.
Therefore, from \eqref{eq_a5} and \eqref{eq_a6}, we have by the definition of $\omega_i$ (Equation \eqref{eq_a7}) that

\begin{equation}\label{eq_a8}
d^{-\frac{2}{3}} (d-i+1)^{\frac{2}{3}}  -2 \stackrel{\textrm{Eq. } \eqref{eq_a5}}{\leq} \frac{\omega_{i}}{\sqrt{d}} \stackrel{\textrm{Eq. } \eqref{eq_a6}}{\leq} \frac{9}{2}d^{-\frac{2}{3}} (d-i+1)^{\frac{2}{3}} -2 \qquad \forall \, \,  \frac{d}{2} \leq i \leq d ,
\end{equation}
which proves \eqref{eq_a11}.
Moreover, since the density $\rho(x)$ is symmetric about $0$, \eqref{eq_a8} implies that
\begin{equation} \label{eq_a8b}
2-  \frac{9}{2}d^{-\frac{2}{3}} (i-1)^{\frac{2}{3}}   \leq \frac{\omega_i}{\sqrt{d}} \leq 2- d^{-\frac{2}{3}} (i-1)^{\frac{2}{3}}\qquad \qquad \forall \, \,   1\leq i \leq \frac{d}{2},
\end{equation}
which proves \eqref{eq_a10}.
Moreover, since $\rho(x)$ is nonincreasing  on [0,2] we  also have that for all $2\leq i \leq \frac{d}{2} +1$,
\begin{equation} \label{eq_a14}
    \frac{\omega_i}{\sqrt{d}} - \frac{\omega_{i+1}}{\sqrt{d}} \leq \frac{1}{d\times \rho(\omega_{i})} \stackrel{\textrm{Eq. \eqref{eq_a13b}}}{\leq} \frac{1}{d     \frac{1}{2\pi} \sqrt{2-\omega_{i}}} \stackrel{\textrm{Eq. \eqref{eq_a8b}}}{\leq}  \frac{2  \pi}{d\sqrt{   d^{-\frac{2}{3}} (i-1)^{\frac{2}{3}}}} \leq 2 \pi d^{-\frac{2}{3}}i^{-\frac{1}{3}}.
\end{equation}
and that, for all $1\leq i \leq \frac{d}{2} +1$,
\begin{equation} \label{eq_a15}
    \frac{\omega_i}{\sqrt{d}} - \frac{\omega_{i+1}}{\sqrt{d}} \geq \frac{1}{d\times \rho(\omega_{i+1})} \stackrel{\textrm{Eq. \eqref{eq_a13b}}}{\geq} \frac{1}{2d     \frac{1}{2\pi} \sqrt{2-\omega_{i+1}}} \stackrel{\textrm{Eq. \eqref{eq_a8b}}}{\geq}  \frac{1}{2d     \frac{1}{2\pi} \sqrt{     \frac{9}{2}d^{-\frac{2}{3}} i^{\frac{2}{3}}}} \geq \frac{\pi}{\sqrt{ \frac{9}{2}}} d^{-\frac{2}{3}}i^{-\frac{1}{3}}.
\end{equation}
Therefore,
\begin{equation}\label{eq_a9}
\frac{\pi}{\sqrt{ \frac{9}{2}}} d^{-\frac{1}{6}}i^{-\frac{1}{3}} \stackrel{\textrm{Eq. \eqref{eq_a15}}}{\leq} \omega_i - \omega_{i+1} \stackrel{\textrm{Eq. \eqref{eq_a14}}}{\leq} 2 \pi d^{-\frac{1}{6}}i^{-\frac{1}{3}} \qquad \forall 2\leq i \leq \frac{d}{2} +1.
\end{equation}
Moreover, plugging in $i=2$ to \eqref{eq_a10} and the fact that $\omega_1 = 2\sqrt{d}$, we have that
\begin{equation}\label{eq_a10b}
d^{-\frac{1}{6}}   \leq \omega_1 -  \omega_2 \leq 3d^{-\frac{1}{6}}.
\end{equation}
Therefore \eqref{eq_a9} and \eqref{eq_a10b} together imply that,
\begin{equation}\label{eq_a9b}
\frac{\pi}{\sqrt{ \frac{9}{2}}} d^{-\frac{1}{6}}i^{-\frac{1}{3}} \leq \omega_i - \omega_{i+1} \leq 2 \pi d^{-\frac{1}{6}}i^{-\frac{1}{3}} \qquad \forall 1\leq i \leq \frac{d}{2} +1.
\end{equation}
Finally,  since the density $\rho(x)$ is symmetric about $0$, \eqref{eq_a9b} implies that
\begin{equation*}
 \frac{\pi}{\sqrt{ \frac{9}{2}}} d^{-\frac{1}{6}}\min(i, d-i+1)^{-\frac{1}{3}}\leq \omega_i - \omega_{i+1} \leq 2\pi d^{-\frac{1}{6}} \min(i, d-i+1)^{-\frac{1}{3}} \qquad \forall 1\leq i \leq d,
\end{equation*}
which proves \eqref{eq_a12}.
\end{proof}

\begin{lemma}[\bf Eigenvalue rigidity of GUE/GOE (Theorem 2.2 of \cite{erdHos2012rigidity})]\label{lemma_rigidity}
There exist universal constants $C\geq 1$ and $c_1, c_2, N_0 >0$ such that for every $L \in \left[c_1, \frac{\log(10 d)}{10 (\log\log d)^2}\right]$ and every $d \geq N_0$,
\begin{equation*}
    \mathbb{P}\left(\exists j \in [d] : |\eta_j - \omega_j| \geq (\log d)^{ L\log \log d} \min(j, d-j+1)^{-\frac{1}{3}} d^{-\frac{1}{6}} )\right) \leq C \exp[ -(\log d)^{c_2 L \log \log d} ].
\end{equation*}
\end{lemma}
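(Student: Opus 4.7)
The plan is to prove the rigidity bound by first establishing a \emph{local semicircle law} for the GUE/GOE Stieltjes transform and then converting spectral concentration into rigidity of individual eigenvalues via a Helffer--Sjöstrand counting argument. The quantitative object of interest is $m_d(z) := \frac{1}{d}\sum_j (\eta_j - z)^{-1}$ compared to the deterministic semicircle Stieltjes transform $m_{sc}(z) := \int (x-z)^{-1}\rho(x)\, dx$; the target is control of $|m_d(z)-m_{sc}(z)|$ at imaginary parts $\eta=\mathrm{Im}(z)$ as small as $d^{-1+\varepsilon}$, which is the natural scale at which individual eigenvalues can be resolved.

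First I would derive a self-consistent equation for $m_d$. Using the Schur complement formula applied to the resolvent $G(z) = (A-z)^{-1}$, the diagonal entries satisfy
\begin{equation*}
G_{ii}(z)^{-1} = A_{ii} - z - \sum_{j,k\neq i} A_{ij}\, G^{(i)}_{jk}(z)\, A_{ik},
\end{equation*}
where $G^{(i)}$ is the resolvent of the principal minor obtained by removing row and column $i$. Gaussianity of $A$ lets me invoke a Hanson--Wright-type concentration for the quadratic form to show the double sum is close to $\frac{1}{d}\mathrm{tr}\, G^{(i)}(z) \approx m_d(z)$, yielding the approximate fixed-point equation $m_d(z)^2 + z\, m_d(z) + 1 \approx 0$ whose analytic solution on the upper half-plane is $m_{sc}(z)$. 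Iterating over all $i$ and carefully tracking errors should deliver $|m_d(z)-m_{sc}(z)| \lesssim (\log d)^{cL\log\log d}/(d\eta)$ with probability at least $1 - C\exp[-(\log d)^{c_2 L\log\log d}]$, uniformly on a fine lattice of the spectral domain.

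Next I would convert the local law into rigidity. Let $N(E) := |\{j:\eta_j\leq E\}|$ and the classical counting function $n(E) := d\int_{-\infty}^E \rho(x)\,dx$. Applying the Helffer--Sjöstrand functional calculus to a smooth approximation of the indicator of $(-\infty,E]$,
\begin{equation*}
N(E) - n(E) = \frac{1}{\pi}\int\!\!\int \partial_{\bar z}\tilde{\chi}_E(z)\cdot \bigl(m_d(z)-m_{sc}(z)\bigr)\, d^2z,
\end{equation*}
where $\tilde{\chi}_E$ is an almost-analytic extension. Choosing the imaginary cutoff at $\eta \sim (\log d)^{cL\log\log d}\min(j,d-j+1)^{-1/3}d^{-2/3}$ lets me transfer the local-law bound into $|N(\omega_j)-(j-1)|<1$ uniformly in $j$, which combined with the local classical spacing $\omega_j-\omega_{j+1}\sim \min(j,d-j+1)^{-1/3}d^{-1/6}$ established in Proposition~\ref{prop_classical} yields $|\eta_j-\omega_j|\leq (\log d)^{L\log\log d}\min(j,d-j+1)^{-1/3}d^{-1/6}$ as required.

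The main obstacle is propagating the self-consistent-equation analysis down to the optimal scale $\eta \gtrsim d^{-1+\varepsilon}$ uniformly up to the spectral edge. The stability of the fixed-point equation $m = -(z+m)^{-1}$ degenerates like $|E\mp 2|^{-1/2}$ near the edges $\pm 2$, so one needs an edge-adapted stability estimate together with a bootstrap scheme: prove the bound at a large scale $\eta\sim 1$, then iteratively decrease $\eta$ using the already-established bound to control the error terms, at each step applying sharp sub-exponential concentration for Gaussian quadratic forms to keep failure probabilities summable over a polynomial net. The sub-exponential-rather-than-sub-Gaussian tails are precisely what produce the $(\log d)^{L\log\log d}$ factors in the rigidity bound and the $\exp[-(\log d)^{c_2 L\log\log d}]$ failure probability; any looser concentration would not recover the stated exponents, and any cruder stability analysis would fail near the edge where the $\min(j,d-j+1)^{-1/3}$ scaling is sharp.
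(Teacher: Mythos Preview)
Your outline is the standard route to rigidity---local semicircle law via the Schur complement/self-consistent equation, bootstrapped down to the optimal scale, then converted to eigenvalue counting via Helffer--Sj\"ostrand---and is essentially the argument of the cited reference. However, the paper does not prove this lemma at all: it is quoted verbatim as Theorem~2.2 of \cite{erdHos2012rigidity} and used as a black box. There is therefore no ``paper's own proof'' to compare against; your proposal reconstructs (at the level of a sketch) the proof in the original source rather than anything the present paper supplies.
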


\subsubsection{Bounding the eigenvalue gaps of the GUE matrix}

In this section, we prove high-probability bounds for the eigenvalue gaps of the GUE random matrix (Lemma \ref{lemma_GUE_gaps}).

\paragraph{Step 1.}  Define the ``eigenvalue rigidity'' event $E$ and show that it holds with high probability (Use Lemma \ref{lemma_rigidity}).
Set $L:= \max( \frac{2}{c_2} \log \log(C), c_1, 1)$;  thus, $L$ is a universal constant.
Define the event $E$ as follows:
\begin{equation}\label{eq_n19}
E:= \left \{\eta \in \mathcal{W}_d : |\eta_j - \omega_j| < (\log d)^{ L\log \log d} \min(j, d-j+1)^{-\frac{1}{3}} d^{-\frac{1}{6}} )  \,\,\,\, \forall j \in [d] \right\},
\end{equation}
where $\mathcal{W}_d$ was defined in \eqref{eq:WeylChamber}.
Then (observing that one can replace the universal constant $N_0$ in Lemma \ref{lemma_rigidity} with another universal constant such that Lemma \ref{lemma_rigidity} holds with values of $N_0, C, c_2, c_1$ such that $\max( \frac{2}{c_2} \log \log(C), c_1) \leq \frac{\log(10 d)}{10 (\log\log d)^2}$ and $N_0 \geq e^{4}$), we have by Lemma \ref{lemma_rigidity} that
\begin{equation} \label{eq_rigidity_1}
    \mathbb{P}(E^c) \leq  C \exp[ -(\log d)^{c_2 L \log \log d}]\leq  \exp[ -(\log d)^{2 \log \log d}] \leq  \exp[ -(\log d)^{2}] \leq  d^{ -\log d} \leq \frac{1}{d^{1000}},
    \end{equation}
    for all $d \geq N_0$, where $N_0$ is a  universal constant.
Define 
\begin{equation}\label{eq_n37}
\mathfrak{b} := 10^6(\log d)^{ L\log \log d}.
\end{equation}
Further, define $\omega_j=\eta_{j} = + \infty$ for all $j < d$ and  $\omega_j=\eta_{j} = - \infty$ for all $j>d$. 

\paragraph{Step 2.} 
Show a preliminary lower bound on the gaps between non-neighboring eigenvalues whose indices are $\tilde{\Omega}(1)$ apart, which holds whenever the event $E$ occurs (Proposition \ref{prop_n1}).

Towards this end, define
 \begin{eqnarray}\label{eq_n28}
 j_{\mathrm{min}}&:=& \max(i-\mathfrak{b}^2, 1), \nonumber\\
 j_{\mathrm{max}}&:=& \min(i+\mathfrak{b}^2, d).
  \end{eqnarray}
Define the following quantities:

\begin{eqnarray}\label{eq_n26}
    a_{\mathrm{min}}&:= &\omega_{j_{\mathrm{max}}} - \frac{1}{30} \mathfrak{b}^2 d^{-\frac{1}{6}} \min(i, d-i)^{-\frac{1}{3}}\nonumber\\
a_{\mathrm{max}}&:=& \omega_{j_{\mathrm{max}}} + \frac{1}{30} \mathfrak{b}^2 d^{-\frac{1}{6}} \min(i, d-i)^{-\frac{1}{3}}\nonumber\\
b_{\mathrm{min}}&:= & \omega_{j_{\mathrm{min}}} - \frac{1}{30} \mathfrak{b}^2 d^{-\frac{1}{6}} \min(i, d-i)^{-\frac{1}{3}}\nonumber\\
b_{\mathrm{max}}&:=& \omega_{j_{\mathrm{min}}} + \frac{1}{30} \mathfrak{b}^2 d^{-\frac{1}{6}} \min(i, d-i)^{-\frac{1}{3}}
\end{eqnarray}

\begin{proposition}\label{prop_n1}
Suppose that $\eta \in E$. 
Then for all $\mathfrak{b}^2 < i < d-\mathfrak{b}^2$ we have
\begin{equation}\label{eq_c2}
    \eta_{j_{\mathrm{min}}} - \eta_{j_{\mathrm{max}}} \geq    \frac{29}{30} \mathfrak{b}^2 d^{-\frac{1}{6}} \min(i, d-i)^{-\frac{1}{3}} \geq  \frac{29}{30} \mathfrak{b}^2 \frac{1}{\sqrt{d}}.
\end{equation}
Moreover, we also have that 
$ \eta_{j_{\mathrm{max}}}  \in[a_{\mathrm{min}}, a_{\mathrm{max}}]$ and  $ \eta_{j_{\mathrm{min}}}  \in[b_{\mathrm{min}}, b_{\mathrm{max}}]$.

\end{proposition}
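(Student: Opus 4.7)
The plan is to prove the three claims (the gap lower bound for $\eta_{j_{\min}} - \eta_{j_{\max}}$ and the two containments $\eta_{j_{\max}} \in [a_{\min}, a_{\max}]$, $\eta_{j_{\min}} \in [b_{\min}, b_{\max}]$) by first proving the analogous statements for the classical locations $\omega_{j_{\min}}, \omega_{j_{\max}}$ using Proposition \ref{prop_classical}, and then transferring them to the random eigenvalues $\eta_{j_{\min}}, \eta_{j_{\max}}$ by paying the small rigidity error guaranteed by the event $E$ defined in \eqref{eq_n19}.

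First I would use the hypothesis $\mathfrak{b}^2 < i < d - \mathfrak{b}^2$ to conclude that $j_{\min} = i - \mathfrak{b}^2$ and $j_{\max} = i + \mathfrak{b}^2$ (so that $j_{\max} - j_{\min} = 2\mathfrak{b}^2$), and that $\min(i, d-i) > \mathfrak{b}^2$. A short calculation then shows $\min(\ell, d-\ell+1) \le 2\min(i, d-i)$ for every $\ell$ between $j_{\min}$ and $j_{\max}$ (since shifting the index by at most $\mathfrak{b}^2 < \min(i, d-i)$ only doubles the quantity in the worst case). Applying the lower bound in \eqref{eq_a12} of Proposition \ref{prop_classical} to each of the $2\mathfrak{b}^2$ consecutive classical gaps gives
\[
\omega_{j_{\min}} - \omega_{j_{\max}} = \sum_{\ell=j_{\min}}^{j_{\max}-1}(\omega_\ell - \omega_{\ell+1}) \;\ge\; 2\mathfrak{b}^2 \cdot d^{-\tfrac{1}{6}}(2\min(i,d-i))^{-\tfrac{1}{3}},
\]
which up to explicit constants has the right order $\mathfrak{b}^2 d^{-1/6} \min(i, d-i)^{-1/3}$ targeted by \eqref{eq_c2}.

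Second, I would use the rigidity event $E$ from \eqref{eq_n19} to bound the deviations $|\eta_{j_{\min}} - \omega_{j_{\min}}|$ and $|\eta_{j_{\max}} - \omega_{j_{\max}}|$ by $(\log d)^{L\log\log d}\min(j,d-j+1)^{-1/3} d^{-1/6}$. By definition $\mathfrak{b} = 10^6 (\log d)^{L\log\log d}$ in \eqref{eq_n37}, so this rigidity error equals $10^{-6}\mathfrak{b}\min(j,d-j+1)^{-1/3}d^{-1/6}$; combined with the comparison $\min(j, d-j+1) \ge \tfrac{1}{2}\min(i,d-i)$ already used above, this error is bounded by a small universal constant times $\mathfrak{b} d^{-1/6}\min(i,d-i)^{-1/3}$, hence by a factor $O(1/\mathfrak{b})$ less than the half-width $\tfrac{1}{30}\mathfrak{b}^2 d^{-1/6}\min(i,d-i)^{-1/3}$ of the intervals $[a_{\min},a_{\max}]$ and $[b_{\min},b_{\max}]$ defined in \eqref{eq_n26}. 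This immediately yields both containments $\eta_{j_{\max}} \in [a_{\min}, a_{\max}]$ and $\eta_{j_{\min}} \in [b_{\min}, b_{\max}]$, provided $d$ is large enough that $10^{-6}\mathfrak{b} \le \tfrac{1}{30}\mathfrak{b}^2$, which reduces to $\mathfrak{b}\ge 10^6/30$, true by the definition of $\mathfrak{b}$.

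Finally, combining the classical gap lower bound with the two rigidity deviations via the triangle inequality
\[
\eta_{j_{\min}} - \eta_{j_{\max}} \;\ge\; (\omega_{j_{\min}} - \omega_{j_{\max}}) - |\eta_{j_{\min}} - \omega_{j_{\min}}| - |\eta_{j_{\max}} - \omega_{j_{\max}}|
\]
gives the first half of \eqref{eq_c2}; tracking constants I would verify that the subtracted rigidity error costs at most $\tfrac{1}{30}\mathfrak{b}^2 d^{-1/6}\min(i,d-i)^{-1/3}$, leaving the stated $\tfrac{29}{30}\mathfrak{b}^2 d^{-1/6}\min(i,d-i)^{-1/3}$. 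The second half of \eqref{eq_c2} follows from $\min(i,d-i) \le d/2$, which forces $d^{-1/6}\min(i,d-i)^{-1/3} \ge 2^{1/3}/\sqrt{d} \ge 1/\sqrt{d}$. The main obstacle is purely bookkeeping: tracking the universal constants in Proposition \ref{prop_classical} carefully enough so that the classical gap bound and the rigidity error combine to give exactly the constant $\tfrac{29}{30}$ and half-width $\tfrac{1}{30}$ claimed; this is where the specific choice $\mathfrak{b} = 10^6 (\log d)^{L\log\log d}$ (with the factor $10^6$) earns its keep, ensuring rigidity is a truly negligible perturbation of the classical picture.
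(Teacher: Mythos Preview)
Your overall plan---bound the classical gap $\omega_{j_{\min}} - \omega_{j_{\max}}$ via Proposition~\ref{prop_classical}, bound the rigidity deviations $|\eta_j - \omega_j|$ via the event $E$, and combine by the triangle inequality---is exactly the paper's approach. There is, however, a genuine gap in your constant-tracking. You invoke the comparison $\min(j, d-j+1) \geq \tfrac{1}{2}\min(i, d-i)$ for $j \in \{j_{\min}, j_{\max}\}$, but this fails near the boundary of the hypothesis $i > \mathfrak{b}^2$: when $i = \mathfrak{b}^2 + 1$ one has $j_{\min} = 1$, so $\min(j_{\min}, d - j_{\min}+1) = 1$, whereas $\tfrac{1}{2}\min(i, d-i) \approx \tfrac{1}{2}\mathfrak{b}^2$. (Note also that earlier you only established the \emph{opposite} inequality $\min(\ell, d-\ell+1) \le 2\min(i,d-i)$, not this one.)

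The paper repairs this by using a weaker comparison that does hold uniformly for all $i > \mathfrak{b}^2$: namely $i - \mathfrak{b}^2 \geq \tfrac{i}{2\mathfrak{b}^2}$, whence $(i-\mathfrak{b}^2)^{-1/3} \leq (2\mathfrak{b}^2)^{1/3} i^{-1/3}$. This bounds the rigidity error at $j_{\min}$ by $2^{1/3}\mathfrak{b}^{5/3}\, i^{-1/3} d^{-1/6}$, which is smaller than the half-width $\tfrac{1}{30}\mathfrak{b}^2\, i^{-1/3} d^{-1/6}$ by a factor $O(\mathfrak{b}^{-1/3})$; since $\mathfrak{b} \geq 10^6$ the constant $\tfrac{29}{30}$ survives. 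The paper also streamlines by first reducing to $i \leq d/2$ via the symmetry of the GUE eigenvalue density, so that $\min(i, d-i)$ is simply $i$ throughout.
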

\begin{proof}
Without loss of generality, we may assume that 
$i \leq \frac{1}{2}d$,  since the GUE matrices $G$ and $-G$ have the same distribution and hence the joint eigenvalue distribution of the GUE is symmetric about $0$.
If $E$ occurs, then by the definition of the event $E$ \eqref{eq_n19}, we have
\begin{eqnarray}
 \eta_{j_{\mathrm{min}}} - \eta_{j_{\mathrm{max}}} 
  & \stackrel{\textrm{Eq. } \eqref{eq_n28}}{=} &
    \eta_{i- \mathfrak{b}^2} - \eta_{i+ \mathfrak{b}^2} \nonumber\\
    &\stackrel{\textrm{Eq. } \eqref{eq_n19}}{\geq} &    \omega_{i- \mathfrak{b}^2} - \omega_{i+ \mathfrak{b}^2} - 2\mathfrak{b} (i-\mathfrak{b}^2)^{-\frac{1}{3}} d^{-\frac{1}{6}} \nonumber\\
    & \stackrel{\textrm{Prop. } \ref{prop_classical}}{\geq}& \mathfrak{b}^2 \times  d^{-\frac{1}{6}} i^{-\frac{1}{3}} - 2\mathfrak{b} (i-\mathfrak{b}^2)^{-\frac{1}{3}} d^{-\frac{1}{6}} \nonumber\\
        & \geq &\mathfrak{b}^2 \times  d^{-\frac{1}{6}} i^{-\frac{1}{3}} - 2\mathfrak{b} \left(\frac{i}{2\mathfrak{b}^2}\right)^{-\frac{1}{3}} d^{-\frac{1}{6}} \label{eq_n130}\\
         & \geq &\mathfrak{b}^2 \times  d^{-\frac{1}{6}} i^{-\frac{1}{3}} -  2^{\frac{4}{3}}\mathfrak{b}^{\frac{5}{3}} i^{-\frac{1}{3}} d^{-\frac{1}{6}} \nonumber\\
         &\geq & \frac{29}{30} \mathfrak{b}^2 d^{-\frac{1}{6}} i^{-\frac{1}{3}}, \label{eq_n131}
\end{eqnarray}
where \eqref{eq_n130} holds since $\frac{i}{2\mathfrak{b}^2} \leq i-\mathfrak{b}^2$ because $i \geq \mathfrak{b}^2 +1 > 4$, and \eqref{eq_n131} holds since $\mathfrak{b} \geq 10^6$.
This proves \eqref{eq_c2}.
Moreover, by the definition of the event $E$, we also have that 
\begin{eqnarray}
   |\eta_{j_{\mathrm{min}}} -  \omega_{j_{\mathrm{min}}}| 
   &\stackrel{\textrm{Eq. } \eqref{eq_n28}}{=} & |\eta_{i- \mathfrak{b}^2} -  \omega_{i- \mathfrak{b}^2}| \nonumber\\ &\stackrel{\textrm{Eq. } \eqref{eq_n19}}{\leq}&   \mathfrak{b} (i-\mathfrak{b}^2)^{-\frac{1}{3}} d^{-\frac{1}{6}}\nonumber\\
    &\leq & \mathfrak{b} \left(\frac{i}{2\mathfrak{b}^2}\right)^{-\frac{1}{3}} d^{-\frac{1}{6}} \label{eq_n132}\\
    &\leq & 2^{\frac{1}{3}}\mathfrak{b}^{\frac{5}{3}} i^{-\frac{1}{3}} d^{-\frac{1}{6}}\nonumber\\
    &\leq &\frac{1}{30} \mathfrak{b}^2 d^{-\frac{1}{6}} i^{-\frac{1}{3}},  \label{eq_c1}
\end{eqnarray}
where \eqref{eq_n132} holds since $\frac{i}{2\mathfrak{b}^2} \leq i-\mathfrak{b}^2$ because $i \geq \mathfrak{b}^2 +1 > 4$, and \eqref{eq_c1} holds since $\mathfrak{b} \geq 10^6$.
Thus, by definition \eqref{eq_n26}, Inequality \eqref{eq_c1} implies that $\eta_{j_{\mathrm{min}}}  \in[b_{\mathrm{min}}, b_{\mathrm{max}}]$. 
Again, by the definition of the event $E$, we also have that 
\begin{eqnarray}
   |\eta_{j_{\mathrm{max}}} -  \omega_{j_{\mathrm{max}}}| 
   &\stackrel{\textrm{Eq. } \eqref{eq_n28}}{=}&  |\eta_{i+ \mathfrak{b}^2} -  \omega_{i+ \mathfrak{b}^2}| \nonumber\\ &\stackrel{\textrm{Eq. } \eqref{eq_n19}}{\leq} &  \mathfrak{b} (i+\mathfrak{b}^2)^{-\frac{1}{3}} d^{-\frac{1}{6}}\nonumber\\
    &\leq &\mathfrak{b} i^{-\frac{1}{3}} d^{-\frac{1}{6}}\nonumber\\
    &\leq & \frac{1}{30} \mathfrak{b}^2 d^{-\frac{1}{6}} i^{-\frac{1}{3}} \label{eq_c1b}
\end{eqnarray}
where \eqref{eq_c1b} holds since $\mathfrak{b} \geq 10^6$.
Thus, by definition \eqref{eq_n26}, Inequality \eqref{eq_c1b} implies that $\eta_{j_{\mathrm{max}}}  \in[a_{\mathrm{min}}, a_{\mathrm{max}}]$.
\end{proof}
\noindent
We use Proposition \ref{prop_n1} to define three different sets which we will use in the next steps of the proof.
By Proposition \ref{prop_n1}, if $\mathfrak{b}^2 < i < d-\mathfrak{b}^2$, whenever the event $E$ occurs we have that $ \eta_{j_{\mathrm{max}}}  \in[a_{\mathrm{min}}, a_{\mathrm{max}}]$ and  $ \eta_{j_{\mathrm{min}}}  \in[b_{\mathrm{min}}, b_{\mathrm{max}}]$.
Consider any $a,b$ such that $a_{\mathrm{min}} \leq a \leq  a_{\mathrm{max}}$  and $b_{\mathrm{min}} \leq b \leq  b_{\mathrm{max}}$.
 Define the sets 
\begin{itemize}
\item \begin{equation}\label{eq_n25}
    S_0(a,b):= \{ \eta \in \mathcal{W}_d : \eta_{j_{\mathrm{max}}} = a, \eta_{j_{\mathrm{min}}} = b\},
\end{equation}

\item  
\begin{equation}\label{eq_n22}
S_3(a,b; y) :=  \{\eta \in \mathcal{W}_d : \eta_i-\eta_{i+1} = y\} \cap S_0(a,b) \qquad \textrm{ for any } y \leq s\frac{1}{8\mathfrak{b}^4\sqrt{d} }, 
\end{equation}

\item 
\begin{equation*}
S_4(a,b) :=  \{\eta \in \mathcal{W}_d : \eta_i-\eta_{i+1} \geq s\} \cap S_0(a,b),
\end{equation*}

\end{itemize}
 where $\mathcal{W}_d$ was defined in \eqref{eq:WeylChamber}.

\paragraph{Step 3.} Next, we define a map from the ``bad'' set $S_3$ to the ``good'' set $S_4$, which, roughly speaking, will allow us to show that the good set has a much bigger volume and a much larger probability density than the bad set.
  More specifically, for any $y \leq s\frac{1}{\mathfrak{b}\sqrt{d}}$, we want to define a map $g:   S_2(a,b;y) \rightarrow S_4(a,b)$, such that its Jacobian $J_g(\eta)$ satisfies $\mathrm{det}(J_g(\eta)) \geq \Omega(\frac{1}{s})$ and 
\begin{equation}\label{eq_b3}
    \frac{f(\eta)}{f(g(\eta))} \leq (\mathfrak{b} \sqrt{d} )^2 \times y^2,
    \end{equation}
  for any $\eta \in  S_3(a,b;y) \cap E$.
  Towards this end, we consider the map $g:  \mathcal{W}_d \rightarrow \mathcal{W}_d$ such that 

  \begin{itemize}
  
  \item 
\begin{equation}\label{eq_g3}
      g(\eta)[j] = \eta_j \qquad \forall j \notin [j_{\mathrm{min}}, j_{\mathrm{max}}]
      \end{equation}
  
  \item 
\begin{equation}\label{eq_g4}
      g(\eta)[j_{\mathrm{max}}] = \eta_{j_{\mathrm{max}}} = a,
      \end{equation}

\item   
\begin{equation}\label{eq_g2}
    g(\eta)[j] = g(\eta)[j+1] + (1-\alpha) \times (\eta_j - \eta_{j+1}) \qquad \forall j \in[j_{\mathrm{min}}, j_{\mathrm{max}} -1]\backslash \{i\}
\end{equation}
  
  \item  
  \begin{equation}\label{eq_g1}
  g(\eta)[i] = g(\eta)[i+1] + \left(\frac{2}{s}(\eta_{i}- \eta_{i+1})+ 2\frac{1}{8\mathfrak{b}^4\sqrt{d}}\right) \times \frac{b-a -(\eta_{i}- \eta_{i+1})}{b-a},
  \end{equation}
\end{itemize}
where 
\begin{equation}\label{eq_n20}
\alpha := \frac{ \frac{2}{s}(\eta_{i}- \eta_{i+1}) +2\frac{1}{8\mathfrak{b}^4\sqrt{d}} }{b-a}.
\end{equation}

The following proposition provides some preliminary facts about the map $g$ which we will use when bounding its Jacobian determinant $\mathrm{det}(J_g(\eta))$ (Lemma \ref{prop_Jacobian}) and the density ratio $\frac{f(\eta)}{f(g(\eta))}$ (Lemma \ref{lemma_density_ratio}). 
\begin{proposition} \label{prop_map}
Suppose that  $\mathfrak{b}^2 < i < d-\mathfrak{b}^2$.
Then the following properties hold for $g$ for any $\eta \in E$:
\begin{itemize}

\item For any $z \in \mathcal{W}_d$ the pre-image $g^{-1}(\{z\}) := \{ \eta \in \mathcal{W}_d : g(\eta) = z\}$ has cardinality $|g^{-1}(\{z\}) |  \leq 2$. 

\item \begin{equation}\label{eq_n36}
    g(\eta)[j_{\mathrm{min}}] = \eta_{j_{\mathrm{min}}} = b,
\end{equation}

\item  $g(\eta)[i] -  g(\eta)[i+1]  \geq   \frac{1}{8\mathfrak{b}^4\sqrt{d}}$,  and hence  

\begin{equation}\label{eq_b5}
\frac{\eta_{i} - \eta_{i+1}}{g(\eta)[i] -  g(\eta)[i+1]}  \leq  8\mathfrak{b}^4 \sqrt{d}  \times (\eta_{i} - \eta_{i+1}) =  8 \mathfrak{b}^4 \sqrt{d}  \times y
\end{equation}
  for any $\eta \in  S_3(a,b;y)$ and any $y \leq s\frac{1}{8\mathfrak{b}^4\sqrt{d} }$.

\item  \begin{equation}\label{eq_b4}
g(\eta)[j] -  g(\eta)[j+1] \geq (1- \alpha)(\eta_{j} - \eta_{j+1}) \qquad \forall j \in [d].
\end{equation}
\end{itemize}

Moreover, we also have that 
\begin{equation}\label{eq_n42}
    b-a \geq  \mathfrak{b}^2 \times d^{-\frac{1}{6}} (\min(i, d-i))^{-\frac{1}{3}} \geq  \frac{1}{\sqrt{d}}.
\end{equation}

\end{proposition}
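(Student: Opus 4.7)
The plan is to verify the five stated properties in an order that lets later parts build on earlier ones: first the auxiliary estimate \eqref{eq_n42} on $b-a$, then the endpoint identity $g(\eta)[j_{\mathrm{min}}]=b$, then the two gap bounds, and finally the pre-image count, which will be the main obstacle.

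I would start with \eqref{eq_n42}. Since $j_{\mathrm{max}} - j_{\mathrm{min}} = 2\mathfrak{b}^2$ and every intermediate index lies within $\mathfrak{b}^2$ of $i$, we have $\min(j, d-j+1) \leq 2\min(i, d-i)$ throughout the range, so Proposition \ref{prop_classical} yields $\omega_j - \omega_{j+1} \geq d^{-1/6}(2\min(i,d-i))^{-1/3}$ for each of the $2\mathfrak{b}^2$ gaps. Summing gives $\omega_{j_{\mathrm{min}}} - \omega_{j_{\mathrm{max}}} \geq 2^{2/3}\mathfrak{b}^2 d^{-1/6}\min(i,d-i)^{-1/3}$. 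Since $b \geq b_{\mathrm{min}}$ and $a \leq a_{\mathrm{max}}$, subtracting the two slack terms $\tfrac{1}{30}\mathfrak{b}^2 d^{-1/6}\min(i,d-i)^{-1/3}$ from \eqref{eq_n26} still leaves $b - a \geq \mathfrak{b}^2 d^{-1/6}\min(i,d-i)^{-1/3} \geq \tfrac{1}{\sqrt{d}}$, since $\min(i,d-i) \leq d$.

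Next, the endpoint identity $g(\eta)[j_{\mathrm{min}}] = \eta_{j_{\mathrm{min}}} = b$ (the second equality from $\eta \in S_0(a,b)$) is verified by telescoping the defining gaps in \eqref{eq_g2}--\eqref{eq_g1} downward from $g(\eta)[j_{\mathrm{max}}] = a$ via \eqref{eq_g4}, using the identity $\alpha(b-a) = \tfrac{2}{s}(\eta_i - \eta_{i+1}) + \tfrac{1}{4\mathfrak{b}^4\sqrt{d}}$ from \eqref{eq_n20} to simplify the resulting expression. For the two gap bounds, \eqref{eq_n42} combined with $\eta_i - \eta_{i+1} \leq \tfrac{s}{8 \mathfrak{b}^4\sqrt{d}}$ (from $\eta \in S_3(a,b;y)$ with $y \leq \tfrac{s}{8\mathfrak{b}^4\sqrt{d}}$) shows that $\alpha$ is bounded above by a small absolute constant, so $\tfrac{b-a-(\eta_i-\eta_{i+1})}{b-a} \geq \tfrac{1}{2}$. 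Keeping only the additive $\tfrac{2}{8\mathfrak{b}^4\sqrt{d}}$ inside \eqref{eq_g1} already yields $g(\eta)[i] - g(\eta)[i+1] \geq \tfrac{1}{8\mathfrak{b}^4\sqrt{d}}$, proving the third bullet; \eqref{eq_b5} then follows by direct division. The fourth bullet is an equality for $j \neq i$ by \eqref{eq_g2}, and for $j = i$ the coefficient $\tfrac{2}{s} \geq 2$ (since $s \leq 1$ in the regime of interest) exceeds $2(1-\alpha)$ while $\tfrac{b-a-(\eta_i-\eta_{i+1})}{b-a} \geq \tfrac{1}{2}$, so the product in \eqref{eq_g1} exceeds $(1-\alpha)(\eta_i-\eta_{i+1})$.

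The remaining piece, the pre-image count, is the main obstacle. Outside $[j_{\mathrm{min}}, j_{\mathrm{max}}]$, the coordinates of any pre-image of $z$ are forced since $g$ acts as the identity there by \eqref{eq_g3}. Inside the interval, once $a,b$ are fixed, $\alpha$ depends linearly on $\eta_i - \eta_{i+1}$, so the scalar $g(\eta)[i] - g(\eta)[i+1]$ given by \eqref{eq_g1} is a quadratic polynomial in $\eta_i - \eta_{i+1}$; hence at most two values of this gap are consistent with a prescribed image gap, and each determines $\alpha$. For each such choice, the remaining interior gaps $\eta_j - \eta_{j+1}$ for $j \in [j_{\mathrm{min}}, j_{\mathrm{max}}-1] \setminus \{i\}$ are uniquely recovered as $(g(\eta)[j] - g(\eta)[j+1])/(1-\alpha)$ via \eqref{eq_g2}, and integrating them downward from $\eta_{j_{\mathrm{max}}} = a$ reconstructs $\eta$ completely. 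The bound $|g^{-1}(\{z\})| \leq 2$ follows immediately from this degree-two structure, regardless of whether one or both quadratic roots actually yield a valid element of $\mathcal{W}_d$.
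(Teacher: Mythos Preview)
Your proposal is correct and follows essentially the same approach as the paper's proof: the telescoping for \eqref{eq_n36}, the quadratic-in-$(\eta_i-\eta_{i+1})$ structure for the pre-image count, and the direct estimates for the gap bounds and for \eqref{eq_n42} all match, with a sensible re-ordering so that \eqref{eq_n42} is available when verifying the third and fourth bullets. One minor imprecision: for \eqref{eq_b4} you write ``equality for $j \neq i$ by \eqref{eq_g2},'' but \eqref{eq_g2} only covers $j \in [j_{\mathrm{min}}, j_{\mathrm{max}}-1] \setminus \{i\}$; for indices outside $[j_{\mathrm{min}}, j_{\mathrm{max}}]$ you also need \eqref{eq_g3}, \eqref{eq_g4}, and your already-established \eqref{eq_n36} to see the gap is preserved (and hence $\geq (1-\alpha)$ times the original since $\alpha \geq 0$).
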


\begin{proof}
\noindent
{\em Cardinality of pre-image.}
 We will show that, given any vector $z \in \mathbb{R}^d$ there are {\em at most} two solutions $\eta \in \mathcal{W}_d$ to the equation $g(\eta) = z$.
To solve for $\eta$, we first solve for $\eta_{i}- \eta_{i+1}$ by solving equation \eqref{eq_g1} for $\eta_{i}- \eta_{i+1}$.
 As \eqref{eq_g1} is a quadratic equation in $\eta_{i}- \eta_{i+1}$, there are at most two solutions for $\eta_{i}- \eta_{i+1}$ to this equation.
This gives us at most two solutions for $\eta_{i}- \eta_{i+1}$ in terms of $g(\eta)[i] - g(\eta)[i+1]$.

Next, we show that for any one of these two solutions, which we denote by $\Delta \in \mathbb{R}$, there is at most one value $\eta \in \mathcal{W}_d$ such that $g(\eta) = z$ and $\eta_{i}- \eta_{i+1} = \Delta$.
 Specifically, to solve for this value $\eta$, we plug in the value of $\eta_{i}- \eta_{i+1} = \Delta$ to \eqref{eq_n20} to compute $ \alpha = \frac{ \frac{2}{s}\Delta +2\frac{1}{\mathfrak{b}\sqrt{d}} }{b-a}$, and  for every $j \in[j_{\mathrm{min}}, j_{\mathrm{max}}]$, plug in this value of $\alpha$ to \eqref{eq_g2}, to solve for $\eta_j-\eta_{j+1}$ in terms of $g(\eta)[j] - g(\eta)[j+1]$.

Finally, since $\eta_{j_{\mathrm{max}}} = a$ by \eqref{eq_g4}, we can compute  $\eta_j = a+ \sum_{\ell=j}^{j_{\mathrm{max}} -1} \eta_\ell - \eta_{\ell+1}$ for each $j \in[j_{\mathrm{min}}, j_{\mathrm{max}}-1]$.
Thus, given any vector $z \in \mathcal{W}_d$, and any $\Delta \in \mathbb{R}$, we can solve for at most one $\eta \in \mathcal{W}_d$ such that $g(\eta) = z$ and $\eta_{i}- \eta_{i+1} = \Delta$.
As we have already shown that for any $z \in \mathcal{W}_d$ there are {\em at most} two values of $\Delta \in \mathbb{R}$ such that $g(\eta) = z$ and $\eta_{i}- \eta_{i+1} = \Delta$, we must have that for any $z \in \mathcal{W}_d$ the equation  $g(\eta) = z$ has at most two solutions.  Therefore  $|g^{-1}(\{z\})| \leq 2$ for all $z \in \mathcal{W}_d$.

\medskip
\noindent
{\em Showing that $g(\eta)[j_{\mathrm{min}}] = b$ \eqref{eq_n36}.}
\begin{eqnarray*}
    g(\eta)[j_{\mathrm{min}}] &\stackrel{\textrm{ Eq. } \eqref{eq_g3}, \eqref{eq_g4}}{=}& a+ \sum_{\ell=j_{\mathrm{min}}}^{j_{\mathrm{max} -1}}  g(\eta)[\ell] -  g(\eta)[\ell+1]\\
&\stackrel{\textrm{Eq. } \eqref{eq_g2},  \eqref{eq_g1}}{=}& a+  \left(\frac{2}{s}(\eta_{i}- \eta_{i+1})+ 2\frac{1}{8\mathfrak{b}^4\sqrt{d}}\right) \times \frac{b-a -(\eta_i- \eta_{i+1})}{b-a} \nonumber\\  
& & + \sum_{\ell\in [j_{\mathrm{min}}, j_{\mathrm{max}} ] \backslash \{i\}}    (1-\alpha) \times (\eta_j - \eta_{j+1})\\
&\stackrel{\textrm{Eq. } \eqref{eq_n20}}{=}& b.
\end{eqnarray*}

\medskip
\noindent
{\em Showing \eqref{eq_b5}.}
Since  $y \leq s\frac{1}{8\mathfrak{b}^4\sqrt{d} }$ and $\eta \in  S_3(a,b;y)$, 
we have that
\begin{equation}\label{eq_n21}
    \eta_i-\eta_{i+1} \stackrel{\textrm{Eq. } \eqref{eq_n22}}{=} y \leq s\frac{1}{8\mathfrak{b}^4\sqrt{d} }.
    \end{equation}
Thus by \eqref{eq_g1},
  \begin{eqnarray}\label{eq_n23}
  g(\eta)[i] - g(\eta)[i+1] &\stackrel{\textrm{Eq. } \eqref{eq_g1}}{=} &\left(\frac{2}{s}(\eta_{i}- \eta_{i+1})+ 2\frac{1}{8\mathfrak{b}^4\sqrt{d}}\right) \times \frac{b-a -(\eta_{i}- \eta_{i+1})}{b-a}\nonumber\\
  &\stackrel{\textrm{Eq. } \eqref{eq_n21}}{\geq}& 2\frac{1}{8\mathfrak{b}^4\sqrt{d}} \times \frac{1}{2}\nonumber\\
  &=& \frac{1}{8\mathfrak{b}^4\sqrt{d}}.
  \end{eqnarray}
Hence,  
\begin{equation}
\frac{\eta_{i} - \eta_{i+1}}{g(\eta)[i] -  g(\eta)[i+1]}  \stackrel{\textrm{Eq. } \eqref{eq_n23}}{\leq}  8\mathfrak{b}^4 \sqrt{d}  \times (\eta_{i} - \eta_{i+1}) \stackrel{\textrm{Eq. } \eqref{eq_n22}}{=}   8\mathfrak{b}^4 \sqrt{d}  \times y,
\end{equation}
which proves \eqref{eq_b5}.

\medskip
\noindent
{\em Showing \eqref{eq_b4}.}
By \eqref{eq_g1}, we have
  \begin{eqnarray}
  g(\eta)[i] - g(\eta)[i+1] &\stackrel{\textrm{Eq. } \eqref{eq_g1}}{=} &\left(\frac{2}{s}(\eta_{i}- \eta_{i+1})+ 2\frac{1}{\mathfrak{b}\sqrt{d}}\right) \times \frac{b-a -(\eta_{i}- \eta_{i+1})}{b-a} \nonumber\\
  &\stackrel{\textrm{Eq. } \eqref{eq_n21}}{\geq}& \left(\frac{2}{s}(\eta_{i}- \eta_{i+1})+ 2\frac{1}{\mathfrak{b}\sqrt{d}}\right) \times \frac{1}{2} \nonumber\\
    &\geq & \frac{1}{s}(\eta_{i}- \eta_{i+1}) \nonumber\\
        &\geq & (\eta_{i}- \eta_{i+1}) \label{eq_n133}\\
        &\stackrel{\textrm{Eq. } \eqref{eq_n20}}{\geq}& (1-\alpha)\times (\eta_{i}- \eta_{i+1}), \label{eq_n24}
  \end{eqnarray}
  where \eqref{eq_n133} holds since $s\leq 1$, and \eqref{eq_n24} holds since $\alpha \geq 0$ by \eqref{eq_n20}.
  Thus, \eqref{eq_b4} holds for $j=i$ by \eqref{eq_n24}.
  Moreover, \eqref{eq_b4} holds for all  $j \in[j_{\mathrm{min}}, j_{\mathrm{max}}]\backslash \{i\}$ by \eqref{eq_g2} and  \eqref{eq_b4} holds for  all $j \notin [j_{\mathrm{min}}, j_{\mathrm{max}}]$ by \eqref{eq_g3}.
  Therefore \eqref{eq_b4} holds for all $j \in [d]$.

\noindent
  {\em Showing \eqref{eq_n42}.}

\begin{eqnarray}
    b-a  &\stackrel{\textrm{ Eq. } \eqref{eq_n36}, \,\, \eqref{eq_n25}}{=} &\eta_{j_{\mathrm{min}}} - \eta_{j_{\mathrm{max}}}\nonumber\\
    &\stackrel{\textrm{Prop. } \ref{prop_n1}}{\geq}& b_{\mathrm{min}} - a_{\mathrm{max}} \nonumber\\
    &\stackrel{\textrm{Eq. } \eqref{eq_n26}}{=}& \left(\omega_{j_{\mathrm{min}}} - \frac{1}{30} \mathfrak{b}^2 d^{-\frac{1}{6}} i^{-\frac{1}{3}}\right) - \left(\omega_{j_{\mathrm{max}}} + \frac{1}{30} \mathfrak{b}^2 d^{-\frac{1}{6}} i^{-\frac{1}{3}}\right)\nonumber\\
    &=& \omega_{j_{\mathrm{min}}} - \omega_{j_{\mathrm{max}}} - \frac{1}{15} \mathfrak{b}^2 d^{-\frac{1}{6}} i^{-\frac{1}{3}}\nonumber\\
      &\stackrel{\textrm{Prop. } \ref{prop_classical}, \, \, \textrm{Eq. } \eqref{eq_n28}}{\geq} &  2\mathfrak{b}^2 \times d^{-\frac{1}{6}} (i+  \mathfrak{b})^{-\frac{1}{3}} - \frac{1}{15} \mathfrak{b}^2 d^{-\frac{1}{6}} i^{-\frac{1}{3}}\nonumber\\
            &\stackrel{\textrm{Prop. } \ref{prop_classical}, \, \, \textrm{Eq. } \eqref{eq_n28}}{\geq} &  1.5\mathfrak{b}^2 \times d^{-\frac{1}{6}} (i+  \mathfrak{b})^{-\frac{1}{3}} - \frac{1}{15} \mathfrak{b}^2 d^{-\frac{1}{6}} i^{-\frac{1}{3}}\nonumber\\
          &\geq&  \mathfrak{b}^2 \times d^{-\frac{1}{6}} i^{-\frac{1}{3}}, \label{eq_n27}
\end{eqnarray}
where \eqref{eq_n27} holds since $\mathfrak{b} \geq 10^6$.

Recall that, without loss of generality, we may assume that $i \leq \frac{1}{2}d$,  since the GUE matrices $G$ and $-G$ have the same distribution and hence the joint eigenvalue distribution of the GUE is symmetric about $0$.
Thus we have
\begin{equation*}
    b-a \stackrel{\textrm{Eq. } \eqref{eq_n27}}{\geq}  \mathfrak{b}^2 \times d^{-\frac{1}{6}} (\min(i, d-i))^{-\frac{1}{3}} \geq  \frac{1}{\sqrt{d}}.
\end{equation*}
\end{proof}

\paragraph{Step 4.} Bounding the Jacobian determinant of the map $g$.

\begin{lemma}[\bf Jacobian determinant of $g$]\label{prop_Jacobian}
If  $y \leq s\frac{1}{8\mathfrak{b}^4\sqrt{d} }$ and $\eta \in  S_3(a,b;y) \cap E$, we have that
$$\mathrm{det}(J_g(\eta)) \geq \frac{1}{16 s}.$$
\end{lemma}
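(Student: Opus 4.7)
The plan is to reduce $\det J_g(\eta)$ to a block form where the $1/s$-growth can be isolated and the off-diagonal perturbations controlled. First, I would change variables from the eigenvalue coordinates $(\eta_{j_{\min}}, \ldots, \eta_{j_{\max}})$ to the gap coordinates $(\eta_{j_{\max}}, \Delta_{j_{\max}-1}, \ldots, \Delta_{j_{\min}})$, where $\Delta_\ell := \eta_\ell - \eta_{\ell+1}$. This is a linear change with Jacobian $\pm 1$, and by \eqref{eq_g3}, \eqref{eq_g4}, and \eqref{eq_n36} the map $g$ in these new coordinates fixes $\eta_{j_{\max}}$ and every coordinate outside $[j_{\min}, j_{\max}]$. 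So $|\det J_g(\eta)|$ equals the absolute determinant of the $m \times m$ gap-to-gap map $(\Delta_{j_{\min}}, \ldots, \Delta_{j_{\max}-1}) \mapsto (\tilde\Delta_{j_{\min}}, \ldots, \tilde\Delta_{j_{\max}-1})$, where $m := j_{\max} - j_{\min} \leq 2\mathfrak{b}^2$, and, writing $N := \tfrac{2}{s}\Delta_i + \tfrac{1}{4\mathfrak{b}^4\sqrt d}$, $D := b-a$, and $\alpha = N/D$, one has $\tilde\Delta_j = (1-\alpha)\Delta_j$ for $j \neq i$ and $\tilde\Delta_i = N(1-\Delta_i/D)$.

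The next step is to compute the partial derivatives $\partial \tilde\Delta_j/\partial \Delta_k$, being careful that $\alpha$ depends not only on $\Delta_i$ (via $N$) but also on every $\Delta_\ell$ through $D = \sum_\ell \Delta_\ell$. Placing the index $i$ first, the gap-to-gap Jacobian takes the block form
\[
J \;=\; \begin{pmatrix} A & \mathbf{w}^\top \\ \mathbf{c} & (1-\alpha) I_{m-1} + \tfrac{\alpha}{D}\,\mathbf{u}\mathbf{1}^\top \end{pmatrix},
\]
with $A = \partial \tilde\Delta_i/\partial \Delta_i$, $\mathbf{w}_j = N\Delta_i/D^2$, $\mathbf{c}_j = -\Delta_j(\tfrac{2}{s}-\alpha)/D$, and $\mathbf{u}_j = \Delta_j$. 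The matrix-determinant lemma gives $\det\bigl((1-\alpha)I_{m-1} + \tfrac{\alpha}{D}\mathbf{u}\mathbf{1}^\top\bigr) = (1-\alpha)^{m-2}(1 - \alpha\Delta_i/D)$, and a Schur-complement expansion reduces $\det J$ to this factor times $A - \mathbf{w}^\top[(1-\alpha) I_{m-1} + \tfrac{\alpha}{D}\mathbf{u}\mathbf{1}^\top]^{-1}\mathbf{c}$.

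Finally, I would estimate each factor using $\Delta_i \leq y \leq s/(8\mathfrak{b}^4\sqrt{d})$, the lower bound $D \geq 1/\sqrt{d}$ from \eqref{eq_n42}, and the rigidity bound $\Delta_j = O(\mathfrak{b}\, d^{-1/6}\min(i,d-i)^{-1/3})$ available on $E$. Direct differentiation gives $A = \tfrac{2}{s}(1 - 2\Delta_i/D) - 1/(4\mathfrak{b}^4\sqrt d\,D)$, and the hypotheses reduce each correction to $O(\mathfrak{b}^{-4})$, so $A \geq 1/s$ for $s \leq 1$ and $\mathfrak{b}$ large. Similarly $\alpha \leq 1/(2\mathfrak{b}^4)$, and since $m \leq 2\mathfrak{b}^2$ one has $(1-\alpha)^{m-2}(1-\alpha\Delta_i/D) \geq 1 - O(\mathfrak{b}^{-2})$. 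A Cauchy--Schwarz bound on $\mathbf{w}^\top[\cdots]^{-1}\mathbf{c}$ using the uniform size bounds above shows the Schur correction to $A$ is at most $O(\mathfrak{b}^{-7})$, hence absorbed. Combining, $\det J_g(\eta) \geq \tfrac{1}{s} \cdot \tfrac{1}{2}(1 - O(\mathfrak{b}^{-2})) \geq \tfrac{1}{16s}$. The main delicate point is that the off-diagonal structure of $J$ is entirely driven by the $D$-dependence of $\alpha$, so controlling the resulting Schur correction rigorously depends crucially on the rigidity-based bound $\Delta_j/D = O(\mathfrak{b}^{-1})$ from \eqref{eq_n42}; the factor of $16$ in the statement is comfortable slack for these corrections.
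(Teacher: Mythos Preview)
Your proposal is correct, but it takes a longer route than the paper. The key divergence is in how you treat the quantity $D = b - a$. In the paper's setup, $a$ and $b$ are fixed external parameters of the map $g$ (introduced before the definition of $g$ via ``Consider any $a, b$ such that $\ldots$''), so $\alpha = N/(b-a)$ depends on the gap coordinates \emph{only through $\Delta_i$}. Under this reading, the gap-coordinate Jacobian $J_{h\circ g\circ h^{-1}}$ is, after moving the $i$th coordinate to the front, upper triangular: the paper simply reads off the diagonal entries --- one entry $\partial\tilde\Delta_i/\partial\Delta_i \geq 1/(2s)$ from \eqref{eq_g1}, the remaining entries in the block equal to $1-\alpha$ from \eqref{eq_g2}, and all entries outside $[j_{\min},j_{\max}-1]$ equal to $1$ --- and multiplies, using $\alpha \leq (b-a)^{-1}\mathfrak{b}^{-4}d^{-1/2} \leq \mathfrak{b}^{-2}$ together with $(1-\alpha)^{2\mathfrak{b}^2}\geq 1/8$. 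No Schur complement, no matrix-determinant lemma, and no rigidity-based control of the individual $\Delta_j$'s is required.

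You instead interpret $D = \sum_\ell \Delta_\ell$ as a \emph{variable}, so that $\alpha$ acquires dependence on every gap. This injects the rank-one off-diagonal structure you identify and forces the block/Schur analysis. Your estimates are essentially right (your formula for $A$ drops the second-order term $\tfrac{2}{s}\Delta_i^2/D^2$, but the lower bound $A \geq 1/s$ survives; the exact expression is $A=(\tfrac{2}{s}-\alpha)(1-\Delta_i/D)$), and the argument does deliver the claimed $1/(16s)$. What your route buys is robustness: your computation is the correct one if $g$ is regarded as a single map on $\mathcal{W}_d$ with $a = \eta_{j_{\max}}$, $b = \eta_{j_{\min}}$, which is arguably the reading needed for the full-dimensional change of variables in \eqref{eq_n43}. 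Under the paper's fixed-parameter interpretation the extra machinery is unnecessary, but it is not wasted.
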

\begin{proof}

    Since $\frac{1}{b-a} \leq 2 \sqrt{d}$ by \eqref{eq_n42} of Proposition \ref{prop_map},  and $\mathfrak{b} > 100$ and $s<1$, we have that
\begin{equation}\label{eq_n29}
    2\frac{1}{\mathfrak{b}\sqrt{d} (b-a)} \stackrel{\textrm{Eq. } \eqref{eq_n42}  \textrm{ of Prop. } \ref{prop_map}}{\leq} \frac{1}{2s}.
\end{equation}
    Moreover, since $\eta \in E$ we have
\begin{eqnarray}
\eta_{i} - \eta_{i+1} &\leq&    \eta_{i - 8\mathfrak{b}} - \eta_{i + 8\mathfrak{b}} \nonumber\\
&\leq&  \omega_{i - 8\mathfrak{b}} - \omega_{i + 8\mathfrak{b}} + |\eta_{i - 8\mathfrak{b}} - \omega_{i - 8\mathfrak{b}}|  + |\eta_{i + 8\mathfrak{b}} - \omega_{i + 8\mathfrak{b}}| \nonumber\\
&\stackrel{\textrm{Prop. } \ref{prop_classical}, \, \, \textrm{Eq. } \eqref{eq_n19}}{\leq}&  16 \mathfrak{b}\times 2 \pi  d^{-\frac{1}{6}}  (i+2\mathfrak{b})^{-\frac{1}{3}} + 2 \mathfrak{b} \times (i-2\mathfrak{b})^{-\frac{1}{3}} d^{-\frac{1}{6}} \nonumber\\
&\leq& 32 \pi \mathfrak{b}\times  d^{-\frac{1}{6}}  i^{-\frac{1}{3}} + 4 \mathfrak{b} \times i^{-\frac{1}{3}} d^{-\frac{1}{6}} \label{eq_n54}\\
&\leq&  \frac{1}{10} \mathfrak{b}^2 \times d^{-\frac{1}{6}} i^{-\frac{1}{3}}  \label{eq_n56}\\
&\stackrel{\textrm{Prop. } \ref{prop_map}}{\leq}& \frac{1}{10}(b-a)  \label{eq_n57}
\end{eqnarray}
where \eqref{eq_n54} and \eqref{eq_n56} hold since $\mathfrak{b} \geq 10^6$.

    Therefore,
    \begin{eqnarray}\label{eq_n30}
    \frac{1}{s (b-a)}(\eta_{i}- \eta_{i+1}) &\stackrel{\textrm{Eq. } \eqref{eq_n57}}{\leq} & \frac{1}{10 s}.
    \end{eqnarray}
Consider the map $h: \mathcal{W}_{d} \rightarrow \mathbb{R}^{d}$, where   $h(\eta)[j] = \eta_j - \eta_{j+1}$ for $j \in[j_{\mathrm{min}}, j_{\mathrm{max}} -1]$ and $h(\eta)[j] = \eta_{j}$ for $j \in [1,d] \backslash [j_{\mathrm{min}}, j_{\mathrm{max}} -1]$.
The map $h$ is injective since for any $\Delta \in \mathbb{R}^d$ that is in the range of $h$ we can solve for the unique $\eta \in \mathcal{W}_d$ such that $h(\eta) = \Delta$.
 Specifically, the unique solution $\eta$, which we denote by $h^{-1}(\Delta)$, is given by 
 \begin{equation}\label{eq_n50}
 h^{-1}(\Delta)[j] := \eta_j = \begin{cases}
  \Delta_j, \qquad \qquad \qquad \qquad \qquad \, \,   j \in [1,d] \backslash [j_{\mathrm{min}}, j_{\mathrm{max}} -1]\\
  \Delta_{j_\mathrm{max}} + \sum_{r = 1}^{ j_{\mathrm{max}} - j} \Delta_{j_{\mathrm{max}} - r}, \qquad   j \in[j_{\mathrm{min}}, j_{\mathrm{max}} -1].
 \end{cases}
 \end{equation}
 Thus, \eqref{eq_n50} implies that $h$ is injective.
  Moreover,   \eqref{eq_n50} also implies that for every $j \in [d]$ and every $\Delta \in \mathbb{R}^d$ that is in the range of $h$,
  \begin{equation}\label{eq_n51}
  h^{-1}(\Delta)[j]- h^{-1}(\Delta)[j+1] = \Delta_j \qquad \qquad \forall  j \in[j_{\mathrm{min}}, j_{\mathrm{max}} -1].
    \end{equation}
Moreover, from \eqref{eq_g3}-\eqref{eq_g2} we have that for every $\eta \in \mathcal{W}_d$ and $j \in [d]$,
\begin{equation}\label{eq_n52}
\textrm{$g(\eta)[j]- g(\eta)[j+1]$ is a function of only $\eta_j- \eta_{j+1}$ and $\eta_i - \eta_{i+1}$,}
\end{equation}
and does not otherwise depend on any $\eta_\ell$ for $\ell \neq j$, $\ell \notin \{i, i+1 \}$.
Therefore, by \eqref{eq_n50}, \eqref{eq_n51} and \eqref{eq_n52},  for every $\Delta \in  \{h(\eta) :  \eta \in  S_3(a,b;y) \cap E \}$, and every $j \in [d]$, we have that 
\begin{equation}\label{eq_n53}
\textrm{$h(g(h^{-1}(\Delta)))[j]$ is a function of only $\Delta_j$ and $\Delta_i$,}
\end{equation}
 and does not otherwise depend on any $\Delta_\ell$ for $\ell \notin\{j, i\}$.
Thus,  for every $\Delta \in  \{h(\eta) :  \eta \in  S_3(a,b;y) \cap E \}$, we have that
\begin{equation} \label{eq_derivative1}
    \frac{\partial h \circ g\circ h^{-1}(\Delta)[\ell]}{\partial \Delta_j} \stackrel{\textrm{Eq. } \eqref{eq_n53}}{=} 0 \qquad \forall \ell \neq j, \ell \neq i.
\end{equation}
Moreover, we also have that, for every $\Delta \in \mathbb{R}^d$,
\begin{equation}\label{eq_derivative1_b}
    \frac{\partial h \circ g\circ h^{-1}(\Delta)[j]}{\partial \Delta_j} \stackrel{\textrm{Eq. } \eqref{eq_g4},\eqref{eq_g2}, \eqref{eq_n50}, \eqref{eq_n53}}{=} 1-\alpha \qquad \forall j \neq i, \quad  j \in  [j_{\mathrm{min}}, j_{\mathrm{max}}],
\end{equation}
\begin{align}\label{eq_derivative1_c}
    \frac{\partial h \circ g\circ h^{-1}(\Delta)[i]}{\partial \Delta_i} & \stackrel{\textrm{Eq. } \eqref{eq_g1},  \eqref{eq_n50}, \eqref{eq_n53}}{=} \frac{2}{s} -\frac{1}{4\mathfrak{b}^4\sqrt{d} (b-a)} -\frac{1}{s (b-a)}\Delta_i  \stackrel{\textrm{Eq. } \eqref{eq_n29},\, \eqref{eq_n30}}{\geq} \frac{1}{2s},
\end{align}
\begin{equation}\label{eq_derivative1_d}
    \frac{\partial h \circ g\circ h^{-1}(\Delta)[j]}{\partial \Delta_{j}} \stackrel{\textrm{Eq. } \eqref{eq_g3}}{=} 1 \qquad \forall j \notin  [j_{\mathrm{min}}, j_{\mathrm{max}}].
\end{equation}

\noindent
Moreover, since for any $\Delta \in  \{h(\eta) :  \eta \in  S_3(a,b;y) \cap E \}$,  we have $\Delta_i \leq s\frac{1}{8\mathfrak{b}^4\sqrt{d} }$ because $y \leq s\frac{1}{8\mathfrak{b}^4\sqrt{d} }$ and $h^{-1}(\Delta) \in  S_3(a,b;y)$, we also have that $\alpha= \frac{ \frac{2}{s}\Delta_i +2\frac{1}{8\mathfrak{b}^4\sqrt{d}} }{b-a} \leq  \frac{1}{(b-a)\mathfrak{b}^4\sqrt{d}}$.
Thus, the Jacobian matrix $J_{h \circ g \circ h^{-1}}(\Delta)$  has diagonal entries $1-\alpha \geq 1- \frac{1}{(b-a)\mathfrak{b}^2\sqrt{d}}$ for $j \in  [j_{\mathrm{min}}, j_{\mathrm{max}}-1] \backslash \{i\}$ by \eqref{eq_derivative1_b},  and $i$'th diagonal entry greater than or equal to $\frac{1}{2s}$ by \eqref{eq_derivative1_c}, and all other diagonal entries equal to $1$ by \eqref{eq_derivative1_d}.
Moreover, if one exchanges the $i$'th row and column of $J_{h \circ g \circ h^{-1}}(\Delta)$ with its first row and column,  by \eqref{eq_derivative1} the resulting matrix is a  $d\times d$ upper triangular matrix with the same determinant as  $J_{h \circ g \circ h^{-1}}(\Delta)$.
Thus, by Sylvester's formula, the determinant of $J_{h \circ g \circ h^{-1}}(\Delta)$ is equal to the product of its diagonal entries.
 Thus, for any $\Delta \in  \{h(\eta) :  \eta \in  S_3(a,b;y) \cap E \}$,
\begin{eqnarray}
\mathrm{det}(J_{h \circ g \circ h^{-1}}(\Delta)) &\geq &\frac{1}{2s} \left(1- \frac{1}{(b-a)\mathfrak{b}^4\sqrt{d}}\right)^{j_{\mathrm{max}} - j_{\mathrm{min}}- 2} \times 1 \nonumber\\
&\geq & \frac{1}{2s} \left(1- \frac{1}{(b-a)\mathfrak{b}^4\sqrt{d}}\right)^{2 \mathfrak{b}^2} \label{eq_n134}\\
&\stackrel{\textrm{Eq. } \eqref{eq_n42} \textrm{ of Prop. } \ref{prop_map}}{\geq}& \frac{1}{2s}\left(1- \frac{1}{\mathfrak{b}^2}\right)^{2 \mathfrak{b}^2} \nonumber\\
&\geq &\frac{1}{16 s}, \label{eq_g5}
\end{eqnarray}
where \eqref{eq_n134} holds because $j_{\mathrm{max}}-j_{\mathrm{min}} =  \min(i+\mathfrak{b}^2, d) - \max(i-\mathfrak{b}^2, 1) \leq 2 \mathfrak{b}^2$.
\eqref{eq_g5} holds since $\mathfrak{b}^2 >100$ and $(1-\frac{1}{100})^{200} > \frac{1}{16}$. 
Hence,

\begin{eqnarray*}
    \mathrm{det}(J_{g}(\eta)) &= &  \mathrm{det}(J_{  h^{-1} \circ h \circ g \circ h^{-1} \circ h}(\eta)) \\ 
    &=& \mathrm{det}( J_{h^{-1}}(\eta) \times J_{h \circ g \circ h^{-1}}(h(\eta))  \times J_{h}(\eta))\\
    &= & \mathrm{det}(J_{h^{-1}}(\eta)) \times \mathrm{det}(J_{h \circ g \circ h^{-1}}(h(\eta))  \times \mathrm{det}(J_{h}(\eta))\\
        &= & \mathrm{det}(J_{h}(\eta))^{-1} \times \mathrm{det}(J_{h \circ g \circ h^{-1}}(h(\eta))  \times \mathrm{det}(J_{h}(\eta))\\
    &=&\mathrm{det}(J_{h \circ g \circ h^{-1}}(h(\eta))  \times 1\\
    &\stackrel{\textrm{Eq. \eqref{eq_g5}}}{\geq}& \frac{1}{16s}.
\end{eqnarray*}
\end{proof}

\paragraph{Step 5.} {This step is a mean-field approximation for far-away eigenvalues.}
This mean-field approximation (Lemma \ref{lemma_mean_field}) will allow us to bound the component of the density ratio $\frac{f(\eta)}{f(g(\eta))}$ which corresponds to the gaps between ``far-away'' eigenvalues $\eta_i - \eta_j$ with indices $i,j$ at least roughly $i-j \geq \Omega(\mathfrak{b}) = \tilde{\Omega}(1)$ apart (Lemma \ref{lemma_density_ratio})

\begin{lemma}[\bf Mean-field approximation for far-away eigenvalues]\label{lemma_mean_field}

For any $y \leq s\frac{1}{8\mathfrak{b}^4\sqrt{d} }$ and any $\eta\in  S_3(a,b;y) \cap E$, we have that
\begin{equation}
\prod_{j \in [j_{\mathrm{min}} , j_{\mathrm{max}}] , \, \, \, \ell \notin [j_{\mathrm{min}} - 2\mathfrak{b} , j_{\mathrm{max}} + 2\mathfrak{b}]} \frac{| \eta_{j} - \eta_\ell|^2}{| g(\eta)[j] - g(\eta)[\ell]|^2}  \leq 2.
\end{equation}

\end{lemma}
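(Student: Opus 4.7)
The plan is to reduce the product (inside a logarithm) to a double sum controlled via the closed-form shift formulas from \eqref{eq_g3}--\eqref{eq_g1} together with the rigidity hypothesis $\eta \in E$ and the classical-gap estimates in Proposition \ref{prop_classical}.

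First, since every $\ell$ appearing in the product lies outside $W := [j_{\mathrm{min}}, j_{\mathrm{max}}]$, definition \eqref{eq_g3} forces $g(\eta)[\ell] = \eta_\ell$. Setting $\delta_j := \eta_j - g(\eta)[j]$ and $W' := [j_{\mathrm{min}}{-}2\mathfrak{b}, j_{\mathrm{max}}{+}2\mathfrak{b}]$, each factor of the product equals $|1 + \delta_j/(g(\eta)[j] - \eta_\ell)|^2 \leq e^{2|\delta_j|/|g(\eta)[j] - \eta_\ell|}$ via $(1+x)^2 \leq (1+|x|)^2 \leq e^{2|x|}$. Taking logarithms reduces the claim to showing that $\Sigma := \sum_{j\in W,\, \ell \notin W'} |\delta_j|/|g(\eta)[j]-\eta_\ell| \leq \tfrac{1}{2}\log 2$. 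Telescoping \eqref{eq_g3}--\eqref{eq_g1} from the boundary value $g(\eta)[j_{\mathrm{max}}]=a$ yields the closed forms $g(\eta)[j] = (1-\alpha)\eta_j + \alpha a$ for $j > i$ and $g(\eta)[j] = (1-\alpha)\eta_j + \alpha b - y$ for $j \leq i$, giving $|\delta_j| \leq \alpha(b-a) + y$ uniformly. Substituting \eqref{eq_n20} and the hypothesis $y \leq s/(8\mathfrak{b}^4\sqrt{d})$ gives $\alpha(b-a) \leq 1/(2\mathfrak{b}^4\sqrt{d})$, and combining with \eqref{eq_n42} of Proposition \ref{prop_map} gives $\alpha \leq 1/(2\mathfrak{b}^6)$.

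I would then split $\Sigma$ at $m := \min(|\ell - j_{\mathrm{min}}|,\, |\ell - j_{\mathrm{max}}|)$ around the threshold $m_0 := \mathfrak{b}^2$. For the near field $2\mathfrak{b} \leq m \leq m_0$, the key identity $g(\eta)[j] - \eta_\ell = (1-\alpha)(\eta_j - \eta_\ell) + \alpha(a - \eta_\ell) \geq (1-\alpha)(\eta_j - a)$ (in the quadrant $\ell > j_{\mathrm{max}}$, $j > i$; the other three quadrants are analogous, with $a$ replaced by $b$ and $i$ by $i+1$ as appropriate) absorbs the potentially tiny factor $|a - \eta_\ell|$ and reduces $|\delta_j|/|g(\eta)[j] - \eta_\ell|$ to the uniform ratio $\alpha/(1-\alpha) \leq 2\alpha$ per pair; in the quadrant $j \leq i$ I would additionally invoke rigidity combined with Proposition \ref{prop_classical} to push $\eta_j - a \geq \tfrac{1}{2}\mathfrak{b}^2 d^{-1/6}\min(i, d{-}i)^{-1/3}$ (valid since there are at least $\mathfrak{b}^2$ eigenvalues between $j$ and $j_{\mathrm{max}}$), yielding the stronger pointwise bound $O(1/\mathfrak{b}^6)$. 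Summing over $|W| \leq 2\mathfrak{b}^2+1$ and over $O(\mathfrak{b}^2)$ near-field $\ell$'s contributes $O(1/\mathfrak{b}^2)$ to $\Sigma$. For the far field $m > m_0$, rigidity lets the classical distance dominate the error and gives $|g(\eta)[j] - \eta_\ell| \geq \tfrac{1}{2}\, m\, d^{-1/6}\min(i, d{-}i)^{-1/3}$; together with the aggregate bound $\sum_{j \in W}|\delta_j| \leq O(1/(\mathfrak{b}^2\sqrt{d}))$ this produces a harmonic-type tail $\sum_{m > m_0} O(1/(\mathfrak{b}^2 m)) = O(\log d/\mathfrak{b}^2)$. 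Adding both parts yields $\Sigma \leq O(\log d/\mathfrak{b}^2) = o(1)$ since $\mathfrak{b} = (\log d)^{L\log \log d}$ grows faster than any polynomial in $\log d$, hence $\Sigma \leq \tfrac{1}{2}\log 2$ for $d$ large enough.

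The main obstacle is the near-field step: the $\tfrac{1}{30}\mathfrak{b}^2 d^{-1/6}\min(i, d{-}i)^{-1/3}$-wide uncertainty in $a, b$ inherited from Proposition \ref{prop_n1} \emph{exceeds} the classical spacing $a - \omega_{j_{\mathrm{max}}+m}$ for $m \leq \mathfrak{b}^2$, so one cannot lower bound $|a - \eta_\ell|$ away from zero pointwise in this range. The saving feature is that the $\alpha(a - \eta_\ell)$ piece inside $g(\eta)[j] - \eta_\ell$ has the correct sign to be dropped, and the remaining $(1-\alpha)(\eta_j - a)$ piece exactly matches the $\eta_j - a$ factor in $|\delta_j|$ when $j > i$, producing a clean pointwise bound independent of $\ell$; for $j \leq i$ the cancellation requires additionally using rigidity to control $\eta_j - a$ from below by $\Omega(\mathfrak{b}^2\mu)$.
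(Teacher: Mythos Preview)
Your approach is correct and parallels the paper's, but the paper's route is substantially simpler. The paper never computes the closed form for $g(\eta)[j]$, never introduces the quantity $|a-\eta_\ell|$, and never splits into near-field versus far-field. Instead, it bounds the displacement $\bigl|(g(\eta)[j]-g(\eta)[\ell])-(\eta_j-\eta_\ell)\bigr| \leq \alpha(b-a) \leq \tfrac{1}{2\mathfrak{b}^4\sqrt{d}}$ directly by telescoping (this is your $|\delta_j|$, since $g(\eta)[\ell]=\eta_\ell$), and combines it with the single rigidity-based lower bound $|\eta_j-\eta_\ell| \geq r/(2\sqrt{d})$ for $r:=|j-\ell|\geq 2\mathfrak{b}$. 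This yields the \emph{uniform} ratio estimate $\tfrac{|\eta_j-\eta_\ell|^2}{|g(\eta)[j]-\eta_\ell|^2} \leq \bigl(1+\tfrac{1}{r\mathfrak{b}^4}\bigr)^2$, and the full product is then controlled via the elementary estimate $\prod_{r=1}^d\bigl(1+\tfrac{1}{\mathfrak{b}^4 r}\bigr) \leq (d+1)^{2/\mathfrak{b}^4} \leq e^{2/\mathfrak{b}^3}$, giving at most $e^{O(1/\mathfrak{b})}\leq 2$ after raising to the power $|W|\leq 2\mathfrak{b}^2$.

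Your ``main obstacle''---the failure to lower-bound $|a-\eta_\ell|$ pointwise in the near field---is self-inflicted: it appears only because you decompose the denominator as $(1-\alpha)(\eta_j-\eta_\ell)+\alpha(a-\eta_\ell)$ and then try to control the two pieces separately. The paper sidesteps this entirely by working throughout with $|\eta_j-\eta_\ell|$, which rigidity controls directly for \emph{all} $r\geq 2\mathfrak{b}$ at once. Your factorization trick (producing the clean $\alpha/(1-\alpha)$ ratio in one quadrant) is a nice observation and does resolve the difficulty you created, but it is not needed. One small slip worth flagging: your far-field denominator lower bound $\tfrac12\, m\, d^{-1/6}\min(i,d-i)^{-1/3}$ is overstated when $\ell$ lies toward the bulk from $i$, since classical gaps there are \emph{smaller} than at $i$; the correct uniform lower bound is $\tfrac12\, m\, d^{-1/2}$, though this still produces the same $O(\log d/\mathfrak{b}^2)$ harmonic tail and hence the same conclusion.
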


\begin{proof}
Since  $y \leq s\frac{1}{8\mathfrak{b}^4\sqrt{d} }$ and $\eta \in  S_3(a,b;y)$, 
we have that
\begin{equation}\label{eq_n31}
    \eta_i-\eta_{i+1} \stackrel{\textrm{Eq. } \eqref{eq_n22}}{=} y \leq s\frac{1}{8\mathfrak{b}^4\sqrt{d} }.
    \end{equation}
Moreover, since $\eta \in E$, by Proposition \ref{prop_map} and \eqref{eq_n25} we have
\begin{equation}\label{eq_n32}
    b-a  \stackrel{\textrm{Prop. } \ref{prop_map}, \textrm{ Eq. } \eqref{eq_n25}}{=} \eta_{j_{\mathrm{min}}} - \eta_{j_{\mathrm{max}}}.
    \end{equation}
Consider any $j \in [j_{\mathrm{min}}, j_{\mathrm{max}}]$ and any $r \geq 2\mathfrak{b}$.
Then, since $\eta \in E$,  by the definition of the event $E$ \eqref{eq_n19} and Proposition \ref{prop_classical} we have that 
\begin{equation}\label{eq_b10}
    |\eta_j - \eta_{j+r} | = r \frac{1}{2\sqrt{d}} + \rho
\end{equation}
for some $\rho \geq 0$.
Moreover, we have that 
\begin{eqnarray}\label{eq_b9}
& & \!\!\!\!\!\!\!\!\!\!\!\!\!\!\!\!\!\!\!\!\!\!\!\!\!\!\!\!\!\!\!\!\!\!\!\!\!\!\!\!\!\!\!\!\!\!\!\! |(g(\eta)[j] - g(\eta)[j+r]) - (\eta_j - \eta_{j+r})|
\nonumber \\
&=&\left|\sum_{\ell=0}^{r-1} (g(\eta)[j+\ell] - g(\eta)[j+\ell+1]) - (\eta_{j+\ell} - \eta_{j+\ell+1})\right|\nonumber\\
&\stackrel{\textrm{Eq. } \eqref{eq_g2},\, \eqref{eq_g3}}{=}& \left |\sum_{\ell\in [j, j+r-1] \cap [j_{\mathrm{min}}, j_{\mathrm{max}}]} \alpha (\eta_{j+\ell} - \eta_{j+\ell+1}) \right |\nonumber\\
& \leq &\alpha \times (\eta_{j_{\mathrm{min}}} - \eta_{j_{\mathrm{max}}}) \nonumber\\
   &\stackrel{\textrm{Eq. } \eqref{eq_n32}}{=}& \alpha (b-a)    \nonumber\\
    &\stackrel{\textrm{Eq. } \eqref{eq_n20}}{=}&  \frac{2}{s}(\eta_{i}- \eta_{i+1}) +2\frac{1}{8\mathfrak{b}^4\sqrt{d}}    \nonumber\\
    &\stackrel{\textrm{Eq. } \eqref{eq_n31}}{\leq}& \frac{1}{2\mathfrak{b}^4\sqrt{d}}.
\end{eqnarray}
Thus, by \eqref{eq_b10} and \eqref{eq_b9}, for some $\zeta \in \mathbb{R}$ where $|\zeta| \leq \frac{1}{2\mathfrak{b}^4\sqrt{d}}$, we have 
\begin{eqnarray} 
 \frac{| \eta_{j} - \eta_{j+r}|^2}{| g(\eta)[j] - g(\eta)[j+r]|^2} 
& \stackrel{\textrm{Eq. } \eqref{eq_b10}, \, \, \eqref{eq_b9}}{\leq} & \frac{|  r \frac{1}{2\sqrt{d}} + \rho  |^2}{| r \frac{1}{2\sqrt{d}} + \rho + \zeta|^2}  \nonumber\\
&\leq & \left(1+ \frac{1}{r}   \times \frac{1}{\mathfrak{b}^4}\right)^2  \qquad \qquad \forall \, r \geq 2\mathfrak{b}, \label{eq_b7}
\end{eqnarray}
where \eqref{eq_b7} holds since  $|\zeta| \leq \frac{1}{4\mathfrak{b}^4\sqrt{d}}$ and $\rho\geq 0$.

Moreover, for every $\kappa \geq 1$ we have,
\begin{eqnarray}
\prod_{r=1}^{d}   (1+ \frac{1}{\kappa r}) &\leq& \prod_{r=1}^{d}   (1+ \frac{1}{r})^{\frac{2}{\kappa}} \label{eq_n70}\\
&=& \left( \prod_{r=1}^{d}   (1+ \frac{1}{r}) \right)^{\frac{2}{\kappa}} \nonumber\\
&\leq& (d+1)^{\frac{2}{\kappa}} \label{eq_n71}
\end{eqnarray}
where \eqref{eq_n70} holds since $1+ \frac{1}{r \kappa} \leq (1 + \frac{1}{r})^{\frac{2}{\kappa}}$ for all $r, \kappa \geq 1$, and \eqref{eq_n71} holds since $\prod_{r=1}^{d}   (1+\frac{1}{r}) = d+1$.

Plugging in $\kappa = \mathfrak{b}^4$ into \eqref{eq_n71}, we have
\begin{eqnarray}\label{eq_n72}
\prod_{r=1}^{d}   (1+ \frac{1}{\mathfrak{b}^4 r})
 &\stackrel{\textrm{Eq. } \eqref{eq_n71}}{\leq}& (d+1)^{\frac{2}{\mathfrak{b}^4}} \nonumber\\
&\stackrel{\textrm{Eq. } \eqref{eq_n37}}{\leq}& \left(d^{\frac{1}{\log(d)^{\log \log d}}}\right)^{\frac{2}{\mathfrak{b}^3}}  \nonumber\\
&=&\left(e^{\frac{\log(d)}{\log(d)^{\log \log d}}}\right)^{\frac{2}{\mathfrak{b}^3}}  \nonumber\\
 &\leq& e^{\frac{2}{\mathfrak{b}^3}}.
\end{eqnarray}
Therefore, we have 
\begin{eqnarray} 
\prod_{j \in [j_{\mathrm{min}} , j_{\mathrm{max}}] , \, \, \, \ell \notin [j_{\mathrm{min}} - 2\mathfrak{b} , j_{\mathrm{max}} + 2\mathfrak{b}]} \frac{| \eta_{j} - \eta_\ell|^2}{| g(\eta)[j] - g(\eta)[\ell]|^2}  &\stackrel{\textrm{Eq.  \eqref{eq_b7}}}{\leq}& \prod_{j \in [j_{\mathrm{min}} , j_{\mathrm{max}}]} \prod_{r = 2 \mathfrak{b}}^d \left(1+ \frac{1}{r} \times \frac{1}{\mathfrak{b}^4}\right)^2  \nonumber\\
&\stackrel{\textrm{Eq. } \eqref{eq_n72}}{\leq} & (e^{\frac{2}{\mathfrak{b}^3}})^{ j_{\mathrm{max}} -  j_{\mathrm{min}}} \nonumber\\
&\stackrel{\textrm{Eq. } \, \eqref{eq_n28}}{=}& (e^{\frac{2}{\mathfrak{b}^3}})^{2 \mathfrak{b}^2} \nonumber\\
& = & e^{\frac{4}{\mathfrak{b}}} \nonumber\\
& \stackrel{\textrm{Eq. } \eqref{eq_n37}}{\leq} 2\nonumber.
\end{eqnarray}
\end{proof}

\paragraph{Step 6.} Bounding the density ratio to show that $    \frac{f(\eta)}{f(g(\eta))} \leq \tilde{O}((\sqrt{d} \log d )^2 \times y^2)$.

\begin{lemma}\label{lemma_density_ratio}
For any $y \leq s\frac{1}{8\mathfrak{b}^4\sqrt{d} }$ and any $\eta\in  S_3(a,b;y) \cap E$, we have that

\begin{align*}
  \frac{f(\eta)}{f(g(\eta))} \leq 400 (8 \mathfrak{b}^4 \sqrt{d})^2 \times y^2,
\end{align*}
\end{lemma}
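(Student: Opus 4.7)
The plan is to factor the ratio $f(\eta)/f(g(\eta))$ using the explicit GUE density \eqref{eq_joint_density} into a product over eigenvalue pairs times a Gaussian exponential factor, and then show that (i) the Gaussian factor is $O(1)$, (ii) all pair ratios outside a bounded window contribute at most $O(1)$ in aggregate, and (iii) the pair $(i,i+1)$ alone is responsible for the advertised factor $(8\mathfrak{b}^4\sqrt{d})^2 y^2$. Concretely, since $g(\eta)[j]=\eta_j$ for $j\notin[j_{\mathrm{min}},j_{\mathrm{max}}]$, every pair $(j,\ell)$ with both indices outside $[j_{\mathrm{min}},j_{\mathrm{max}}]$ contributes exactly $1$ to the gap product and $0$ to the exponent.

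For the gap product, I would split the remaining pairs into two classes. Pairs with $j\in[j_{\mathrm{min}},j_{\mathrm{max}}]$ and $\ell\notin[j_{\mathrm{min}}-2\mathfrak{b},j_{\mathrm{max}}+2\mathfrak{b}]$ are handled directly by Lemma~\ref{lemma_mean_field}, which gives a product at most $2$. For the $O(\mathfrak{b}^4)$ pairs that remain (both indices lie in the enlarged window $[j_{\mathrm{min}}-2\mathfrak{b},j_{\mathrm{max}}+2\mathfrak{b}]$), I would isolate the pair $(i,i+1)$ and use \eqref{eq_b5} of Proposition~\ref{prop_map} to bound it by $(8\mathfrak{b}^4\sqrt{d})^2 y^2$. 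For every other pair $(j,\ell)$ in this window, telescoping \eqref{eq_b4} and $g(\eta)[j_{\mathrm{max}}]=\eta_{j_{\mathrm{max}}}$ gives $g(\eta)[j]-g(\eta)[\ell]\geq (1-\alpha)(\eta_j-\eta_\ell)$, so each such ratio is at most $(1-\alpha)^{-2}$. Using \eqref{eq_n20}, \eqref{eq_n22} and \eqref{eq_n42} of Proposition~\ref{prop_map} one checks $\alpha\leq \frac{1}{2\mathfrak{b}^4}$, and combining $O(\mathfrak{b}^4)$ such factors yields $(1-\alpha)^{-O(\mathfrak{b}^4)}\leq e^{O(1)}=O(1)$.

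For the Gaussian factor $\exp\!\bigl(-\tfrac{1}{2}\sum_j(\eta_j^2-g(\eta)[j]^2)\bigr)$, only the indices $j\in[j_{\mathrm{min}},j_{\mathrm{max}}]$ contribute. Telescoping as in the proof of Lemma~\ref{lemma_mean_field} (specifically the bound \eqref{eq_b9} together with $g(\eta)[j_{\mathrm{max}}]=\eta_{j_{\mathrm{max}}}$) gives $|g(\eta)[j]-\eta_j|\leq \alpha(b-a)\leq \tfrac{1}{2\mathfrak{b}^4\sqrt{d}}$ for every such $j$. Since $\eta\in E$ forces $|\eta_j|,|g(\eta)[j]|=O(\sqrt{d})$ by Proposition~\ref{prop_classical} and the rigidity event \eqref{eq_n19}, we get $|\eta_j^2-g(\eta)[j]^2|=|\eta_j+g(\eta)[j]|\cdot|\eta_j-g(\eta)[j]|=O(\mathfrak{b}^{-4})$. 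Summing over the at most $2\mathfrak{b}^2+1$ indices in $[j_{\mathrm{min}},j_{\mathrm{max}}]$ produces a total of $O(\mathfrak{b}^{-2})=O(1)$, so the Gaussian factor is $O(1)$.

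Multiplying the three contributions yields $f(\eta)/f(g(\eta))\leq C\cdot (8\mathfrak{b}^4\sqrt{d})^2 y^2$ for some absolute constant $C$, and tracking constants gives $C\leq 400$. The only delicate step is the accounting for the $O(\mathfrak{b}^4)$ ``close'' pairs: one must verify that $\alpha$ is small enough — specifically $\alpha\leq \tfrac{1}{2\mathfrak{b}^4}$ — so that $(1-\alpha)^{-O(\mathfrak{b}^4)}$ does not blow up. That bound is exactly what the choice $y\leq s\,\tfrac{1}{8\mathfrak{b}^4\sqrt{d}}$ in the hypothesis and the lower bound $b-a\geq 1/\sqrt{d}$ in Proposition~\ref{prop_map} together deliver, so the calibration in the definition of $g$ is what makes the argument close.
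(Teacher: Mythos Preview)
Your proposal is correct and follows essentially the same approach as the paper's proof: both factor the density ratio via \eqref{eq_joint_density}, invoke Lemma~\ref{lemma_mean_field} for the far-away pairs, isolate the $(i,i+1)$ pair via \eqref{eq_b5}, bound the remaining $O(\mathfrak{b}^4)$ close pairs by telescoping \eqref{eq_b4} to get $(1-\alpha)^{-2}$ per pair, and control the Gaussian factor by bounding $|\eta_j - g(\eta)[j]|\leq \alpha(b-a)$ together with $|\eta_j|=O(\sqrt d)$ from the rigidity event. The key calibration $\alpha\leq \tfrac{1}{2\mathfrak{b}^4}$ (from $y\leq s/(8\mathfrak{b}^4\sqrt d)$ and $b-a\geq 1/\sqrt d$) that you flag as the delicate step is exactly what the paper uses to keep $(1-\alpha)^{-O(\mathfrak{b}^4)}$ and the exponential factor $O(1)$.
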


\begin{proof}
Since $\eta \in  S_3(a,b;y)$ and $y \leq s\frac{1}{8\mathfrak{b}^4\sqrt{d} }$, we have
\begin{equation}\label{eq_n33}
\eta_{i}- \eta_{i+1} \stackrel{\textrm{Eq. } \eqref{eq_n22}}{=} y \leq s\frac{1}{8\mathfrak{b}^4\sqrt{d} }.
\end{equation}
Thus, 
\begin{equation}\label{eq_n34}
\alpha \stackrel{\textrm{Eq. } \eqref{eq_n20}}{=} \frac{ \frac{2}{s}(\eta_{i}- \eta_{i+1}) +2\frac{1}{8\mathfrak{b}^4\sqrt{d}} }{b-a} \stackrel{\textrm{Eq. } \eqref{eq_n33}}{\leq}  \frac{1}{(b-a)\mathfrak{b}^4\sqrt{d}}.
\end{equation}
By Inequality \eqref{eq_n42} of Proposition \ref{prop_map}, we have
\begin{equation}\label{eq_n35}
    b-a  \stackrel{\textrm{Eq. } \eqref{eq_n42} \textrm{ of Prop. } \ref{prop_map}}{\geq} \frac{1}{2 \sqrt{d}}.
\end{equation}
Hence,
\begin{equation}\label{eq_b11}
    1-\alpha \stackrel{\textrm{Eq. } \eqref{eq_n34}}{\geq} 1- \frac{1}{(b-a)\mathfrak{b}^4\sqrt{d}} \stackrel{\textrm{Eq. } \eqref{eq_n35}}{\geq} 1- \frac{1}{2\mathfrak{b}^4} \geq 1- \frac{1}{\mathfrak{b}^2}.
\end{equation}
Moreover, we have that for every $\ell \in [j_{\mathrm{min}}+1,j_{\mathrm{max}}]$,
\begin{eqnarray}
  |\eta_{\ell}^2 - g(\eta)[\ell]^2| &=& |(\eta_{\ell} - g(\eta)[\ell])(\eta_{\ell} + g(\eta)[\ell])| \nonumber\\
 & & \leq  \left |\eta_{\ell} - g(\eta)[\ell]\right| \times \left|\eta_{1} + g(\eta)[1]\right|  \label{eq_n155}\\
 & & \stackrel{\textrm{Eq. } \eqref{eq_g3}}{=} |\eta_{\ell} - g(\eta)[\ell]|\times |\eta_{1} + \eta_{1}| \nonumber\\
& & \leq  |\eta_{\ell} - g(\eta)[\ell]| \times  6 \sqrt{d} \label{eq_n154}\\
& & =  \left|\eta_{j_{\mathrm{min}}} - g(\eta)[j_{\mathrm{min}}] - \left( \sum_{s = j_{\mathrm{min}}}^{\ell-1} \eta_{s} - \eta_{s+1} - (g(\eta)[s]-g(\eta)[s+1]) \right ) \right| \times  6 \sqrt{d}\nonumber\\
& & \stackrel{\textrm{ Eq. } \eqref{eq_n36} \textrm{ of Prop. } \ref{prop_map}}{=}  \left | \sum_{s = j_{\mathrm{min}}}^{\ell-1} \eta_{s} - \eta_{s+1} - (g(\eta)[s]-g(\eta)[s+1]) \right | \times  6 \sqrt{d}\nonumber\\
& & \stackrel{\textrm{ Eq. } \eqref{eq_b4} \textrm{ of Prop. } \ref{prop_map}}{\leq}  \left| \sum_{s = j_{\mathrm{min}}}^{\ell-1}  \alpha(\eta_{s} - \eta_{s+1}) \right | \times  6 \sqrt{d}\nonumber\\
 &=& \alpha \left| \eta_{j_{\mathrm{min}}} - \eta_{\ell} \right | \times  6 \sqrt{d}\nonumber\\
  &\leq & \alpha \left| \eta_{j_{\mathrm{min}}} - \eta_{j_{\mathrm{max}}} \right | \times  6 \sqrt{d} \label{eq_n156}\\
& & \stackrel{\textrm{ Eq. } \eqref{eq_n36} \textrm{ of Prop. } \ref{prop_map}, \textrm{ Eq. } \eqref{eq_g4}}{=} \alpha (b-a)\times  6 \sqrt{d}, \label{eq_n39}
\end{eqnarray}
where \eqref{eq_n155} and \eqref{eq_n156} hold since $\eta_1 \geq \eta_2 \geq \cdots \geq \eta_d$.   \eqref{eq_n154} holds since, whenever $\eta \in E$,
\begin{equation*}
|\eta_1| \leq |\omega_1| + |\eta_1 - \omega_1|  \stackrel{\textrm{Prop. } \eqref{prop_classical}}{\leq} 2 \sqrt{d} + |\eta_1 - \omega_1| \stackrel{\textrm{Eq. } \eqref{eq_n19}}{\leq} 3 \sqrt{d}.
\end{equation*}
Moreover, \eqref{eq_n39} also holds for $\ell = j_{\mathrm{min}}$ since $\eta_{j_{\mathrm{min}}}^2 - g(\eta)[j_{\mathrm{min}}]^2 = 0$ by Equation \eqref{eq_n36} of Proposition \ref{prop_map}.
Therefore, we have
\begin{equation}\label{eq_n40}
 \left|\eta_{\ell}^2 - g(\eta)[\ell]^2\right|    \stackrel{\textrm{Eq. } \eqref{eq_n39}}{\leq} \alpha (b-a)\times  6 \sqrt{d} \qquad \qquad \forall \ell \in [j_{\mathrm{min}},j_{\mathrm{max}}].
\end{equation}
Therefore, by the joint density formula for the eigenvalues of the GUE \eqref{eq_joint_density} we have
\begin{eqnarray}
  \frac{f(\eta)}{f(g(\eta))} 
 \!\!\! \!\!\!\!\!\!\!\!\!\!\!\! &\stackrel{\textrm{Eq. \eqref{eq_joint_density}}}{=} \!\!\! \!\!\!\!\!\! & \prod_{\ell<j,  :\, \, \, \ell,j \in [d]} \frac{| \eta_{\ell} - \eta_j|^2}{| g(\eta)[\ell] - g(\eta)[j]|^2} e^{-\frac{1}{2} \sum_{\ell=1}^{d}  (\eta_{\ell}^2 - g(\eta)[\ell]^2 )}\nonumber\nonumber\\
    &\stackrel{\textrm{Eq. \eqref{eq_g3}}}{=} & \prod_{\ell<j,  :\, \, \, \ell,j \in [d]} \frac{| \eta_{\ell} - \eta_j|^2}{| g(\eta)[\ell] - g(\eta)[j]|^2}  e^{-\frac{1}{2} \sum_{\ell=j_{\mathrm{min}}+1}^{j_{\mathrm{max}}-1}  (\eta_{\ell}^2 - g(\eta)[\ell]^2 )}\nonumber\nonumber\\
  &\stackrel{\textrm{Eq. } \eqref{eq_n36} \textrm{ of Prop. } \ref{prop_map}, }{\stackrel{\textrm{ Eq. } \eqref{eq_g3}, \eqref{eq_g4}}{=}} & \prod_{\ell<j,  :\, \, \, \ell,j \in [j_{\mathrm{min}} - 2 \mathfrak{b} , j_{\mathrm{max}} + 2 \mathfrak{b}]} \frac{| \eta_{\ell} - \eta_j|^2}{| g(\eta)[\ell] - g(\eta)[j]|^2} e^{-\frac{1}{2} \sum_{\ell=j_{\mathrm{min}}+1}^{j_{\mathrm{max}}-1}  (\eta_{\ell}^2 - g(\eta)[\ell]^2 )}\nonumber\\
&  & \times \quad  \prod_{j \in [j_{\mathrm{min}} , j_{\mathrm{max}}] , \, \, \, \ell \notin [j_{\mathrm{min}}  - 2 \mathfrak{b}  , j_{\mathrm{max}}  + 2 \mathfrak{b} ]} \frac{| \eta_{j} - \eta_\ell|^2}{| g(\eta)[j] - g(\eta)[\ell]|^2}\nonumber\\
  &\stackrel{\textrm{Lem. } \ref{lemma_mean_field}}{\leq} & \prod_{\ell<j,  :\, \, \, \ell,j \in [j_{\mathrm{min}} - 2 \mathfrak{b} , j_{\mathrm{max}} + 2 \mathfrak{b}]} \frac{| \eta_{\ell} - \eta_j|^2}{| g(\eta)[\ell] - g(\eta)[j]|^2} e^{-\frac{1}{2} \sum_{\ell=j_{\mathrm{min}}+1}^{j_{\mathrm{max}}-1}  (\eta_{\ell}^2 - g(\eta)[\ell]^2 )} \times 2\nonumber\\
 &\stackrel{\textrm{Eq. } \eqref{eq_b5},\eqref{eq_b4}  }{\stackrel{ \textrm{ of Prop. \ref{prop_map}}}{\leq}} &
\!\!\!\!\!\!\!\!\!\!\!\!\!\!   ( 8 \mathfrak{b}^4 \sqrt{d}  \times y)^2 \times \left(\frac{1}{1-\alpha}\right)^{(j_{\mathrm{max}} - j_{\mathrm{min}})(j_{\mathrm{max}} - j_{\mathrm{min}} +4 \mathfrak{b})} \times e^{-\frac{1}{2} \sum_{\ell=j_{\mathrm{min}}+1}^{j_{\mathrm{max}}-1}  (\eta_{\ell}^2 - g(\eta)[\ell]^2 )} \times 2 \nonumber\\
  &\stackrel{\textrm{Eq. } \eqref{eq_n40}}{\leq} & \!\!\!\!\!\!\!\!\!\!\!
  ( 8 \mathfrak{b}^4 \sqrt{d}  \times y)^2 \times \left(\frac{1}{1-\alpha}\right)^{(j_{\mathrm{max}} - j_{\mathrm{min}})(j_{\mathrm{max}} - j_{\mathrm{min}} +4 \mathfrak{b})} \times e^{\frac{1}{2}(j_{\mathrm{max}} - j_{\mathrm{min}})\alpha(b-a)\times 6\sqrt{d}} \times 2 \nonumber\\
  &\stackrel{\textrm{Eq. } \eqref{eq_n28}, \, \, \eqref{eq_n37}}{\leq}& ( 8 \mathfrak{b}^4 \sqrt{d}  \times y)^2 \times (1-\alpha)^{-2.1\mathfrak{b}^4} \times e^{\frac{1}{2}2\mathfrak{b}^2\alpha(b-a)\times 6\sqrt{d}} \times 2 \label{eq_n74}\\
     &\stackrel{\textrm{Eq. } \eqref{eq_b11},\,\, \eqref{eq_b9}}{\leq}& \left(1- \frac{1}{2\mathfrak{b}^4}\right)^{-2.1\mathfrak{b}^4}\times (8 \mathfrak{b}^4 \sqrt{d})^2 \times e^1 \times y^2 \label{eq_n75}\\
   &\leq & 2 e^{4.2} (8 \mathfrak{b}^4 \sqrt{d})^2 \times e^1 \times y^2 \label{eq_n76}\\
 &\leq & 400 (8 \mathfrak{b}^4 \sqrt{d})^2 \times y^2,  \label{eq_n181}
\end{eqnarray}
where \eqref{eq_n74} holds since 
$$j_{\mathrm{max}} -  j_{\mathrm{min}} \stackrel{\textrm{Eq. } \eqref{eq_n28}}{=}  \min(i+\mathfrak{b}^2, d) -  \max(i-\mathfrak{b}^2, 1) =  2 \mathfrak{b}^2 \stackrel{\textrm{Eq. } \eqref{eq_n37}}{\leq} 2.1 \mathfrak{b}^2 - 4 \mathfrak{b}.$$
\eqref{eq_n75} holds since  $1-\alpha \geq 1- \frac{1}{\mathfrak{b}^2}$ by \eqref{eq_b11}, and since $\alpha (b-a)\leq \frac{1}{2\mathfrak{b}^4\sqrt{d}}\leq \frac{1}{6\mathfrak{b}^2\sqrt{d}}$ by \eqref{eq_b9}.
\eqref{eq_n76} holds since $\left(1- \frac{1}{s}\right)^{-s} \leq 2^{\frac{1}{2.1}}e$ for any $s \geq 3$.
\end{proof}

\paragraph{Step 7.} Dealing with the eigenvalues near the edge of the spectrum.
In this step, we extend the results of the previous steps to the eigenvalues which are near the edge of the spectrum.
 Specifically, we consider the eigenvalues $\eta_i$ such that $i \in  [1, \mathfrak{b}^2] \cup [d-\mathfrak{b}^2, d]$. 
  Since the joint eigenvalue density function \eqref{eq_joint_density} is symmetric about 0, without loss of generality we may assume that $i \leq \mathfrak{b}^2$.
Define  $j_{\mathrm{min}}$,  $j_{\mathrm{max}}$ as in \eqref{eq_n28},  and define $a_{\mathrm{min}}$,   $a_{\mathrm{max}}$, $b_{\mathrm{min}}$ and $b_{\mathrm{max}}$ as in \eqref{eq_n26}.

\begin{proposition}\label{prop_n1_edge}
Suppose that the event $E$ occurs.
Then for all $i \in [1, \mathfrak{b}^2] \cup [d- \mathfrak{b}^2, d]$ we have
\begin{equation}\label{eq_c2_edge}
    \eta_{j_{\mathrm{min}}} - \eta_{j_{\mathrm{max}}} \geq    \frac{29}{30} \mathfrak{b}^2 d^{-\frac{1}{6}} \min(i, d-i)^{-\frac{1}{3}} \geq  \frac{29}{30} \mathfrak{b}^2 \frac{1}{\sqrt{d}}.
\end{equation}
Moreover, we also have that 
$ \eta_{j_{\mathrm{max}}}  \in[a_{\mathrm{min}}, a_{\mathrm{max}}]$ and  $ \eta_{j_{\mathrm{min}}}  \in[b_{\mathrm{min}}, b_{\mathrm{max}}]$.

\end{proposition}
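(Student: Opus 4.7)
}
The plan is to parallel the proof of Proposition \ref{prop_n1} as closely as possible, handling the edge carefully. First I would use the reflection symmetry of the GUE/GOE joint density about $0$ (equation \eqref{eq_joint_density} is invariant under $\eta \mapsto -\eta$) to reduce to the case $i \in [1, \mathfrak{b}^2]$, so that $j_{\mathrm{min}} = 1$, $j_{\mathrm{max}} = i + \mathfrak{b}^2$, and $\min(i, d-i) = i$. The bulk proof combined two ingredients: (a) a lower bound on the classical-location gap $\omega_{j_{\mathrm{min}}} - \omega_{j_{\mathrm{max}}}$, and (b) the rigidity bound from $E$ to transfer this to the random eigenvalues $\eta$. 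Ingredient (b) carries over unchanged; the work is in redoing ingredient (a) at the edge.

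Next I would bound $\omega_1 - \omega_{i+\mathfrak{b}^2}$ directly from Proposition \ref{prop_classical}. Since $\omega_1 = 2\sqrt{d}$ exactly and $i + \mathfrak{b}^2 \leq 2\mathfrak{b}^2 \leq d/2$ for $d$ large, \eqref{eq_a10} gives
\[
 \omega_1 - \omega_{i+\mathfrak{b}^2} \ \geq\  d^{-\frac{1}{6}}(i+\mathfrak{b}^2-1)^{\frac{2}{3}} \ \geq\ d^{-\frac{1}{6}}\mathfrak{b}^{\frac{4}{3}}.
\]
From the rigidity event $E$ (definition \eqref{eq_n19}) we have $|\eta_1 - \omega_1| \leq \mathfrak{b}\, d^{-\frac{1}{6}}$ and $|\eta_{i+\mathfrak{b}^2} - \omega_{i+\mathfrak{b}^2}| \leq \mathfrak{b}(i+\mathfrak{b}^2)^{-\frac{1}{3}} d^{-\frac{1}{6}} \leq \mathfrak{b}^{\frac{1}{3}} d^{-\frac{1}{6}}$. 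Subtracting these deviations from the classical gap and invoking $\mathfrak{b} \geq 10^{6}$ (so that $2\mathfrak{b} \leq \tfrac{1}{30}\mathfrak{b}^{\frac{4}{3}}$), I would obtain
\[
 \eta_{j_{\mathrm{min}}} - \eta_{j_{\mathrm{max}}} \ \geq\ \tfrac{29}{30}\,d^{-\frac{1}{6}}\mathfrak{b}^{\frac{4}{3}}.
\]
Finally, since $\mathfrak{b} = (\log d)^{L\log\log d}$ is polylogarithmic in $d$, for $d \geq N_0$ we have $d^{\frac{1}{3}} \geq \mathfrak{b}^{\frac{2}{3}}$, which gives $d^{-\frac{1}{6}}\mathfrak{b}^{\frac{4}{3}} \geq \mathfrak{b}^2/\sqrt{d}$, establishing the final lower bound in \eqref{eq_c2_edge}.

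For the containment claims $\eta_{j_{\mathrm{max}}} \in [a_{\mathrm{min}}, a_{\mathrm{max}}]$ and $\eta_{j_{\mathrm{min}}} \in [b_{\mathrm{min}}, b_{\mathrm{max}}]$, I would unfold the definitions in \eqref{eq_n26} and verify that both rigidity deviations are bounded by $\tfrac{1}{30}\mathfrak{b}^2 d^{-\frac{1}{6}}\min(i,d-i)^{-\frac{1}{3}}$. Using $\min(i,d-i) = i \leq \mathfrak{b}^2$ the target radius is at least $\tfrac{1}{30}\mathfrak{b}^{\frac{4}{3}} d^{-\frac{1}{6}}$, and the two rigidity deviations $\mathfrak{b}\,d^{-\frac{1}{6}}$ and $\mathfrak{b}^{\frac{1}{3}} d^{-\frac{1}{6}}$ from the previous paragraph are both comfortably below this as soon as $\mathfrak{b}^{\frac{1}{3}} \geq 30$, which is guaranteed by \eqref{eq_n37}.

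The one conceptual obstacle to flag is that the edge analogue of step (a) is strictly weaker than in the bulk: the gap-sum estimate used in Proposition \ref{prop_n1} exploited the uniform lower bound $\min(j,d-j+1)^{-\frac{1}{3}} \geq (2i)^{-\frac{1}{3}}$ on every summand, whereas near the edge most indices $j \in [1, i+\mathfrak{b}^2]$ satisfy $j \ll i$ and the naive summation only produces $\mathfrak{b}^{\frac{4}{3}} d^{-\frac{1}{6}}$ rather than $\mathfrak{b}^2 d^{-\frac{1}{6}} i^{-\frac{1}{3}}$. However, the subsequent use of the proposition only needs $\eta_{j_{\mathrm{min}}} - \eta_{j_{\mathrm{max}}} \gtrsim \mathfrak{b}^2/\sqrt{d}$ (the second inequality in \eqref{eq_c2_edge}), which the edge bound $\mathfrak{b}^{\frac{4}{3}} d^{-\frac{1}{6}}$ still delivers thanks to $\mathfrak{b}$ being polylogarithmic, so no further machinery is required.
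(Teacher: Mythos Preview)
Your plan follows the paper's route almost exactly: reduce by symmetry to $i\le\mathfrak{b}^2$, combine Proposition~\ref{prop_classical} with the rigidity event $E$, and read off the containments from~\eqref{eq_n26}. The paper's proof proceeds identically, so at the structural level there is nothing to correct.

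Where you diverge is in your final paragraph, and you are right to flag it. The paper writes
\[
\omega_{i-\mathfrak{b}^2}-\omega_{i+\mathfrak{b}^2}\;\stackrel{\text{Prop.~\ref{prop_classical}}}{\ge}\;\mathfrak{b}^2\,d^{-1/6}i^{-1/3},
\]
but for $i\le\mathfrak{b}^2$ one has $j_{\min}=1$ and the actual classical gap $\omega_1-\omega_{i+\mathfrak{b}^2}$ is only of order $d^{-1/6}(i+\mathfrak{b}^2)^{2/3}\sim \mathfrak{b}^{4/3}d^{-1/6}$; for $i=1$ this is far smaller than $\mathfrak{b}^2 d^{-1/6}$. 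In other words, the first inequality in~\eqref{eq_c2_edge} as stated cannot hold for the smallest values of $i$, and the paper's step asserting it does not go through at the edge. Your computation that $\omega_1-\omega_{i+\mathfrak{b}^2}\ge d^{-1/6}\mathfrak{b}^{4/3}$ together with $d^{1/3}\ge\mathfrak{b}^{2/3}$ (for $d\ge N_0$) is the honest way to obtain the second inequality $\eta_{j_{\min}}-\eta_{j_{\max}}\ge\tfrac{29}{30}\mathfrak{b}^2/\sqrt{d}$, which—together with the containment claims—is all that the downstream proof of Lemma~\ref{lemma_GUE_gaps} (edge case) and the definition of the sets $S_0,S_3,S_4$ actually uses.

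One minor slip: in your last paragraph you write that near the edge ``most indices $j\in[1,i+\mathfrak{b}^2]$ satisfy $j\ll i$''; the inequality is reversed (most such $j$ are comparable to $\mathfrak{b}^2\ge i$). This does not affect your argument, since the conclusion you draw about the sum is correct.
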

\begin{proof}
Without loss of generality, we may assume that 
$i \leq \frac{1}{2}d$,  since the GUE matrices $G$ and $-G$ have the same distribution and hence the joint eigenvalue distribution of the GUE is symmetric about $0$.
If $E$ occurs, then by the definition of the event $E$ \eqref{eq_n19}, we have
\begin{eqnarray}
   \eta_{j_{\mathrm{min}}} - \eta_{j_{\mathrm{max}}} 
  & \stackrel{\textrm{Eq. } \eqref{eq_n28}}{=}&  \eta_{1} - \eta_{i+ \mathfrak{b}^2} \nonumber\\
   &\stackrel{\textrm{Eq. } \eqref{eq_n19}}{\geq} &    \omega_{i- \mathfrak{b}^2} - \omega_{i+ \mathfrak{b}^2} - 2\mathfrak{b} d^{-\frac{1}{6}} \nonumber\\
    & \stackrel{\textrm{Prop. } \ref{prop_classical}}{\geq}& \mathfrak{b}^2 \times  d^{-\frac{1}{6}} i^{-\frac{1}{3}} - 2\mathfrak{b} d^{-\frac{1}{6}} \nonumber\\
         &\geq & \frac{29}{30} \mathfrak{b}^2 d^{-\frac{1}{6}} i^{-\frac{1}{3}}, \label{eq_n47}
\end{eqnarray}
where \eqref{eq_n47} holds since $\mathfrak{b} \geq 10^6$.
This proves \eqref{eq_c2_edge}.
Moreover, by the definition of the event $E$, we also have that 
\begin{eqnarray}
  |\eta_{j_{\mathrm{min}}} -  \omega_{j_{\mathrm{min}}}| 
   &\stackrel{\textrm{Eq. } \eqref{eq_n28}}{=} & |\eta_{1} -  \omega_{1}| \nonumber\\
    &\stackrel{\textrm{Eq. } \eqref{eq_n19}}{\leq} &  \mathfrak{b} d^{-\frac{1}{6}}\nonumber\\
    &\leq & \frac{1}{30} \mathfrak{b}^2 d^{-\frac{1}{6}}, \label{eq_c1_edge}
\end{eqnarray}
where inequality \eqref{eq_c1_edge} holds since $\mathfrak{b} \geq 10^6$.
Thus, by definition \eqref{eq_n26}, Inequality \eqref{eq_c1_edge} implies that $\eta_{j_{\mathrm{min}}}  \in[b_{\mathrm{min}}, b_{\mathrm{max}}]$. 
Again, by the definition of the event $E$, we also have that 
\begin{eqnarray}
      \eta_{j_{\mathrm{min}}} - \eta_{j_{\mathrm{max}}} 
  & \stackrel{\textrm{Eq. } \eqref{eq_n28}}{=} & |\eta_{i+ \mathfrak{b}^2} -  \omega_{i+ \mathfrak{b}^2}| \nonumber\\ &\stackrel{\textrm{Eq. } \eqref{eq_n19}}{\leq}&   \mathfrak{b} (i+\mathfrak{b}^2)^{-\frac{1}{3}} d^{-\frac{1}{6}}\nonumber\\
    &\leq  &\mathfrak{b} i^{-\frac{1}{3}} d^{-\frac{1}{6}}\nonumber\\
    &\leq &\frac{1}{30} \mathfrak{b}^2 d^{-\frac{1}{6}} i^{-\frac{1}{3}} \label{eq_c1b_edge}
\end{eqnarray}
where \eqref{eq_c1b_edge} holds since $\mathfrak{b} \geq 10^6$.
Thus, by definition \eqref{eq_n26}, Inequality \eqref{eq_c1b_edge} implies that $\eta_{j_{\mathrm{max}}}  \in[a_{\mathrm{min}}, a_{\mathrm{max}}]$.
\end{proof}
\noindent
By Proposition \ref{prop_n1_edge}, if $i \leq \mathfrak{b}^2$,  whenever the event $E$ occurs we have that $ \eta_{j_{\mathrm{max}}}  \in[a_{\mathrm{min}}, a_{\mathrm{max}}]$ and  $ \eta_{j_{\mathrm{min}}}  \in[b_{\mathrm{min}}, b_{\mathrm{max}}]$.
 This fact allows us to extend the definition of the sets $S_0$, $S_3$, and $S_4$, whose definition requires that $ \eta_{j_{\mathrm{max}}}  \in[a_{\mathrm{min}}, a_{\mathrm{max}}]$ and  $ \eta_{j_{\mathrm{min}}}  \in[b_{\mathrm{min}}, b_{\mathrm{max}}]$, to the ``edge case'' where $i \leq \mathfrak{b}^2$.

Specifically, consider any $a,b$ such that $a_{\mathrm{min}} \leq a \leq  a_{\mathrm{max}}$  and $b_{\mathrm{min}} \leq b \leq  b_{\mathrm{max}}$.
 Recall from \eqref{eq_n25} and \eqref{eq_n22_edge} the definition of the sets $S_0$, $S_3$, and $S_4$; we extend these definitions to the ``edge case'' where $i \leq \mathfrak{b}^2$:
\begin{itemize}
\item \begin{equation}\label{eq_n25_edge}
    S_0(a,b):= \{ \eta \in \mathcal{W}_d : \eta_{j_{\mathrm{max}}} = a, \eta_{j_{\mathrm{min}}} = b\},
\end{equation}

\item  
\begin{equation}\label{eq_n22_edge}
S_3(a,b; y) :=  \{\eta \in \mathcal{W}_d : \eta_i-\eta_{i+1} = y\} \cap S_0(a,b) \qquad \textrm{ for any } y \leq s\frac{1}{8\mathfrak{b}^4\sqrt{d} }, 
\end{equation}

\item 
\begin{equation*}
S_4(a,b) :=  \{\eta \in \mathcal{W}_d : \eta_i-\eta_{i+1} \geq s\} \cap S_0(a,b),
\end{equation*}

\end{itemize}
 where $\mathcal{W}_d$ was defined in \eqref{eq:WeylChamber}.

  In place of the map $g$, we instead consider the map $\phi:  \mathcal{W}_d \rightarrow \mathcal{W}_d$ such that 

  \begin{itemize}

  \item 
\begin{equation}\label{eq_phi1}
      \phi(\eta)[j] = \eta_j \qquad \forall j > i
      \end{equation}

 \item \begin{equation}\label{eq_phi2}
      \phi(\eta)[i] =  \eta_{i+1} + \frac{2}{s}(\eta_{i}- \eta_{i+1})+ 2\frac{1}{8\mathfrak{b}^4\sqrt{d}}, \mbox{ and }
      \end{equation} 
      
       \item \begin{equation}\label{eq_phi3}
      \phi(\eta)[j] =  \phi(\eta)[j+1] + (\eta_{j}- \eta_{j+1}) \qquad \forall j < i.
      \end{equation} 
  
\end{itemize}

\begin{proposition} \label{prop_map_phi}
Suppose that  $ i \leq 2\mathfrak{b}^2$.
Then the following properties hold for $\phi$:
\begin{itemize}

\item $\phi$ is injective,

\item  $\phi(\eta)[i] -  \phi(\eta)[i+1]  \geq   \frac{1}{8\mathfrak{b}^4\sqrt{d}}$,  and hence  

\begin{equation}\label{eq_b5e}
\frac{\eta_{i} - \eta_{i+1}}{\phi(\eta)[i] -  \phi(\eta)[i+1]}  \leq  8\mathfrak{b}^4 \sqrt{d}  \times (\eta_{i} - \eta_{i+1}) =  8 \mathfrak{b}^4 \sqrt{d}  \times y
\end{equation}
  for any $\eta \in  S_3(a,b;y)$ and any $y \leq s\frac{1}{8\mathfrak{b}^4\sqrt{d} }$,

\item  \begin{equation}\label{eq_b4e}
\phi(\eta)[j] -  \phi(\eta)[j+1] \geq \eta_{j} - \eta_{j+1} \qquad \forall j \in [d].
\end{equation}
\end{itemize}

\end{proposition}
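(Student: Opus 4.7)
The plan is that all three bullets follow directly from the explicit recursive definition \eqref{eq_phi1}--\eqref{eq_phi3}, essentially without any estimation: unlike the bulk-case analogue $g$ of Proposition \ref{prop_map}, the map $\phi$ is affine in $\eta_i - \eta_{i+1}$ and purely translational on the indices $j<i$, so most of the identities are equalities rather than inequalities.

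For injectivity I would exhibit an explicit left inverse. Given $z=\phi(\eta)$ in the image, Equation \eqref{eq_phi1} gives $\eta_j=z_j$ for every $j>i$; in particular $\eta_{i+1}$ is known. Substituting into \eqref{eq_phi2} yields
\[
z_i = \eta_{i+1} + \tfrac{2}{s}(\eta_i - \eta_{i+1}) + \tfrac{1}{4\mathfrak{b}^4\sqrt{d}},
\]
which is linear in $\eta_i$ and hence uniquely determines it. Finally \eqref{eq_phi3} gives $\eta_j-\eta_{j+1}=z_j-z_{j+1}$ for every $j<i$, so starting from the already-known $\eta_i$ one telescopes down to recover $\eta_{i-1},\ldots,\eta_1$ uniquely. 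Thus $\phi$ is injective on its domain.

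For the second bullet, subtracting $\phi(\eta)[i+1]=\eta_{i+1}$ (from \eqref{eq_phi1}) from \eqref{eq_phi2} yields
\[
\phi(\eta)[i] - \phi(\eta)[i+1] \;=\; \tfrac{2}{s}(\eta_i - \eta_{i+1}) + \tfrac{1}{4\mathfrak{b}^4\sqrt{d}} \;\geq\; \tfrac{1}{4\mathfrak{b}^4\sqrt{d}} \;\geq\; \tfrac{1}{8\mathfrak{b}^4\sqrt{d}},
\]
since $\eta \in \mathcal{W}_d$ forces $\eta_i-\eta_{i+1}\geq 0$. Dividing the identity $\eta_i-\eta_{i+1}=y$ (which holds because $\eta \in S_3(a,b;y)$ by \eqref{eq_n22_edge}) by this lower bound gives \eqref{eq_b5e}.

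For \eqref{eq_b4e}, I would just split on the three ranges of $j$. When $j>i$, Equation \eqref{eq_phi1} preserves the gap exactly: $\phi(\eta)[j]-\phi(\eta)[j+1] = \eta_j-\eta_{j+1}$. When $j<i$, Equation \eqref{eq_phi3} is precisely the statement $\phi(\eta)[j]-\phi(\eta)[j+1] = \eta_j-\eta_{j+1}$. The only genuine case is $j=i$, where the displayed identity above gives $\phi(\eta)[i]-\phi(\eta)[i+1]\geq \tfrac{2}{s}(\eta_i-\eta_{i+1}) \geq \eta_i-\eta_{i+1}$ provided $s\leq 2$; this is harmless because Lemma \ref{lemma_GUE_gaps} is trivially true whenever $s^{\beta+1}\geq 1$, so we may restrict attention to $s \leq 1$. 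The only point worth flagging is this implicit restriction on $s$, but no real obstacle arises; the argument is a bookkeeping verification on the three-piece definition of $\phi$.
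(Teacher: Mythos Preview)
Your proposal is correct and follows essentially the same approach as the paper: both prove injectivity by exhibiting an explicit left inverse via back-substitution through \eqref{eq_phi1}--\eqref{eq_phi3}, derive the $i$th-gap lower bound by subtracting $\phi(\eta)[i+1]=\eta_{i+1}$ from \eqref{eq_phi2}, and verify \eqref{eq_b4e} by the same three-way split on $j$ (with the paper likewise invoking $s\leq 1$ for the $j=i$ case).
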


\begin{proof}

\medskip
\noindent
{\em Injectivity.}
To prove that $\phi$ is injective, we note that, given any vector $z \in \mathbb{R}^d$ we can find the {\em unique} $\eta \in \mathcal{W}_d$ such that $\phi(\eta) = z$ whenever such a value of $\eta$ exits.
We can do this by solving the system of linear equations given by  \eqref{eq_phi1}-\eqref{eq_phi3}:
First, we note that by \eqref{eq_phi1}, we can solve for $\eta_j$ for all $j >i$.
Then we can plug in the value we found for  $\eta_{i+1}$ into \eqref{eq_phi2} to solve for $\eta_i$.
Finally, we can use \eqref{eq_phi3} to solve for $\eta_j$ for all $j<i$ recursively, starting with $\eta_{i-1}$.

\medskip
\noindent
{\em Showing \eqref{eq_b5e}.}
Since  $y \leq s\frac{1}{8\mathfrak{b}^4\sqrt{d} }$ and $\eta \in  S_3(a,b;y)$,
we have that 
\begin{equation}
    \eta_i-\eta_{i+1} \stackrel{\textrm{Eq. } \eqref{eq_n22_edge}}{=} y \leq s\frac{1}{8\mathfrak{b}^4\sqrt{d} }.
    \end{equation}
Thus,
  \begin{eqnarray}\label{eq_n48}
  \phi(\eta)[i] - \phi(\eta)[i+1] &\stackrel{\textrm{Eq. } \eqref{eq_phi1}}{=}& \phi(\eta)[i] - \eta_{i+1} \nonumber\\
  &\stackrel{\textrm{Eq. } \eqref{eq_phi2}}{=}& \frac{2}{s}(\eta_{i}- \eta_{i+1})+ 2\frac{1}{8\mathfrak{b}^4\sqrt{d}} \nonumber\\
  &\geq & 2\frac{1}{8\mathfrak{b}^4\sqrt{d}} \nonumber\\
  &\geq & \frac{1}{8\mathfrak{b}^4\sqrt{d}}.
  \end{eqnarray}
Hence, 
\begin{equation*}
\frac{\eta_{i} - \eta_{i+1}}{\phi(\eta)[i] -  \phi(\eta)[i+1]}  \stackrel{\textrm{Eq. } \eqref{eq_n48}}{\leq}  8\mathfrak{b}^4 \sqrt{d}  \times (\eta_{i} - \eta_{i+1}) \stackrel{\textrm{Eq. } \eqref{eq_n22_edge}}{=}   8\mathfrak{b}^4 \sqrt{d}  \times y.
\end{equation*}
This proves \eqref{eq_b5e}.

\medskip
\noindent
{\em Showing \eqref{eq_b4e}.}
We have
  \begin{eqnarray}
  \phi(\eta)[i] - \phi(\eta)[i+1]  &\stackrel{\textrm{Eq. } \eqref{eq_phi1}}{=} & \phi(\eta)[i] - \eta_{i+1} \nonumber\\
  &\stackrel{\textrm{Eq. } \eqref{eq_phi2}}{=} & \frac{2}{s}(\eta_{i}- \eta_{i+1})+ 2\frac{1}{8\mathfrak{b}^4\sqrt{d}}  \nonumber\\
    &\geq & \frac{1}{s}(\eta_{i}- \eta_{i+1}) \nonumber\\
        &\geq & \eta_{i}- \eta_{i+1}, \label{eq_n49}
  \end{eqnarray}
  where \eqref{eq_n49} holds since $0 \leq s\leq 1$.
  Thus, \eqref{eq_b4e} holds for $j=i$.
  Moreover, \eqref{eq_b4e} holds for all  $j \neq i$ by \eqref{eq_phi1} and \eqref{eq_phi3}.
  Therefore \eqref{eq_b4e} holds for all $j \in [d]$.
\end{proof}

\begin{lemma}[\bf Jacobian determinant of $\phi$]\label{prop_Jacobian_phi}
Suppose that  $ i \leq \mathfrak{b}^2$.
If  $y \leq s\frac{1}{8\mathfrak{b}^4\sqrt{d} }$ and $\eta \in  S_3(a,b;y)$, we have that
$$\mathrm{det}(J_{\phi}(\eta)) = \frac{2}{s}.$$
\end{lemma}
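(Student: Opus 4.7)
The plan is to compute $\det(J_\phi(\eta))$ directly by writing down the Jacobian matrix entrywise from definitions \eqref{eq_phi1}--\eqref{eq_phi3}, and then to reduce to an upper-triangular matrix using a short sequence of row operations. Since $\phi$ is defined by explicit linear formulas in the coordinates of $\eta$, the Jacobian has constant entries, and the result should pop out once the structure is isolated.

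\medskip
\noindent First I would record the partial derivatives. From \eqref{eq_phi1}, for every $\ell>i$, $\phi(\eta)[\ell]=\eta_\ell$, so the $\ell$-th row of $J_\phi(\eta)$ is the standard basis vector $e_\ell^\top$. From \eqref{eq_phi2}, we have $\phi(\eta)[i]=\tfrac{2}{s}\eta_i+(1-\tfrac{2}{s})\eta_{i+1}+c$ for a constant $c$, so the $i$-th row has entry $\tfrac{2}{s}$ in column $i$, entry $1-\tfrac{2}{s}$ in column $i+1$, and zeros elsewhere. Finally, telescoping \eqref{eq_phi3} from $\ell$ up to $i$ gives $\phi(\eta)[\ell]=\phi(\eta)[i]+\eta_\ell-\eta_i$ for $\ell<i$, so the $\ell$-th row has a $1$ in column $\ell$, the entry $\tfrac{2}{s}-1$ in column $i$, the entry $1-\tfrac{2}{s}$ in column $i+1$, and zeros elsewhere.

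\medskip
\noindent Next I would collapse the ``trivial'' block. Columns $i+2,\ldots,d$ each have a single nonzero entry, namely a $1$ on the diagonal, so repeated Laplace expansion along these columns reduces $\det(J_\phi(\eta))$ to the determinant of the $(i+1)\times(i+1)$ principal submatrix indexed by $\{1,\ldots,i+1\}$. In this submatrix the row $i+1$ equals $e_{i+1}^\top$; subtracting a suitable multiple of this row from every other row wipes out column $i+1$ outside of row $i+1$. Specifically, subtract $(1-\tfrac{2}{s})$ times row $i+1$ from row $i$, and subtract $(1-\tfrac{2}{s})$ times row $i+1$ from each row $\ell<i$. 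Row operations preserve the determinant, so now column $i+1$ contains a single nonzero entry, namely the $1$ in row $i+1$, and a second Laplace expansion reduces us to an $i\times i$ matrix indexed by $\{1,\ldots,i\}$.

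\medskip
\noindent Now I would observe that the resulting $i\times i$ matrix is upper triangular in the natural order $1<2<\cdots<i$: each row $\ell<i$ has entry $1$ in column $\ell$ (on the diagonal) and entry $\tfrac{2}{s}-1$ in column $i$ (above the diagonal), with all other entries zero, while row $i$ has only the entry $\tfrac{2}{s}$ in column $i$. The determinant therefore equals the product of diagonal entries $1\cdot 1\cdots 1\cdot \tfrac{2}{s}=\tfrac{2}{s}$, which gives $\det(J_\phi(\eta))=\tfrac{2}{s}$. The computation does not use the hypotheses $i\leq\mathfrak{b}^2$, $y\leq s/(8\mathfrak{b}^4\sqrt d)$, or $\eta\in S_3(a,b;y)$; these are present only because they are the hypotheses of interest when the lemma is invoked downstream. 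The main technical point, and the only one demanding care, is simply keeping track of the signs in the two column-cleaning steps so that the final matrix is genuinely triangular with the claimed diagonal.
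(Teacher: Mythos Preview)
Your proof is correct. In fact it can be shortened: the Jacobian matrix $J_\phi(\eta)$ you wrote down is already upper triangular in the standard basis (each row $\ell$ has nonzero entries only in columns $\ell$, $i$, and $i+1$, all of which have column index $\ge\ell$), so the Laplace expansions and row operations are unnecessary---the determinant is immediately the product of the diagonal entries $1,\ldots,1,\tfrac{2}{s},1,\ldots,1$.

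The paper takes a different but closely related route: it introduces the change of variables $h$ sending $\eta$ to its gap vector ($h(\eta)[j]=\eta_j-\eta_{j+1}$ for $j\le i$, and $h(\eta)[j]=\eta_j$ for $j>i$), observes that in these coordinates the conjugated map $h\circ\phi\circ h^{-1}$ is \emph{diagonal} (since \eqref{eq_phi1}--\eqref{eq_phi3} imply that $\phi(\eta)[j]-\phi(\eta)[j+1]$ depends only on $\eta_j-\eta_{j+1}$), and then uses $\det(J_\phi)=\det(J_{h})^{-1}\det(J_{h\circ\phi\circ h^{-1}})\det(J_h)=\det(J_{h\circ\phi\circ h^{-1}})=\tfrac{2}{s}$. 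Your approach is more elementary in that it stays in the original coordinates and avoids introducing $h$; the paper's approach makes it transparent \emph{why} the answer is $\tfrac{2}{s}$---$\phi$ scales exactly one gap by $\tfrac{2}{s}$ and fixes all others---and it deliberately parallels the paper's treatment of the bulk map $g$ in Lemma~\ref{prop_Jacobian}, where the gap-coordinate conjugation is genuinely useful because $g$ is not affine.
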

\begin{proof}
Consider the map $h: \mathcal{W}_{d} \rightarrow \mathbb{R}^{d}$, where  for any $\eta \in \mathcal{W}_d$
\begin{equation}
h(\eta)[j] := \begin{cases}
\eta_j - \eta_{j+1} \qquad \qquad \textrm{for }j \leq i\\
\eta_{j} \qquad \qquad  \qquad \, \, \,  \, \, \textrm{for } j >i.
\end{cases}
\end{equation}
The map $h$ is injective, since for any $\Delta \in \mathbb{R}^d$ that is in the range of $h$ we can solve for the unique $\eta \in \mathcal{W}_d$ such that $h(\eta) = \Delta$.
 Specifically, the unique solution $\eta$, which we denote by $h^{-1}(\Delta)$, is given by 
 \begin{equation}\label{eq_n50.2}
 h^{-1}(\Delta)[j] := \eta_j = \begin{cases}
  \Delta_{i+1} + \sum_{r = 1}^{j - i +1} \Delta_{j+1 - r}, \qquad   j \leq i\\
    \Delta_j, \qquad \qquad \qquad \qquad \qquad \, \,   j > i.
 \end{cases}
 \end{equation}
   Moreover,   \eqref{eq_n50.2} also implies that for every $j \in [d]$ and every $\Delta \in \mathbb{R}^d$ that is in the range of $h$,
  \begin{equation}\label{eq_n51.2}
  h^{-1}(\Delta)[j]- h^{-1}(\Delta)[j+1] = \Delta_j \qquad \qquad \forall  j \leq i.
    \end{equation}
    
  \noindent  From \eqref{eq_phi1}-\eqref{eq_phi3} we have that for every $\eta \in \mathcal{W}_d$ and $j \in [d]$,
\begin{equation}\label{eq_n52.2}
\textrm{$\phi(\eta)[j]- \phi(\eta)[j+1]$ is a function of only $\eta_j- \eta_{j+1}$,}
\end{equation}
and does not otherwise depend on any $\eta_\ell$ for $\ell \neq j$.
Therefore, by \eqref{eq_n50.2}, \eqref{eq_n51.2} and \eqref{eq_n52.2},  for every $\Delta \in  \{h(\eta) :  \eta \in  S_3(a,b;y) \cap E \}$, and every $j \in [d]$, 
we have that 
\begin{equation}\label{eq_n53.2}
\textrm{$h(\phi(h^{-1}(\Delta)))[j]$ is a function of only $\Delta_j$,}
\end{equation}
 and does not otherwise depend on any $\Delta_\ell$ for $\ell \neq i$.
 Thus,  for every $\Delta \in  \{h(\eta) :  \eta \in  S_3(a,b;y) \cap E \}$, we have that
\begin{equation} \label{eq_derivative1.2}
    \frac{\partial h \circ \phi \circ h^{-1}(\Delta)[\ell]}{\partial \Delta_j} \stackrel{\textrm{Eq. } \eqref{eq_n53.2}}{=} 0 \qquad \forall \ell \neq j, \ell \neq i.
\end{equation}
 Moreover, we also have that
\begin{equation}\label{eq_derivative1_d.2}
    \frac{\partial h \circ \phi\circ h^{-1}(\Delta)[j]}{\partial \Delta_j} \stackrel{\textrm{Eq. } \eqref{eq_phi1}, \, \, \eqref{eq_phi3}}{=} 1 \qquad \forall j \neq i,
\end{equation}
\begin{equation}\label{eq_derivative1_c.2}
    \frac{\partial h \circ \phi\circ h^{-1}(\Delta)[i]}{\partial \Delta_i} \stackrel{\textrm{Eq. } \eqref{eq_phi2}}{=} \frac{2}{s}.
\end{equation}
Thus, by \eqref{eq_derivative1.2}, \eqref{eq_derivative1_d.2}, and \eqref{eq_derivative1_c.2}  the Jacobian $J_{h \circ \phi \circ h^{-1}}(\Delta)$  is a diagonal matrix, with $j$'th diagonal entries equal to $1$ for all $j \neq i$ and $i$'th entry equal to $\frac{2}{s}$.
Therefore,
\begin{align}\label{eq_g5b}
\mathrm{det}(J_{h \circ \phi \circ h^{-1}}(\Delta)) &\stackrel{\textrm{Eq. } \eqref{eq_derivative1.2},\, \eqref{eq_derivative1_d.2},\, \eqref{eq_derivative1_c.2} }{=} \frac{2}{s} \times 1 = \frac{2}{s}.
\end{align}
Hence,

\begin{eqnarray*}
    \mathrm{det}(J_{\phi}(\eta)) &= &  \mathrm{det}(J_{  h^{-1} \circ h \circ \phi \circ h^{-1} \circ h}(\eta)) \\ 
    &=& \mathrm{det}( J_{h^{-1}}(\eta) \times J_{h \circ \phi \circ h^{-1}}(h(\eta))  \times J_{h}(\eta))\\
    &= & \mathrm{det}(J_{h^{-1}}(\eta)) \times \mathrm{det}(J_{h \circ \phi \circ h^{-1}}(h(\eta))  \times \mathrm{det}(J_{h}(\eta))\\
        &= & \mathrm{det}(J_{h}(\eta))^{-1} \times \mathrm{det}(J_{h \circ \phi \circ h^{-1}}(h(\eta))  \times \mathrm{det}(J_{h}(\eta))\\
    &=&\mathrm{det}(J_{h \circ \phi \circ h^{-1}}(h(\eta))  \times 1\\
    &\stackrel{\textrm{Eq. \eqref{eq_g5b}}}{\geq}&  \frac{2}{s}.
\end{eqnarray*}
\end{proof}

\begin{lemma}\label{lemma_density_ratio_edge}
Suppose that  $ i \leq \mathfrak{b}^2$. 
For any $y \leq s\frac{1}{8\mathfrak{b}^4\sqrt{d} }$ and any $\eta\in  S_3(a,b;y) \cap E$, we have that

\begin{align*}
  \frac{f(\eta)}{f(\phi(\eta))} \leq 4 (8 \mathfrak{b}^4 \sqrt{d})^2 \times y^2.
\end{align*}
\end{lemma}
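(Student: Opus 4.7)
The plan is to follow the same template used in the proof of Lemma \ref{lemma_density_ratio}, but exploiting the fact that $\phi$ acts much more simply than $g$: it leaves the eigenvalues with index $>i$ fixed, translates those with index $\leq i$ uniformly, and only stretches the single gap $\eta_i - \eta_{i+1}$. This is what makes the ``edge'' case cleaner than the ``bulk'' case, and avoids the need for any mean-field-type approximation like Lemma \ref{lemma_mean_field}.

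First, I would solve the recursion \eqref{eq_phi1}--\eqref{eq_phi3} explicitly to show that
\begin{equation*}
 \phi(\eta)[j] = \eta_j + \Delta \quad \forall j \leq i, \qquad \phi(\eta)[j] = \eta_j \quad \forall j > i,
\end{equation*}
where $\Delta := \bigl(\tfrac{2}{s}-1\bigr)(\eta_i - \eta_{i+1}) + 2\tfrac{1}{8\mathfrak{b}^4\sqrt{d}}$. Using $\eta_i - \eta_{i+1} = y \leq s\cdot\tfrac{1}{8\mathfrak{b}^4\sqrt{d}}$, one then checks $0<\Delta \leq \tfrac{1}{2\mathfrak{b}^4\sqrt{d}}$. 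Thanks to this uniform-translation structure, for any pair of indices $j<\ell$ with both $\leq i$ or both $>i$, the gap $|\phi(\eta)[j]-\phi(\eta)[\ell]|$ equals the original $|\eta_j-\eta_\ell|$, so those factors in $f(\eta)/f(\phi(\eta))$ (from formula \eqref{eq_joint_density}) contribute $1$. For $j\leq i<\ell$ with $(j,\ell)\neq(i,i+1)$, we have $\phi(\eta)[j]-\phi(\eta)[\ell] = (\eta_j-\eta_\ell)+\Delta$, and since $\Delta>0$ and $\eta_j>\eta_\ell$ the ratio is $\leq 1$. The only remaining factor is the $(i,i+1)$ term, which is bounded by $(8\mathfrak{b}^4\sqrt{d}\,y)^2$ via \eqref{eq_b5e} of Proposition \ref{prop_map_phi}.

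Next, for the exponential factor in \eqref{eq_joint_density}, only indices $\ell \leq i$ contribute since $\phi(\eta)[\ell]=\eta_\ell$ for $\ell>i$. Writing $\eta_\ell^2 - \phi(\eta)[\ell]^2 = -2\Delta\,\eta_\ell - \Delta^2$, we obtain
\begin{equation*}
 \exp\!\Bigl(-\tfrac{1}{2}\sum_{\ell=1}^i (\eta_\ell^2 - \phi(\eta)[\ell]^2)\Bigr) = \exp\!\Bigl(\Delta \sum_{\ell=1}^i \eta_\ell + \tfrac{i}{2}\Delta^2\Bigr).
\end{equation*}
Because $\eta\in E$ implies $|\eta_\ell|\leq 3\sqrt{d}$ (as used in the proof of Lemma \ref{lemma_density_ratio}) and $i\leq \mathfrak{b}^2$, both $|\Delta\sum_{\ell=1}^i \eta_\ell|\leq \Delta\cdot 3\mathfrak{b}^2\sqrt{d}\leq \tfrac{3}{2\mathfrak{b}^2}$ and $\tfrac{i}{2}\Delta^2$ are $O(1/\mathfrak{b}^2)$, so the exponential factor is at most $e^{O(1/\mathfrak{b}^2)}\leq 2$ for $\mathfrak{b}$ large.

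Combining, $f(\eta)/f(\phi(\eta)) \leq 2\cdot(8\mathfrak{b}^4\sqrt{d}\,y)^2 \leq 4(8\mathfrak{b}^4\sqrt{d})^2 y^2$, giving the claim. The only mildly delicate step is the exponential estimate: because we are at the edge of the spectrum, the $\eta_\ell$ can be as large as $\Theta(\sqrt{d})$, so we must check that the translation $\Delta = O(1/(\mathfrak{b}^4\sqrt{d}))$ is small enough to keep $\Delta\sum_{\ell\leq i}\eta_\ell = O(1/\mathfrak{b}^2)$; this is precisely why the restriction $i\leq \mathfrak{b}^2$ is imposed, and it is the essential reason the edge case is handled by the simpler pure-translation map $\phi$ rather than by $g$.
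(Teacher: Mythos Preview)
Your proposal is correct and follows essentially the same route as the paper's proof: both exploit that $\phi$ acts as a rigid translation by $\Delta$ on the indices $\leq i$ and as the identity on indices $>i$, bound all gap-ratio factors in \eqref{eq_joint_density} by $1$ except the $(i,i+1)$ factor (which gives $(8\mathfrak{b}^4\sqrt{d}\,y)^2$ via \eqref{eq_b5e}), and then bound the exponential factor using $|\eta_\ell|\leq 3\sqrt d$, $i\leq \mathfrak{b}^2$, and $\Delta\leq \tfrac{1}{2\mathfrak{b}^4\sqrt d}$ to get $e^{O(1/\mathfrak{b}^2)}$. Your explicit computation $\eta_\ell^2-\phi(\eta)[\ell]^2=-2\Delta\eta_\ell-\Delta^2$ is a slightly cleaner variant of the paper's difference-of-squares estimate \eqref{eq_n69}, but the content is identical.
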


\begin{proof}

Since $\eta \in E$, we have that for all $j \in [d]$,
\begin{eqnarray}\label{eq_n68}
|\eta_j| &\stackrel{\textrm{Eq. } \eqref{eq_n19}}{\leq}& |\omega_j| + \mathfrak{b} d^{-\frac{1}{6}}\nonumber\\
 &\leq& \max(|\omega_1|, |\omega_d|) + \mathfrak{b} d^{-\frac{1}{6}}\nonumber\\
&\stackrel{\textrm{Prop. } \ref{prop_classical}}{\leq}& 2 \sqrt{d} + \mathfrak{b} d^{-\frac{1}{6}}\nonumber\\
&\stackrel{\textrm{Eq. } \eqref{eq_n37}}{\leq}& 3 \sqrt{d}.
\end{eqnarray}
Moreover, since  $y \leq s\frac{1}{8\mathfrak{b}^4\sqrt{d} }$ and $\eta \in  S_3(a,b;y)$, 
we have that
\begin{equation}\label{eq_n62}
    \eta_i-\eta_{i+1} \stackrel{\textrm{Eq. } \eqref{eq_n22_edge}}{=} y \leq s\frac{1}{8\mathfrak{b}^4\sqrt{d} }.
    \end{equation}
Thus, we have
\begin{eqnarray}
\phi(\eta)[i]- \phi(\eta)[i+1] &\stackrel{\textrm{Eq. } \eqref{eq_phi1},\, \, \eqref{eq_phi2}}{=}& \frac{2}{s}(\eta_{i}- \eta_{i+1})+ 2\frac{1}{8\mathfrak{b}^4\sqrt{d}} \label{eq_n66}\\
&\stackrel{\textrm{Eq. } \eqref{eq_n62}}{\leq}& 4\frac{1}{8\mathfrak{b}^4\sqrt{d}}. \label{eq_n61}
\end{eqnarray}
Hence, for all $j\leq i$,
\begin{eqnarray}
\phi(\eta)[j] - \eta_j &\stackrel{\textrm{Eq. } \eqref{eq_phi1}}{=}& \sum_{r= j}^i \phi(\eta)[r]- \phi(\eta)[r+1] - (\eta_r - \eta_{r+1})\nonumber\\
&\stackrel{\textrm{Eq. } \eqref{eq_phi3}}{=}& \phi(\eta)[i]- \phi(\eta)[i+1] - (\eta_i- \eta_{i+1}) \label{eq_n65}\\
&\leq& \phi(\eta)[i]- \phi(\eta)[i+1] \nonumber\\
&\stackrel{\textrm{Eq. } \eqref{eq_n61}}{\leq}& 4\frac{1}{8\mathfrak{b}^4\sqrt{d}}. \label{eq_n63}
\end{eqnarray}
Moreover, for all $j\leq i$,
\begin{eqnarray}
\phi(\eta)[j] - \eta_j &\stackrel{\textrm{Eq. } \eqref{eq_n65}}{=}&  \phi(\eta)[i]- \phi(\eta)[i+1] - (\eta_i- \eta_{i+1})\nonumber\\
&\stackrel{\textrm{Eq. } \eqref{eq_n66}}{=}& \left(\frac{2}{s}-1\right)(\eta_{i}- \eta_{i+1})+ 2\frac{1}{8\mathfrak{b}^4\sqrt{d}} \nonumber\\
&\geq& 0 \label{eq_n67},
\end{eqnarray}
where \eqref{eq_n67} holds since $s \leq 1$.
Therefore, for all $j\leq i$,
\begin{equation}\label{eq_n64}
|\phi(\eta)[j] - \eta_j| \stackrel{\textrm{Eq. } \eqref{eq_n63}, \,\, \eqref{eq_n67}}{\leq} 4\frac{1}{8\mathfrak{b}^4\sqrt{d}}.
\end{equation}
Thus,  for all $j\leq i$,
\begin{eqnarray}\label{eq_n69}
\left |\phi(\eta)[j]^2 - \eta_j^2 \right | &\stackrel{\textrm{Eq. } \eqref{eq_n67}}{=}& \left |\phi(\eta)[j] - \eta_j \right | \times \left |\phi(\eta)[j] + \eta_j \right | \nonumber\\
 &\stackrel{\textrm{Eq. } \eqref{eq_n68}}{\leq}& 4\frac{1}{8\mathfrak{b}^4\sqrt{d}} \times \left(4\frac{1}{8\mathfrak{b}^4\sqrt{d}} + 2|\eta_j| \right) \nonumber\\
  &\stackrel{\textrm{Eq. } \eqref{eq_n68}}{\leq}& 4\frac{1}{8\mathfrak{b}^4\sqrt{d}} \times \left(4\frac{1}{8\mathfrak{b}^4\sqrt{d}} + 6\sqrt{d} \right) \nonumber\\
  &\stackrel{\textrm{Eq. } \eqref{eq_n37}}{\leq}&  4\frac{1}{8\mathfrak{b}^4\sqrt{d}} \times 7 \sqrt{d} \nonumber\\
  &\leq&  \frac{4}{\mathfrak{b}^4}.
\end{eqnarray}
Moreover, for all $\ell, j \in [d]$, $\ell <j$, we have
\begin{eqnarray}\label{eq_n60}
      \phi(\eta)[j] - \phi(\eta)[\ell] &=& \sum_{r = j}^{\ell-1}      \phi(\eta)[r] - \phi(\eta)[r+1]\nonumber\\
        &\stackrel{\textrm{Eq. } \eqref{eq_b4e}}{\geq}& \sum_{r = j}^{\ell-1}      \eta_r - \eta_{r+1}\nonumber\\
        &=& \eta_{j}- \eta_{\ell}.
      \end{eqnarray}

\noindent
Therefore,
\begin{eqnarray}
\ \   \frac{f(\eta)}{f(\phi(\eta))}
    &\stackrel{\textrm{Eq. \eqref{eq_joint_density}}}{=} & \prod_{\ell<j,  :\, \, \, \ell,j \in [d]} \frac{| \eta_{\ell} - \eta_j|^2}{| \phi(\eta)[\ell] - \phi(\eta)[j]|^2} e^{-\frac{1}{2} \sum_{\ell=1}^{d}  (\eta_{\ell}^2 - g(\eta)[\ell]^2 )}\nonumber\\
    &\stackrel{\textrm{Eq. } \eqref{eq_n60}}{\leq}& \frac{| \eta_{i} - \eta_{i+1}|^2}{| \phi(\eta)[i] - \phi(\eta)[i+1]|^2} \times 1 \times  e^{-\frac{1}{2} \sum_{\ell=1}^{d}  (\eta_{\ell}^2 - g(\eta)[\ell]^2 )}\nonumber\\
   &\stackrel{\textrm{Eq. } \eqref{eq_phi1}}=&   \frac{| \eta_{i} - \eta_{i+1}|^2}{| \phi(\eta)[i] - \phi(\eta)[i+1]|^2} \times e^{-\frac{1}{2} \sum_{\ell=1}^{i}  (\eta_{\ell}^2 - \phi(\eta)[\ell]^2 )}\nonumber\\
  &\stackrel{\textrm{Eq. \eqref{eq_b5e} of Prop. \ref{prop_map_phi}}}{\leq}& ( 8 \mathfrak{b}^4 \sqrt{d}  \times y)^2  \times e^{-\frac{1}{2} \sum_{\ell=1}^{i}  (\eta_{\ell}^2 - \phi(\eta)[\ell]^2 )}\nonumber\\
    &\stackrel{\textrm{Eq. } \eqref{eq_n69}}{\leq}& ( 8 \mathfrak{b}^4 \sqrt{d}  \times y)^2  \times e^{i \times \frac{4}{\mathfrak{b}^4}} \nonumber\\
        &\leq& ( 8 \mathfrak{b}^4 \sqrt{d}  \times y)^2  \times e^{\mathfrak{b}^2 \times \frac{4}{\mathfrak{b}^4}} \label{eq_n73}\\
                &=& ( 8 \mathfrak{b}^4 \sqrt{d}  \times y)^2  \times e^{\frac{4}{\mathfrak{b}^2}} \nonumber\\
                   &\stackrel{\textrm{Eq. } \eqref{eq_n37}}{\leq} & 4 (8 \mathfrak{b}^4 \sqrt{d})^2 \times y^2.\nonumber\\
 \end{eqnarray}
 where \eqref{eq_n73} holds since $i \leq \mathfrak{b}^2$.
\end{proof}

\paragraph{Step 8.} Completing the proof.

\begin{proof}[Proof of Lemma \ref{lemma_GUE_gaps}]

\medskip
\noindent
{\em Bulk case ($\mathfrak{b}^2 < i < d-\mathfrak{b}^2$).}
Recalling the definition of $\mathcal{W}_d$ from \eqref{eq:WeylChamber}, by Proposition \ref{prop_map} we have that for any $z \in \mathcal{W}_d$, the pre-image $g^{-1}(\{z\}) := \{ \eta \in \mathcal{W}_d : g(\eta) = z\}$ has cardinality $|g^{-1}(\{z\}) |  \leq 2$.
Therefore, for any $s>0$ we have
\begin{equation}\label{eq_n43}
\int_{a_{\min}}^{a_{\max}} \int_{b_{\min}}^{b_{\max}} \int_0^{s\frac{1}{8\mathfrak{b}^4\sqrt{d}}} \int_{S_3(a,b;y) \cap E} f(g(\eta)) \mathrm{det}(J_g(\eta)) \mathrm{d} \eta \mathrm{d}y \mathrm{d} a \mathrm{d} b \stackrel{\textrm{Prop. } \ref{prop_map}}{\leq} 2 \times  \int_{\mathcal{W}_d} f(\eta) \mathrm{d} \eta = 2 \times  1,
\end{equation}
where the inequality holds since $|g^{-1}(\{z\}) |  \leq 2$ for all $z \in \mathcal{W}_d$, and the equality holds since since $f$ is a probability density.
Therefore, 
\begin{eqnarray}\label{eq_n44}
2 & \stackrel{\textrm{Eq. } \eqref{eq_n43}}{\geq}& \int_{a_{\min}}^{a_{\max}} \int_{b_{\min}}^{b_{\max}} \int_0^{s\frac{1}{8\mathfrak{b}^4\sqrt{d}}} \int_{S_3(a,b;y) \cap E} \frac{f(g(\eta))}{f(\eta)}  \mathrm{det}(J_g(\eta))\times f(\eta) \mathrm{d} \eta \mathrm{d}y \mathrm{d} a \mathrm{d} b\nonumber\\
&\stackrel{\textrm{Lem. } \ref{lemma_density_ratio} \, \& \, \ref{prop_Jacobian}}{\geq}& \int_{a_{\min}}^{a_{\max}} \int_{b_{\min}}^{b_{\max}} \int_0^{s\frac{1}{8\mathfrak{b}^4\sqrt{d}}} \int_{S_3(a,b;y) \cap E} \frac{1}{400 (8 \mathfrak{b}^4 \sqrt{d})^2 \times y^2} \times \frac{1}{16s} \times f(\eta) \mathrm{d} \eta \mathrm{d}y \mathrm{d} a \mathrm{d} b\nonumber\\
&\geq &\int_{a_{\min}}^{a_{\max}} \int_{b_{\min}}^{b_{\max}} \int_0^{s\frac{1}{8\mathfrak{b}^4\sqrt{d}}} \int_{S_3(a,b;y) \cap E} \frac{1}{400   s^2} \times \frac{1}{16s} \times f(\eta) \mathrm{d} \eta \mathrm{d}y \mathrm{d} a \mathrm{d} b \label{eq_n157}\\
&=& \frac{1}{6400   s^3}  \int_{a_{\min}}^{a_{\max}} \int_{b_{\min}}^{b_{\max}} \int_0^{s\frac{1}{8\mathfrak{b}^4\sqrt{d}}} \int_{S_3(a,b;y) \cap E}  f(\eta) \mathrm{d} \eta \mathrm{d}y \mathrm{d} a \mathrm{d} b,
\end{eqnarray}
\eqref{eq_n157} holds since the bounds of integration for $y$ are $0\leq y \leq s\frac{1}{8\mathfrak{b}^4\sqrt{d}}$. 
Therefore,
\begin{equation}\label{eq_f1}
 \int_{a_{\min}}^{a_{\max}} \int_{b_{\min}}^{b_{\max}} \int_0^{s\frac{1}{8\mathfrak{b}^4\sqrt{d}}} \int_{S_3(a,b;y) \cap E}  f(\eta) \mathrm{d} \eta \mathrm{d}y \mathrm{d} a \mathrm{d} b \stackrel{\textrm{Eq. } \eqref{eq_n44}}{\leq} 12800 \times s^3.
\end{equation}
Hence,
\begin{eqnarray}\label{eq_n46}
   & &\!\!\!\!\!\!\!\!\!\!\!\!\!\!\!\!\!\!\!\!\!\!\!\!\!\!\!\!\!\! \mathbb{P}\left(\eta_i - \eta_{i+1} \leq s\frac{1}{8\mathfrak{b}^4\sqrt{d} }\right) 
    \leq \mathbb{P}\left(\left\{\eta_i - \eta_{i+1} \leq  s\frac{1}{8\mathfrak{b}^4\sqrt{d} }\right\} \cap E\right)+ \mathbb{P}(E^c)\nonumber\\
        &\stackrel{\textrm{Eq. } \eqref{eq_joint_density}}{=}& \int_{\left \{\eta \in \mathcal{W}_d \, \, \, : \, \, \,\eta_i - \eta_{i+1} \leq  s\frac{1}{8\mathfrak{b}^4\sqrt{d} } \right\} \cap E} f(\eta) \mathrm{d} \eta + \mathbb{P}(E^c)\label{eq_n159}\\
&\stackrel{\textrm{Prop. } \ref{prop_n1}}{=} & \int_{a_{\min}}^{a_{\max}} \int_{b_{\min}}^{b_{\max}} \int_{\left \{\eta \in \mathcal{W}_d \, \, \, : \, \, \,\eta_i - \eta_{i+1} \leq  s\frac{1}{8\mathfrak{b}^4\sqrt{d} } \right\} \cap E \cap \{\eta \in \mathcal{W}_d \, \, \, : \, \, \, \eta_{j_{\mathrm{max}}} = a, \, \, \eta_{j_{\mathrm{min}}} = b\}} f(\eta) \mathrm{d} \eta \mathrm{d} a \mathrm{d} b\nonumber\\
& & + \mathbb{P}(E^c) \label{eq_n158}\\
    &\stackrel{\textrm{Eq. } \eqref{eq_n22}}{=}&\int_{a_{\min}}^{a_{\max}} \int_{b_{\min}}^{b_{\max}} \int_0^{s\frac{1}{8\mathfrak{b}^4\sqrt{d}}} \int_{S_3(a,b;y) \cap E} f(\eta) \mathrm{d} \eta \mathrm{d}y \mathrm{d} a \mathrm{d} b + \mathbb{P}(E^c)\nonumber\\
    &\stackrel{\textrm{Eq. \eqref{eq_f1}}}{\leq}&  12800\times s^3 + \mathbb{P}(E^c) \nonumber\\
        &\stackrel{\textrm{Eq. } \eqref{eq_rigidity_1}}{\leq}& 12800 \times s^3 + \frac{1}{d^{1000}}.
\end{eqnarray}
 Here \eqref{eq_n159} holds since for every $\eta \in \mathcal{W}_d$, $f(\eta)$ is the joint probability density function of the GUE eigenvalues $\eta$ \eqref{eq_joint_density}.
 \eqref{eq_n158} holds since, by Proposition \ref{prop_n1}, we have $ \eta_{j_{\mathrm{max}}}  \in[a_{\mathrm{min}}, a_{\mathrm{max}}]$ and  $ \eta_{j_{\mathrm{min}}}  \in[b_{\mathrm{min}}, b_{\mathrm{max}}]$ whenever the event $E$ occurs.

Redefining the universal constant $L$ (and hence redefining $\mathfrak{b}$), we get that
\begin{equation*}
    \mathbb{P}\left(\eta_i - \eta_{i+1} \leq s \frac{1}{\mathfrak{b}\sqrt{d}}\right) \stackrel{\textrm{Eq. } \eqref{eq_n46}}{\leq} s^{3} + \frac{1}{d^{1000}} \qquad \forall s>0,
\end{equation*}
which proves  Lemma \ref{lemma_GUE_gaps} for any $\mathfrak{b}^2 \leq i \leq d-\mathfrak{b}^2$.

\medskip
\noindent
{\em Edge case ($\min(i, d-i) \leq \mathfrak{b}^2$).}
Since the joint density of the eigenvalues  \eqref{eq_joint_density} is symmetric about $0$, without loss of generality we may assume that $i\leq \mathfrak{b}^2$.

The proof of Lemma \ref{lemma_GUE_gaps} for the edge case $i\in \mathfrak{b}^2$ follows exactly the same steps as for the bulk case ($\mathfrak{b}^2 \leq i \leq d-\mathfrak{b}^2$), if we replace the map $g$ with the map $\phi$, Proposition \ref{prop_n1} with Proposition \ref{prop_n1_edge}, Proposition \ref{prop_map} with Proposition \ref{prop_map_phi}, Lemma \ref{lemma_density_ratio} with Lemma \ref{lemma_density_ratio_edge}, and Lemma \ref{prop_Jacobian} with Lemma  \ref{prop_Jacobian_phi}.
\end{proof}

\section*{Acknowledgments}
OM was supported in part by an NSF CCF-2104528 award and a Google Research Scholar award. 
NV was supported in part by an NSF CCF-2112665 award.

\bibliographystyle{plain}
\bibliography{DP}

\newpage
\appendix

\newpage

\section{Tightness of the upper bound in Theorem \ref{thm_rank_k_covariance_approximation_new}} \label{appendix_tightness}
In this section, we show that the upper bound in Theorem \ref{thm_rank_k_covariance_approximation_new} is tight up to lower-order terms.

{
\paragraph{The case when $k=d$.}  To see why our bound in Theorem \ref{thm_rank_k_covariance_approximation_new} is tight when $k=d$, one can plug in $\sigma_{d+1} = 0$ into the r.h.s. of our utility bound which gives a bound of $\sqrt{\mathbb{E}[\|\hat{M}-M\|_F^2]} \leq \tilde{O}(d  \sqrt{T})$.
 Since $\hat{M}-M = (G + G^\ast) \times \sqrt{T}$ where $G$ has  iid Gaussian entries, we have that $\|\hat{M}-M\|_F = \Theta(d  \sqrt{T})$ w.h.p. from standard matrix concentration bounds.}
 
{
\paragraph{The case when $k=1$.} To see why our bound is tight when $k=1$,  consider the case when $M$ has top eigenvalue $\sigma_1$ and all other eigenvalues $\sigma_2= \cdots = \sigma_d$ where $\sigma_1$ is very large ($\sigma_1 \rightarrow \infty$) and  $\frac{\sigma_1}{\sigma_1- \sigma_2} = c$ for any constant $c>0$.
In this case, the eigenvalue repulsion terms $\frac{1}{\gamma_i(t) - \gamma_j(t)}$ in the eigenvalue evolution equations \eqref{eq_DBM_eigenvalues} are higher-order which scale as $\frac{1}{\sigma_1}$ as $\sigma_1 \rightarrow \infty$.
Thus, from \eqref{eq_DBM_eigenvalues} we have that 
\begin{equation}\label{eq_tight_1}
\hat{\sigma}_1 - \sigma_1 + g_1
\end{equation}
with probability $1$ as $\sigma_1 \rightarrow \infty$, where $g_1 \sim N(0,T)$.}

{
In a similar manner, we have that the terms $\frac{1}{(\gamma_i(t)- \gamma_j(t))^2}$ in the eigenvector evolution equations \eqref{eq_DBM_eigenvectors} are higher-order terms which scale as $\frac{1}{\sigma_1^2}$ as $\sigma_1 \rightarrow \infty$.
Denote by $v_1,\ldots, v_d$ the  eigenvectors of $M$ and $\hat{v}_1$ the top eigenvector of $\hat{M}$.
Thus, we have from  \eqref{eq_DBM_eigenvectors} that
 \begin{equation}\label{eq_tight_2}
 \sigma_1(\hat{v}_1 - v_1) \rightarrow \sum_{i=2}^d g_i \times c \times v_i
 \end{equation}
 with probability $1$ as $\sigma_1 \rightarrow \infty$ where $g_2,\ldots, g_d \sim N(0,T)$.
Thus, from \eqref{eq_tight_1} and \eqref{eq_tight_2} we have that
\begin{equation*}
\|\hat{M}_1 - M_1\|_F = \|\hat{\sigma}_1 \hat{v}_1 \hat{v}_1^\ast -  \sigma_1 v_1 v_1^\ast\|_F \rightarrow  \sqrt{\sum_{i=2}^d  g_i^2 \times c^2} = \Theta(\sqrt{d} \times c \sqrt{T})
\end{equation*}
with probability $1$ as $\sigma_1 \rightarrow \infty$.
In other words, for $\sigma_1$ large enough we have that $\|\hat{M}_1 - M_1\|_F = \Theta(\sqrt{d}  \frac{\sigma_1}{\sigma_1- \sigma_2} \sqrt{T})$ w.h.p.}

{\paragraph{The case when $1\leq k<d$.} The above example, which was given for $k=1$, can be generalized to any $k<d$ by setting $M$ to have top-$k$ eigenvalues $\sigma_1 = \cdots = \sigma_k$, and the remaining eigenvalues $\sigma_{k+1}= \cdots = \sigma_d$, and taking $\sigma_k \rightarrow \infty$ where $\frac{\sigma_k}{\sigma_{k}- \sigma_{k+1}} = c$ for any constant $c>0$.
In this case, we get that, for $\sigma_k$ large enough, $\|\hat{M}_k - M_k\|_F = \Theta(\sqrt{k} \sqrt{d} \frac{\sigma_k}{\sigma_k- \sigma_{k+1}} \sqrt{T})$ w.h.p.
 Thus, for any $k\leq d$, our bound is tight up to factors of $(\log d)^{\log \log d}$ hidden in the $\tilde{O}$ notation.}

\section{Eigengap-free utility bounds in a weaker Frobenius norm metric} \label{appendix_gap_free_bounds_in_weaker_metric}

In this Section, we show how one can extend our main result in Theorem \ref{thm_rank_k_covariance_approximation_new} to obtain eigengap-free utility bounds on the weaker Frobenius metric $\|\hat{M}_k - M\|_F^2 - \|M_k - M\|_F^2$:

\begin{theorem}[\bf Eigengap-free utility bound in a Weaker Frobenius Metric]\label{thm_weaker_Frobenius_metric}
Suppose we are given $k>0$, $T>0$, and a  Hermitian matrix  $M \in \mathbb{C}^{d \times d}$ with eigenvalues  $\sigma_1 \geq \cdots \geq \sigma_d \geq 0$.   
Let $\hat{M} := M + \sqrt{T}[(W_1 + \mathfrak{i}W_2) + (W_1 + \mathfrak{i}W_2)^\ast]$ where $W_1, W_2 \in \mathbb{R}^{d \times d}$ have entries which are independent $N(0,1)$ random variables.
 Denote, respectively, by  $\sigma_1 \geq \cdots \geq \sigma_d$ and  $\hat{\sigma}_1\geq \ldots \geq \hat{\sigma}_d \geq 0$ the eigenvalues of $M$ and $\hat{M}$, and by $V$ and  $\hat{V}$ the matrices whose columns are the corresponding eigenvectors of $M$ and $\hat{M}$.
Moreover, let  $M_k := V \Gamma_k V^\ast$ and  $\hat{M}_k := \hat{V} \hat{\Gamma}_k \hat{V}^\ast$ be the rank-$k$ approximations of $M$ and $\hat{M}$, where $\Gamma_k := \mathrm{diag}(\sigma_1,\ldots, \sigma_k,0,\ldots,0)$  and $\hat{\Gamma}_k := \mathrm{diag}(\hat{\sigma}_1,\ldots, \hat{\sigma}_k,0,\ldots,0)$.
Suppose that $\sigma_1 \leq d^{50}$.
Then we have
$$ 
 \sqrt{\mathbb{E}[\|\hat{M}_k - M\|_F^2 - \|M_k - M\|_F^2]} \leq \tilde{O}\left(\sqrt{kd} \cdot \sqrt{T}\right).
$$
\end{theorem}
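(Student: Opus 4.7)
The strategy lifts the Dyson Brownian motion framework underlying Theorem \ref{thm_rank_k_covariance_approximation_new} to a functional adapted to the weaker metric. As in Section \ref{sec_proof_of_rank_k_covariance_approximation_new}, view the noise $\hat{M} = M + B(T)$ as the endpoint of a matrix Brownian motion $B(t)$, and track the rank-$k$ diffusion $\Psi(t) = U(t)\Lambda(t)U(t)^\ast$ with $\lambda_i(t) = \gamma_i(t)$ for $i \leq k$ and $0$ otherwise, so that $\Psi(0) = M_k$ and $\Psi(T) = \hat{M}_k$. The first step is a deterministic reduction: because $\hat{M}_k$ is the Eckart--Young Frobenius minimizer over rank-$k$ matrices, $\|\hat{M}_k - \hat{M}\|_F^2 \leq \|M_k - \hat{M}\|_F^2$; expanding both sides around $M$ and cancelling the common $\|B(T)\|_F^2$ yields the pointwise inequality
\[
\|\hat{M}_k - M\|_F^2 - \|M_k - M\|_F^2 \;\leq\; 2\langle \hat{M}_k - M_k,\,B(T)\rangle \;=\; 2\langle \Psi(T)-\Psi(0),\,B(T)\rangle.
\]
Since $M_k$ is deterministic and $\mathbb{E}[B(T)]=0$, taking expectations reduces the problem to showing $\mathbb{E}[\mathrm{tr}(\Psi(T)B(T))] \leq \tilde{O}(kdT)$.

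I would then apply Ito's lemma to $\phi(t) := \mathrm{tr}(\Psi(t)B(t))$, which starts at $\phi(0)=0$. The Brownian martingale pieces $\mathrm{tr}(\Psi\,\mathrm{d}B)$ and $\mathrm{tr}((\text{martingale of }\mathrm{d}\Psi)\cdot B)$ integrate to mean zero, so
\[
\mathbb{E}[\phi(T)] = \int_0^T \mathbb{E}\!\left[\mathrm{tr}\!\left((\text{drift of }\mathrm{d}\Psi)(t)\cdot B(t)\right) + \mathrm{tr}\!\left(\mathrm{d}\Psi(t)\,\mathrm{d}B(t)\right)/\mathrm{d}t\right]\mathrm{d}t.
\]
Expanding $\mathrm{d}\Psi$ via the Dyson SDEs \eqref{eq_DBM_eigenvalues}--\eqref{eq_DBM_eigenvectors} and Lemma \ref{Lemma_projection_differntial}, and using the complex-GUE rotational identity $\mathbb{E}[(u_i^\ast \mathrm{d}B u_j)(u_\ell^\ast \mathrm{d}B u_m)] \propto \delta_{im}\delta_{j\ell}\,\mathrm{d}t$ for the covariation, both pieces become sums indexed by pairs $(i,j)$ with denominators $\gamma_i - \gamma_j$ and $(\gamma_i - \gamma_j)^2$. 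The decisive simplification comes from substituting $B(t) = \Phi(t) - M$: the eigenbasis diagonal reads $u_i^\ast B u_i = \gamma_i - \alpha_i$ with $\alpha_i := u_i^\ast M u_i$, so the offending $(\gamma_i - \gamma_j)^{-2}$ terms arise multiplied by $(u_i^\ast B u_i - u_j^\ast B u_j) = (\gamma_i - \gamma_j) - (\alpha_i - \alpha_j)$, cancelling one power of the denominator exactly; combined with the standard symmetrization $\sum_{i,j\leq k,\,i\neq j}\gamma_i/(\gamma_i - \gamma_j) = \binom{k}{2}$ for the first-order sums restricted to the top-$k$ block, the only inverse-gap singularities that survive are of the weaker forms $(\gamma_i - \gamma_j)^{-1}$ or $(\alpha_i - \alpha_j)/(\gamma_i - \gamma_j)^2$, with no leading $(\gamma_i - \gamma_j)^{-2}$ divergence left.

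The surviving residuals are controlled by combining (a) Weyl's inequality and the spectral-norm concentration $\|B(t)\|_2 = O(\sqrt{dT})$ of Lemma \ref{lemma_spectral_martingale_b}, which bound $|\gamma_i|$ and yield $|\alpha_i - \alpha_j| \leq |\gamma_i - \gamma_j| + 2\|B(t)\|_2$ from the identity $\alpha_i - \alpha_j = (\gamma_i - \gamma_j) - (u_i^\ast B u_i - u_j^\ast B u_j)$; (b) the integrable inverse-gap second moments $\mathbb{E}[\int_{t_0}^T(\gamma_i(t)-\gamma_j(t))^{-2}\mathrm{d}t] = \tilde{O}(d/(j-i)^2)$ supplied by Theorem \ref{thm:eigenvalue_gap} and Corollary \ref{lemma_gaps_any_start}; (c) the rare-event truncation $\hat{E}_\alpha$ of Section \ref{sec_preliminary_results}; and (d) a crude ``jump-start'' estimate over a short initial interval $[0,t_0]$ analogous to Proposition \ref{lemma_t0}. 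Summing produces the $O(k^2)$ intra-block constant, the $O(k)$ diagonal covariation, and a cross-block (and remainder) contribution of $\tilde{O}(kdT)$; hence $\mathbb{E}[\phi(T)] \leq \tilde{O}(kdT)$, and the square root yields the stated bound.

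The hardest part will be verifying algebraically that the dynamic choice $\lambda_i(t) = \gamma_i(t)$ inside $\Psi$ delivers the above cancellations in full for the \emph{linear} functional $\mathrm{tr}(\Psi B)$---the algebra is more delicate than for the quadratic $\|\Psi - \Psi(0)\|_F^2$ of Theorem \ref{thm_rank_k_covariance_approximation_new}, because one must carefully match three separate contributions (the eigenvalue drift of \eqref{eq_DBM_eigenvalues}, the projection drift $-\sum_j(\gamma_i-\gamma_j)^{-2}(u_iu_i^\ast - u_ju_j^\ast)\mathrm{d}t$ of Lemma \ref{Lemma_projection_differntial}, and the GUE quadratic covariation) against the eigenbasis decomposition of $B(t)$, and this must be done without ever invoking Assumption \ref{assumption_gap}; otherwise a residual dependence on $\sigma_k - \sigma_{k+1}$ would contaminate the final estimate.
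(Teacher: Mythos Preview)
Your opening Eckart--Young reduction is correct pointwise, but it is too lossy to recover the claimed bound: the quantity $2\mathbb{E}[\langle \hat{M}_k - M_k, B(T)\rangle] = 2\mathbb{E}[\mathrm{tr}(\Psi(T)B(T))]$ is in general much larger than $\mathbb{E}[\|\hat{M}_k - M\|_F^2 - \|M_k - M\|_F^2]$, by a factor proportional to $\sigma_1$. Take $M = cI_d$ with $c$ large. Then $\hat{M} = cI + B(T)$, so $\hat{M}$ and $B(T)$ share eigenvectors, and a direct computation gives $\|\hat{M}_k - M\|_F^2 - \|M_k - M\|_F^2 = \sum_{i\leq k}\hat{\gamma}_i^2 = O(kdT)$, where $\hat{\gamma}_1\geq\cdots\geq\hat{\gamma}_d$ are the eigenvalues of $B(T)$. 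On the other hand, $\mathbb{E}[\langle \hat{M}_k, B(T)\rangle] = \mathbb{E}\bigl[\sum_{i\leq k}(c+\hat{\gamma}_i)\hat{\gamma}_i\bigr] = c\,\mathbb{E}\bigl[\sum_{i\leq k}\hat{\gamma}_i\bigr] + O(kdT)$, and the sum of the top-$k$ eigenvalues of a GUE has strictly positive mean $\Theta(k\sqrt{dT})$ for $k<d$. Since $\mathbb{E}[\langle M_k,B(T)\rangle]=0$, your reduced quantity is $\Theta(ck\sqrt{dT})$, which for $c=\sigma_1=d^{50}$ is vastly larger than $kdT$. This is exactly the ``$\gamma_i$ prefactor'' that survives in your residual terms $\gamma_i/(\gamma_i-\gamma_j)$ and $\gamma_i(\alpha_i-\alpha_j)/(\gamma_i-\gamma_j)^2$; your controls (a)--(d) cannot remove it because the gap bounds only tame the denominators, not a numerator of size $\sigma_1$.

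The paper avoids this loss by applying It\^o's lemma \emph{directly} to $g(Y)=\|Y-M\|_F^2$, so that $g(\Psi(T))-g(\Psi(0))$ equals the target quantity exactly rather than an upper bound. The extra $-2M$ in $\nabla g(Y)=2(Y-M)$ then produces gradient terms in which the large-prefactor pieces from the Hessian contribution $\sum_{i,j}(\lambda_i-\lambda_j)^2/(\gamma_i-\gamma_j)^2$ cancel against those from $\langle \Psi(t)-M,\,\text{drift}\rangle$ after expanding $M=\sum_\ell \gamma_\ell(0)u_\ell(0)u_\ell(0)^\ast$ and making the small-$t$ approximations $\gamma_i(0)\approx\gamma_i(t)$, $u_i(0)\approx u_i(t)$. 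The remaining higher-order terms $\mathcal{H}(t)$ are shown to have zero mean via a two-time symmetry property of Dyson Brownian motion (the joint density depends only on $\{|\gamma_i(t)-\gamma_j(0)|\}$), an ingredient your proposal does not invoke and which is what actually kills the $\sigma_1$-dependence.
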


\noindent
The following steps can be used to extend the proof of Theorem \ref{thm_rank_k_covariance_approximation_new} to obtain the eigengap-free utility bounds on the weaker Frobenius metric in Theorem \ref{thm_weaker_Frobenius_metric}:

\begin{enumerate}
\item {\bf Applying Ito's lemma to the weaker Frobenius norm metric.}
When bounding the stronger utility metric $\left\|\hat{M}_k -  M_k\right \|_F^2$ in the proof of Theorem \ref{thm_rank_k_covariance_approximation_new} we apply Ito's lemma (Lemma  \ref{lemma_ito_lemma_new}) to the function $f(Y) = \|Y\|_F^2$.  
 If we only wish to bound the weaker utility metric $\|\hat{M}_k - M\|_F^2 - \|M_k - M\|_F^2$, we can instead apply Ito's Lemma to the function
 \begin{equation}\label{eq_n104}
 g(Y) := \|Y - M\|_F^2.
 \end{equation}
 %

 %
   For conciseness, with slight abuse of notation, in steps that apply Ito's lemma we will view $g: \mathbb{C}^{d^2} \rightarrow \mathbb{R}$ as a function that takes as input the vectorized matrix $Y \in \mathbb{C}^{d \times d} \equiv  \mathbb{C}^{d^2}$, and express the derivatives of $g$ in this notation.
   Then, recalling from   \eqref{eq_n96} and \eqref{eq_n45} that $\Psi(T) = \hat{M}_k$ and $\Psi(0) = M_k$, we have
\begin{eqnarray} \label{eq_n105}
 \|\hat{M}_k - M\|_F^2 - \|M_k - M\|_F^2  \stackrel{\textrm{Eq. } \eqref{eq_n104}}{=} g(\Psi(T)) - g(\Psi(0))\qquad \qquad \qquad \qquad \qquad \qquad  \nonumber\\
%
%
\stackrel{\textrm{Ito's Lemma (Lem.  \ref{lemma_ito_lemma_new})}}{=}  \int_0^T  \left( \frac{1}{2} (\mathrm{d}\Psi(t))^\ast  \nabla^2 g(\Psi(t)) \mathrm{d}\Psi(t)   + (\nabla g(\Psi(t)))^\ast \mathrm{d}\Psi(t) \right ) \mathrm{d}t,
\end{eqnarray}
\color{black}
 where
\begin{equation} \label{eq_W2}
\nabla g(Y) [ij] = \frac{\partial}{\partial Y_{ij}} g(Y) = 2Y_{ij} -2M_{ij}, \qquad \textrm{ and }
\end{equation}
 \begin{equation*}
 \nabla^2 g(Y)[ij, \alpha \beta] = \frac{\partial}{\partial Y_{ij} \partial Y_{\alpha \beta}} g(Y) = \begin{cases} 2 & \textrm{ for } (i,j)=(\alpha, \beta)\\ 0 & \textrm{otherwise}. \end{cases}
 \end{equation*}
 
\item {\bf Canceling the eigengap terms.}
The extra term $-2M_{ij}$ in the first derivative \eqref{eq_W2} leads to cancellations of the terms in the utility bound which depend on the eigenvalue gap.
 To see why, we first note that from the proof of Theorem \ref{thm_rank_k_covariance_approximation_new}, we have
   \begin{eqnarray} 
  \mathrm{d} \Psi(t) &\stackrel{\textrm{Eq. } \eqref{eq_ito_derivative}}{=}& \sum_{i=1}^d \lambda_i(t) \mathrm{d}(u_i(t) u_i^\ast(t))) + (\mathrm{d}\lambda_i(t)) (u_i(t) u_i^\ast(t)) \nonumber\\
   &\stackrel{\textrm{Eq. } \eqref{eq_n102}}{=}&  \frac{1}{2} \sum_{i=1}^d     \sum_{j \neq i} \frac{\lambda_i(t)- \lambda_j(t)}{\gamma_i(t) - \gamma_j(t)}(u_i(t) u_j^\ast(t)\mathrm{d}B_{ij}(t) + u_j(t) u_i^\ast(t)\mathrm{d}B_{ij}^\ast(t)) \nonumber\\
& & \qquad \qquad  - \sum_{i=1}^d   \sum_{j\neq i} \frac{ \lambda_i(t) - \lambda_j(t) }{(\gamma_i(t)- \gamma_j(t))^2} u_i(t) u_i^\ast(t) \mathrm{d}t \nonumber\\
&+&  \sum_{i=1}^d  (\mathrm{d}\lambda_i(t)) (u_i(t) u_i^\ast(t)) \nonumber\\
   &\stackrel{\textrm{Eq. } \eqref{eq_n6}, \, \eqref{eq_n45}}{=}&  \frac{1}{2} \sum_{i=1}^d     \sum_{j \neq i} \frac{\lambda_i(t)- \lambda_j(t)}{\gamma_i(t) - \gamma_j(t)}(u_i(t) u_j^\ast(t)\mathrm{d}B_{ij}(t) + u_j(t) u_i^\ast(t)\mathrm{d}B_{ij}^\ast(t)) \nonumber\\
& & \qquad \qquad  - \sum_{i=1}^d   \sum_{j\neq i} \frac{ \lambda_i(t) - \lambda_j(t) }{(\gamma_i(t)- \gamma_j(t))^2} u_i(t) u_i^\ast(t) \mathrm{d}t \nonumber\\
&+&     \sum_{i=1}^k \left(\mathrm{d}B_{i i}(t) +  2 \sum_{j \neq i} \frac{1}{\gamma_i(t) - \gamma_j(t)} \mathrm{d}t \right) u_i(t) u_i^\ast(t) \label{eq_n106}
  \end{eqnarray}

\noindent  
  Therefore, we have
  \begin{eqnarray}
& & \!\!\!\!\!\!\!\!\!\!\!\!\!\!\!\!\!\!\!\!\!\!\!\!\!\!\!\!\!\!\!\!\!\!\!\!\!\!\!\!\!\! 
 (\mathrm{d}\Psi(t))^\ast  \nabla^2 g(\Psi(t)) \mathrm{d}\Psi(t)  \stackrel{\textrm{Eq. } \eqref{eq_W2}}{=} 2 \sum_{\alpha, \beta \in [d]}   |\mathrm{d}\Psi(t) [\alpha, \beta]|^2 \nonumber\\
  &=& 2 \langle \mathrm{d}\Psi(t), \,\, \mathrm{d}\Psi(t) \rangle \nonumber\\
  &\stackrel{\textrm{Eq. } \eqref{eq_n106}}{=}& \frac{1}{2} \sum_{i=1}^d     \sum_{j \neq i} \frac{(\lambda_i(t)- \lambda_j(t))^2}{(\gamma_i(t) - \gamma_j(t))^2} \|u_i(t) u_j^\ast(t)\mathrm{d}B_{ij}(t) + u_j(t) u_i^\ast(t)\mathrm{d}B_{ij}^\ast(t)\|_F^2 \nonumber\\
  &+&     \sum_{i=1}^k |\mathrm{d}B_{i i}(t)|^2  \|u_i(t) u_i^\ast(t)\|_F^2 \label{eq_n108}\\
  &=& \sum_{i=1}^d     \sum_{j \neq i} \frac{(\lambda_i(t)- \lambda_j(t))^2}{(\gamma_i(t) - \gamma_j(t))^2} \mathrm{d}t +   k \mathrm{d}t,  \label{eq_n107}
  \end{eqnarray}
  where \eqref{eq_n108} holds since $(\mathrm{d}t)^2 = 0$,  and $\mathrm{d}B_{ij} \mathrm{d}B_{\ell r} = 0$ whenever $(i,j) \notin\{(\ell, r), (r, \ell)$,  and since 
  \begin{equation}\label{eq_n109}
  \langle u_i(t) u_j^\ast(t), u_j(t) u_i^\ast(t) \rangle = \mathrm{tr}(u_j(t) u_i^\ast(t) u_j(t) u_i^\ast(t)) = 0
  \end{equation}
   for all $i \neq j$ because $u_i(t)$ and $u_j(t)$ are orthogonal eigenvectors.
  \eqref{eq_n107} holds since 
  \begin{eqnarray}
 & & \|u_i(t) u_j^\ast(t)\mathrm{d}B_{ij}(t) + u_j(t) u_i^\ast(t)\mathrm{d}B_{ij}^\ast(t)\|_F^2 = \nonumber\\
  & & \qquad \qquad  \mathrm{tr}(u_i(t) u_j^\ast(t) u_j(t) u_i^\ast(t))|\mathrm{d}B_{ij}(t)|^2 + 
   \mathrm{tr}(u_i(t) u_j^\ast(t) u_j(t) u_i^\ast(t))|\mathrm{d}B_{ij}^\ast(t)|^2 \nonumber\\
    & & \qquad \qquad \qquad \qquad+   \langle u_i(t) u_j^\ast(t), \, \, u_j(t) u_i^\ast(t) \rangle  | \mathrm{d}B_{ij}^\ast(t)|^2 +  \langle u_i(t) u_j^\ast(t) \, \, u_i(t) u_j^\ast(t) \rangle |\mathrm{d}B_{ij}(t)|^2 \nonumber\\
   &&  \qquad  \,\,\,\,\, \stackrel{\textrm{Eq. } \eqref{eq_n109}}{=} \mathrm{tr}(u_i(t) u_j^\ast(t) u_j(t) u_i^\ast(t))|\mathrm{d}B_{ij}(t)|^2 + 
   \mathrm{tr}(u_i(t) u_j^\ast(t) u_j(t) u_i^\ast(t))|\mathrm{d}B_{ij}^\ast(t)|^2 + 0 \nonumber\\
       & &  \qquad \qquad  = 2 \mathrm{d}t  \label{eq_n110},
\end{eqnarray}
and \eqref{eq_n110} holds since $|\mathrm{d}B_{ij}^\ast(t)|^2 = \mathrm{d}t$ for all $i,j \in [d]$, and since $u_i(t)$ and $u_j(t)$ are orthonormal eigenvectors.

Moreover, recall that 
\begin{equation}\label{eq_n112}
M \stackrel{\textrm{Eq. } \eqref{eq_n95}}{=}  \Phi(0) \stackrel{\textrm{Eq. } \eqref{eq_n92}}{=}  \sum_{i=1}^d \gamma_i(0) u_i(0) u_i^\ast(0)
\end{equation}
 and
\begin{equation}\label{eq_n113}
\Psi(t) \stackrel{\textrm{Eq. } \eqref{eq_n96}, \, \, \eqref{eq_n45}}{=} \sum_{i=1}^k \gamma_i(t) u_i(t) u_i^\ast(t) \qquad \forall t \geq 0.
\end{equation}
Then we have
\begin{eqnarray}
& & \!\!\!\!\!\!\!\!\!\!\!\!\!\!\!\!\!\!\!\!\!\!\!\!\!\!\!\!\! \mathbb{E}\left[ (\nabla g(\Psi(t)))^\ast \mathrm{d}\Psi(t) \right ]   \stackrel{\textrm{Eq. } \eqref{eq_W2}}{=}  \mathbb{E}\left[  \left \langle 2 \Psi(t) - 2M \, , \,  \mathrm{d}\Psi(t)  \right  \rangle \right ]  \nonumber\\
&\!\!\!\!\!\!\!\!\!\!\!\!\!\!\!\! \stackrel{\textrm{Eq. } \eqref{eq_n106}}{=}&\!\!\!\!\!\!\!\!\!\!\!\! \mathbb{E}\left[ \left \langle  2 \Psi(t) - 2M \, \, , \, \, \frac{1}{2} \sum_{i=1}^d     \sum_{j \neq i} \frac{\lambda_i(t)- \lambda_j(t)}{\gamma_i(t) - \gamma_j(t)}(u_i(t) u_j^\ast(t)\mathrm{d}B_{ij}(t) + u_j(t) u_i^\ast(t)\mathrm{d}B_{ij}^\ast(t))  \right \rangle \right ] \nonumber\\
& & \qquad \qquad  -  \mathbb{E}\left[ \left \langle  2 \Psi(t) - 2M \, \, , \, \,  \sum_{i=1}^d   \sum_{j\neq i} \frac{ \lambda_i(t) - \lambda_j(t) }{(\gamma_i(t)- \gamma_j(t))^2} u_i(t) u_i^\ast(t) \mathrm{d}t \right \rangle \right ] \nonumber\\
&+&     \mathbb{E}\left[ \left \langle  2 \Psi(t) - 2M   \, \, , \, \,   \sum_{i=1}^d  (\mathrm{d}\lambda_i(t)) (u_i(t) u_i^\ast(t))     \right \rangle \right ]  \nonumber\\
& = &  0 \, \, \,  - \, \, \,  \mathbb{E}\left[ \left \langle  2 \Psi(t) - 2M \, \, , \, \,  \sum_{i=1}^d   \sum_{j\neq i} \frac{ \lambda_i(t) - \lambda_j(t) }{(\gamma_i(t)- \gamma_j(t))^2} u_i(t) u_i^\ast(t) \mathrm{d}t \right \rangle \right ] \label{eq_n111}\\
&+&     \mathbb{E}\left[ \left \langle  2 \Psi(t) - 2M   \, \, , \, \,   \sum_{i=1}^d  (\mathrm{d}\lambda_i(t)) (u_i(t) u_i^\ast(t))     \right \rangle \right ]  \nonumber\\
& \stackrel{\textrm{Eq.}\, \eqref{eq_n112},}{\stackrel{\eqref{eq_n113}}{=}} &\!\!\!\!\! \mathbb{E}\left[ \left \langle   2 \sum_{i=1}^k \gamma_i(t) u_i(t) u_i^\ast(t) - 2\sum_{i=1}^d \gamma_i(0) u_i(0) u_i^\ast(0), \, \,  \sum_{i=1}^d   \sum_{j\neq i} \frac{ \lambda_i(t) - \lambda_j(t) }{(\gamma_i(t)- \gamma_j(t))^2} u_i(t) u_i^\ast(t) \mathrm{d}t \right \rangle \right ] \nonumber\\
&+&     \mathbb{E}\left[ \left \langle  2 \Psi(t) - 2M   \, \, , \, \,   \sum_{i=1}^d  (\mathrm{d}\lambda_i(t)) (u_i(t) u_i^\ast(t))     \right \rangle \right ] \nonumber\\
&=&\mathbb{E}\bigg[   2\sum_{i=1}^{d} \sum_{j\neq i} \frac{\lambda_i(t) - \lambda_j(t)}{(\gamma_i(t) - \gamma_j(t))^2} \times \sum_{\ell=1}^k \gamma_\ell(t)  \left \langle u_i(t) u_i^\ast(t), \, \, u_{\ell}(t) u_{\ell}^\ast(t) \right \rangle \mathrm{d}t \nonumber\\ 
 && - \quad    2\sum_{i=1}^{d} \sum_{j\neq i} \frac{\lambda_i(t) - \lambda_j(t)}{(\gamma_i(t) - \gamma_j(t))^2} \times \sum_{\ell=1}^d \gamma_\ell(0)  \left \langle u_i(0) u_i^\ast(0), \, \, u_{\ell}(t) u_{\ell}^\ast(t) \right \rangle \mathrm{d}t \bigg] \nonumber\\
 &+&     \mathbb{E}\left[ \left \langle  2 \Psi(t) - 2M   \, \, , \, \,   \sum_{i=1}^d  (\mathrm{d}\lambda_i(t)) (u_i(t) u_i^\ast(t))     \right \rangle \right ], \label{eq_n118}
\end{eqnarray}
where \eqref{eq_n111} holds since $\mathbb{E}[\mathrm{d}B_{ij}(t)] = 0$ and $\mathrm{d}B_{ij}(t)$ is independent of $\Psi(\tau)$, $\gamma_\ell(\tau)$, $\lambda_\ell(\tau)$, and $u_i(\tau)$ for all $\tau \leq t$ and all $i, j, \ell \in [d]$, and $M$ is a constant matrix.

\bigskip
\bigskip
\bigskip

Therefore, we have
 \begin{eqnarray} \label{eq_W1}
 & & \!\!\!\!\!\!\!\!\!\!\!\!\!\!\!\!\!\!\!\!\!\!\!\!\!\!\!\!\!\!\!\!\!\!\!\!\!\!\!\!\!\!\!\!\!\!\!\! \mathbb{E}\left[ \|\hat{M}_k - M\|_F^2 - \|M_k - M\|_F^2 \right]\nonumber\\
 &\stackrel{\textrm{Eq. } \eqref{eq_n105}}{=}& \mathbb{E}\left[\int_0^T  \frac{1}{2} (\mathrm{d}\Psi(t))^\ast  \nabla^2 g(Y) \mathrm{d}\Psi(t)   + (\nabla g(Y))^\ast \mathrm{d}\Psi(t) \, \, \mathrm{d}t  \right ] \nonumber\\
 &\stackrel{\textrm{Eq. } \eqref{eq_n107},  \eqref{eq_n118}}{=}& \mathbb{E}\bigg[ \int_{0}^{T}   \sum_{i=1}^{d}  \sum_{j \neq i}  \frac{(\lambda_i(t) - \lambda_j(t))^2}{(\gamma_i(t) - \gamma_j(t))^2}  \mathrm{d}t \nonumber\\
&&+ \quad  \int_{0}^{T}   2\sum_{i=1}^{d} \sum_{j\neq i} \frac{\lambda_i(t) - \lambda_j(t)}{(\gamma_i(t) - \gamma_j(t))^2} \times \sum_{\ell=1}^k \gamma_\ell(t)  \left \langle u_i(t) u_i^\ast(t), \, \, u_{\ell}(t) u_{\ell}^\ast(t) \right \rangle \mathrm{d}t \nonumber\\ 
 && - \quad    \int_{0}^{T}   2\sum_{i=1}^{d} \sum_{j\neq i} \frac{\lambda_i(t) - \lambda_j(t)}{(\gamma_i(t) - \gamma_j(t))^2} \times \sum_{\ell=1}^d \gamma_\ell(0)  \left \langle u_i(0) u_i^\ast(0), \, \, u_{\ell}(t) u_{\ell}^\ast(t) \right \rangle \mathrm{d}t \bigg] \nonumber\\
&& + \quad   \mathbb{E}\left[  \int_{0}^{T}   \left \langle  2 \Psi(t) - 2M   \, \, , \, \,   \sum_{i=1}^d  (\mathrm{d}\lambda_i(t)) (u_i(t) u_i^\ast(t))     \right \rangle \right ],
  \end{eqnarray}

\noindent
  The term $\mathbb{E}\left[  \int_{0}^{T}   \left \langle  2 \Psi(t) - 2M   \, \, , \, \,   \sum_{i=1}^d  (\mathrm{d}\lambda_i(t)) (u_i(t) u_i^\ast(t))     \right \rangle \right ] = \tilde{O}(kd \cdot T)$ in \eqref{eq_W1} can be bounded using the same steps as \eqref{eq_a1}.  Thus, we have
 \begin{eqnarray}
& &\!\!\!\!\!\!\!\!\!\!\!\!\!\!\!\!\!\!\!\!\!\!\!\!\!\!\!\!\!\!\!\!\!\!\! \mathbb{E}\left[ \|\hat{M}_k - M\|_F^2 - \|M_k - M\|_F^2 \right]\nonumber\\ 
 &\stackrel{\textrm{Eq. } \eqref{eq_W1}}{=}& \mathbb{E}\bigg[ \int_{0}^{T} \bigg(   \sum_{i=1}^{d}  \sum_{j \neq i}  \frac{(\lambda_i(t) - \lambda_j(t))^2}{(\gamma_i(t) - \gamma_j(t))^2}  \nonumber\\
 && + \quad 2\sum_{i=1}^{d} \sum_{j\neq i} \frac{\lambda_i(t) - \lambda_j(t)}{(\gamma_i(t) - \gamma_j(t))^2} \times \sum_{\ell=1}^k \gamma_\ell(t)  \left \langle u_{\ell}(t) u_{\ell}^\ast(t) , \, \,   u_i(t) u_i^\ast(t) \right \rangle \nonumber\\ 
 && - \quad 2\sum_{i=1}^{d} \sum_{j\neq i} \frac{\lambda_i(t) - \lambda_j(t)}{(\gamma_i(t) - \gamma_j(t))^2} \times \sum_{\ell=1}^d \gamma_\ell(0)  \left \langle u_{\ell}(0) u_{\ell}^\ast(0), \, \, u_i(t) u_i^\ast(t) \right \rangle \bigg ) \mathrm{d}t \bigg] + \tilde{O}(kd)  \nonumber\\
 &=& \mathbb{E}\bigg[ \int_{0}^{T}  \bigg( \sum_{i=1}^{d}  \sum_{j \neq i}  \frac{(\lambda_i(t) - \lambda_j(t))^2}{(\gamma_i(t) - \gamma_j(t))^2}     + 2\sum_{i=k+1}^{d} \sum_{j\neq i} \frac{(\lambda_j(t)-\lambda_i(t)) \times \gamma_i(t)  }{(\gamma_i(t) - \gamma_j(t))^2} \nonumber \\
&& - \quad    2\sum_{i=k+1}^{d} \sum_{j\neq i} \frac{(\lambda_j(t)-\lambda_i(t)) \times \gamma_i(t)  }{(\gamma_i(t) - \gamma_j(t))^2} \bigg) \mathrm{d}t\bigg ] + \mathbb{E}\left[\int_0^T \mathcal{H}(t)\mathrm{d}t\right] + \tilde{O}(kd \cdot T)  \label{eq_n115}\\
 &=& \mathbb{E}\bigg[  \int_{0}^{T}   \sum_{i=1}^{k}  \sum_{j \neq i}  \frac{(\gamma_i(t) - \gamma_j(t))^2}{(\gamma_i(t) - \gamma_j(t))^2}  \mathrm{d}t \bigg]  + \mathbb{E}\left[\int_0^T \mathcal{H}(t)\mathrm{d}t\right] + \tilde{O}(kd \cdot T)  \nonumber\\
 &=& \tilde{O}(kd \cdot T),  \label{eq_n116}
 \end{eqnarray}
 where \eqref{eq_n115} is obtained by making small-$t$ approximations $ \gamma_i(0) \approx \gamma_i(t)$ and $u_i(0) \approx u_i(t)$, and  $\mathcal{H}(t)$ are the higher-order terms which remain after making these approximations.

\color{black}

\item {\bf Bounding the higher-order terms.} 
More specifically, the higher-order terms are
 \begin{eqnarray}
\mathcal{H}(t) \!\!\! &=& \!\!\! \sum_{i=1}^{d} \sum_{j\neq i} \frac{\lambda_i(t) - \lambda_j(t)}{(\gamma_i(t) - \gamma_j(t))^2} \times \sum_{\ell=1}^d   (\gamma_\ell(0) - \gamma_\ell(t) ) \left \langle  u_{\ell}(t) u_{\ell}^\ast(t), \, \, u_i(t) u_i^\ast(t) \right \rangle \mathrm{d}t \nonumber\\
&& +  2\sum_{i=1}^{d} \sum_{j\neq i} \frac{\lambda_i(t) - \lambda_j(t)}{(\gamma_i(t) - \gamma_j(t))^2} \times \sum_{\ell=1}^d   \gamma_\ell(0) \left \langle u_{\ell}(0) u_{\ell}^\ast(0) -  u_{\ell}(t) u_{\ell}^\ast(t), \, \, u_i(t) u_i^\ast(t) \right \rangle \mathrm{d}t \label{eq_n119}\\
&= & 2\sum_{i=1}^{d} \sum_{j\neq i} \frac{\lambda_i(t) - \lambda_j(t)}{(\gamma_i(t) - \gamma_j(t))^2} \times  (\gamma_i (0) - \gamma_i(t) ) \mathrm{d}t \nonumber\\
& & +  2\sum_{i=1}^{d} \sum_{j\neq i} \frac{\lambda_i(t) - \lambda_j(t)}{(\gamma_i(t) - \gamma_j(t))^2} \times \sum_{\ell=1}^d   \gamma_\ell(0) \left \langle u_{\ell}(0) u_{\ell}^\ast(0) -  u_{\ell}(t) u_{\ell}^\ast(t), \, \, u_i(t) u_i^\ast(t) \right \rangle \mathrm{d}t, \qquad \quad \label{eq_W3}
\end{eqnarray}
where \eqref{eq_n119} holds since $ \langle  u_i(t) u_i^\ast(t), \, \, u_{\ell}(t) u_{\ell}^\ast(t)  \rangle = 0$ for $\ell \neq i$, and $ \langle  u_i(t) u_i^\ast(t),$ $\, \, u_i(t) u_i^\ast(t)  \rangle = 1$.

 To bound the first term on the r.h.s. of \eqref{eq_W3}, we use the fact that the two-time joint distribution of Dyson Brownian motion, $f(\gamma(0), \gamma(t))$, is symmetric in the sense that it depends only on the quantities $\{|\gamma_i(t) - \gamma_j(0)|\}_{1\leq i,j \leq d}$ (see e.g. \cite{tao2012topics}), which implies that $\mathbb{E}[\sum_{i=1}^{d} \sum_{j\neq i} \frac{\lambda_i(t) - \lambda_j(t)}{(\gamma_i(t) - \gamma_j(t))^2} \times  (\gamma_i (0) - \gamma_i(t) ) \mathrm{d}t]=0$.
The second term can be bounded in a similar manner.

After bounding these higher-order terms, one gets the eigenvalue gap-free bound 
 \begin{eqnarray}
 \mathbb{E}[\|\hat{M}_k - M\|_F - \|M_k - M\|_F] &\leq& \sqrt{\mathbb{E}[\|\hat{M}_k - M\|_F^2 - \|M_k - M\|_F^2]} \label{eq_n117}\\
 & \stackrel{\textrm{Eq. } \eqref{eq_n116}}{\leq} & \tilde{O}(\sqrt{kd} \cdot \sqrt{T}) \nonumber,
 \end{eqnarray}
 where \eqref{eq_n117} holds since $\|\hat{M}_k - M\|_F \geq \|M_k - M\|_F \geq 0$ and since $(a-b)^2 =a^2+b^2-2ab \leq a^2 - b^2$ for any $a\geq b \geq 0$.

\end{enumerate}

 \begin{remark}[\bf Tightness of weaker metric bound]\label{remark_tightness_weaker_metric}
Theorem \ref{thm_weaker_Frobenius_metric}, which provides a bound on the weaker metric $\| \hat{M}_k - M \|_F^2 - \|  M_k - M \|_F^2$, is tight for any $d > 0$ and any $k<d$.

To see why, we note that, in Appendix \ref{appendix_tightness}, for any $d$ and  $k<d$, we construct a matrix $M$ such that $\|\hat{M}_k - M_k\|_F = \Theta(\sqrt{k} \sqrt{d} \frac{\sigma_k}{\sigma_k- \sigma_{k+1}} \sqrt{T})$ w.h.p.
This matrix $M$ is assumed to have top-$k$ eigenvalues $\sigma_1 = \cdots = \sigma_k$, and the remaining eigenvalues $\sigma_{k+1}= \cdots = \sigma_d$.
If we set $\sigma_{k+1}= \cdots = \sigma_d = 0$ in this construction, $M$  is a rank-$k$ matrix, and we obtain 
\begin{equation}\label{eq_R1}
\|\hat{M}_k - M_k\|_F = \Theta\left(\sqrt{k} \sqrt{d} \frac{\sigma_k}{\sigma_k- \sigma_{k+1}} \sqrt{T}\right) =  \Theta(\sqrt{k} \sqrt{d} \sqrt{T})
\end{equation}
 w.h.p.
 Moreover, since $M$ is rank $k$ we have $M = M_k$ and hence that
\begin{equation}\label{eq_R2}
\| \hat{M}_k - M \|_F^2 - \|  M_k - M \|_F^2 = \|\hat{M}_k - M\|_F^2.
\end{equation}
Thus, plugging \eqref{eq_R2} into \eqref{eq_R1}, we must also have that
$$\sqrt{\| \hat{M}_k - M \|_F^2 - \|  M_k - M \|_F^2} = \sqrt{\|\hat{M}_k - M\|_F^2} =  \Theta(\sqrt{k} \sqrt{d} \sqrt{T})$$
w.h.p.
This implies that our bound in Theorem B.2 on the weaker metric $\| \hat{M}_k - M \|_F^2 - \|  M_k - M \|_F^2$ must also be tight for any $d>0$ and any $k<d$.
\end{remark}
\color{black}

\section{Proof outline for low-rank subspace recovery problem (Theorem \ref{thm_rank_k_subspace})}\label{Subpsace_recovery_outline}

To prove Theorem \ref{thm_rank_k_subspace}, one can follow the same outline as the first part of the proof outline of Theorem \ref{thm_rank_k_covariance_approximation_new} given in Section \ref{sec_SDE_utility_overview}.
For simplicity, we set $T=1$ in this outline.

\paragraph{Constructing a rank-$k$ projection-matrix-valued diffusion. }  As in  Section \ref{sec_SDE_utility_overview}, we consider the continuous-time matrix diffusion $\Phi(t) = M + B(t),$ whose eigenvalues $\gamma_i(t)$ and eigenvectors $u_i(t)$, $i\in [d]$, evolve over time.
Here, $B(t) := W(t) + W(t)^\ast$, where $W(t)$ is a $d \times d$ matrix where the real part (and complex part) of each entry is an independent standard Brownian motion with distribution $N(0, tI_d)$ at time $t$.
We let $\Phi(t) = U(t) \Gamma(t) U(t)^\top$  be a spectral decomposition of the symmetric matrix $\Phi(t)$ at every time $t\geq 0$, and we define a rank-$k$ matrix-valued stochastic process $\Theta(t):= U(t) \Lambda U(t)^\top$ where $\Lambda$ is a diagonal matrix with some specified eigenvalues $\lambda_1 \geq \cdots \geq \lambda_d$ which are fixed at every time $t$.
 
The main difference is that, to bound the utility for the subspace recovery problem, we need $\Theta(t)$ to be a rank-$k$ projection matrix (instead of a rank-$k$ matrix with eigenvalues roughly equal to the top-$k$ eigenvalues of $M$).
Towards this, we set $\lambda_i = 1$ for  $i \leq k$ and $\lambda_i = 0$ otherwise (in place of the values $\lambda_i = \sigma_i$ for all $i \leq k$ used in Section \ref{sec_SDE_utility_overview}).
We obtain an equation for the utility $\mathbb{E}[\|\hat{V}_k\hat{V}_k^\ast -  V_k V_k^\ast  \|_F^2]$ of the subspace recovery problem which has the same r.h.s. as Equation \eqref{eq_t7_2} in Section \ref{sec_SDE_utility_overview}, but with $\lambda_i = 1$ for  $i \leq k$ and $\lambda_i = 0$ in place of the previous choice of $\lambda$'s,
    \begin{eqnarray}\label{eq_t7_2_b} %
& & \!\!\!\!\!\!\!\!\!\!\!\!\!\!\!\!\!\!\!\!\!\!\!\!\!\!\!\!\!\!\!\!\!\!\!\! \mathbb{E}[\|\hat{V}_k\hat{V}_k^\ast -  V_k V_k^\ast  \|_F^2] = \mathbb{E}\left[\left\|\Theta(T) -  \Theta(0)\right \|_F^2 \right]\nonumber\\
 &=&   \sum_{i=1}^{d} \int_0^{T} \mathbb{E}\left[ \sum_{j \neq i}  \frac{(\lambda_i - \lambda_j)^2}{(\gamma_i(t)-\gamma_j(t))^2} \right]
     +    T \mathbb{E}\left[\left(\sum_{j\neq i} \frac{\lambda_i - \lambda_j}{(\gamma_i(t)-\gamma_j(t))^2}\right)^2   \right]\mathrm{d}t.
\end{eqnarray}

\paragraph{Bounding the eigenvalue gaps.} Recall that, in Section \ref{sec_SDE_utility_overview}, to obtain an upper bound on the utility from \eqref{eq_t7_2} for the covariance matrix approximation problem (where, roughly, $\lambda_i = \sigma_i$ for $i\leq k$), we had to first show a lower bound on the gap terms  $\gamma_i(t) - \gamma_j(t)$ in \eqref{eq_t7_2} for all $i,j \leq k$, $i \neq j$.
For the subspace recovery problem, we only need to bound $\gamma_i(t) - \gamma_j(t)$ for $i\leq k<j$, as all the other gap terms in \eqref{eq_t7_2_b} cancel since $\lambda_i - \lambda_j = 0$ whenever either $i,j \geq k$ or $i,j < k$.

To bound these gap terms, we may apply Weyl's inequality (Lemma \ref{lemma_weyl}), which says that $\gamma_i(t) - \gamma_j(t) \geq \gamma_i(0) - \gamma_j(0) - \|B(t)\|_2$ for all $t$.
Thus, since $\|B(t)\|_2 = O(\sqrt{d})$ w.h.p. for all $t\in [0,T]$ (by Lemma \ref{lemma_spectral_martingale_b}), for any $i \neq j$ we obtain a bound of  $\gamma_i(t) - \gamma_j(t) \geq \Omega(\gamma_i(0) - \gamma_j(0))$, whenever the initial eigenvalues $\gamma(0)$, that is, the eigenvalues of the input $M$, satisfy  $$\gamma_i(0) - \gamma_j(0) = \sigma_i - \sigma_j \geq \Omega(\sqrt{d}).$$
For the subspace recovery problem, where we only require a bound on the gaps  $\gamma_i(t) - \gamma_j(t)$ for $i\leq k<j$, it is sufficient to assume a bound $\sigma_{k} - \sigma_{k+1} \geq \Omega(\sqrt{d})$ on only the initial $k$'th eigenvalue gap in order to apply Weyl's inequality.

\paragraph{Completing the proof. } Plugging in $\lambda_i = 1$ for $i \leq k$ and $\lambda_i = 0$ otherwise to \eqref{eq_t7_2_b}, and simplifying, we get a utility bound for the subspace recovery problem, 
$$ 
\sqrt{\mathbb{E}\left[\left\|\hat{V}_k\hat{V}_k^\ast -  V_k V_k^\ast \right \|_F^2\right]} \leq \tilde{O}\left(\sqrt{\sum_{i=1}^k \sum_{j= k+1}^d \frac{1}{(\sigma_i - \sigma_j)^2}}\right).
$$
whenever the $k$'th initial eigenvalue gap satisfies $\sigma_k - \sigma_{i+1} \geq \Omega(\sqrt{d})$.

Note that, as we have only used Weyl's inequality, a deterministic bound, to bound the eigenvalue gaps, there is no need to bound the inverse second moments $\mathbb{E}\left[\frac{1}{(\gamma_i(t) - \gamma_j(t))^2}\right]$ of the eigenvalue gaps.
Thus, the same proof for the subspace recovery utility bound (Theorem \ref{thm_rank_k_subspace}) applies in both the real-symmetric and complex-Hermitian cases.

\section{Additional discussion of deterministic-bound approaches}\label{sec_challenges}

Recall from Section \ref{sec_deterministic_perturbation_bounds} that one approach to bounding the quantity $\|\hat{V} \Sigma_k \hat{V}^\ast - V \Sigma_k V^\ast\|_F$ is to decompose 
\begin{equation}\label{eq_n79}
V \Sigma_k V^\ast = \sum_{i=1}^{k-1} (\sigma_i- \sigma_{i+1})V_i V_i^\ast + \sigma_k V_k V_k^\ast,
\end{equation}
and apply the Davis-Kahan theorem \cite{davis1970rotation} (see \eqref{eq_DK}) to each projection matrix $V_i V_i^\ast$ (Inequality \eqref{eq_DK_1}).
Here we give additional steps used to derive Inequality \eqref{eq_DK_1}:

\begin{eqnarray}
    & & \!\!\!\!\!\!\!\!\!\!\!\!\!\!\! \!\!\!\!\!\! \!\!\!\!\!\!\!\!\! \!\!\!\!\!\!\!\!\! \|\hat{V} \Sigma_k \hat{V}^\ast - V \Sigma_k V^\ast\|_F\nonumber\\
    &\stackrel{\textrm{Eq. } \eqref{eq_n79}}{=}& \left\|\sum_{i=1}^{k-1} (\sigma_i- \sigma_{i+1})\hat{V}_i \hat{V}_i^\ast + \sigma_k \hat{V}_k \hat{V}_k^\ast - \left(\sum_{i=1}^{k-1} (\sigma_i- \sigma_{i+1})V_i V_i^\ast + \sigma_k V_k V_k^\ast\right) \right\|_F \nonumber\\  \nonumber\\
&=&    \left\|\sum_{i=1}^{k-1} (\sigma_i - \sigma_{i+1}) (\hat{V}_i \hat{V}_i^\ast - V_i V_i^\ast) + \sigma_k (\hat{V}_k \hat{V}_k^\ast - V_k V_k^\ast) \right\|_F \nonumber\\
    &\leq & \sum_{i=1}^{k-1} (\sigma_i - \sigma_{i+1}) \|\hat{V}_i \hat{V}_i^\ast - V_i V_i^\ast\|_F + \sigma_k \|\hat{V}_k \hat{V}_k^\ast - V_k V_k^\ast\|_F \nonumber\\
    & \leq & \sum_{i=1}^{k-1} (\sigma_i - \sigma_{i+1}) \frac{\sqrt{i} \sqrt{d}}{\sigma_i - \sigma_{i+1}} + \sigma_k   \frac{\sqrt{k} \sqrt{d}}{\sigma_k - \sigma_{k+1}} \nonumber\\
    & = & O\left(k^{1.5} \sqrt{d} + \frac{\sigma_k}{\sigma_k - \sigma_{k+1}} \sqrt{k} \sqrt{d}\right).  \nonumber
\end{eqnarray}

\noindent
If one only wishes to bound the quantity  $\|M- \hat{V}_k \hat{\Sigma}_k \hat{V}_k^\ast\|_F - \|M- V_k \Sigma_k V_k^\ast\|_F$ (which is bounded above by $\|\hat{V} \hat{\Sigma}_k \hat{V}^\ast - V \Sigma_k V^\ast\|_F$), 
it is also possible to use deterministic trace inequalities. 
This is the approach taken in \cite{dwork2014analyze}, which applies the fact that 
\begin{equation} \label{eq_trace}
\mathrm{tr}(X) \leq \mathrm{rank}(X)\|X\|_2 \qquad \qquad \forall X \in \mathbb{R}^{d \times d}
\end{equation}
 to show that $$\|M- \hat{V}_k \hat{\Sigma}_k \hat{V}_k^\ast\|_F^2 - \|M- V_k \Sigma_k V_k^\ast\|_F^2 \leq O(k \|M - V_k \Sigma_k V_k^\ast\|_2 \|E\|_2 + k \|E\|_2^2).$$
The r.h.s. depends on $\sigma_{k+1} = \|M - V_k \Sigma_k V_k^\ast\|_2$, and is therefore not invariant to scalar multiplications of $M$.
However, one can obtain a scalar-invariant bound on the quantity $\|M- \hat{V}_k \hat{\Sigma}_k \hat{V}_k^\ast\|_F - \|M- V_k \Sigma_k V_k^\ast\|_F$ by plugging in $\|M - V_k \Sigma_k V_k^\ast\|_2 \leq \|M - V_k \Sigma_k V_k^\ast\|_F$ and plugging in the high-probability bound  $\|E\|_2 = O(\sqrt{d})$.
This leads to a bound of $\|M- \hat{V}_k \hat{\Sigma}_k \hat{V}_k^\ast\|_F - \|M- V_k \Sigma_k V_k^\ast\|_F \leq O(k \sqrt{d})$.
In the special case where $k=d$, this bound is $O(d^{1.5})$, and thus is not tight since we have $\|M- \hat{V}_k \hat{\Sigma}_k \hat{V}_k^\ast\|_F - \|M- V_k \Sigma_k V_k^\ast\|_F = \|\hat{M} - M \|_F = \|E\|_F = O(d)$ w.h.p.
Roughly, the additional factor of $\sqrt{k} = \sqrt{d}$ incurred in their bound is due to the fact that the matrix trace inequality \eqref{eq_trace} their analysis relies on gives a bound in terms of the spectral norm, even though they only need a bound in terms of the Frobenius norm-- which can (in the worst case) be larger than the spectral norm by a factor of $\sqrt{k}$.

\section{Proof of Lemma \ref{lemma_spectral_martingale_b}}\label{sec_proof_of_lemma_spectral_martingale_b}
\begin{proof}[Proof of Lemma \ref{lemma_spectral_martingale_b}]
To prove Lemma \ref{lemma_spectral_martingale_b} we use Doob's submartingale inequality.
Towards this end, let  $\mathcal{F}_s$ be the filtration generated by $B(s)$.
First, we note that $\exp(\|B(t)\|_2)$ is a submartingale for all $t \geq 0$; that is, $\mathbb{E}[\exp(\|B(t)\|_2) |  \mathcal{F}_s] \geq  \mathrm{exp}(\| B(s)\|_2 )$ for all $0\leq s \leq t$.
This is because for all $s \leq t$, we have
\begin{eqnarray}
\mathbb{E}[\exp(\|B(t)\|_2) |  \mathcal{F}_s] &=& \mathbb{E}\left[\mathrm{exp}\left(\sup_{v \in \mathbb{R}^d: \|v\|_2 = 1} v^\top B(t) v\right) \, \, \bigg | \, \,\mathcal{F}_s\right] \nonumber\\
&\geq & \mathrm{exp}\left(\mathbb{E}\left[\sup_{v \in \mathbb{R}^d: \|v\|_2 = 1} v^\top B(t) v \, \, \bigg | \, \,\mathcal{F}_s\right]\right) \label{eq_n136}\\
&\geq &\mathrm{exp}\left(\sup_{v \in \mathbb{R}^d: \|v\|_2 = 1}  \mathbb{E}\left[v^\top B(t) v \, \,  | \, \, \mathcal{F}_s\right]\right) \nonumber \\
&=&  \mathrm{exp}\left(\sup_{v \in \mathbb{R}^d: \|v\|_2 = 1}  \mathbb{E}\left[v^\top (B(t) - B(s)) v  + v^\top B(s) v \, \,  | \, \, \mathcal{F}_s\right] \right)  \nonumber \\
&=&  \mathrm{exp}\left(\sup_{v \in \mathbb{R}^d: \|v\|_2 = 1}  \mathbb{E}\left[v^\top (B(t) - B(s)) v | \, \, \mathcal{F}_s\right]  +  \mathbb{E}\left[v^\top B(s) v \, \,  | \, \, \mathcal{F}_s\right] \right)  \label{eq_n137} \\
&=& \mathrm{exp}\left( \sup_{v \in \mathbb{R}^d: \|v\|_2 = 1}  \mathbb{E}\left[v^\top B(s) v \, \,  | \, \, \mathcal{F}_s\right] \right)  \nonumber \\
&=& \mathrm{exp}\left(\sup_{v \in \mathbb{R}^d: \|v\|_2 = 1}  v^\top B(s) v \right),  \nonumber \\
&=& \mathrm{exp}(\| B(s)\|_2 ),   \nonumber 
\end{eqnarray}
where \eqref{eq_n136} holds by Jensen's inequality since $\exp(\cdot)$ is convex, and \eqref{eq_n137} holds since $v^\top (B(t) - B(s)) v$ is independent of $\mathcal{F}_s$ and is distributed as $N(0,2(t-s))$.
Thus, by Doob's submartingale inequality, for any $\beta>0$  (we will choose the value of $\beta$ later to optimize our bound) we have,
\begin{eqnarray}
    \mathbb{P}\left(\sup_{t \in [0,T]}\|B(t)\|_2 > 2\sqrt{T}(\sqrt{d} + \alpha)\right) & = &     \mathbb{P}\left(\sup_{t \in [0,T]}\frac{\beta}{2\sqrt{T}}\|B(t)\|_2 - \beta\sqrt{d} > \beta \alpha\right) \nonumber \\
    &= &\mathbb{P}\left(\sup_{t \in [0,T]}\exp\left(\frac{\beta}{2\sqrt{T}}\|B(t)\|_2 - \beta\sqrt{d}\right) > \exp(\beta \alpha)\right) \nonumber\\
    & \leq & \frac{\mathbb{E}[\exp(\frac{\beta}{2\sqrt{T}}\|B(t)\|_2 - \beta\sqrt{d})]}{\exp(\beta\alpha)}  \label{eq_n138}\\
        & = & \frac{\int_{0}^{\infty} \mathbb{P}[\exp(\frac{\beta}{2\sqrt{T}}\|B(t)\|_2 - \beta\sqrt{d})>x] \mathrm{d}x}{\exp(\beta \alpha)} \nonumber\\
                & = & \frac{\int_{0}^{\infty} \mathbb{P}[\frac{1}{2}\|B(t)\|_2 - \sqrt{d}> \beta^{-1} \log(x)] \mathrm{d}x}{\exp(\beta \alpha)} \nonumber\\
                & \leq & \frac{\int_{0}^{\infty} 2 e^{-\beta^{-2}\log^2(x)} \mathrm{d}x}{\exp(\beta \alpha)} \label{eq_n139}\\
                                                                                               & = & \frac{2\sqrt{\pi} \beta e^{\frac{1}{4}\beta^2}}{\exp(\beta \alpha)} \nonumber\\
                                                                & \leq   &\frac{2\sqrt{\pi}e^{\frac{1}{2}\beta^2}}{\exp(\beta \alpha)} \nonumber\\
                &= & 2\sqrt{\pi} e^{\frac{1}{2}\beta^2 - \beta \alpha}, \nonumber
\end{eqnarray}
where \eqref{eq_n138} holds by Doob's submartingale inequality, and \eqref{eq_n139} holds by Lemma \ref{lemma_concentration}.
Setting $\beta = \alpha$, we have
\begin{eqnarray*}
    \textstyle \mathbb{P}\left(\sup_{t \in [0,T]}\|B(t)\|_2 > \sqrt{T}(\sqrt{d} + \alpha)\right) \leq 2\sqrt{\pi} e^{-\frac{1}{2}\alpha^2}.
    \end{eqnarray*}
\end{proof}

\section{Comparison between the Frobenius distance metric and a weaker Frobenius metric}\label{sec_strong_weak_metric_comparison}

 The metrics in Theorems \ref{thm_utility} and \ref{thm_rank_k_covariance_approximation_new} measure the Frobenius distance to the solution of the optimization problem $\min_{Z \in \mathbb{C}^{d \times d}} \|  Z - A\|_F$ subject to $Z$ being a Hermitian matrix of rank at most $k$.
 In a similar vein, the metric in Theorem \ref{thm_rank_k_subspace} measures the Frobenius distance to the solution of the optimization problem $\min_{P \in \mathbb{C}^{d\times d}} \|  P - V_k V_k^\ast\|_F$ subject to $P$ being a projection matrix of rank at most $k$.
 The above distance metrics hold several advantages over metrics which measure the difference in the value of an objective function, such as the metric used in Theorem B.1, which measures the difference $g(\mathcal{A}(M)) - g(M_k)$ in the value of the objective function $g(X) := \|X - M\|_F$ at the mechanism output $\mathcal{A}(M)$ and the optimal solution $M_k$.
 Specifically,
 
\begin{itemize}

\item In many applications one wishes to recover the rank-$k$ matrix $M_k$ which minimizes the Frobenius distance to a given input matrix $M$ (see e.g. \cite{davenport2016overview}).
These include statistics applications where the top-$k$ eigenvectors correspond to the most significant features in the data.
It also includes signal processing applications where one wishes to extract a signal of rank-$k$ from a superposition of signals arising from multiple sources.
In such applications, one wishes to output a matrix $\mathcal{A}(M)$ which minimizes the Frobenius distance $\|\mathcal{A}(M) - M_k\|_F$ to the best rank-$k$ matrix $M_k$. 
A bound on $\|M-\mathcal{A}(M)\|_F - \|M-M_k\|_F$ may be insufficient as it does not imply a bound on the Frobenius distance $\|\mathcal{A}(M) - M_k\|_F$.

\item The Frobenius distance metric is a stronger metric in the sense that an upper bound on $\|\mathcal{A}(M)- M_k\|_F$ implies the same upper bound on  $\|M-\mathcal{A}(M)\|_F - \|M-M_k\|_F$ (the reverse direction is not true in general).
 Moreover, in many applications the recovered matrix $\mathcal{A}(M)$ is post-processed by applying a given function $f$.
 For instance, in many machine learning applications, the recovered matrix $\mathcal{A}(M)$ may be used to reduce the dimension and/or normalize a dataset before plugging the data into a machine learning model (see e.g. \cite{james2013introduction}).
 If the post-processing function $f$ is $L$-Lipschitz (with respect to the Frobenius norm), a bound on $\| \mathcal{A}(M) -M_k\|_F \leq b$ immediately implies a bound of $\| f(\mathcal{A}(M)) -f(M_k)\|_F \leq L \times b$ on the Frobenius norm error of the post-processed matrix $f(\mathcal{A}(M))$.
 In contrast, a bound of $\|M-\mathcal{A}(M)\|_F - \|M-M_k\|_F \leq b$ does not imply that $\|f(M)-f(\mathcal{A}(M))\|_F - \|f(M)-f(M_k)\|_F \leq L \times b$.

\item The Frobenius distance metric for the subspace recovery problem used in Theorem \ref{thm_rank_k_subspace} is the same metric (up to a constant factor) as the metric used in the classical eigenspace perturbation results of \cite{davis1970rotation}, as well as more recent work which obtain improved eigenspace perturbation bounds for certain random matrix perturbations \cite{o2018random}, and thus provides for a more direct comparison.
 The Frobenius distance metric $\|\mathcal{A}(M) -M_k\|_F$ for the covariance approximation problem used in Theorems \ref{thm_utility} and \ref{thm_rank_k_covariance_approximation_new} is a close analog to the Frobenius distance metric for the subspace recovery problem $\|\mathcal{A}(M) -V_k V_k^\ast\|_F$, and thus provides a more direct comparison to the classical matrix perturbation results.

\end{itemize}
However, we note that the Frobenius distance metric $\|\mathcal{A}(M) - M_k\|_F$ is only uniquely defined for matrices $M$ whose $k$’th eigenvalue gap $\sigma_k - \sigma_{k+1}$ is strictly greater than $0$.
This is because the optimal solution $M_k$ to  $\min_{Z \in \mathbb{C}^{d \times d}} \|  Z - A\|_F$ s.t. $Z$ being Hermitian of rank at most $k$, is not unique for matrices $M$ where $\sigma_k - \sigma_{k+1} = 0$.
 To see why, note that if $M=I$, then any rank-$k$ matrix $Z$ with all non-zero eigenvalues equal to $1$ is a minimizer of the quantity $\| Z - I\|_F$.

 Moreover, as we show in Appendix \ref{appendix_tightness},  for any $k<d$ and any value of $c \geq 1$ one can construct a $d \times d$ rank-$k$ matrix $M$ with gap ratio $\frac{\sigma_k}{\sigma_k- \sigma_{k+1}} = c$ for which $\|\hat{M}_k - M_k\|_F = \Theta(\sqrt{k} \sqrt{d} \frac{\sigma_k}{\sigma_k- \sigma_{k+1}} \sqrt{T})$ w.h.p., whenever $\hat{M} = M + (G + G^\ast) \times \sqrt{T}$ where $G$ has iid standard Gaussian entries.
Thus, any high-probability upper bound on $\|\hat{M}_k - M_k\|_F$ must depend on the eigenvalue gap ratio $\frac{\sigma_k}{\sigma_k- \sigma_{k+1}}$.
This is in contrast to the weaker metric $\|\hat{M}_k - M\|_F - \|M_k - M\|_F$ which allows for eigenvalue-gap free bounds (see e.g. Theorem 7 of \cite{dwork2014analyze}, or our Theorem \ref{thm_weaker_Frobenius_metric})

\section{High-probability utility bounds}\label{sec_high_probability}

\paragraph{High-probability utility bounds with sub-linear growth in the probability parameter}
While the bound in Theorem \ref{thm_rank_k_covariance_approximation_new} holds in expectation, 
it is possible to use our techniques to prove high probability bounds. 
The simplest approach is to plug in the expectation bound of Theorem \ref{thm_rank_k_covariance_approximation_new} into Chebyshev's inequality, which says that $P(|\| \hat{M}_k - M_k \|_F - \mathbb{E}[\| \hat{M}_k - M_k \|_F ]|\geq s) \leq \tilde{O}\left(\frac{E(\| \hat{M}_k - M_k \|_F^2)}{s^2} \right )$ for all $s>0$.
 This gives a bound of $P(\|\hat{M}_k - M_k \|_F \geq \tilde{\Omega}(s \cdot \sqrt{kd} \frac{\sigma_k}{\sigma_k - \sigma_{k+1}}  \sqrt{T})) \leq \frac{1}{s^2}$ for all $s>0$.
  In other words, we have that $\|\hat{M}_k - M_k \|_F \leq \tilde{O}\left(\sqrt{kd} \frac{\sigma_k}{\sigma_k - \sigma_{k+1}}  \sqrt{T}  \frac{1}{\sqrt{s}}\right )$ w.h.p. $1-s$ for all $s>0$.

\paragraph{High-probability utility bounds with logarithmic growth in the probability parameter}

   It is an interesting open problem whether one can strengthen the high-probability bounds implied by Theorem \ref{thm_rank_k_covariance_approximation_new}, which have a sub-linear growth factor $\frac{1}{\sqrt{s}}$ in the probability parameter $\frac{1}{s}$, to high-probability bounds which grow logarithmically in $\frac{1}{s}$.
 To show a bound of $\|\hat{M}_k - M_k \|_F \leq \tilde{O}(\sqrt{kd} \frac{\sigma_k}{\sigma_k - \sigma_{k+1}}  \sqrt{T}  \log(\frac{1}{s}))$ with high probability $1-s$ for $s>0$, we would need to show that
 \begin{equation}\label{eq_n180}
 \mathbb{P}\left(\|\hat{M}_k - M_k \|_F > \tilde{O}\left (\sqrt{kd} \frac{\sigma_k}{\sigma_k - \sigma_{k+1}}  \sqrt{T}  \times s \right) \right) \leq e^{-s}
 \end{equation}
 for $s>0$. 
 
 The main difficulty in extending our proof to obtain the exponential decay $e^{-s}$ on the r.h.s. of \eqref{eq_n180}, which would imply high-probability utility bounds which grow logarithmically in the probability parameter $\frac{1}{s}$,  is that our results rely on bounds on the eigenvalue gaps of Dyson Brownian motion. 
 While we show in Theorem \ref{thm:eigenvalue_gap} that these eigenvalue gaps satisfy a bound of \(\eta_i - \eta_{i+1} \leq \tilde{O}\left(\frac{s}{\sqrt{d}}\right)\) with probability \(O(s^3)\) that decays at a polynomial rate \(s^3\) in the probability parameter \(s\) (for the complex-valued GUE random matrix), this probability does not decay at an exponential rate (one can easily verify that the exponential decay rate does not hold, e.g., by examining the \(d=2\) case, and we are not aware of any results which show an exponential decay for larger \(d\)).

    \section{List of notation}\label{sec_glossary_of_notation}
     In this section, we list key notations used in the proofs.
     
    \begin{itemize}
 \item  The Hermitian matrices $M$ and $\hat{M}$ (See statement of Theorems \ref{thm_utility}, \ref{thm_rank_k_covariance_approximation_new})

\begin{itemize}   
 \item $\sigma_1 \geq \cdots \geq \sigma_d$ denote the eigenvalues of $M$.
 \item $\Gamma_k := \mathrm{diag}(\sigma_1,\ldots, \sigma_k,0,\ldots,0)$.
 \item $V$ and $\hat{V}$ denotes the orthogonal matrix whose columns are the corresponding eigenvectors of $M$.
   \end{itemize} 

\item   $B(t)$, a Hermitian matrix-valued Brownian motion with zero initial condition  (see Section \ref{sec_DBM})
\item  $\Phi(t)= M + B(t)$  (see \eqref{eq_n120})
\begin{itemize}
\item $\gamma_1(t) \geq \cdots \geq \gamma_d(t)$ denote the eigenvalues of  $\Phi(t)$ 
\item $\Gamma(t) = \mathrm{diag}(\gamma_1(t), \ldots, \gamma_d(t))$
\item  $U(t)$ is the unitary matrix whose columns $u_1(t), \ldots, u_d(t)$ are the eigenvectors of $\Phi(t)$.
\end{itemize}
\item $\Psi(t)$, a rank-$k$ Hermitian matrix-valued stochastic process (see \eqref{eq_n96}, \eqref{eq_n45})

\item The matrix-valued stochastic processes $X(t)$ ( see \eqref{eq_n3}), R(t) and $Q(t)$ (see \eqref{eq_n5})
   
     \item   $\mathcal{W}_d := \{(x_1, \ldots, x_d) \in \mathbb{R}^d:  x_1 \geq \cdots \geq x_d\}$ (See \eqref{eq:WeylChamber})
 
   \item $\eta_1\geq \cdots \geq \eta_d$ denotes the eigenvalues of the  GOE/GUE $G+G^\ast$ where $G$ is a random matrix with i.i.d. complex Gaussian entries, or, more, generally, the eigenvalues of a perturbation of the GOE/GUE matrix $M+G + G^\ast$ (See Theorem \ref{thm:eigenvalue_gap}).

\item Depending on the context, $f(\cdot)$ denotes
\begin{itemize}
\item the joint probability density $f(\eta)$ of the eigenvalues $\eta$ of the GOE/GUE random matrix (see \eqref{eq_joint_density}).
\item The objective function $f(\cdot) = \|\cdot \|_F^2$, or, more generally, a generic objective function (see Ito's lemma (Lemma \ref{lemma_ito_lemma_new}))
\end{itemize}

\item Depending on the context, $g(\cdot)$ denotes
\begin{itemize}
\item A map $g : \mathcal{W}_d \rightarrow \mathcal{W}_d$ defined in \eqref{eq_g3}-\eqref{eq_g1} 
\item An objective function  $g(Y) = \|Y - M\|_F^2$ (see \eqref{eq_n104})
\end{itemize}

\item The map $\phi : \mathcal{W}_d \rightarrow \mathcal{W}_d$ defined in \eqref{eq_phi1}-\eqref{eq_phi3}

\item The set-valued functions $S_0(\cdot)$, $S_3(\cdot)$, $S_4(\cdot)$ (see \eqref{eq_n22} and \eqref{eq_n25})

  \item The ``classical'' eigenvalue locations $\omega_1 \geq \cdots \geq \omega_d$ (see \eqref{eq_a7})

\item The quantities $t_0$ (see \eqref{eq_t0}), $\alpha$ (see \eqref{eq_alpha}), $\mathfrak{b}$ (see \eqref{eq_n37}), $j_{\mathrm{min}}$, $j_{\mathrm{max}}$ (see \eqref{eq_n28}).

\item $E_\alpha$: A rare ``bad'' event when one or more eigenvalue gaps are unusually small (see \eqref{eq_n77})

   \end{itemize}

{\bf Standard matrix and complex analysis notation:}

For any matrices $A, B \in \mathbb{C}^{d \times d}$,
\begin{itemize}
\item For any matrix $A$ we denote by $A_{ij}$ or $A[i,j]$ the $(i,j)$'th entry of the matrix $A$.
\item $A^\ast$ denotes the conjugate-transpose of $A$
\item  Trace: $\mathrm{tr}(A)$ denotes the trace of $A$.
\item The Frobenius inner product: $\langle A , B \rangle := \mathrm{tr}(A^\ast B)$ denotes the Frobenius inner product of $A$ and $B$
\item The Frobenius norm $\|A\|_F = \sqrt{\langle A , A \rangle}$
\item The spectral norm $\| A \|_2 := \sup_{x \in \mathbb{C}^d} \frac{\|A x\|}{\|x\|}$.
\item Real component: $\mathcal{R}(A)$ denotes the matrix whose entries are the real components of the entries of $A$
\item Imaginary component: $\mathcal{I}(A)$ denotes the matrix whose entries are the imaginary components of the entries of $A$
\end{itemize}

\end{document}